\documentclass[a4paper, 11pt]{article}

\usepackage{standalone, xcolor, enumitem}
\usepackage{amsmath, amssymb, amsthm, physics, bbm, mathtools, stmaryrd, leftidx}
\usepackage{thmtools}
\usepackage{longtable}
\usepackage{hyperref}

\usepackage{lmodern}

\usepackage[normalem]{ulem}
\usepackage{accents}

\usepackage{pgfplots}
\pgfplotsset{compat=1.16}

\usetikzlibrary{arrows, decorations.pathreplacing, cd, patterns, calc, decorations.markings, backgrounds, tikzmark}

\usepackage{quiver}

\usepackage[new]{old-arrows}
\usepackage{changepage}

\usepackage[bottom]{footmisc}

\usepackage{glossaries}
\glsdisablehyper

\theoremstyle{plain}
\newtheorem{theorem}{Theorem}[section]
\newtheorem{proposition}[theorem]{Proposition}
\newtheorem{lemma}[theorem]{Lemma}
\newtheorem{corollary}[theorem]{Corollary}

\numberwithin{equation}{section}
\theoremstyle{definition}
\newtheorem{definition}[theorem]{Definition}

\newtheorem{remark}[theorem]{Remark}

\numberwithin{figure}{section}

\setlist[enumerate]{topsep = 1ex, leftmargin=.7cm, itemsep= -2pt}

\allowdisplaybreaks

\definecolor{color1}{HTML}{1b9e77}
\definecolor{color2}{HTML}{F1A340}
\definecolor{color3}{HTML}{5e3c99}

\hypersetup{
  colorlinks = true, 
  urlcolor = color3, 
  linkcolor = color1, 
  citecolor = color3 
}

\newacronym{fsmooth}{Fr-smooth}{Fr\"olicher smooth}

\newcommand{\ii}{\mathrm{i}\,}
\newcommand{\R}{\mathbb{R}}
\newcommand{\Rp}{\mathbb{R}_{+}}
\newcommand{\Z}{\mathbb{Z}}
\newcommand{\C}{\mathbb{C}}
\DeclareMathOperator{\id}{{\mathbbm{1}}}
\newcommand{\dz}{\dd z \dd \bar z}
\newcommand{\detz}[1]{{\det}_\zeta \, {#1}}
\newcommand{\dvol}[1]{\dd V_{#1}}
\newcommand{\dboundary}[1]{\dd \ell_{#1}}

\DeclareMathOperator{\Det}{Det}

\DeclareMathOperator{\vspan}{span}

\newcommand{\A}{\mathbb{A}}
\newcommand{\F}{\mathbb{F}}
\newcommand{\boldone}{\mathbbm{1}}

\newcommand{\disk}{\mathbb{D}} 
\newcommand{\odisk}{\setsuchthat{z \in \C}{|z| < 1}} 
\newcommand{\cdisk}{\setsuchthat{z \in \C}{|z| \leq 1}} 
\newcommand{\jodisk}{\setsuchthat{z \in \C}{|z| > 1}} 
\newcommand{\jcdisk}{\setsuchthat{z \in \C}{|z| \geq 1}} 
\newcommand{\uhp}{\setsuchthat{z \in \C}{\Im z > 0}}

\newcommand{\blank}{\:\cdot\:}
\newcommand{\setsuchthat}[2]{\left\{ {#1} \:\middle|\: {#2} \right\}}
\newcommand{\setsuchthatinline}[2]{\{ {#1} \:|\: {#2} \}}
\newcommand{\restrict}[2]{{#1}\vert_{{#2}}}
\newcommand{\const}{\mathrm{(const.)}}

\DeclareMathOperator{\Diffpan}{{Diff}_+^{\an}(S^1)}
\DeclareMathOperator{\DiffC}{{Def}_{\C}(S^1)}
\DeclareMathOperator{\DefC}{\DiffC}

\DeclareMathOperator{\vect}{Vect}
\newcommand{\VectR}{\vect^{\an}_{\R}(S^1)}
\newcommand{\VectC}{\vect^{\an}_{\C}(S^1)}
\newcommand{\Witt}{\VectC}
\newcommand{\Detrc}{\Det_{\Rp}^\charge}
\newcommand{\DefCnbhd}[1]{U({#1})}
\DeclareMathOperator{\univalent}{Univ}

\DeclareMathOperator{\mob}{PSL(2, \C)}
\DeclareMathOperator{\mobZ}{PSL(2, \Z)}
\DeclareMathOperator{\psl}{PSL(2, \R)}
\newcommand{\mobvect}{\mathfrak{sl}(2, \C)}
\newcommand{\genus}{\mathsf{g}}
\newcommand{\boundaries}{\mathsf{b}}
\newcommand{\moduli}[2]{\mathcal{M}_{{#1}, {#2}}}
\newcommand{\annuli}{\moduli{0}{2}}
\newcommand{\modulimob}[2]{\mathcal{M}_{{#1}, {#2}}^{\textsf{proj}}}
\newcommand{\modulimobf}[2]{\mathcal{M}_{{#1}, {#2}}^{\textsf{f}}} 

\newcommand{\moduligb}{\moduli{\genus}{\boundaries}}
\newcommand{\moduligbmob}{\modulimob{\genus}{\boundaries}}
\newcommand{\moduligbmobf}{\modulimobf{\genus}{\boundaries}}

\newcommand{\disks}{\moduli{0}{1}}
\newcommand{\spheres}{\moduli{0}{0}}
\newcommand{\pants}{\moduli{0}{3}}
\newcommand{\handles}{\moduli{1}{1}}
\newcommand{\tori}{\moduli{1}{0}}
\newcommand{\jmoduli}[3]{\mathcal{M}_{{#1}, {#2}, {#3}}}
\newcommand{\jmoduligb}[1]{\jmoduli{\genus}{\boundaries}{{#1}}}
\newcommand{\projj}[1]{\pi_{{#1}}}
\newcommand{\hmoduli}[2]{\mathcal{M}^{\textsf{hyp}}_{{#1}, {#2}}}
\newcommand{\hmoduligb}{\hmoduli{\genus}{\boundaries}}
\newcommand{\hpants}{\hmoduli{0}{3}}
\newcommand{\univteichm}{T(1)}

\newcommand{\modulus}[1]{\mathsf{mod}({#1})}

\newcommand{\an}{\mathrm{an}}
\newcommand{\charge}{\mathbf{c}}
\newcommand{\sew}[2]{\:\leftidx{{_{#1}}}{{\infty}}{{_{#2}}}\:}
\newcommand{\sewx}[3]{
\mathchoice
	{\:\leftidx{{_{#2}}}{{\overset{\mathclap{{#1}}}{\infty}}}{{_{#3}}}\:}
	{\:\leftidx{{_{#2}}}{{{\infty}}}{{_{#3}^{{#1}}}}\:}
    {\:\leftidx{{_{#2}}}{{{\infty}}}{{_{#3}^{{#1}}}}\:}
    {\:\leftidx{{_{#2}}}{{{\infty}}}{{_{#3}^{{#1}}}}\:}
}
\newcommand{\sewself}[2]{\:{\infty}{{_{#1, #2}}}\:}
\newcommand{\sewxself}[3]{
\mathchoice
	{\:{{\overset{\mathclap{{#1}}}{\infty}}}{{_{#2, #3}}}\:}
	{\:{{{\infty}}}{{_{#2, #3}^{{#1}}}}\:}
	{\:{{{\infty}}}{{_{#2, #3}^{{#1}}}}\:}
	{\:{{{\infty}}}{{_{#2, #3}^{{#1}}}}\:}
}
\newcommand{\sewall}{\:\underline{\infty}\:}
\newcommand{\sewxall}[1]{
\mathchoice
	{\:\underline{\overset{\mathclap{{#1}}}{\infty}}\:}
	{\:\underline{\infty^{{#1}}}\:}
	{\:\underline{\infty^{{#1}}}\:}
	{\:\underline{\infty^{{#1}}}\:}
}

\newcommand{\unravelplus}[1]{\:{{#1}}_+\:}
\newcommand{\unravelminus}[1]{\:{{#1}}_-\:}
\newcommand{\unravel}[1]{\:{{#1}}\:}

\DeclareMathOperator{\lfunct}{S_L^0}

\newcommand{\detmulta}[3]{\operatorname{m}_{#1, #2, #3}}
\newcommand{\natisophi}[3]{\operatorname{I}^{#2}_{#3, #1}}

\newcommand{\gelfandfuks}{\omega_{\mathrm{GF}}}
\newcommand{\bott}{\Omega_{\mathrm{BT}}}

\newcommand{\exchange}{\mathsf{flip}}
\newcommand{\evaluation}{\mathsf{ev}}
\newcommand{\applyat}[3]{#1_{#2, #3}}

\newcommand{\param}[1]{
\mathchoice
    {\big(\: {#1} \:\big)}
    {(#1)}
    {(#1)}
    {(#1)}
}
\newcommand{\parameq}[1]{
\mathchoice
    {\big[\: {#1} \:\big]}
    {[#1]}
    {[#1]}
    {[#1]}
}

\makeatletter
\newcommand{\uset}[3][-0.35ex]{%
  \mathrel{\mathop{#3}\limits_{
    \vbox to#1{\kern-7\ex@
    \hbox{$\scriptscriptstyle#2$}\vss}}}}
\makeatother

\newcommand{\diffActingInline}[1]{\,
    \smash{\uset{{#1}}{*}} 
\, }
\newcommand{\diffActing}[1]{\diffActingInline{#1}}

\DeclareMathOperator{\definvset}{{Inv}_{\C}(S^1)}
\DeclareMathOperator{\defmultset}{{Comp}_{\C}(S^1)}

\newcommand{\rotnumberalg}{\mathrm{rot}}
\newcommand{\rotcocycle}{\Omega_{\mathrm{RCR}}}
\newcommand{\rotcocyclealg}{\omega_{\mathrm{rot}}}

\newcommand{\scaling}[1]{\mathsf{sc}_{{#1}}}
\newcommand{\scalinggroup}{\mathrm{Sc}}

\newcommand{\rotation}{\mathsf{r}}

\newcommand{\curves}[1]{\mathcal{C}({{#1}})}

\newcommand{\functions}[1]{\mathcal{F}({{#1}})}

\newcommand{\compdef}[3]{\mathcal{U}_{#1, #2, #3}}

\newcommand{\diffActingEInline}[1]{\,
\smash{\uset{{#1}}{\overset{\RMFE}{*}}}
\, }
\newcommand{\diffActingE}[1]{\diffActingEInline{#1}}

\newcommand{\inversion}{\mathrm{J}}
\newcommand{\invinv}{\mathsf{t}}
\newcommand{\invinvx}[1]{\mathsf{t}_{#1}}

\newcommand{\schwarzian}[1]{\mathcal{S}[{#1}]}

\newcommand{\conj}[1]{{#1}^*} 
\newcommand{\gcc}{{g_{\mathrm{cc}}}} 

\newcommand{\ellpar}[1]{a^\parallel_{{#1}}}
\newcommand{\ellpari}[1]{b^\parallel_{{#1}}}

\newcommand{\lenergy}[1]{\operatorname{I^L}_{\, {#1}}}

\newcommand{\lpotvect}{F} 

\newcommand{\Sr}{S}
\newcommand{\Sra}{S_1}
\newcommand{\Srb}{S_2}
\newcommand{\Srx}{A}
\newcommand{\Src}{S_3}

\newcommand{\anloops}[1]{\mathcal{L}^{\an}({#1})}
\newcommand{\cloops}[1]{\mathcal{L}({#1})}
\newcommand{\loopzero}{\gamma}
\newcommand{\loopone}{\gamma_1}
\newcommand{\looptwo}{\gamma_2}
\newcommand{\gonlyan}{h}

\DeclareMathOperator{\RMFE}{E}
\DeclareMathOperator{\RMFD}{D}



\DeclareRobustCommand\longtwoheadrightarrow
 {\relbar\joinrel\twoheadrightarrow}

\title{Universality of the conformal anomaly}  
\date{}

\author{Sid Maibach\thanks{Department of Mathematics, ETH Z\"urich, R\"amistrasse 101, 8092 Z\"urich, Switzerland. \\ 
\protect\url{sid.maibach@math.ethz.ch}} \;
and \,
Eveliina Peltola\thanks{Department of Mathematics and Systems Analysis, Aalto University, Otakaari 1, 02150 Espoo, Finland, and Division of Mathematics, University of Cologne, Weyertal 86-90, 50931 Cologne, Germany. \\ 
\protect\url{eveliina.peltola@aalto.fi}}}

\begin{document}
\maketitle

\begin{center}
\begin{minipage}{0.95\textwidth}
\abstract{
We prove a universal property of local real one-dimensional modular functors, which may be thought of as central extensions of the sewing operation on the (infinite-dimensional) Segal moduli spaces. Such modular functors are characterized by one real parameter: their central charge: Our result is an analogue of ``Mumford's theorem,'' that appeared in Segal's monograph~\cite{Segal:Definition_of_CFT, Segal:Definition_of_CFT_collection} in the complex case. Algebraically, the modular functors are characterized by real-valued cocycles on pairs of surfaces under sewing. We identify the disk-disk cocycle as the universal Liouville action, also known as loop Loewner energy.

The guiding example is the real determinant line bundle encoding the trace anomaly of conformal field theories (CFT) and the restriction function of Schramm--Loewner evolution (SLE) loop measures. Our result thus gives a mathematical explanation for the appearance of the central charge in CFT and SLE, and proves the conjectured uniqueness of Brownian loop measureas the canonical restriction function for SLE.
}

\bigskip{}

\noindent\textbf{Keywords:} 
bordered surface, central extension, conformal anomaly, determinant line bundle, Loewner energy, SLE loop measure, modular functor, Segal moduli space, universal Liouville action

\bigskip{}

\noindent\textbf{MSC 2020:} 
Primary: 81T40, 58B25;
Secondary: 17B68, 60D05, 60J67, 81T50

\end{minipage}
\end{center}

\newpage

\tableofcontents

\newpage

\section{Introduction}
The motivation of the present work is to provide a mathematical explanation for the breaking of conformal symmetry both in euclidean two-dimensional \emph{conformal field theories} (CFTs, Section~\ref{section:cft}) and for \emph{Schramm--Loewner evolution} measures (SLE, Section~\ref{section:sle}), both of which are quantified by a constant $\charge \in \R$ called the \emph{central charge}.
To establish this, we prove a universal property of \emph{local real one-dimensional modular functors} (Theorem~\ref{thm:universality}).

Because CFT and SLE are deeply related, it is not surprising that the same universal property applies to both conformal anomalies:
On the SLE side, our results yield the conjectured uniqueness~\cite{Benoist:Classifying_conformally_invariant_loop_measures} of the restriction function of SLE/MKS (Malliavin--Kontsevich--Suhov)
loop measures~\cite{Kontsevich:CFT_SLE_and_phase_boundaries, Kontsevich-Suhov:On_Malliavin_measures_SLE_and_CFT} on Riemann surfaces (Theorem~\ref{thm:restriction_function}), usually expressed using Brownian loop measure.
On the CFT side, we show that any local covariance under Weyl transformations induces a real one-dimensional modular functor, which then is isomorphic to the real determinant line of Friedan~\&~Shenker~\cite{Friedan-Shenker:The_analytic_geometry_of_two-dimensional_conformal_field_theory}, defined by the conformal anomaly formula~\eqref{eq:conformal_anomaly} of Polyakov~\&~Alvarez~\cite{Polyakov:Quantum_geometry_of_bosonic_strings,Alvarez:Theory_of_strings_with_boundaries_fluctuations_topology_and_quantum_geometry}.
As an intermediate step of independent interest, we prove that conformal welding relates the disk-disk cocycle of any real one-dimensional modular functor to the universal Liouville action~\cite{Takhtajan-Teo:Weil-Petersson_metric_on_the_universal_Teichmuller_space}, also known as loop Loewner energy~\cite{Wang:Equivalent_descriptions_of_Loewner_energy,Rohde-Wang:Loewner_energy_of_loops_and_regularity_of_driving_functions} (Theorem~\ref{thm:lpot}).

\subsection{Universal property of real one-dimensional modular functors}
\label{section:universal_property}

In his seminal manuscript~\cite{Segal:Definition_of_CFT, Segal:Definition_of_CFT_collection}, Segal introduced moduli spaces of compact connected genus~$\genus$ surfaces with~$\boundaries$ enumerated and analytically parametrized boundary components.
We denote these moduli spaces by $\moduligb$, and call them \emph{Segal moduli spaces}.
From the start, the purpose of these moduli spaces has been to use them as morphisms in the domain category for a functorial axiomatization of CFTs.
Given a pair of surfaces, they may be ``composed'' by identifying (gluing, sewing, or welding) one boundary component on the first surface with one boundary component on the second, using the boundary parametrizations.
This defines the \emph{sewing operation}
\begin{equation}
	\blank \sew{j}{k} \blank \colon \moduli{\genus_1}{\boundaries_1} \times \moduli{\genus_2}{\boundaries_2} \to \moduli{\genus_1 + \genus_2}{\boundaries_1 + \boundaries_2 - 2}, \qquad 1 \leq j \leq \boundaries_1, 1 \leq k \leq \boundaries_2,
	\label{eq:sewing_moduligb}
\end{equation}
and a \emph{self-sewing operation} $\sewself{j}{k} \blank \colon \moduligb \to \moduli{\genus + 1}{\boundaries - 2}$ for distinct $1 \leq j \neq k \leq \boundaries$.
We refer to the summary in Section~\ref{section:setup} and to our companion work~\cite{Maibach-Peltola:Complex_deformations_of_the_circle} for detailed definitions (including the smooth Fr\"olicher structure of the Segal moduli spaces). 

Due to the conformal anomaly of CFTs, Segal's axioms allow for projective functors (see also~\cite{Moore-Seiberg:Classical_and_quantum_conformal_field_theory}).
The definition of ``projectiveness'' is analogous to projective representations of a group $G$, which are equivalent to (non-projective) representations of central extensions~$\hat{G}$ of the group by an abelian group $H$, i.e.,~short exact sequences $H \hookrightarrow \hat{G} \twoheadrightarrow G$ such that $H$ maps into the center of $\hat{G}$.
Segal defined the notion of \emph{modular functor} to absorb the projectiveness of a CFT functor (see~\cite{Segal:Two-dimensional_conformal_field_theories_and_modular_functors} for a brief account).
In addition, modular functors have been related to topological quantum field theory (TQFT) by Witten~\cite{Witten:QFT_and_Jones_polynomial} (see also~\cite{Reshetikhin-Turaev:Invariants_of_3-manifolds_via_link_polynomials_and_quantum_groups,Turaev:Modular_categories_and_3-manifold_invariants,Bojko-Kirillov:Lectures_on_tensor_categories_and_modular_functors}), and to vertex operator algebras (VOA) by Huang~\cite{Huang:2D_Conformal_geometry_and_VOAs, Huang:Genus-zero_modular_functors_and_intertwining_operator_algebras}.

Considering the algebraic structure of the totality $(\moduligb)_{\genus, \boundaries \geq 0}$ of the Segal moduli spaces under the sewing operations, a modular functor can be understood as a central extension of the sewing operation by vector spaces (see Section~\ref{section:modular_functor})\footnote{
Since this is in the context of chiral CFTs (see~\cite{Henriques:The_functorial_approach_to_chiral_2D_CFT} for an introduction), Segal's definition considers complex vector bundles, which are assumed to be (anti-)holomorphic.
Finite-dimensional complex vector spaces correspond to rational chiral CFTs. In contrast, we consider full CFTs in the present article.}.
In the case of the sewing operation, the notions of ``center'' and ``exactness'' do not seem meaningful --- yet, a modular functor makes sense of the projection $\hat{G} \twoheadrightarrow G$ as a vector bundle over each~$\moduligb$.

The fundamental example of a modular functor is the complex determinant line bundle~\cite{
	Quillen:Determinants_of_Cauchy-Riemann_operators_over_Riemann_surface,
	ACKP:Moduli_spaces_of_curves_and_representation_theory,
	Beilinson-Schechtman:Determinant_bundles_and_Virasoro_algebra,
	Segal:Definition_of_CFT,
	Segal:Two-dimensional_conformal_field_theories_and_modular_functors,
	Segal:Definition_of_CFT_collection,
	Huang:2D_Conformal_geometry_and_VOAs, 
	Bojko-Kirillov:Lectures_on_tensor_categories_and_modular_functors,
	Kriz:On_spin_and_modularity_in_conformal_field_theory,
	Radnell:PhD, 
	Frenkel-Ben-Zvi:Vertex_Algebras_and_Algebraic_Curves, 
	RSS:Quasiconformal_Teichmuller_theory_as_analytical_foundation_for_CFT,
	RSSS:Schiffer_operators_and_calculation_of_a_determinant_line_in_conformal_field_theory}, which is a \emph{complex} one-dimensional modular functor.
After a correspondence with Deligne, Segal presented convincing arguments for a result, which is inspired by a more abstract theorem by Mumford (see~\cite{Knudsen-Mumford:The_projectivity_of_the_moduli_space_of_stable_curves_1} and~\cite[Section~5]{Segal:Definition_of_CFT_collection}): holomorphic one-dimensional modular functors are fully characterized by their central charge.
The subsequent works~\cite[Appendix~D]{Huang:2D_Conformal_geometry_and_VOAs} and~\cite{Kriz:On_spin_and_modularity_in_conformal_field_theory} added more details to the arguments.

Since our work concerns the conformal anomaly of full (non-chiral) CFTs and of SLE measures, we define the notion of a \emph{real one-dimensional modular functor}.
Analogously to the complex case, it may be thought of as a central extension\footnote{One-dimensional modular functors induce actual central extensions of complex deformations of the unit circle (Section~\ref{section:central_extension}).} of the Segal moduli spaces by the multiplicative group $\Rp$, realized by principal $\Rp$-bundles 
\begin{equation}
	\RMFE(\moduligb) \longrightarrow \moduligb, \qquad \genus, \boundaries \geq 0,
\end{equation}
and \emph{sewing isomorphisms} covering the sewing operation~\eqref{eq:sewing_moduligb},
\begin{equation}
	\blank \sewx{\RMFE}{j}{k\phantom{,j}} \blank \colon \RMFE(\moduli{\genus_1}{\boundaries_2}) \boxtimes \RMFE(\moduli{\genus_2}{\boundaries_2}) \to \RMFE(\moduli{\genus_1 + \genus_2}{\boundaries_1 + \boundaries_2 - 2}).
\end{equation}
See Definition~\ref{def:modular_functor} for details (Fr\"olicher smoothness, self-sewing isomorphisms, associativity, and symmetry), and Section~\ref{section:modular_functor} for properties of the modular functors.

It is a natural question to ask whether the ``Mumford--Segal theorem'' also holds in the real case.
(See also the remarks in our earlier work~\cite[Section~1.3]{Maibach-Peltola:From_the_conformal_anomaly_to_the_Virasoro_algebra}.)
Indeed, we answer this question affirmatively, provided that the real one-dimensional modular functors are \emph{local}, meaning that the sewing isomorphisms depend only on the parametrizations of the boundary components involved in the sewing.
In Proposition~\ref{prop:local_equiv} in Section~\ref{section:local_equiv}, we present a number of equivalent ways to define the locality.
(Interestingly enough, Remark~\ref{remark:transnumber} suggests that locality might be necessary.)

\begin{theorem}
	All local Fr\"olicher smooth real one-dimensional modular functors with equal central charges are isomorphic (in the sense of Definitions~\ref{def:modular_functor} and~\ref{def:modular_functor_isomorphism}).
	\label{thm:universality}
\end{theorem}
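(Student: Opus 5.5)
Given two local Fr\"olicher smooth real one-dimensional modular functors $\RMFE$ and $\RMFE'$ with the same central charge $\charge$, I would form their ``quotient'' $\RMFD := \RMFE \otimes (\RMFE')^{-1}$. This is fibrewise the principal $\Rp$-bundle of isomorphisms from $\RMFE'(\moduligb)$ to $\RMFE(\moduligb)$, with sewing isomorphisms induced from those of both functors. It is again a local Fr\"olicher smooth real one-dimensional modular functor, and its central charge is $\charge - \charge = 0$, since central charges add under tensor products. An isomorphism $\RMFE \cong \RMFE'$ is the same as a trivialization of $\RMFD$ compatible with sewing, i.e.\ an isomorphism of $\RMFD$ with the trivial modular functor. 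So the theorem reduces to showing that every local modular functor with $\charge = 0$ is trivial.

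To trivialize $\RMFD$, I would choose sections and study the resulting real-valued sewing cocycles. Because $\Rp$ is contractible, each bundle $\RMFD(\moduligb)$ admits a global smooth section $s_{\genus,\boundaries}$. Writing $s_1 \sewx{\RMFD}{j}{k} s_2 = e^{c(\Sigma_1,\Sigma_2)} \, s(\Sigma_1 \sew{j}{k} \Sigma_2)$ gives cocycles $c$ satisfying associativity and symmetry constraints, and changing sections by functions $f$ changes $c$ by a coboundary $f(\Sigma_1)+f(\Sigma_2)-f(\Sigma_1\sew{j}{k}\Sigma_2)$. The goal is then to show that every such cocycle is a coboundary.

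I would proceed in steps. First, use locality, which by Proposition~\ref{prop:local_equiv} makes the sewing cocycle depend only on the boundary parametrizations involved. This reduces everything to the disk-disk cocycle, i.e.\ to a function on pairs of complex deformations of $S^1$. Through the central extension of complex deformations of the circle (Section~\ref{section:central_extension}), this gives a smooth $\R$-valued 2-cocycle on the semigroup of annuli, whose infinitesimal version is a 2-cocycle on $\VectC$ (real part) of Gelfand--Fuks type. Second, use the known cohomology of the Witt algebra: continuous real 2-cocycles are multiples of $\gelfandfuks$ plus coboundaries, and the multiple is by definition the central charge. With $\charge = 0$ the infinitesimal class vanishes. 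Third, integrate this back to show that the disk-disk cocycle is a coboundary. In the general case one would identify it with $\charge$ times the universal Liouville action/Loewner energy via Theorem~\ref{thm:lpot}, so for $\charge=0$ it is exact. Fourth, using pants decompositions, generate all $\moduligb$ from disks, annuli and pants by sewing. Adjust $f$ on pants and handle pieces so that the remaining associativity constraints force the cocycle on higher surfaces to be trivial. The residual freedom is locally constant functions compatible with sewing, which can be absorbed or normalized.

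I expect the main obstacle to be the integration step: passing from infinitesimal triviality on vector fields to global triviality of the cocycle on the Fr\"olicher-smooth semigroup of complex deformations, and then on all moduli spaces. This is where connectedness and simple connectivity issues for the relevant spaces, and the subtleties of Fr\"olicher smoothness, arise. I would attack it by writing the cocycle along one-parameter flows of annuli (Loewner-type chains) and differentiating. This should show that the derivative of $c$ is determined by the Lie algebra cocycle, so that $c$ minus the explicit coboundary is locally constant, and hence constant by connectedness.
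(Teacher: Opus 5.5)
Your reduction is sound and matches the paper's strategy. $\RMFE\otimes(\RMFE')^{-1}$ with trivialization $Z\otimes W^{-1}$ has cocycles $\Omega^Z-\Omega^W$, so it is local of central charge $0$, and trivializing it is the same as making all cocycles of $Z$ and $W$ agree. Invoking Theorem~\ref{thm:lpot} for the disk--disk cocycle is also what the paper does. (Your fallback, that the derivative of the disk--disk cocycle is determined by the Lie algebra cocycle, is not accurate. The first variation at a general point is a pairing with a quadratic differential, which Section~\ref{section:disk_disk} identifies as a multiple of the Schwarzian via the composition law; the Lie algebra cocycle only fixes the constant. Citing Theorem~\ref{thm:lpot} makes this moot.) Beyond that, two things are missing: one you could fill in, and one is a genuine gap.

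First, locality does not by itself reduce everything to the disk--disk cocycle. It only removes dependence on parametrizations away from the seam; the cocycles still depend on the internal moduli. In genus zero the paper needs an induction on $\boundaries$ (Propositions~\ref{prop:induction_trivialization} and~\ref{prop:induction_extension}). It defines the trivialization on Fuchsian projective surfaces by capping $\partial_1$ with $\disk$, checks invariance under M\"obius reparametrizations, and extends by reparametrization. It then verifies each new cocycle through a cocycle identity that attaches an extra disk at another boundary component.

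Second, and this is the real gap, your Step~4 does not work as stated. Cocycle identities only relate decompositions along \emph{disjoint} seams. Consider two annuli $A,B$ with $\sewself{1}{2}A=\sewself{1}{2}B$ but non-homotopic seams, or two pants decompositions of a handle differing by an S-move. These are not related by any associativity constraint. So nothing forces $\sewxself{\RMFE}{1}{2}Z(A)$ to be independent of $A$, or a pants-decomposition trivialization to be well defined. This is an obstruction, not a ``residual freedom'': a priori $\rho(\tau,\alpha)=\Omega_{1,2}(\A_\tau\diffActing{1}\rotation_\alpha)$ is only a function on the Teichm\"uller space of tori.

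The paper handles this in Proposition~\ref{prop:E_prop}. It writes the torus as $\sewself{1}{2}(P\sew{3}{1}D)$ and differentiates the corresponding cocycle identity in $D$. The variation of $\rho$ equals that of the handle--disk cocycle, which does not see the seam. Hence $\rho\circ F-\rho=\chi(F)$ for a homomorphism $\chi\colon\mobZ\to\R$, and $\chi=0$ because $\mobZ$ has finite abelianization. A nonzero $\chi$ could not have been absorbed by normalization.

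The same identity then gives invariance under S-moves (hyperbolic modular invariance). Hatcher--Thurston connectivity of pants decompositions under A- and S-moves then makes the trivialization in higher genus well defined. The proof finishes with an induction on genus, cutting along hyperbolic geodesic seams whose cocycles vanish.

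In your $\charge=0$ formulation, A-move (crossing) invariance is indeed automatic once genus zero is trivialized. That is a pleasant simplification over the paper's comparison with $\Detrc$. The genus-one modular invariance is not automatic, and your proposal gives no argument for it.
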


We prove Theorem~\ref{thm:universality} by combining Theorem~\ref{thm:lpot} below with cohomological computations from earlier work~\cite{Maibach-Peltola:From_the_conformal_anomaly_to_the_Virasoro_algebra, Maibach-Peltola:Complex_deformations_of_the_circle} and with modular and crossing invariance properties (Proposition~\ref{prop:E_prop}).
The proof is completed inductively in $\boundaries \geq 0$ and $\genus \geq 0$ in Section~\ref{section:iso}.

\subsection{Universal Liouville action as a cocycle}

The base cases $\spheres$ and $\disks$ in the inductive proof of Theorem~\ref{thm:universality} involve a single sewing isomorphism, where two disks combine into a sphere.
While the Riemann sphere $\hat{\C} = \C \cup \{\infty\}$ represents the single element of $\spheres$, the Segal moduli space $\disks$ of disks is infinite-dimensional.
The universal cover of $\disks$ is a subset\footnote{
	The boundary parametrizations in Segal moduli space are analytic, whereas universal Teichm\"uller space allows for quasisymmetric (or Weil-Petersson quasisymmetric) regularities.
} of the universal Teichm\"uller space $\univteichm$, which may be interpreted as the Teichm\"uller space of a disk with quasisymmetric boundary parametrization~\cite{RSS:Quasiconformal_Teichmuller_theory_as_analytical_foundation_for_CFT}.
Equivalently, $\univteichm$ can be represented as a space of normalized univalent functions~\cite{Lehto:Univalent_functions_and_Teichmuller_spaces}.
For an element of $\disks$, we first find a representative $\disk \diffActing{1} \phi$ comprising the unit disk with boundary parametrization given by an orientation-preserving analytic diffeomorphism $\phi \in \Diffpan$ of the unit circle (Lemma~\ref{lemma:complex_deformations}).
Then, the sewn sphere $(\disk \diffActing{1} \phi) \sew{1}{1} \disk$ may be uniformized to $\hat{\C}$, and the seam in $\hat \C$ is a simple analytic loop $\gamma_\phi$.
\emph{Conformal welding} yields the corresponding univalent functions $f$ and $g$
as Riemann mappings of the domains inside and outside of $\gamma_\phi$.
Without loss of generality, we may assume that $\gamma_\phi$ separates $0$ from $\infty$, and that $g(\infty) = \infty$.

Consider the M\"obius invariant function of $\gamma_\phi$ defined as
\begin{equation}
	\lenergy{}(\gamma_\phi) = 
	\frac{1}{\pi} \int_{|z| < 1} \bigg|\frac{f''(z)}{f'(z)}\bigg|^2 \dz
	+ \frac{1}{\pi} \int_{|z| > 1} \bigg|\frac{g''(z)}{g'(z)}\bigg|^2 \dz
	+ 4 \log \bigg| \frac{f'(0)}{g'(\infty)} \bigg|.
	\label{eq:lenergy}
\end{equation}
Generalizing the work of Schiffer~\&~Hawley~\cite{Schiffer-Hawley:Connections_and_conformal_mapping}, in their monograph~\cite{Takhtajan-Teo:Weil-Petersson_metric_on_the_universal_Teichmuller_space} Takhtajan~\&~Teo defined the \emph{universal Liouville action} $\pi \lenergy{}(\gamma_\phi)$ as a function on $\univteichm$, and proved that it is a K\"ahler potential (for the Weil-Petersson metric on a subset of $\univteichm$, originating with~\cite{Cui:Integrably_asymptotic_affine_homeomorphisms_of_the_circle_and_Teichm_spaces}). 
Moreover, following Schiffer's pioneering works~\cite{Schiffer:Fredholm_eigenvalues_of_plane_domains, Schiffer:Fredholm_eigenvalues_of_multiply-connected_domains, Schiffer-Hawley:Connections_and_conformal_mapping}, 
they showed that the formula~\eqref{eq:lenergy} agrees with a Fredholm determinant involving the Grunsky operators associated to $f$ and $g$, and that it can also be written in terms of zeta-regularized determinants of Laplacians; see Remark~\ref{remark:logdet}.
(See also the related~\cite{Segal:Unitary_representations_of_some_infinite-dimensional_groups, Kirillov-Yuriev:Representations_of_the_Virasoro_algebra_by_the_orbit_method, Nag-Sullivan:Teichmuller_theory_and_the_universal_period_mapping_via_quantum_calculus_and_H_half_space_on_the_circle}.)
These concepts have led to many recent developments in geometry and function theory, 
see, e.g.,~\cite{Sharon-Mumford:2D-shape_analysis_using_conformal_mapping, 
Gay-Balmaz-Ratiu:The_geometry_of_the_universal_Teichmuller_space_and_the_Euler-Weil-Petersson_equation, 
RSS:Dirichlet_problem_and_Sokhotski-Plemelj_jump_formula_on_WP_class_quasidisks,
Feiszli-Narayan:Numerical_computation_of_WP_geodesics_in_the_universal_Teichmuller_space, 
Takhtajan-Zograf:Local_index_theorem_for_orbifold_Riemann_surfaces,
Bishop:The_traveling_salesman_theorem_for_Jordan_curves,
Schippers-Staubach:WP_Teich_theory_of_surfaces_of_infinite_conformal_type, 
Bishop:Weil_Petersson_curves_conformal_energies_beta-numbers_and_minimal_surfaces, 
Maibach-Peltola:From_the_conformal_anomaly_to_the_Virasoro_algebra,
Johansson-Viklund:Coulomb_gas_and_the_Grunsky_operator_on_Jordan_domain_with_corners}, 
and references therein.

The zeta-regularized determinants of Laplacians also realize partition functions of Schramm--Loewner evolution (SLE) curves and the Gaussian free field~\cite{LeJan:Markov_loops_determinants_and_Gaussian_fields, Dubedat:SLE_and_free_field}.
In this context, Wang~\cite{Wang:Equivalent_descriptions_of_Loewner_energy} popularized the term \emph{loop Loewner energy} for $\lenergy{}(\gamma_\phi)$ by proving that a limit of chordal Loewner energy defined in~\cite{Rohde-Wang:Loewner_energy_of_loops_and_regularity_of_driving_functions} agrees with the formula~\eqref{eq:lenergy}. 
This renewed interest in probability theory continues to inspire the discovery of relations to various other areas of mathematics, see, e.g.,~\cite{
	Johansson:Strong_Szego_theorem_on_a_Jordan_curve,
	Wiegmann-Zabrodin:Dyson_gas_on_a_curved_contour,
	Brock-Vargas-Pallete:W-volume_for_planar_domains_with_circular_boundary,
	ABKM:Pole_dynamics_and_an_integral_of_motion_for_multiple_SLE0,
	Carfagnini-Wang:OM_functional_for_SLE_loop_measures,
	Michelat-Wang:Loewner_energy_via_the_renormalised_energy_of_moving_frames,
	Peltola-Wang:LDP_of_multichordal_SLE_real_rational_functions_and_determinants_of_Laplacians,
	Viklund-Wang:Loewner_Kufarev_energy_and_foliations_of_WP_quasicircles,
	BBVW:Universal_Liouville_action_as_renormalized_volume_and_its_gradient_flow,
	Luo-Maibach:Two-loop_Loewner_potentials,
	VPWW:Epstein_curves_and_holography-of_the_Schwarzian_action,
	BJRW:Piecewise_geodesic_Jordan_curves_II}, and references therein.

A real one-dimensional modular functor $\RMFE$ (Definition~\ref{def:modular_functor}) may be studied through a \emph{trivialization}~$Z$, which comprises trivializations of the bundles $\RMFE(\moduligb) \to \moduligb$.
A~fixed trivialization defines a number of \emph{cocycles}, which measure to what extent the sewing isomorphisms of $\RMFE$ preserve the trivialization.
For instance, the cocycle
\begin{equation}
	\Omega^Z_{j,k} \colon \moduli{\genus_1}{\boundaries_1} \times \moduli{\genus_2}{\boundaries_2} \to \R 
	\label{eq:Ecocycle_pair_intro}
\end{equation}
of sewing two surfaces is defined by the relation
\begin{equation}
	Z(\Sigma_1) \sewx{\RMFE}{j}{k} Z(\Sigma_2) = e^{\Omega^Z_{j, k}(\Sigma_1, \Sigma_2)} Z(\Sigma_1 \sew{j}{k} \Sigma_2), \qquad \Sigma_1 \in \moduli{\genus_1}{\boundaries_1}, \Sigma_2 \in \moduli{\genus_2}{\boundaries_2}.
\end{equation}
Importantly, the properties of $\RMFE$ are encoded by \emph{cocycle identities}.
We refer to Section~\ref{section:cocycles} for an exhaustive list of cocycles and cocycle identities.

In Section~\ref{section:disk_disk}, we prove that up to a constant, the disk-disk cocycle of $(\disk \diffActing{1} \phi) \sew{1}{1} \disk$ is given by the loop Loewner energy~\eqref{eq:lenergy} times the \emph{central charge} of $\RMFE$ (Definition~\ref{def:central_charge}).

\begin{theorem}
	\label{thm:lpot}
	Let $\RMFE$ be a local Fr\"olicher smooth real one-dimensional modular functor (Definition~\ref{def:modular_functor}) with reparametrization invariant trivialization $Z$ (Proposition~\ref{prop:local_equiv}) and central charge $\charge \in \R$.
	Up to a constant depending on $\RMFE$ and $Z$, but not on $\phi$, we have
	\begin{equation}
		\Omega^Z_{1,1}(\disk \diffActing{1} \phi, \disk) = - \frac{\charge}{24} \lenergy{}(\gamma_\phi) + \const, \qquad \phi \in \Diffpan,
	\end{equation}
	where $\gamma_\phi$ is the simple analytic loop associated to $\phi$ by conformal welding.
\end{theorem}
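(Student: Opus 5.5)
We set $F(\phi) := \Omega^Z_{1,1}(\disk \diffActing{1} \phi, \disk)$ as a function on $\Diffpan$, and compare it with $\lenergy{}(\gamma_\phi)$ by differentiating both along one-parameter families and showing that their first variations agree up to the factor $-\charge/24$.

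\textbf{Step 1: invariance.} We first record the invariances of $F$ that follow from the cocycle identities of Section~\ref{section:cocycles} and the reparametrization invariance of $Z$.
\begin{itemize}
\item Precomposing $\phi$ with a rotation, or more generally with a M\"obius map of the disk, changes the sewn sphere only by a M\"obius transformation. By locality this should change $F$ only by a controlled term, and in fact by nothing after adjusting the constant.
\item Correspondingly, $\lenergy{}$ is M\"obius invariant and depends only on $\gamma_\phi$.
\end{itemize}
Hence both sides descend to functions on $\psl\backslash\Diffpan$. Since $\Diffpan$ is connected, it suffices to prove that $\dd F = -\frac{\charge}{24}\,\dd \lenergy{}$ as one-forms.

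\textbf{Step 2: variation of $F$.} For $v\in\VectR$, consider the family $\phi_t=\phi\circ\exp(tv)$, and more generally complex deformations via $\DefC$. We use the associativity cocycle identity for the triple sewing $(\disk\diffActing{1}\phi)\sew{1}{1}A_t\sew{2}{1}\disk$, where $A_t$ is a thin annulus realizing the deformation. This identity expresses $\frac{\dd}{\dd t}F(\phi_t)$ as a sum of two pieces.
\begin{itemize}
\item The first piece is an annulus-cocycle term. By our earlier cohomological computations~\cite{Maibach-Peltola:From_the_conformal_anomaly_to_the_Virasoro_algebra, Maibach-Peltola:Complex_deformations_of_the_circle} it is, up to a coboundary, $\charge$ times the Gelfand--Fuks/Bott--Thurston cocycle.
\item The second piece is the infinitesimal change of the trivialization, which vanishes by reparametrization invariance.
\end{itemize}
The upshot is that $\dd F$ is the one-form $\charge\,\theta$, where $\theta$ is determined by Schwarzian-type data. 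Concretely, with $f,g$ the welding maps, we expect
\[
\theta_\phi(v)=\tfrac{1}{12}\,\Re\!\oint \big(\schwarzian{f}-\schwarzian{g}\big)\,(\text{pushed-forward }v)\,\dd z
\]
up to normalization. We then verify that $\theta$ is well defined by checking that Möbius directions contribute zero.

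\textbf{Step 3: variation of the energy.} We next compute $\dd\lenergy{}$ along the same families. The variational formula of Takhtajan--Teo for the universal Liouville action, $\partial S = 2\,\schwarzian{\cdot}$-type, gives $\dd\lenergy{}$ as an explicit Schwarzian pairing. This is a computation with the Loewner--Kufarev/Hadamard-type variation of $f$ and $g$ under welding. Matching constants then gives $\dd F=-\frac{\charge}{24}\dd\lenergy{}$, and integrating along paths in the connected $\Diffpan$ yields the claim up to a constant.

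\textbf{Main obstacle.} We expect the main difficulty to be Step 2: isolating, purely from the axioms and locality, that the variation of $F$ is exactly the Schwarzian pairing with coefficient $\charge$. This requires identifying the relevant Lie-algebra two-cocycle class with the Gelfand--Fuks class (so that only the central charge survives), and checking that the remaining coboundary ambiguity is absorbed by the choice of $Z$ or contributes a closed exact term. A possible complication here is a potential rotation-number-type contribution (cf.\ Remark~\ref{remark:transnumber}); we plan to exclude it using locality. Fr\"olicher smoothness is what we rely on to justify differentiating along these families.
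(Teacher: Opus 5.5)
Your outer frame matches the paper. Step~1 is fine: Item~\ref{prop:local_equiv_5} of Proposition~\ref{prop:local_equiv} and Item~\ref{lemma:complex_deformations:disk_invariant} of Lemma~\ref{lemma:complex_deformations} give exact invariance of $F(\phi)=\Omega^Z_{1,1}(\disk \diffActing{1} \phi, \disk)$ under pre- and postcomposition by $\psl$. Comparing first variations with the Takhtajan--Teo formula and integrating over the connected group $\Diffpan$ is also what the paper does.

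The gap is Step~2, which carries all the content. The cohomological input only says $[\Omega^Z_{\disk,1}]=\charge\,[\Im\bott]$ relative to $\Diffpan$, so it fixes the cocycle only up to $\delta\beta$ for an unknown function $\beta$. Since $\Omega^Z_{\disk,1}(\phi,\psi)=\Omega^Z_{1,1}(\disk\diffActing{1}\phi,\disk\diffActing{1}\invinv(\psi))-\Omega^Z_{1,1}(\disk\diffActing{1}(\phi\circ\psi),\disk)$, differentiating in $\psi$ at a point $\phi$ far from the identity determines the variation of $F$ only modulo the derivative of $\beta$ at $\phi$. You propose to absorb this into $Z$ or to dismiss it as a ``closed exact term'', but neither works. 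An exact one-form integrates to a non-constant function of $\phi$, which is exactly what the theorem must exclude. There is also no freedom in $Z$: reparametrization invariant trivializations over $\disks$ and $\spheres$ are unique up to a constant factor (Proposition~\ref{prop:reparam_indep}), so $F$ is intrinsic to $\RMFE$ up to an additive constant. The thin-annulus identity does not help either. It trades $F$ for the disk--annulus cocycle $\Omega^Z_{1,1}(\disk\diffActing{1}\phi,A_t)$, whose dependence on $\phi$ is just as unknown, since mixed cocycles relative to $A$ are themselves differences of sewing cocycles by \eqref{eq:mixed_cocycle_change} and \eqref{eq:cocycle_surfaces_equal_vanishing}. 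Cohomology is second-order information at the identity: it can fix a scalar, not the shape of $\dd F$ at a general $\phi$. So the Schwarzian in your $\theta$ is guessed (``we expect''), not derived.

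What is missing is a structural argument valid at every $\phi$. The paper supplies it in these steps:
\begin{enumerate}
\item M\"obius points are critical (Lemma~\ref{lemma:crit_pt}), from the symmetry $F(\phi)=F(\invinv(\phi))$ and rotation invariance.
\item Conformal welding gives $\disk\diffActing{1}\phi=\disk\diffActing{1}\zeta^+$ with $\zeta^+\in\univalent$. The Cauchy--Hilbert transform then represents the variation in a direction $v\in\VectC$ as $\Re\int v\,Q[\phi]$, with a single quadratic differential $Q[\phi]$ holomorphic on $\{|z|>1\}$. Modes holomorphic outside the disk drop out because $\disk\diffActing{1}(\inversion\circ\zeta\circ\inversion)=\disk$.
\item The disk--deformation--disk identity \eqref{eq:cocycle_id_mixed_3}, applied to a composition $\zeta_1^+\circ\zeta_2^+$ of welding maps, yields the chain rule $Q[\phi_3]=(\inversion\circ\zeta_1\circ\inversion)_*Q[\phi_2]+Q[\phi_1]$. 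Together with M\"obius invariance and $Q=0$ at M\"obius points, this forces $Q=C\,\schwarzian{\invinv(\zeta)}$.
\item Only now is cohomology used, and only at $\phi=\id$. The mixed second variation is computed once from the Schwarzian form and once, via the cocycle identities, as the Lie algebra cocycle $\omega^Z_{\disk,1}$, which is cohomologous to $\charge\Im\gelfandfuks$. Realness gives $C\in\ii\R$. Testing on $\ell_n,\ell_{-n}$ with $n\to\infty$ kills the residual coboundary term, of relative size $\beta(\ell_0)/n^2$. This gives $\Im C=-\charge/(12\pi)$ and hence the factor $-\charge/24$.
\end{enumerate}
Your ``matching constants'' needs this last step too, including disposing of the coboundary, which you leave open.
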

Our result motivates a more general geometric interpretation of cocycles, for example, as multi-curve/multi-loop Loewner energies~\cite{Peltola-Wang:LDP_of_multichordal_SLE_real_rational_functions_and_determinants_of_Laplacians, Luo-Maibach:Two-loop_Loewner_potentials}, and possible applications towards K\"ahler geometry~\cite{Takhtajan-Teo:Weil-Petersson_metric_on_the_universal_Teichmuller_space, Alekseev-Meinrenken:Symplectic_geometry_of_Teichmuller_spaces_for_surfaces_with_ideal_boundary, AST:Berezin_quantization_conformal_welding_and_the_Bott-Virasoro_group} 
(see also~\cite{Schippers-Staubach:WP_Teich_theory_of_surfaces_of_infinite_conformal_type} and references therein).

\subsection{Applications to SLE loop measures}
\label{section:sle}

Following Schramm's pioneering work~\cite{Schramm:Scaling_limits_of_LERW_and_UST}, there has been significant interest in the construction and applications of random fractal curves on Riemann surfaces with conformally invariant measures (see, e.g., the recent works~\cite{
AJKS:Random_conformal_weldings,
Dubedat:SLE_and_Virasoro_representations_localization, 
AHS:SLE_loop_via_conformal_welding_of_quantum_disks,
Binder-Kojar:Inverse_of_the_Gaussian_multiplicative_chaos,
ACSW:SLE_loop_measure_and_Liouville_quantum_gravity,
Baverez-Jego:Conformal_welding_and_the_matter-Liouville-ghost_factorisation,
Fan-Sung:Quasi-invariance_for_SLE_welding_measures,
Cai-Gao:Uniqueness_of_generalized_conformal_restriction_measures_and_MKS_measures_for_c_pos,
KMAS:Conformal_welding_of_independent_Gaussian_multiplicative_chaos_measures}, and the references below).

In the upper half-plane, chordal SLE$(\kappa)$ is a random curve defined by the Loewner differential equation using conformal maps and one-dimensional Brownian motion with diffusivity ${\kappa > 0}$ as the driving function~\cite{Lawler:Conformally_invariant_processes_in_the_plane, Kemppainen:SLE_book}. 
On the one hand, the parameter $\kappa$ controls the fractal structure (Hausdorff dimension) of the curve \cite{Rohde-Schramm:Basic_properties_of_SLE,Beffara:Dimension_of_SLE}, and on the other hand, motivated by the great success of using SLEs to describe critical interfaces in statistical physics (see~\cite{Schramm:ICM,Smirnov:Towards_conformal_invariance_of_2D_lattice_models,Chelkak:ICM_2018} and references therein), $\kappa$ was related to the central charge $\charge = \charge(\kappa)$ of CFT as~\cite{Bauer-Bernard:Conformal_field_theories_of_SLEs, Cardy:SLE_and_Dyson_circular_ensembles, Friedrich-Werner:Conformal_restriction_highest_weight_representations_and_SLE, Friedrich-Kalkkinen:On_CFT_and_SLE}
\begin{equation}
	\charge = \frac{(6 - \kappa)(3 \kappa - 8)}{2\kappa}.
\end{equation}
Moreover, chordal SLE$(\kappa)$ measures are fundamentally Markovian in the sense that they behave well under restriction to subsets of the upper half-plane~\cite{LSW:Conformal_restriction_the_chordal_case, Werner:Conformal_restriction_and_related_questions}.
There are two special values, $\kappa \in \{8/3,6\}$, for which the central charge vanishes. These are deeply related to the restriction \emph{invariance} property,
which enabled Lawler, Schramm~\&~Werner to present strong support to the conjecture that chordal SLE$(8/3)$ is the scaling limit of the infinite self-avoiding walk in the upper half-plane~\cite{LSW:On_the_scaling_limit_of_planar_SAW},
and relate these to outer boundaries of planar Brownian motion (and of SLE$(6)$), which also paved the way to their proof of the Mandelbrot conjecture~\cite{Mandelbrot:The_fractal_geometry_of_Nature,LSW:The_dimension_of_the_planar_Brownian_frontier_is_four_thirds}.
The loop version of the restriction invariance appeared in~\cite{Werner:The_conformally_invariant_measure_on_self-avoiding_loops, Bauer:Simple_construction_of_Werner_measure_from_chordal_SLE, Chavez-Pickrell:Werners_measure_on_self-avoiding_loops_and_welding}.
In accordance with the observed universal symmetry breaking for critical models and CFTs~\cite{Polyakov:Conformal_symmetry_of_critical_fluctuations, BPZ:Infinite_conformal_symmetry_of_critical_fluctuations_in_2D} (see also Section~\ref{section:cft}), Kontsevich~\&~Suhov~\cite{Kontsevich:CFT_SLE_and_phase_boundaries, Kontsevich-Suhov:On_Malliavin_measures_SLE_and_CFT} introduced a notion of restriction \emph{covariance} for $\charge \in (-\infty, 1]$, which is equivalently expressed in terms of Brownian loop measure (BLM)~\cite{Lawler-Werner:The_Brownian_loop_soup}.
However, what remained mathematically unclear is why this notion of restriction covariance is so universal. 

In the planar setup, conventional in probability theory, \emph{a restriction covariant loop measure}, also known as Malliavin--Kontsevich--Suhov (MKS) loop measure~\cite{Malliavin:Canonic_diffusion_above_the_diffeomorphism_group_of_the_circle, Airault-Malliavin:Unitarizing_probability_measures_for_representations_of_Virasoro_algebra, Kontsevich:CFT_SLE_and_phase_boundaries, Kontsevich-Suhov:On_Malliavin_measures_SLE_and_CFT, Baverez-Jego:The_CFT_of_SLE_loop_measures_and_the_Kontsevich-Suhov_conjecture,GQW:Infinitesimal_conformal_restriction_and_unitarizing_measures_for_Virasoro_algebra}
is a $\sigma$-finite Borel measure $\nu_\C$ on the space $\cloops{\C}$ of Jordan curves with the topology induced by the Hausdorff distance (see~\cite[Proposition~B.2 in Appendix~B]{Baverez-Jego:The_CFT_of_SLE_loop_measures_and_the_Kontsevich-Suhov_conjecture}).
Its restriction to simple loops in a domain $D \subset \C$ is defined by the Radon--Nikodym derivative
\begin{equation}
	\frac{\dd \nu_{D}}{\dd \nu_{\C}}(\loopzero) = e^{\frac{\charge}{2}\Lambda^*(\loopzero, \partial D)} \boldone_{\loopzero \subset D}, \qquad \loopzero \in \cloops{\C},
	\label{eq:restriction_blm}
\end{equation}
where $\Lambda^*(a, \partial D)$ is the (re-)normalized BLM~\cite{Field-Lawler:Reversed_radial_SLE_and_the_Brownian_loop_measure} of almost surely self-intersecting (Brownian) loops hitting both the loop $\loopzero$ and exiting the domain $D$.
This is a covariance property due to the reweighting by Brownian loops, and involves the central charge ${\charge \in (-\infty, 1]}$.
Moreover, one requires $F^* \nu_{D_2} = \nu_{D_1}$ under a biholomorphism $F \colon D_1 \to D_2$. 

In this specific setup, it was conjectured by Kontsevich~\&~Suhov~\cite{Kontsevich:CFT_SLE_and_phase_boundaries, Kontsevich-Suhov:On_Malliavin_measures_SLE_and_CFT}, 
and proven first for $\charge = 0$ by Werner~\cite{Werner:The_conformally_invariant_measure_on_self-avoiding_loops}, and subsequently for any $\charge \in (-\infty, 1]$ by Baverez~\&~Jego~\cite{Baverez-Jego:The_CFT_of_SLE_loop_measures_and_the_Kontsevich-Suhov_conjecture}, that such a measure is \emph{unique} up to normalization by a positive scalar. 
These results are motivated by the universality of SLE as the euclidean path-distribution of interfaces in conformal field theories~\cite{Schramm:Scaling_limits_of_LERW_and_UST, Friedrich-Kalkkinen:On_CFT_and_SLE, Bauer-Bernard:2d_growth_processes_SLE_and_Loewner_chains}.
Loop versions of SLE have been constructed by Werner~\cite{Werner:The_conformally_invariant_measure_on_self-avoiding_loops} for $\charge = 0$,
S.~Benoist~\&~Dub\'edat~\cite{Benoist-Dubedat:SLE2_loop_measure} for $\charge = -2$,
Kemppainen~\&~Werner~\cite{Kemppainen-Werner:Nested_simple_CLEs_in_Riemann_sphere} for $\charge \in (0, 1]$, and finally by Zhan~\cite{Zhan:SLE_loop_measures} for all $\charge \in (-\infty, 1]$.
The SLE loop measures are restriction covariant measures in the sense of Equation~\eqref{eq:restriction_blm}, and by the Kontsevich--Suhov conjecture, they are unique as such.

In the present work, we answer a question posed by S.~Benoist~\cite{Benoist:Classifying_conformally_invariant_loop_measures} that conceptually precedes the construction and uniqueness of such loop measures, asking which ``restriction functions'' can consistently appear in Equation~\eqref{eq:restriction_blm}, replacing the normalized measure of Brownian loops in the plane, $\Lambda^*(a, \partial D)$, by a general function $f(a, D, \C)$ satisfying conformal invariance and a cocycle property (Items~\ref{def:restriction_function:b}~\&~\ref{def:restriction_function:c} in Definition~\ref{def:restriction_function}).
The generalized setup, moreover, extends very naturally to the setting of loops in Riemann surfaces as considered in~\cite{Kontsevich:CFT_SLE_and_phase_boundaries, Kontsevich-Suhov:On_Malliavin_measures_SLE_and_CFT, Werner:The_conformally_invariant_measure_on_self-avoiding_loops, Zhan:SLE_loop_measures} (see also~\cite{LeJan:Markov_paths_loops_and_fields, LeJan:Brownian_loops_topology} for related ideas).
Our result holds under a further symmetry assumption on the restriction functions (Item~\ref{def:restriction_function:d} in Definition~\ref{def:restriction_function}), which is very natural for both BLM and for SLE. 

\begin{theorem}
	If $f_1$ and $f_2$ are restriction functions (Definition~\ref{def:restriction_function}) with equal central charge which are Fr\"olicher smooth on analytic configurations, then $f_1$ and $f_2$ are equivalent.
	\label{thm:restriction_function}
\end{theorem}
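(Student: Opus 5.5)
The plan is to reduce Theorem~\ref{thm:restriction_function} to Theorem~\ref{thm:universality}. From a restriction function $f$ we will build a local Fr\"olicher smooth real one-dimensional modular functor $\RMFE_f$ whose central charge is that of $f$. We then use the universality of such functors to compare $f_1$ with $f_2$.

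\textbf{Step 1: construct $\RMFE_f$.} Over each $\moduligb$ we take the trivial principal $\Rp$-bundle with a canonical trivialization $Z_f$. We define the sewing isomorphisms through cocycles obtained from $f$. For $\Sigma_1 \sew{j}{k} \Sigma_2$, the seam is an analytic loop $\gamma$ in the sewn surface $\Sigma$, and $\gamma$ lies in the domains $\Sigma_1^\circ \cup \gamma$ and $\Sigma_2^\circ \cup \gamma$ obtained by slightly extending each piece across the seam. We set
\[
\Omega^{Z_f}_{j,k}(\Sigma_1,\Sigma_2) := \frac{1}{2}\, f(\gamma, \Sigma_1^{\mathrm{ext}}, \Sigma) + \frac12 f(\gamma, \Sigma_2^{\mathrm{ext}}, \Sigma),
\]
and define the self-sewing cocycles analogously. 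Since $f$ should be independent of the choice of extension, the extension has to be fixed via the analytic parametrizations.

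The checks then run as follows. Conformal invariance (Item~\ref{def:restriction_function:b}) makes $\Omega$ well defined on moduli classes. The cocycle property (Item~\ref{def:restriction_function:c}) gives the associativity cocycle identities of Section~\ref{section:cocycles}. The symmetry assumption (Item~\ref{def:restriction_function:d}) gives the symmetry and exchange identities. Locality holds because $f(\gamma,D,\Sigma)$ depends only on a neighbourhood germ of the seam, which in turn is determined by the parametrizations; we verify this through one of the equivalent conditions in Proposition~\ref{prop:local_equiv}. Fr\"olicher smoothness is inherited from the hypothesis that $f$ is Fr\"olicher smooth on analytic configurations. Finally, the central charge in the sense of Definition~\ref{def:central_charge} must match that of $f$. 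We compute it from the second variation of the disk-disk cocycle, which reduces to the infinitesimal (Virasoro-type) cocycle computed in~\cite{Maibach-Peltola:From_the_conformal_anomaly_to_the_Virasoro_algebra}.

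\textbf{Step 2: apply universality.} By Theorem~\ref{thm:universality}, $\RMFE_{f_1}\cong\RMFE_{f_2}$. An isomorphism amounts to functions $h_{\genus,\boundaries}\colon\moduligb\to\R$ such that
\[
\Omega^{Z_{f_1}} - \Omega^{Z_{f_2}} = \delta h,
\]
meaning that the difference of cocycles is the coboundary of $h$ evaluated on the pieces and on the sewn surface.

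\textbf{Step 3: translate back.} We show that $f_1-f_2$ then has the coboundary form that defines equivalence in Definition~\ref{def:restriction_function}. Writing a configuration $(\gamma,D,\Sigma)$ as a sewing of surfaces cut along $\gamma$ and $\partial D$, we express $f_1-f_2$ as a combination of the $h$-values on the pieces. Any choice of boundary parametrizations made along the way cancels, by reparametrization invariance of the trivialization.

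\textbf{Main obstacle.} The hardest part is Step~1, specifically pinning down exactly which combination of values of $f$ defines each cocycle so that all cocycle identities hold. This matters most for self-sewing and for configurations where $D$ is not simply related to a sewing. I would first treat genus zero, where the disk-disk cocycle can be checked directly against Theorem~\ref{thm:lpot}. Higher genus would then follow inductively, using the cut-and-paste decomposition of surfaces that is already used in the proof of Theorem~\ref{thm:universality}.
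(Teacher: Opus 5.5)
Your overall strategy (build a local modular functor from $f$, apply Theorem~\ref{thm:universality}, read off a coboundary) is the paper's, and Step~2 matches it. However, Step~1, which you yourself flag as the main obstacle, fails as written, and what is missing is exactly the key lemma of the paper's argument.

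First, $f(\gamma,D,\Sigma)$ is not independent of the extension $D$. By the cocycle property~\ref{def:restriction_function:c}, enlarging $D$ to $D'$ changes it by $f(\gamma,D,D')$.

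Second, for any fixed choice of collars your averaged cocycle violates the associativity identity~\eqref{eq:basic_cocycle_idenity}. Use the following notation:
\begin{itemize}
\item $\Sigma_{12}=\Sigma_1\sew{j}{k}\Sigma_2$, $\Sigma_{23}=\Sigma_2\sew{l}{m}\Sigma_3$, and $S$ for the total surface;
\item $\gamma_1,\gamma_2$ for the two seams;
\item $c_1\subset\Sigma_1$ for the collar used to extend $\Sigma_2$ and $\Sigma_{23}$ across $\gamma_1$;
\item $c_2'\subset\Sigma_3$ for the collar used to extend $\Sigma_2$ and $\Sigma_{12}$ across $\gamma_2$.
\end{itemize}
Expand with the cocycle property. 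The main terms cancel because $f(\gamma_1,\Sigma_{12},S)=f(\gamma_2,\Sigma_{23},S)$, which is the symmetry~\ref{def:restriction_function:d}; that, not the cocycle property, is what associativity rests on. The remaining defect (left-hand side minus right-hand side) is
\[
\tfrac12\, f\big(\gamma_1,\, \Sigma_2\cup c_1,\, \Sigma_{23}\cup c_1\big) - \tfrac12\, f\big(\gamma_2,\, \Sigma_2\cup c_2',\, \Sigma_{12}\cup c_2'\big).
\]
This is nonzero in general. For the Brownian loop measure in planar domains, the first term strictly increases when $\Sigma_3$ is enlarged away from $\gamma_2$, while the second does not change. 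If the collars are tied to the parametrizations at the seam, the invariance~\eqref{eq:local_inversion} also fails in general.

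The paper avoids extensions altogether via Proposition~\ref{prop:an_coboundary}. It uses the renormalized function
\[
h(\gamma,\Sigma)=f(\gamma,\vartheta(A),\Sigma)-f(S^1,A,\hat{\C}).
\]
By conformal invariance and the cocycle property, $h$ is independent of the annulus $A$ and of the analytic parametrization $\vartheta$. It satisfies $f(\gamma,\Sigma_1,\Sigma_2)=h(\gamma,\Sigma_2)-h(\gamma,\Sigma_1)$ on analytic loops. Set $\Omega^Z_{j,k}(\Sigma_1,\Sigma_2)=h(\gamma,\Sigma_1\sew{j}{k}\Sigma_2)$, and similarly for self-sewing. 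Associativity then becomes literally the symmetry property:
\[
h(\gamma_1,\Sigma_{12})-h(\gamma_1,S)=-f(\gamma_1,\Sigma_{12},S)=-f(\gamma_2,\Sigma_{23},S)=h(\gamma_2,\Sigma_{23})-h(\gamma_2,S).
\]
Locality is automatic, since $h$ only sees the unparametrized seam. Your cocycle equals this one minus $\tfrac12\big(h(\gamma,\Sigma_1^{\mathrm{ext}})+h(\gamma,\Sigma_2^{\mathrm{ext}})\big)$, and those extra terms are what break it.

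Two smaller points.

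The central charge of a restriction function has no independent definition in the paper. It is defined as the central charge of the associated modular functor, so there is nothing to match. The earlier second-variation computation you invoke is specific to $\Detrc$ and would not apply to a general $f$.

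In Step~3, cutting along $\partial D$ presupposes that $\partial D$ and $\gamma$ are analytic, so that the pieces lie in Segal moduli spaces. Definition~\ref{def:restriction_function} allows arbitrary subsurfaces and continuous Jordan curves. The paper proceeds as follows:
\begin{itemize}
\item it cuts only along $\gamma$, inside $\Sigma_1$ and inside $\Sigma_2$ separately, using $f=h(\gamma,\Sigma_2)-h(\gamma,\Sigma_1)$;
\item it expresses $h_1-h_2$ through the functions $B$ relating the two trivializations, which do not depend on boundary parametrizations (Proposition~\ref{prop:reparam_indep});
\item it passes from $\anloops{\Sigma}$ to $\cloops{\Sigma}$ using density (Lemma~\ref{lemma:an_c_inclusion}) and continuity of $f_1,f_2$.
\end{itemize}
Your proposal does not address the last step.
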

We prove Theorem~\ref{thm:restriction_function} at the end of Section~\ref{section:loop_measures}, which also specifies the precise setting for S.~Benoist's conjecture, as well as the notions of Fr\"olicher smoothness of analytic configurations and central charge.
The proof relies on Theorem~\ref{thm:universality}.

\begin{remark} \label{remark:logdet}
	For smooth geometry, the zeta-regularized determinant $\detz{\Delta_g}$ of the Laplacian on a Riemann surface $\Sigma$ with conformal metric $g$~\cite{Ray-Singer:R-Torsion_and_the_Laplacian_on_Riemannian_manifolds} provides a simple way to link all the quantities under consideration. 
	On its own, $\detz{\Delta_g}$ may be understood as the partition function of the scalar free boson.
	It satisfies Weyl covariance by the Polyakov--Alvarez anomaly formula~\cite{Polyakov:Quantum_geometry_of_bosonic_strings, Alvarez:Theory_of_strings_with_boundaries_fluctuations_topology_and_quantum_geometry, OPS:Extremals_of_determinants_of_Laplacians}. It thus defines a trivialization\footnote{
	The quantity $\detz{\Delta_g}$ satisfies the exact Weyl covariance by the conformal anomaly~\eqref{eq:conformal_anomaly} when using a specific normalization~\cite[Appendix~A.2]{Luo-Maibach:Two-loop_Loewner_potentials} modified by the boundary term $B(g, \Sigma)$.
}
	\begin{equation}
		Z(\Sigma) = \bigg(\frac{\detz{\Delta_g}}{B(g, \Sigma)}\bigg)^{-\frac{\charge}{2}} [g] \in \Detrc(\Sigma), \qquad B(g, \Sigma) =  e^{\frac{1}{4\pi} \int_{\partial \Sigma} k_g \dboundary{g}} 
		\label{eq:triv_detz}
	\end{equation}
 of the real determinant line bundle.
 	With respect to this trivialization, the cocycle~\eqref{eq:Ecocycle_pair_intro} becomes
	\begin{equation}
	\begin{aligned}
			\Omega^Z_{j,k}(\Sigma_1, \Sigma_2) = - \frac{\charge}{2} \Bigg(
			& \log \frac{\detz{\Delta_{\restrict{g}{\Sigma_1 \sew{j}{k} \Sigma_2}}}}{\detz{\Delta_{\restrict{g}{\Sigma_1}}} \; \detz{\Delta_{\restrict{g}{\Sigma_2}}}}
\\
			& + B(g, \Sigma_1) + B(g, \Sigma_2) - B(g, \Sigma_1 \sew{j}{k} \Sigma_2)
		\Bigg)
		\label{eq:cocycle_detz}
	\end{aligned}
	\end{equation}
	Moreover, the universal Liouville action~\eqref{eq:lenergy} for a smooth simple loop $\gamma$ may be expressed as the following ratio of determinants~\cite[Corollary~3.12]{Takhtajan-Teo:Weil-Petersson_metric_on_the_universal_Teichmuller_space} (cf.~\cite{Schiffer:Fredholm_eigenvalues_of_plane_domains, Schiffer:Fredholm_eigenvalues_of_multiply-connected_domains}):  
	\begin{equation}
		\lenergy{}(\gamma) = 12 \log \Bigg(
			\frac{
				\detz{\Delta_{\restrict{g}{|z|<1}}} \;
				\detz{\Delta_{\restrict{g}{|z|>1}}}
			}{
				\detz{\Delta_{\restrict{g}{L}}} \;
				\detz{\Delta_{\restrict{g}{R}}}
			}
		\Bigg)
		= - \frac{24}{\charge} \Big( \Omega^Z_{j,k}(L, R) - \Omega^Z_{j,k}(\disk, \disk) \Big)  ,
		\label{eq:lenergy_detz}
	\end{equation}
	where $\hat{\C} \setminus \gamma = L \sqcup R$,
	and the $\charge \neq 0$ expression in terms of cocycles~\eqref{eq:cocycle_detz} is in agreement with Theorem~\ref{thm:lpot}.
	Finally, the (normalized) Brownian loop measure may be approximated by zeta-regularized determinants of the Laplacian~\cite{LeJan:Markov_loops_determinants_and_Gaussian_fields, Dubedat:SLE_and_free_field, APPS:Brownian_loops_and_the_central_charge_of_Liouville_random_surface, Peltola-Wang:LDP_of_multichordal_SLE_real_rational_functions_and_determinants_of_Laplacians, Luo-Maibach:Two-loop_Loewner_potentials, Xue-Wang:The_Brownian_loop_measure_on_Riemann_surfaces_and_applications_to_length_spectra}, leading to the restriction function
	\begin{equation}
		f(\loopzero, \Sigma_1, \Sigma_2) = \frac{\charge}{2} \log \Bigg(
			\frac{
				\detz{\Delta_{\restrict{g}{\Sigma_2}}} \;
				\detz{\Delta_{\restrict{g}{\Sigma_1 \setminus \loopzero}}}
			}{
				\detz{\Delta_{\restrict{g}{\Sigma_2 \setminus \loopzero}}} \;
				\detz{\Delta_{\restrict{g}{\Sigma_1}}}
			}
		\Bigg)
		= h(\loopzero, \Sigma_2) - h(\loopzero, \Sigma_1),
		\label{eq:f_detz}
	\end{equation}
	 only defined on simple smooth loops $\loopzero \subset \Sigma_1 \subset \Sigma_2$.
	This restriction function is related to the cocycles~\eqref{eq:Ecocycle_pair_intro} through the auxiliary function
	\begin{equation}
		\gonlyan(\loopzero, \Sigma)
		= \begin{cases}
			\Omega_{1, 1}^Z(L, R), & \textnormal{$\loopzero$ a separating loop,} \\
			\Omega_{1, 2}^Z(\Sigma \setminus \loopzero), & \textnormal{$\loopzero$ a non-separating loop,}
		\end{cases}
	\end{equation}
	where $L$ and $R$ denote the connected components of $\Sigma \setminus \loopzero$ in the separating case.  
	This is a special case of a correspondence between restriction functions and real one-dimensional modular functors given by Proposition~\ref{prop:an_coboundary} and Equation~\eqref{eq:cocycle_gonlyan}.
	Note that the formulas~(\ref{eq:triv_detz}, \ref{eq:cocycle_detz}, \ref{eq:lenergy_detz}, \ref{eq:f_detz}) above only depend on the conformal class of the metric $g$, yet this does not make them trivial, since $g$ must be smooth across any seams.
\end{remark}

\subsection{Applications to conformal field theory}
\label{section:cft}

In the context of euclidean two-dimensional quantum field theories, conformal field theories (CFTs) are distinguished by the conformal anomaly. 
A classical field theory is conformally invariant if the action functional is invariant under Weyl transformations: local scale changes replacing a Riemannian metric $g$ by $e^{2\sigma} g$, where $\sigma$ is a real-valued function.
The classical physics literature~\cite{BPZ:Infinite_conformal_symmetry_in_2D_QFT, DMS:CFT} applies general principles of quantization to show that the conformal symmetry is broken in quantum theories in various ways, collectively known as the conformal anomaly.
In mathematics, the conformal anomaly is either assumed as part of axiomatizations of CFTs (e.g.~\cite{Segal:Definition_of_CFT, Gawedzki:CFT_lectures, Segal:Definition_of_CFT_collection}), or proven as a property in constructions of particular CFTs (like Liouville theory; see the review~\cite{GKR:Review_on_the_probabilistic_construction_and_conformal_bootstrap_in_Liouville_theory}).
In this section, we briefly introduce the main instances of the conformal anomaly, and explain how the symmetry breaking relates to our classification of real one-dimensional modular functors (Theorem~\ref{thm:universality}).

An elementary quantity of CFT that exhibits a conformal anomaly is the partition function, which is a real number $Z_g(\Sigma) \in \R$ depending on a surface $\Sigma$ with Riemannian metric $g$.
It has the \emph{Weyl covariance} property $Z_{e^{2\sigma}}(\Sigma) = e^{\charge \lfunct(\sigma, g, \Sigma)} Z_g(\Sigma)$ with a central charge $\charge \in \R$. Here, the \emph{conformal anomaly} is the functional
\begin{equation}
    \lfunct(\sigma, g, \Sigma) =
    \frac{1}{12 \pi} \iint_\Sigma \bigg(
    \frac{1}{2} |\nabla_g \sigma|_g^2 + R_g \sigma
    \bigg) \dvol{g}
    + \frac{1}{12 \pi} \int_{\partial \Sigma} k_g \sigma \, \dboundary{g},
	\label{eq:conformal_anomaly}
\end{equation}
where $\nabla_g$, $R_g$, and $k_g$ are, respectively, the divergence, the Gaussian curvature, and the boundary curvature with respect to the metric $g$.
From a mathematical point of view, one naturally asks the universality question: \textit{Why does this particular functional  $\lfunct(\sigma, g, \Sigma)$ appear in the Weyl covariance?}
In a bigger picture, this question may be motivated by universal properties of CFTs as fixed points of the renormalization group flow, and as scaling limits of two-dimensional critical lattice models (see~\cite{Cardy:CFT_and_statistical_mechanics, Mussardo:Statistical_field_theory}). 

A partial answer can be found by studying infinitesimal conformal symmetry in terms of the representation theory of the Witt algebra --- an infinite-dimensional complex Lie algebra that may be constructed as the Lie algebra $\VectC$ of complex-valued vector fields on the unit circle (Section~\ref{section:setup}).
The anomaly manifests through the possibility of projective representations.
These are equivalent to non-projective representations of a central extension of the Witt algebra by $\C$.
In turn, central extensions of Lie algebras are classified by the second Lie algebra cohomology group, which in the case of the Witt algebra is one-dimensional $H^2(\Witt, \C) \cong \C$; see~\cite{Gelfand-Fuchs:Cohomologies_of_the_Lie_algebra_of_vector_fields_on_the_circle}.
Since a full CFT involves two representations of the Witt algebra, the conformal anomaly is unique up to two complex coefficients\footnote{
	These are the chiral and anti-chiral central charges $\charge_+, \charge_- \in \C$ associated to the chiral and anti-chiral representations of the Witt algebra.
	For a full CFT, they are expected to combine to the real overall central charge $\charge = \charge_+ + \charge_- \in \R$, which is the parameter appearing in the present work.
} of the generator of $H^2(\Witt, \C)$: The Gel'fand--Fuks cocycle
\begin{equation}
\gelfandfuks(\ell_n, \ell_m)
=
\frac{\ii}{12} n^3 \delta_{n + m}, \qquad n, m \in \Z, \qquad \ell_n = - z^{n+1} \partial_z,
\label{eq:gelfandfuks}
\end{equation}
also known as the cocycle defining the \emph{Virasoro algebra} as a central extension of the Witt algebra (see also Section~\ref{section:central_charge}).

In the physics literature, Weyl covariance and the Virasoro algebra are related via the trace anomaly of the stress-energy tensor ---  a third instance of the conformal anomaly, which has been studied mathematically in certain CFTs~\cite{Kupiainen-Oikarinen:Stress-energy_in_Liouville_CFT, GKRV:Conformal_bootstrap_in_Liouville_theory, GKRV:Segals_axioms_for_Liouville_theory, Losev:Probabilistic_correlation_functions_of_the_Schwarzian_field_theory}.
In earlier work~\cite{Maibach-Peltola:From_the_conformal_anomaly_to_the_Virasoro_algebra}, we provided a way to derive the Virasoro algebra from the Weyl anomaly through the real determinant line bundle --- bypassing the stress-energy tensor construction.
However, neither the trace anomaly, nor our derivation, give a satisfying answer to the universality question at the level of the Weyl anomaly itself, since there might be other formulas than~\eqref{eq:conformal_anomaly} defining a consistent Weyl-type anomaly, and which also result in the Virasoro algebra at the infinitesimal level.

Now, let us make the universality question above more concrete, by considering a sufficiently regular (e.g.\ Fr\"olicher smooth) function $Q(\sigma, g, \Sigma)$ of conformal metrics $g$ on Riemann surfaces $\Sigma$ and conformal factors $\sigma \in C^\infty(\Sigma, \R)$.
To obtain a consistent notion of covariance using $Q$ instead of $\lfunct$,
\begin{equation}
	W_{e^{2\sigma} g}(\Sigma) = e^{Q(\sigma, g, \Sigma)} W_{g}(\Sigma),
	\label{eq:Q_covariance}
\end{equation}
for functions $W_g(\Sigma)$ of the surface and metric, we require $Q$ to satisfy the following:
\begin{enumerate}
	\item \emph{Diffeomorphism invariance:} $Q(\sigma_1, g_1, \Sigma_1) = Q(\sigma_2, g_2, \Sigma_2)$ if there exists a diffeomorphism $f \colon \Sigma_1 \to \Sigma_2$ such that $g_1 = f^* g_2$, and $\sigma_1 = \sigma_2 \circ f$.
	\item \emph{The cocycle property:} $Q(\sigma_1 + \sigma_2, g, \Sigma) = Q(\sigma_1, g, \Sigma) + Q(\sigma_2, e^{2\sigma_1} g, \Sigma)$.
	\item \emph{Locality:} $Q(\sigma, g, \Sigma_1 \sew{j}{k} \Sigma_2) = Q(\restrict{\sigma}{\Sigma_1}, \restrict{g}{\Sigma_1}, \Sigma_1) + Q(\restrict{\sigma}{\Sigma_2}, \restrict{g}{\Sigma_2}, \Sigma_2)$, where $g$ is a conformal metric on $\Sigma_1 \sew{j}{k} \Sigma_2$, and $\sigma \in C^\infty(\Sigma_1 \sew{j}{k} \Sigma_2, \R)$.
\end{enumerate}
In particular, the conformal anomaly $\lfunct(\sigma, g, \Sigma)$ satisfies the properties 1, 2, and 3, and $\lfunct$-covariance is Weyl covariance.

The idea to encode the Weyl covariance of partition functions into a line bundle goes back to Friedan~\&~Shenker~\cite{Friedan-Shenker:The_analytic_geometry_of_two-dimensional_conformal_field_theory}.
The term \emph{real determinant line bundle} appears in the works of Kontsevich~\&~Suhov~\cite{Kontsevich-Suhov:On_Malliavin_measures_SLE_and_CFT}, Friedrich~\cite{Friedrich:On_connections_of_CFT_and_SLE}, and
S.~Benoist~\&~Dub\'edat~\cite{Dubedat:SLE_and_free_field, Dubedat:SLE_and_Virasoro_representations_localization, Benoist-Dubedat:SLE2_loop_measure} in the context of SLE.
Our previous results show that the real determinant line bundle $\Detrc$ is a nontrivial local real one-dimensional modular functor~\cite[Corollary~1.5]{Maibach-Peltola:From_the_conformal_anomaly_to_the_Virasoro_algebra}.
By generalizing the construction of $\Detrc$ (Section~\ref{section:det_line_bundle}), we can define a local real one-dimensional modular functor $\RMFE$ associated to the function $Q(\sigma, g, \Sigma)$.

For $\RMFE$, the fiber over a surface $\Sigma \in \moduligb$ is a set of equivalence classes, denoted $\lambda [g]_Q \in \RMFE(\Sigma)$, comprising pairs of a conformal metric $g$ on a representative of $\Sigma$ and a positive real number $\lambda > 0$.
The equivalence is defined so that if $g_1$ and $g_2 = e^{2\sigma} g_2$ for $\sigma \in C^\infty(\Sigma, \R)$ are two conformally equivalent metrics, then the numbers $\lambda_1, \lambda_2 > 0$ of equivalent pairs are related as $\lambda_2 = e^{-\charge \; Q(\sigma, g, \Sigma)} \lambda_1$.
In particular, we have
\begin{equation}
[e^{2\sigma} g]_Q = e^{-Q(\sigma, g, \Sigma)} [g]_Q \in \RMFE(\Sigma).
\label{eq:factor_Q}
\end{equation}
Extending the equivalence to pairs related by pullback of the metric via isomorphisms of the Riemann surface, we obtain principal $\Rp$-bundles $\RMFE(\moduligb)$ over the Segal moduli spaces.
In~\cite{Maibach-Peltola:From_the_conformal_anomaly_to_the_Virasoro_algebra}, we added sewing isomorphisms to the real determinant line bundle by taking unions of admissible metrics.
Applying the same principle, we can define the sewing isomorphisms of $\RMFE$,
\begin{equation}
\begin{gathered}
\begin{aligned}
	\sewx{\RMFE}{j}{k} \colon &&\RMFE(\Sigma_1) &\otimes \RMFE(\Sigma_2) &&\longrightarrow && \RMFE(\Sigma_1 \sew{j}{k} \Sigma_2), \\
	&&\lambda_1 [g_1]_Q &\otimes \lambda_2 [g_2]_Q &&\longmapsto &&\lambda_1 \lambda_2 [g_1 \cup g_2]_Q,
\end{aligned} \\
\begin{aligned}
	\sewxself{\RMFE}{j}{k} \colon &&\RMFE(\Sigma) &\longrightarrow && \RMFE(\sewself{j}{k} \Sigma), \\
	&&\lambda [g]_Q &\longmapsto &&\lambda [g]_Q.
\end{aligned}
\end{gathered}
\end{equation}
To obtain a smooth conformal metric $g_1 \cup g_2$ on $\Sigma_1 \sew{j}{k} \Sigma_2$ and $g$ on $\sewself{j}{k} \Sigma$, one chooses appropriate representatives; e.g.,
one may apply Weyl transformations to $g_1$ and $g_2$ so that they agree with the pushforward of the flat metric via the boundary parametrizations near the respective boundary components.
The factors~\eqref{eq:factor_Q} obtained by these Weyl transformations are what makes the sewing isomorphisms nontrivial in the case of nontrivial~$Q$.

Theorem~\ref{thm:universality} combined with Theorem~\ref{thm:detrc_prop} in Section~\ref{section:det_line_bundle} imply that $\RMFE$ must be isomorphic to the real determinant line bundle $\Detrc$ for some central charge $\charge \in \R$.
Thus, the conformal anomaly~\eqref{eq:conformal_anomaly} is \emph{universal} in the sense that all local real one-dimensional modular functors --- including those defined by other local covariance functions $Q(\sigma, g, \Sigma)$, are isomorphic to a real determinant line bundles defined by $\charge \; \lfunct(\sigma, g, \Sigma)$, for $\charge \in \R$.

\subsection*{Acknowledgements}

We would like to thank Yi-Zhi Huang, David Radnell, and Eric Schippers for insightful discussions on the Segal moduli space, and Eric Schippers also for useful comments on a draft version of this manuscript. 
Part of this work was included in S.M's doctoral thesis~\cite[Appendix~C]{Maibach:PhD_thesis}.
The first version of this manuscript was finished during the stay of the authors at the Centre International de Rencontres Math\'ematiques (CIRM), which we cordially thank for hospitality and support. 

S.M.~has been supported by the Deutsche Forschungsgemeinschaft (DFG, German Research Foundation) under Germany's Excellence Strategy EXC-2047/1-390685813, by the grant CRC 1060 ``The
Mathematics of Emergent Effects'' (Project-ID 211504053), and by the grant from the Simons Foundation International [SFI-MPS-PP-00012621-19].

This material is part of a project that has received funding from the  European Research Council (ERC) under the European Union's Horizon 2020 research and innovation programme (101042460): 
ERC Starting grant ``Interplay of structures in conformal and universal random geometry'' (ISCoURaGe) 
and from the Academy of Finland grant number 340461 ``Conformal invariance in planar random geometry.''

E.P.~is also supported by 
the Academy of Finland Centre of Excellence Programme grant number 346315 ``Finnish centre of excellence in Randomness and STructures (FiRST)'' 
and by the Deutsche Forschungsgemeinschaft (DFG, German Research Foundation) project number 390534769 ``Matter and Light for Quantum Computing (ML4Q).

\subsection*{Organization of this article}

Section~\ref{section:setup} gathers preliminaries concerning operations on Segal moduli spaces and its important sub-moduli spaces of hyperbolic and projective surfaces. Specifically, we review complex deformations of the unit circle and their action on the Segal moduli spaces, and we introduce a class of ```Fuchsian projective surfaces'' for which the boundary parametrizations are charts in the Fuchsian projective structure on the (closed) doubled surface.

In Section~\ref{section:modular_functor}, we introduce the precise definition of real one-dimensional modular functor, and construct the central extension that such modular functor induces on complex deformations.
Moreover, we discuss cocycles and central charge, as well as fundamental properties of real one-dimensional modular functors: locality, flat modular invariance, crossing invariance, and hyperbolic modular invariance.
We also show that the real determinant line bundle satisfies all of these properties. 

In Section~\ref{section:disk_disk}, we focus on cocycles obtained from sewing of a pair of disks.
We show the relation to the universal Liouville action, or loop Loewner energy, thus proving Theorem~\ref{thm:lpot} (whose proof comprises the totality of Section~\ref{section:disk_disk}).
Important steps in the proof are the identification of critical points for the disk-disk cocycle and the derivation of a variational formula that involves the Schwarzian derivative of a welding map, which then enables us to identify the cocycle and the associated central charge.

Section~\ref{section:iso} comprises the proof of Theorem~\ref{thm:universality}, proceeding by induction starting from the disk-disk case, and by proving the properties introduced in the earlier sections (flat modular invariance, crossing invariance, and hyperbolic modular invariance).

In the final Section~\ref{section:loop_measures}, we turn to the restriction covariant SLE/MKS loop measures, and to the proof of Theorem~\ref{thm:restriction_function}. We first detail the setup of restriction functions and compare with the precise setup of S.~Benoist. Finally, we construct real one-dimensional modular functors from restriction functions, whose universality yields Theorem~\ref{thm:restriction_function}.

\section{The Segal moduli spaces}
\label{section:setup}
We now fix notation and gather preliminary results for the subsequent computations involving Riemann surfaces and the sewing operation.
More generally, this framework applies to the Segal approach to CFT~\cite{Segal:Definition_of_CFT, Segal:Definition_of_CFT_collection}.
The Segal moduli spaces $\moduligb$ possess infinite-dimensional geometric structure, as well as algebraic structure given by the sewing operation~\eqref{eq:sewing_moduligb}.
The notion of ``complex deformations'' of the circle, $\DefC$, is a way of complexifying the group $\Diffpan$ of orientation-preserving analytic diffeomorphisms of the circle.
We summarize the results from our companion work \cite{Maibach-Peltola:Complex_deformations_of_the_circle} relating the geometry and algebra of $\moduligb$ through $\DefC$ (Section~\ref{section:previous}).
Then, we identify a subspace of $\moduligb$ consisting of surfaces whose boundary parametrizations are charts in a projective structure, and another subspace where this projective structure is the Fuchsian projective structure on the doubled surface (Section~\ref{section:projective}), and gather various useful properties. 

\subsection{Complex deformations and their Fr\"olicher structure}
\label{section:previous}

In this section, we recall basic notions from our companion work~\cite{Maibach-Peltola:Complex_deformations_of_the_circle} concerning complex deformations of the unit circle, and gather useful properties (Lemma~\ref{lemma:complex_deformations}).

\begin{definition}
A \emph{complex deformation} of the unit circle is an injective real-analytic map $\phi \colon S^1 \to \C \setminus \{0\}$ such that the winding number of $\phi$ around $0$ equals $+1$.
We denote the set of complex deformations by $\DefC$.
It is an infinite-dimensional manifold\footnote{
	$\DefC$ is an open submanifold of $\mathcal{O}(S^1)$, the convenient vector space of complex-valued real-analytic functions on $S^1$, endowed with the inductive limit topology over restrictions of spaces of holomorphic functions on neighborhoods of $S^1$ equipped with the topology of uniform convergence on compact sets~\cite{Kriegl-Michor:Convenient_setting_of_global_analysis}.
} with partially defined inversion and composition operations, characterized as follows:
\label{def:defc}
\begin{enumerate}
	\item
		A complex deformation $\phi \in \DefC$ is \emph{invertible} if $\phi^{-1} \colon \phi(S^1) \to \C$ extends univalently and without zeroes to an open neighborhood of $\DefCnbhd{\phi}$: the closed set of points bounded by the inner and outer boundaries of $\phi(S^1) \cup S^1$ as viewed from~$0$ and~$\infty$.
	We denote the subset of invertible complex deformations by ${\definvset \subset \DefC}$.
	\item
		A pair $\phi, \psi \in \DefC$ of complex deformations is \emph{composable} if $\phi$ extends univalently and without zeroes to an open neighborhood of $\DefCnbhd{\psi}$.
		We denote the subset of composable pairs of complex deformations by ${\defmultset \subset \DefC \times \DefC}$.
\end{enumerate}
\end{definition}

\begin{remark}
The subsets $\definvset \subset \DefC$ and $\defmultset \subset \DefC \times \DefC$ are not submanifolds.
\glsreset{fsmooth}
Instead, we can equip them with the Fr\"olicher structures\footnote{
	A Fr\"olicher structure on a set $X$ consists of a set $\curves{X}$ of curves $\R \to X$ and a set $\functions{X}$ of functions $X \to \R$ which are considered smooth.
	See~\cite{Kriegl-Michor:Convenient_setting_of_global_analysis} and \cite[Appendix~A]{Maibach-Peltola:Complex_deformations_of_the_circle}.
} induced by the inclusions, and the respective operations are \emph{\gls{fsmooth}}
with respect to this structure.
See~\cite{Maibach-Peltola:Complex_deformations_of_the_circle} for more details.
\end{remark}

Importantly for this article, in the Fr\"olicher structure a tangent vector at $\phi \in \DefC$ is represented by a smooth function $\gamma \colon S^1 \times \R \to \C \setminus \{0\}$ such that $\gamma(0, \blank) = \phi$, and we have $\gamma(t, \blank) \in \DefC$ for each fixed $t \in \R$.
Such a curve is \gls{fsmooth} if and only if it is the right-trivializing flow of a smoothly time-dependent vector field $v \in \curves{\VectC}$: 
\begin{equation}
	\gamma = \Phi^v(t, \blank) \circ \phi, \qquad \dot{\Phi}^v(t, z) = (\Phi^v)'(t, z) \; v(t, z), \qquad \Phi^v(0, z) = z.
	\label{eq:flow_right}
\end{equation}
The tangent vector denoted by $[\gamma]_\sim \in T_\phi \DefC$ only depends on the vector field at ${t = 0}$.
This identifies the tangent space at the identity $\id \in \DefC$ with the Lie algebra $\VectC$, the space of real-analytic complex-valued vector fields on~$S^1$, also known as the \emph{Witt algebra} with generators $\ell_n = z^{n + 1} \partial_z$, with $n \in \Z$, and relations
\begin{equation}
	[\ell_n, \ell_m] = (n - m) \ell_{n + m}, \qquad n, m \in \Z.
\end{equation}

The notion of complex deformation of the unit circle in Definition~\ref{def:defc} unifies several concepts in geometric function theory.
For instance, the set $\DefC$ of complex deformations contains the following subsets:
\begin{enumerate}
	\item
		A neighborhood of the identity of the group $\mob$ of M\"obius transformations, with Lie algebra $\mobvect$ spanned by $\ell_{-1} = \partial_z$, and $\ell_0 = z \partial_z$, and $\ell_{1} = z^2 \partial_z$.
	\item
		The diffeomorphism group $\Diffpan \subset \DefC$ as a Fr\'echet--Lie subgroup with Lie algebra $\VectR \subset \VectC$.
		In this sense, $\DefC$ serves as a complexification.
	\item
		The set $\univalent$ of univalent functions $\phi \colon \odisk \to \C$ such that $\phi(0) = 0$ and which extend univalently and without zeroes to a neighborhood of the closed unit disk.
\end{enumerate}

We denote a compact connected Riemann surface with enumerated and analytically parametrized boundary components as a tuple $\param{\Sigma, \zeta_1, \dots, \zeta_\boundaries}$, where $\zeta_j \colon S^1 \to \partial_j \Sigma$ are the boundary parametrizations, for $1 \leq j  \leq \boundaries$ and $\partial \Sigma = \bigsqcup_j \partial_j \Sigma$.
The Segal moduli spaces $\moduligb$ consist of equivalence classes denoted $\parameq{\Sigma, \zeta_1, \dots, \zeta_\boundaries}$, which we usually abbreviate just by $\Sigma \in \moduligb$.
For example, we consider the \emph{unit disk} $\disk$ as the equivalence class of the closed disk $\param{\cdisk, \inversion}$ with standard boundary parametrization given by the inversion
\begin{equation}
	\inversion \colon \hat \C \longrightarrow \hat \C, \quad z \longmapsto \frac{1}{z} ,
	\label{eq:inversion}
\end{equation}
and the \emph{standard annulus} $\A_\tau$ of modulus ${\tau > 0}$ as the equivalence class
\begin{equation}
\A_\tau = \param{\setsuchthatinline{z \in \C}{e^{-2\pi \tau} \leq |z| \leq 1},\quad \inversion,\quad \scaling{\tau}},
\end{equation}
with boundary parametrizations given by the inversion $\inversion$ and the scaling transformation
\begin{equation}
	\scaling{\tau} \colon \hat \C \longrightarrow \hat \C, \quad z \longmapsto e^{-2\pi \tau} z.
	\label{eq:scaling}
\end{equation}
To each generic surface $\Sigma \in \moduligb$, we associate the \emph{capped surface} $\Sigma \sewall \underline{\disk}  \in \moduli{\genus}{0}$, which is the closed surface obtained by sewing $\boundaries$ disks to the $\boundaries$ boundary components of $\Sigma$.

If $\genus$ and $\boundaries$ are such that the Euler characteristic is negative, there is a unique conformal metric $\gcc(\Sigma)$ of constant curvature $-1$ in the conformal class of the Riemann surface $\Sigma$ such that the boundary components are geodesics (termed \emph{uniform metric} of ``type I'' in~\cite{OPS:Extremals_of_determinants_of_Laplacians}). 
If each boundary parametrization has constant speed $|\partial_\theta \zeta_j(e^{\ii \theta})|_\gcc$ in the uniform constant-curvature metric 
$\gcc(\Sigma)$, we call $\param{\Sigma, \zeta_1, \dots, \zeta_{\boundaries}}$ \emph{hyperbolic}\footnote{
	Note that $\gcc(\Sigma)$ is not the complete hyperbolic metric unless $\boundaries=0$, that is, $\Sigma$ is a closed surface.
}. 
We denote the subspace of hyperbolic surfaces by $\hmoduligb \subset \moduligb$.

A complex deformation $\phi \in \DefC$ \emph{acts by deformation} of the $j$th boundary component if $\zeta_j \colon S^1 \to \Sigma \sew{j}{1} \disk$ as a function into the capped surface extends univalently to an open neighborhood of $\DefCnbhd{\phi}$ (see Definition~\ref{def:defc}) in $\C \setminus \{0\}$.
Then, the deformed surface $\Sigma \diffActing{j} \phi$ is obtained from composing $\zeta_j$ with $\phi$ using the analytic continuation.
Note that this action not only reparametrizes the boundary component, but it may also grow and shrink the surface itself, changing the moduli of the Riemann surface.

As explained in~\cite[Section~4]{Maibach-Peltola:Complex_deformations_of_the_circle}, these actions, together with the sewing operation~\eqref{eq:sewing_moduligb}, equip $\moduligb$ with an initial Fr\"olicher structure, making the actions of $\DefC$ and the sewing operation \gls{fsmooth}.
The Virasoro uniformization theorem~\cite[Theorem~4.9]{Maibach-Peltola:Complex_deformations_of_the_circle} shows that the boundary deformations alone generate a set of curves spanning the tangent spaces of the Segal moduli spaces $\moduligb$.

\noindent
The following lemma collects auxiliary results that are used throughout this work.
\begin{lemma}
	\label{lemma:complex_deformations}
	\leavevmode
	\begin{enumerate}
		\item
			Any complex deformation acts on $\disk$
			\label{lemma:complex_deformations:unit_disk}
		\item
			Any disk in $\disks$ may be represented in the form $\disk \diffActing{1} \phi$ for $\phi \in \Diffpan$.
			\label{lemma:complex_deformations:diff_disk}
		\item 
			The conjugation $\phi \mapsto \inversion \circ \phi \circ \inversion$ is a \gls{fsmooth} involution on $\DefC$.
			\label{lemma:complex_deformations:conj}
		\item
			The action of $\inversion \circ \phi \circ \inversion$, for $\phi \in \univalent$, preserves the unit disk, $\disk \diffActing{1} (\inversion \circ \phi \circ \inversion) = \disk$, as an equivalence class in $\disks$.
		\label{lemma:complex_deformations:disk_invariant}
	\end{enumerate}
\end{lemma}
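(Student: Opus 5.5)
We treat the four items separately; each reduces to elementary facts about univalent extensions and the definition of the action by deformation.

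For item~\ref{lemma:complex_deformations:unit_disk}, the boundary parametrization of $\disk$ is $\inversion$. Viewed as a map into the capped surface $\disk \sew{1}{1} \disk \cong \hat\C$, it extends univalently to all of $\C\setminus\{0\}$. In particular it extends to an open neighborhood of $\DefCnbhd{\phi}$ for any $\phi \in \DefC$, since $\DefCnbhd{\phi}$ is a compact subset of $\C\setminus\{0\}$. This is exactly the condition for $\phi$ to act on the first boundary component.

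For item~\ref{lemma:complex_deformations:diff_disk}, take a disk $\param{\Sigma,\zeta}$. By the Riemann mapping theorem, choose a biholomorphism $F\colon \Sigma \to \cdisk$. Since $\zeta$ is analytic and the boundary is an analytic curve, $F$ extends analytically across the boundary by Schwarz reflection. Then $F\circ\zeta\colon S^1\to S^1$ is an orientation-preserving analytic diffeomorphism. We write $F\circ\zeta = \inversion\circ\phi$, that is, $\phi = \inversion\circ F\circ\zeta$, which lies in $\Diffpan$ because $\inversion$ restricted to $S^1$ is complex conjugation. Orientation is preserved here, since the standard parametrization is itself $\inversion$ and the boundary orientations match. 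Applying the definition of the deformed surface then shows $\param{\Sigma,\zeta}$ is equivalent to $\disk \diffActing{1} \phi$.

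For item~\ref{lemma:complex_deformations:conj}, conjugation by $\inversion$ maps injective analytic maps $S^1\to\C\setminus\{0\}$ of winding number $1$ to maps of the same kind. Indeed, $z\mapsto 1/\phi(1/z)$ on $S^1$ inverts twice, so the winding number stays $+1$. It is an involution because $\inversion^2=\id$. Fr\"olicher smoothness follows because $\phi\mapsto \inversion\circ\phi\circ\inversion$ is the restriction of a smooth map on the convenient space $\mathcal O(S^1)$: it is a composition with the fixed analytic maps $\inversion$ and with pointwise inversion on nonvanishing functions. Smooth curves therefore go to smooth curves.

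For item~\ref{lemma:complex_deformations:disk_invariant}, let $\psi = \inversion\circ\phi\circ\inversion$ with $\phi\in\univalent$. The deformed disk is the region bounded by $\inversion\circ\psi(S^1) = \phi(\inversion(S^1))=\phi(S^1)$, with boundary parametrization $\inversion\circ\psi = \phi\circ\inversion$. This region is $\phi(\cdisk)$, and $\phi$ is univalent on a neighborhood of the closed disk. Hence $\phi\colon \cdisk\to\phi(\cdisk)$ is a biholomorphism that carries the standard parametrization $\inversion$ to $\phi\circ\inversion$. The surface is therefore equivalent to $\disk$.

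The main obstacle is bookkeeping: correctly identifying which side of the deformed curve the surface occupies, and checking the orientation and winding conventions in items~\ref{lemma:complex_deformations:diff_disk} and~\ref{lemma:complex_deformations:disk_invariant}. Here we rely on the definitions in~\cite{Maibach-Peltola:Complex_deformations_of_the_circle}.
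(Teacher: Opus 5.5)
Your proposal is correct. For Items~\ref{lemma:complex_deformations:unit_disk}, \ref{lemma:complex_deformations:diff_disk} and~\ref{lemma:complex_deformations:disk_invariant} it follows the paper's argument:
\begin{itemize}
\item $\inversion$ is M\"obius, so it extends univalently everywhere;
\item the uniformizing map $F$ gives $\phi=\inversion\circ F\circ\zeta\in\Diffpan$;
\item $\phi$ itself is the isomorphism $(\cdisk,\inversion)\to(\phi(\cdisk),\phi\circ\inversion)$.
\end{itemize}

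The one place you take a different route is the Fr\"olicher smoothness in Item~\ref{lemma:complex_deformations:conj}. The paper uses its intrinsic description of Fr\"olicher smooth curves as right-trivializing flows $\gamma(t)=\Phi^v(t,\cdot)\circ\phi$ with $v\in\curves{\VectC}$. It then observes that $\inversion\circ\gamma(t)\circ\inversion=\Phi^{\inversion^*v}(t,\cdot)\circ(\inversion\circ\phi\circ\inversion)$, so the image curve is again of that form.

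You instead use that $\DefC$ is an open subset of the convenient space $\mathcal{O}(S^1)$. The map is precomposition with $\inversion$ (a bounded linear map, hence smooth) followed by $f\mapsto 1/f$ on nonvanishing functions. This is more global and avoids the flow characterization. However, the smoothness of $f\mapsto 1/f$ on the inductive limit $\mathcal{O}(S^1)$ deserves a justification. You could cite regularity of the inductive limit, so that smooth curves locally factor through holomorphic functions on a fixed annulus where inversion is clearly smooth. Alternatively, cite the Kriegl--Michor results on composition in spaces of real-analytic maps.

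The paper's version also identifies the induced tangent map $v\mapsto\inversion^*v$ directly. That is what gets used later, e.g.\ the vectors $\inversion^*\ellpar{n}$ in Lemma~\ref{lemma:crit_pt} and $[\invinv(\Phi^{w})]_\sim=[\Phi^{-\inversion^*w}]_\sim$ in Section~\ref{section:diskdisk_charge}.
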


\begin{proof}
The boundary parametrization $\inversion$ defined in~\eqref{eq:inversion} is a  M\"obius transformation, yielding Item~\ref{lemma:complex_deformations:unit_disk}.
	Any composition $\inversion \circ \phi$, for $\phi \in \DefC$, is a well-defined boundary parametrization of the closed disk bounded by $\inversion(\phi(S^1))$ and as seen from $0$. 
	For Item~\ref{lemma:complex_deformations:diff_disk}, given any disk $\param{D, \zeta_1}$, the Riemann mapping $F \colon D \to \odisk$ has real-analytic boundary behavior, so $\param{D, \zeta_1}$ is isomorphic to $\param{\cdisk, F \circ \zeta_1} = \disk \diffActing{1} (\inversion \circ F \circ \zeta_1)$, and $\inversion \circ F \circ \zeta_1 \in \Diffpan$.
	For \gls{fsmooth}ness in Item~\ref{lemma:complex_deformations:conj}, representing a \gls{fsmooth} curve $\gamma \in \curves{\DefC}$ by a time-dependent vector field $v \in \curves{\VectC}$ as in~\eqref{eq:flow_right}, we obtain $\inversion \circ \gamma \circ \inversion = \Phi^{\inversion^*v}(t, \blank) \circ (\inversion \circ \phi \circ \inversion)$, which is indeed a \gls{fsmooth} curve.
	Lastly, if $\phi \in \univalent$, applying $\phi$ as an isomorphism to compute
	$\disk \diffActing{1} (\inversion \circ \zeta \circ \inversion) = \parameq{\blank, \zeta \circ \inversion} = \disk$ yields Item~\ref{lemma:complex_deformations:disk_invariant}.
\end{proof}

The \gls{fsmooth} involution on invertible complex deformations defined by inversion and conjugation as in Item~\ref{lemma:complex_deformations:conj} of Lemma~\ref{lemma:complex_deformations},
\begin{equation}
	\label{eq:def_invinv}
	\invinv \colon \definvset \longrightarrow \definvset, \qquad \phi \longmapsto \inversion \circ \phi^{-1} \circ \inversion,
\end{equation}
appears when moving a complex deformation across the seam of a sewing operation,
\begin{equation}
	(\Sigma_1 \diffActing{j} \phi) \sew{j}{l} \Sigma_2
	= \Sigma_1 \sew{j}{l} (\Sigma_2 \diffActing{l} \invinv(\phi)).
	\label{eq:act_between}
\end{equation}
We investigate this kind of symmetry at the level of the real one-dimensional modular functor in Section~\ref{section:action_central_extension} --- see Item~\ref{prop:actionsewingE:invinv} of Proposition~\ref{prop:actionsewingE}.

\subsection{Projective and Fuchsian projective surfaces}
\label{section:projective}

A \emph{projective structure} on a smooth surface $\Sigma$ is a maximal atlas $\{\Psi \colon U \to \cdisk\}$, for $U \subset \Sigma$ open, such that the transition functions are M\"obius transformations\footnote{We let the chart take values in the unit disk instead of the upper half-plane, since we parametrize boundary components using $S^1$.}~\cite{Dumas:Complex_projective_structures}.
In the case of a Riemann surface, as the transition functions are already complex-analytic, we require that the atlas of the projective structure also defines the complex structure.
\begin{definition}
	\label{def:projective}
	A compact connected Riemann surface $\param{\Sigma, \zeta_1, \dots, \zeta_\boundaries}$ with enumerated and analytically para\-metrized boundary components 
	is \emph{projective} if there exists a projective structure on $\Sigma$ on which injective analytic continuations of $\zeta_j^{-1} \colon \partial_j \Sigma \to \hat{\C}$, for $1 \leq j \leq \boundaries$, are charts.
	We denote the subspace of projective surfaces by $\moduligbmob \subset \moduligb$.
\end{definition}

Indeed, for any isomorphism $F \colon \param{\Sigma_1, \zeta_1, \dots, \zeta_\boundaries} \to \param{\Sigma_2, \xi_1, \dots, \xi_\boundaries}$ injective analytic continuations of $\zeta_j^{-1} = \xi_j^{-1} \circ F$ are charts in the pullback projective structure, whence the notion of projective surface is well-defined in the Segal moduli spaces.
Note that such a projective structure, if it exists, is determined by any one of the boundary parametrizations.

\begin{proposition}
	\leavevmode
	\begin{enumerate}
	\item
		We have $\modulimob{0}{1} = \{\param{\disk, \inversion}\}$.
	\item
		Let $\Sigma \in \moduligbmob$ with $\boundaries \geq 2$, and $\phi \in \DefC$, and $1 \leq j \leq \boundaries$ be such that $\Sigma \diffActing{j} \phi$ exists. Then, we have $\Sigma \diffActing{j} \phi \in \moduligbmob$ if and only if $\phi$ is a M\"obius transformation.
		\label{prop:projective:mobius}
	\item
		If $\Sigma_1$ and $\Sigma_2$ are projective, then $\Sigma_1 \sew{j}{k} \Sigma_2$ is projective.
		\label{prop:projective:sewing}
	\item
		Let $\param{\Sigma, \zeta_1, \dots, \zeta_{\boundaries}}$ be a compact connected genus $0$ surface with enumerated and analytically parametrized boundary components.
		Consider a uniformizing map $F \colon \Sigma \sewall \underline{\disk} \to \hat \C$ of the capped surface. 
		Then, $\Sigma$~is projective if and only if $F \circ \zeta_1, \ldots, F \circ \zeta_\boundaries$ are M\"obius transformations.
	\end{enumerate}
	\label{prop:projective}
\end{proposition}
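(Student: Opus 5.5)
The plan is to work throughout with the standard fact that a projective structure on a connected Riemann surface compatible with the complex structure is determined by any one of its charts on a nonempty open set. Indeed, two such structures differ by a holomorphic quadratic differential: the Schwarzian derivative of the transition map between charts of the two structures. If one chart is shared, this differential vanishes on an open set, hence everywhere by the identity theorem. We will also use that a meromorphic map defined on a neighbourhood of the closed unit disk, locally injective and mapping $S^1$ to $S^1$ with degree one, extends by Schwarz reflection to a rational map of degree one, i.e.\ a M\"obius transformation.

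\emph{Item 1.} First we check that $\disk$ is projective: the standard structure of $\hat\C$ restricted to $\cdisk$ has $\inversion^{-1}=\inversion$ as a chart. Conversely, let $\Sigma\in\modulimob{0}{1}$. By Item~\ref{lemma:complex_deformations:diff_disk} of Lemma~\ref{lemma:complex_deformations} we may write $\Sigma=\disk\diffActing{1}\phi$ with $\phi\in\Diffpan$, realized on $\cdisk$ with boundary parametrization $\inversion\circ\phi$. Since $\cdisk$ is simply connected, the projective structure has a developing map $\mathrm{dev}$, a locally injective meromorphic function on a neighbourhood of $\cdisk$. Near $S^1$ it agrees, up to a M\"obius transformation, with the continuation of $\phi^{-1}\circ\inversion$, so it maps $S^1$ onto a circle with degree one. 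By reflection, $\mathrm{dev}$ is M\"obius, so $\phi$ is a M\"obius transformation preserving $S^1$ with positive orientation, i.e.\ $\phi\in\psl$. Such $\phi$ extends to an automorphism of $\cdisk$, which gives an isomorphism $\disk\diffActing{1}\phi\cong\disk$.

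\emph{Item 2.} Suppose $\phi$ is M\"obius. On the overlap of $\Sigma$ and $\Sigma\diffActing{j}\phi$ inside the capped surface we keep the old structure. Near the new $j$th boundary we add the chart given by the continuation of $\zeta_j^{-1}$ (defined there by assumption). Then $\phi^{-1}\circ\zeta_j^{-1}$ differs from this chart by a M\"obius map, so it is a chart as required, and the other $\zeta_k^{-1}$ are unchanged. Conversely, suppose $\Sigma\diffActing{j}\phi$ is projective and fix $k\neq j$, which exists since $\boundaries\geq2$. Both structures contain the chart $\zeta_k^{-1}$, so by the uniqueness principle they coincide on the connected overlap region. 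That region contains an open set near $\zeta_j(S^1)$ on which $\zeta_j^{-1}$ and $\phi^{-1}\circ\zeta_j^{-1}$ are both charts. Their transition map $\phi^{-1}$ is therefore M\"obius on an open set, hence $\phi$ is M\"obius.

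\emph{Items 3 and 4.} For sewing, the seam identifies $\zeta_j(z)$ with $\zeta_k(\inversion(z))$. Near the seam, the chart $\zeta_j^{-1}$ on the $\Sigma_1$ side and the chart $\inversion\circ\zeta_k^{-1}$ on the $\Sigma_2$ side agree, since they differ only by the M\"obius map $\inversion$. Hence the two atlases glue to a projective structure on $\Sigma_1\sew{j}{k}\Sigma_2$ which contains the remaining boundary charts. For Item~4, the ``if'' direction follows by pulling back the standard structure on $\hat\C$ by $F$. For ``only if'', we apply the gluing of Item~3 with copies of $\disk$, which is projective by Item~1, to extend the structure of $\Sigma$ to the capped sphere. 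A sphere carries a unique projective structure because it has no nonzero holomorphic quadratic differentials, so this structure is $F^*$ of the standard one. Then $\zeta_j^{-1}$ and $F$ are both charts of it, so $F\circ\zeta_j$ is a transition map, hence M\"obius. I expect the main obstacle to be Item~1, namely making the reflection and degree argument rigorous for the developing map near the boundary.
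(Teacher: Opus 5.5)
Your proof is correct. For Items~2--4 it follows the paper.

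\begin{itemize}
\item In Item~2, a compatible projective structure is determined by a single chart (the paper says it is ``determined by any one of the boundary parametrizations''). So a second boundary component $k\neq j$ forces both $j$th parametrizations to be charts of one structure, which makes $\phi^{-1}$ a M\"obius transition.
\item In Item~3, the charts $\zeta_j^{-1}$ and $\inversion\circ\zeta_k^{-1}$ glue across the seam because $\inversion$ is M\"obius.
\item In Item~4, you cap off with disks and invoke uniqueness of the projective structure on $\hat{\C}$.
\end{itemize}

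You also supply details the paper leaves implicit: the construction for the ``if'' direction of Item~2, the vanishing of holomorphic quadratic differentials on the sphere, and the fact that the caps are projective by Item~1.

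The genuine difference is Item~1. The paper asserts that the disk carries a unique projective structure ``by considering the monodromy representation,'' and reads off that $\inversion\circ\phi$ is M\"obius. Taken literally, that uniqueness is false. Trivial monodromy does not pin down the developing map. The closed disk carries a whole family of compatible projective structures, parametrized by holomorphic quadratic differentials: pull back the standard structure by any locally injective meromorphic function near the closed disk, e.g.\ $z+z^2/10$.

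What makes Item~1 true is a boundary condition: the boundary chart, the continuation of $\phi^{-1}\circ\inversion$, sends $S^1$ onto a circle. Your argument uses exactly this. You double the developing map by Schwarz reflection, which amounts to passing to the Schottky double $\hat{\C}$, where uniqueness does hold. Your route is a little longer, but it is the one that actually closes Item~1.

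Two small repairs are needed.
\begin{itemize}
\item In your reflection lemma, the hypothesis doing the work is local injectivity on the closed disk, not ``degree one on $S^1$.'' After reflection you get a rational map with no critical points on $\hat{\C}$, hence of degree one by Riemann--Hurwitz. Circle-degree one alone is not enough: $z^2(1-\bar a z)/(z-a)$ with $0<|a|<1$ preserves $S^1$ and has circle-degree one, but rational degree three. You should state the lemma with local injectivity.
\item In Item~2, the open set on which $\zeta_j^{-1}$ and $\phi^{-1}\circ\zeta_j^{-1}$ are both charts need not lie near $\zeta_j(S^1)$. If the deformed surface lies inside $\Sigma$ with its new $j$th boundary disjoint from $\partial_j\Sigma$, the overlap never reaches $\zeta_j(S^1)$. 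In that case use a collar of the new boundary $\zeta_j(\phi(S^1))$ instead. There the continuation of $\zeta_j^{-1}$ is still a chart of the structure of $\Sigma$, by the identity theorem.
\end{itemize}
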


\begin{proof}
\leavevmode
\begin{enumerate}
	\item
		The disk carries a unique projective structure (as can be seen by considering the monodromy representation of projective structures~\cite{Dumas:Complex_projective_structures}).
		Moreover, any element of $\disks$ may be represented as $\param{\cdisk, \inversion \circ \phi}$ for $\phi \in \Diffpan$ (by Item~\ref{lemma:complex_deformations:diff_disk} of Lemma~\ref{lemma:complex_deformations}).
		If the latter represents a projective surface, then $\inversion \circ \phi$ is a M\"obius transformation, so $\param{\cdisk, \inversion \circ \phi}$ is isomorphic to $\disk$ via $\invinv(\phi)$ from Equation~\eqref{eq:def_invinv}.
	\item
		Since $\boundaries \geq 2$, there is another boundary component, which uniquely defines the projective structure compatible with that boundary component.
		Hence, the parametrizations of $\partial_j \Sigma$ must be compatible with the same projective structure before and after reparametrization.
		This is the case if and only if $\phi$ is a M\"obius transformation.
	\item
		As $\inversion$ is a M\"obius transformation, injective analytic continuations of the map $\zeta^{-1}_j \circ \inversion \circ \xi_k$, which identify $\partial_j \Sigma_1$ and $\partial_k \Sigma_2$, are transition functions of charts in the respective projective structures.
		Hence, the projective structures on $\Sigma_1$ and $\Sigma_2$ induce a projective structure of $\Sigma_1 \sew{j}{k} \Sigma_2$, compatible with each of the boundary parametrizations.
	\item
		On the one hand, if $F \circ \zeta_j$ are M\"obius transformations, then they indeed are charts in the unique projective structure on $\hat{\C}$ defined by the charts $\id$ (identity) and $\inversion$ (inversion).
		The restriction of that projective structure to $F(\Sigma)$ turns $\Sigma$ into projective surface.
		On the other hand, assuming that $\Sigma$ is projective, Item~\ref{prop:projective:sewing} implies that $\Sigma \sewall \underline{\disk}$ is also projective.
		Since the projective structure on the Riemann sphere, which is compatible with the complex structure is unique, the parametrizations $F \circ \zeta_j$ are charts in the standard projective structure on $\hat{\C}$ and therefore M\"obius transformations. \qedhere
\end{enumerate}
\end{proof}

The \emph{complex conjugate} $\conj{\Sigma}$ of a Riemann surface $\Sigma$ with (unparametrized) boundary is defined as follows~\cite{Abikoff:The_real_analytic_theory_of_Teichmuller_space}.
The surface $\conj{\Sigma}$ is the same smooth manifold, and the complex-analytic chart of $\conj{\Sigma}$ corresponding to a complex-analytic chart $\Psi \colon U \to \cdisk$ of $\Sigma$ is given by the complex conjugate map $\conj{\Psi} \colon U \to \cdisk$.
For two charts $\Phi_1$, $\Phi_2$ with domains $U_1$, $U_2$, the transition function $(\conj{\Psi_2} \circ (\conj{\Psi_1})^{-1})(z) = \conj{(\Psi_2(\Psi_1^{-1}(\conj{z})))}$ for $z \in \Psi_1(U_1 \cap U_2)$ is holomorphic.
Hence, with these charts, $\conj{\Sigma}$ is a Riemann surface.

The \emph{Schottky double} of $\Sigma$ is then defined as $(\Sigma \sqcup \conj{\Sigma})/_\sim$, where the equivalence relation identifies boundary points as follows.
Let $\Psi \colon U \to \cdisk$ be a complex-analytic chart of $\Sigma$, and $z \in \Psi(U) \cap S^1$.
Then, $\conj{\Psi} \colon U \to \cdisk$ is a chart of $\conj{\Sigma}$ and we identify $\Psi^{-1}(z)$ with the corresponding point in $\conj{\Sigma}$, which in the chart $\conj{\Psi}$ becomes $\conj{\Psi}(\Psi^{-1}(z)) = \conj{z}$.
This defines a Riemann surface without boundary: the charts $\Psi \cup (\inversion \circ \conj{\Psi}) \colon (U \sqcup U)_\sim \to \hat{\C}$ are holomorphic at boundary points of $\Sigma$ by Schwarz reflection.

Observe that, because complex deformations are only required to be real-analytic, conjugating them by complex conjugation yields a \gls{fsmooth} operation $\phi \mapsto \conj{\phi} \circ \inversion$, resulting again in a complex deformation (since the operation preserves the orientation). Moreover, we have $\conj{(\conj{\phi} \circ \inversion)} \circ \inversion = \phi$ and $(\conj{\phi} \circ \inversion)^{-1} = \conj{(\phi^{-1})} \circ \inversion$ if the inverses exist.
If $\phi \in \Diffpan$, we also have $\conj{\phi} \circ \inversion = \inversion \circ \phi \circ \inversion$, since $\inversion(\conj{z}) = z$ for $z \in S^1$.

Given a compact connected Riemann surface $\param{\Sigma, \zeta_1, \dots, \zeta_\boundaries}$ with enumerated and analytically parametrized boundary components, we define the \emph{complex conjugate} as the tuple $\param{\conj{\Sigma}, \zeta_1 \circ \inversion, \dots, \zeta_\boundaries \circ \inversion}$.
Since $\zeta_j$ is negatively oriented in $\Sigma$, it is positively oriented in $\conj{\Sigma}$, and thus, the boundary parametrization $\zeta_j \circ \inversion$ is again negatively oriented.
Hence, $\param{\conj{\Sigma}, \zeta_1 \circ \inversion, \dots, \zeta_\boundaries \circ \inversion}$ is a compact connected Riemann surface with enumerated and analytically parametrized boundary components.

The \emph{double} $\Sigma \sewall \conj{\Sigma}$ is obtained by sewing a surface to its complex conjugate\footnote{
	Here, $\sewall$ stands for sewing pairs of boundary components with the same label.
	Note that this involves the self-sewing operation: If we relabel the boundary components after each sewing operation, the double defined by $\Sigma \sewall \conj{\Sigma} =\sewself{1}{2} \cdots \sewself{1}{(\boundaries - 1)} \sewself{1}{\boundaries} (\Sigma \sew{1}{1} \conj{\Sigma})$.
}.
	When sewing a surface to its own complex conjugate, the chart defined by analytically continuing the boundary parametrizations at a seam becomes $\zeta_j^{-1}(z)$ for $z \in \Sigma$, and for $z \in \conj{\Sigma}$ one analytically continues $\inversion \circ \zeta_j^{-1}$ as $\conj{(\zeta_j^{-1})}$ and composes with $\inversion$ due to the sewing operation. This results in the same chart $\zeta_j^{-1} \sqcup (\inversion \circ \conj{(\zeta_j^{-1})})$ as for the Schottky double.
	Therefore the double $\Sigma \sewall \conj{\Sigma}$ agrees with the Schottky double of the surface without boundary parametrization.

\begin{lemma}
	\leavevmode
	\begin{enumerate}
		\item
			We have $\conj{(\Sigma_1 \sew{j}{k} \Sigma_2)} = \conj{\Sigma}_1 \sew{j}{k} \conj{\Sigma}_2$ and $\conj{(\sewself{j}{k} \Sigma)} = \sewself{j}{k} \conj{\Sigma}$.
			\label{lemma:conj_double:sew}
		\item
			We have $\conj{(\Sigma \diffActing{j} \phi)} = \conj{\Sigma} \diffActing{j} (\conj{\phi} \circ \inversion)$ and $\conj{(\Sigma \diffActing{\vartheta} \phi)} = \conj{\Sigma} \diffActing{\vartheta \circ \inversion} (\conj{\phi} \circ \inversion)$.
			\label{lemma:conj_double:act}
		\item
			Complex conjugation defines \gls{fsmooth} involutions of $\moduligb$.
			\label{lemma:conj_double:conj}
		\item
			The doubling construction defines \gls{fsmooth} maps $\moduligb \to \moduli{(2\genus + \boundaries - 1)}{0}$.
			\label{lemma:conj_double:double}
	\end{enumerate}
	Item~\ref{lemma:conj_double:act} involves the unraveling operation defined in~\textnormal{\cite[Section~4.2]{Maibach-Peltola:Complex_deformations_of_the_circle}}.
\end{lemma}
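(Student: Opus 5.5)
The plan is to verify the four items in order: Items~\ref{lemma:conj_double:sew} and~\ref{lemma:conj_double:act} are direct chart computations, and Items~\ref{lemma:conj_double:conj} and~\ref{lemma:conj_double:double} follow formally from them via the definition of the initial Fr\"olicher structure on $\moduligb$ from~\cite[Section~4]{Maibach-Peltola:Complex_deformations_of_the_circle}.

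For Item~\ref{lemma:conj_double:sew}, I would first compare the underlying smooth surfaces. The sewn surface $\Sigma_1 \sew{j}{k} \Sigma_2$ is obtained by identifying $\zeta_j(z) \sim \xi_k(\inversion(z))$ for $z \in S^1$. In the conjugate data the parametrizations are $\zeta_j \circ \inversion$ and $\xi_k \circ \inversion$. Substituting $z = \inversion(w)$ in the identification $\zeta_j(\inversion(w)) \sim \xi_k(\inversion(\inversion(w)))$ shows that both give the same identification of the same smooth manifolds.

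Next I would compare the complex structures. Away from the seam, both sides carry the charts $\conj{\Psi}$. Near the seam, the chart of the sewn surface is the analytic continuation of $\zeta_j^{-1}$ on one side and of $\inversion \circ \xi_k^{-1}$ on the other, so its conjugate is $\conj{(\cdot)}$ applied to this continuation. The key observation is the following. On $\conj{\Sigma}$, the holomorphic continuation of $\zeta_j \circ \inversion$ off $S^1$ is $w \mapsto \zeta_j(\conj{w})$, because $\inversion = \conj{\blank}$ on $S^1$. Hence the seam chart built from the conjugate parametrizations differs from the conjugated seam chart only by post-composition with the M\"obius map $\inversion$, and so the two define the same complex structure. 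The self-sewing case is identical.

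For Item~\ref{lemma:conj_double:act}, I would use the same observation. In $\Sigma \diffActing{j} \phi$ the new parametrization is the continuation $w \mapsto \zeta_j(\phi(w))$, so the conjugate surface is parametrized by $z \mapsto \zeta_j(\phi(\inversion(z)))$. Writing this as $\zeta_j\big(\conj{(\conj{\phi}\circ\inversion)(z)}\big)$ identifies it with the deformation of the holomorphically continued parametrization $\zeta_j \circ \conj{\blank}$ of $\conj{\Sigma}$ by $\psi = \conj{\phi}\circ\inversion$. I also need that $\psi$ acts on $\conj\Sigma$ if and only if $\phi$ acts on $\Sigma$: the univalence domains correspond under complex conjugation, and $\DefCnbhd{\psi}$ is the image of $\DefCnbhd{\phi}$ under the reflection $\inversion\circ\conj{\blank}$. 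The unraveling version $\diffActing{\vartheta}$ is handled the same way, with $\vartheta$ replaced by $\vartheta \circ \inversion$.

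Item~\ref{lemma:conj_double:conj} then follows formally. Conjugation is an involution because $\conj{\conj{\Sigma}} = \Sigma$ and $\inversion\circ\inversion = \id$. By Items~\ref{lemma:conj_double:sew} and~\ref{lemma:conj_double:act}, it intertwines the sewing operations and the $\DefC$-actions. It therefore maps the generating \gls{fsmooth} curves of the initial structure to curves of the same form, using that $\phi \mapsto \conj{\phi}\circ\inversion$ is \gls{fsmooth} on $\DefC$ (noted above). Hence it maps $\curves{\moduligb}$ into itself, and dually for $\functions{\moduligb}$.

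For Item~\ref{lemma:conj_double:double}, the double is the composite of the diagonal $\Sigma\mapsto(\Sigma,\conj{\Sigma})$, one sewing, and $\boundaries-1$ self-sewings. Each of these maps is \gls{fsmooth}, the conjugation factor by Item~\ref{lemma:conj_double:conj}. The genus follows from the Euler characteristic: $\chi = 2(2-2\genus-\boundaries) = 2 - 2(2\genus+\boundaries-1)$.

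The main obstacle is the bookkeeping in Item~\ref{lemma:conj_double:act}: being careful about which analytic continuation (holomorphic versus antiholomorphic) of a boundary parametrization is meant on $\conj{\Sigma}$, and checking that the domain conditions for the action are preserved.
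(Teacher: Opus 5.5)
Your plan follows the paper's proof. Items~1--2 are done by comparing the seam identifications and boundary parametrizations, and Items~3--4 by pushing the generating curves of the Fr\"olicher structure through Items~1--2. Your chart-level check near the seam and your attention to the domain condition for the action go beyond what the paper writes out.

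One detail in Item~2 is wrong, though the fix does not change the strategy. The holomorphic extension of $\conj{\phi}\circ\inversion$ is $z\mapsto\overline{\phi(\bar z)}$. Hence $\DefCnbhd{\conj{\phi}\circ\inversion}$ is the image of $\DefCnbhd{\phi}$ under $z\mapsto\bar z$, not under the reflection $z\mapsto 1/\bar z$. For example, $\phi(z)=2z$ gives $\conj{\phi}\circ\inversion=\phi$, so both regions are $\{1\le|z|\le 2\}$, whereas the reflection would give $\{1/2\le|z|\le 1\}$. Plain conjugation is exactly what matches the univalence domain of the extension $w\mapsto\zeta_j(\bar w)$ of $\zeta_j\circ\inversion$. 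With this correction your ``if and only if'' goes through; as written, the two halves of that sentence contradict each other.

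A similar but harmless slip occurs in Item~1. Since $\overline{\inversion(z)}=\inversion(\bar z)$, the two seam charts actually coincide, rather than differing by post-composition with $\inversion$.

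For Item~3 you never check that conjugation descends to isomorphism classes. This is needed before one can speak of an involution of $\moduligb$, and Item~4's well-definedness rests on it. The paper does this explicitly. An isomorphism $F$ from $\param{\Sigma_1,\zeta_1,\dots,\zeta_\boundaries}$ to $\param{\Sigma_2,\xi_1,\dots,\xi_\boundaries}$ is the same diffeomorphism between the conjugates. It is holomorphic for the conjugate structures and satisfies $F\circ\zeta_j\circ\inversion=\xi_j\circ\inversion$.

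Finally, for the unraveled action, rather than ``the same way,'' the paper reduces to what you already have. It writes $\Sigma\diffActing{\vartheta}\phi$ as the self-sewing $\sewself{\boundaries+1}{\boundaries+2}(\unravel{\vartheta}\Sigma\diffActing{\boundaries+2}\phi)$. It then applies Item~1 and the first half of Item~2, together with $\conj{(\unravel{\vartheta}\Sigma)}=\unravel{(\vartheta\circ\inversion)}\conj{\Sigma}$.
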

\begin{proof}
	In Item~\ref{lemma:conj_double:sew}, the sewing operation $\conj{\Sigma}_1 \sew{j}{k} \conj{\Sigma}_2$ involves the boundary components of the complex conjugated surfaces.
	These are identified by $(\zeta_j \circ \inversion) \circ \inversion \circ (\xi_k \circ \inversion)^{-1} = \zeta_j \circ \inversion \circ \xi_k^{-1}$, which is precisely the transition function at the seam in the complex conjugate of the sewn surfaces.
	Therefore, we have $\conj{(\Sigma_1 \sew{j}{k} \Sigma_2)} = \conj{\Sigma}_1 \sew{j}{k} \conj{\Sigma}_2$, and similarly, $\conj{(\sewself{j}{k} \Sigma)} = \sewself{j}{k} \conj{\Sigma}$.
	In Item~\ref{lemma:conj_double:act}, the action of $\phi \in \DefC$ at 
	$j$ is defined through analytic extension of the real-analytic boundary parametrization $(\zeta_j \circ \inversion)(z) = \zeta(\conj{z})$ of $\conj{\Sigma}$.
	Composition with $\conj{\phi} \circ \inversion$ yields $\zeta \circ \inversion$, which is the boundary parametrization of $\conj{(\Sigma \diffActing{j} \phi)}$.
	Hence, we have $\conj{(\Sigma \diffActing{j} \phi)} = \conj{\Sigma} \diffActing{j} (\conj{\phi} \circ \inversion)$.
	Assuming the case of a non-separating simple analytic loop~$\vartheta$, the definition of the unraveling operation (see \cite[Section~4.2]{Maibach-Peltola:Complex_deformations_of_the_circle}) and Item~\ref{lemma:conj_double:sew} imply that
	\begin{equation}
	\begin{aligned}
		\conj{(\Sigma \diffActing{\vartheta} \phi)}
		&= \sewself{\boundaries+1}{\boundaries+2} \conj{(\unravel{\vartheta} \Sigma \diffActing{\boundaries+2} \phi)}
		= \sewself{\boundaries+1}{\boundaries+2} \Big( \conj{(\unravel{\vartheta} \Sigma)} \diffActing{\boundaries+2} (\conj{\phi} \circ \inversion)\Big) \\
		&= \sewself{\boundaries+1}{\boundaries+2} \Big( (\unravel{(\vartheta \circ \inversion)} \conj{\Sigma}) \diffActing{\boundaries+2} (\conj{\phi} \circ \inversion)\Big) = \Sigma^* \diffActing{\vartheta \circ \inversion} (\phi^* \circ \inversion).
	\end{aligned}
	\end{equation}
	The separating case is computed analogously, also resulting in $\conj{(\Sigma \diffActing{\vartheta} \phi)} = \conj{\Sigma} \diffActing{\vartheta \circ \inversion} (\conj{\phi} \circ \inversion)$.

	An isomorphism $F \colon \param{\Sigma_1, \zeta_1, \dots, \zeta_{\boundaries}} \to \param{\Sigma_2, \xi_1, \dots, \xi_{\boundaries}}$ is also an isomorphism on the double  
	(it is the same diffeomorphism, which is still complex-analytic since both complex structures are complex conjugated).
	Moreover, it preserves the new boundary parametrizations, $F \circ \zeta_j \circ \inversion = \xi_j \circ \inversion$, proving the well-definedness of complex conjugation over Segal moduli space (Item~\ref{lemma:conj_double:conj}).
	Since the complex conjugate and sewing operation both preserve isomorphism classes, the doubling map is also well-defined (Item~\ref{lemma:conj_double:double}).
	Now, consider the sewing operation and the actions of $\DefC$, which generate the Fr\"olicher structure on the Segal moduli spaces (see \cite[Section~4.4]{Maibach-Peltola:Complex_deformations_of_the_circle} for details).
	In particular, the proven identities in Items~\ref{lemma:conj_double:sew} and~\ref{lemma:conj_double:act} show that an initial \gls{fsmooth} curve stays \gls{fsmooth} after conjugation or doubling of the surface, proving the \gls{fsmooth}ness in Items~\ref{lemma:conj_double:conj} and~\ref{lemma:conj_double:double}.
\end{proof}

Given a compact hyperbolic Riemann surface $\Sigma$ without boundary, it is uniformized by a covering map $\pi_\Sigma \colon \uhp \to \Sigma$. 
The \emph{Fuchsian projective structure} is the projective structure induced by the trivial projective structure on $\uhp$.
In the case of the sphere or torus, it is the projective structure induced from the universal covering by~$\hat{\C}$ or~$\C$.
The \emph{Fuchsian projective structure} on a compact connected Riemann surface with enumerated and analytically parametrized boundary components is the restriction of the that on the double $\Sigma \sewall \conj{\Sigma}$.

If $F \colon \Sigma_1 \to \Sigma_2$ is an isomorphism of Riemann surfaces with enumerated and analytically parametrized boundary components, then $\pi_{\Sigma_2 \sewall \conj{\Sigma}_2}$ and $(F \cup \conj{F}) \circ \pi_{\Sigma_1 \sewall \conj{\Sigma}_1}$ are related by a M\"obius transformation, so the Fuchsian projective structures on $\Sigma_1$ and $\Sigma_2$ are related by pullback, making them consistent over equivalence classes in $\moduligb$.
Since isomorphisms also preserve the boundary parametrizations, the following definition makes sense in~$\moduligb$.

\begin{definition}
	\label{def:fuchsian_projective}
	A compact connected Riemann surface with enumerated and analytically parametrized boundary components is \emph{Fuchsian projective} if it is projective 
	with respect to the Fuchsian projective structure.
	We denote the subset of Fuchsian projective surfaces by $\moduligbmobf \subset \moduligb$.
\end{definition}

\begin{proposition}
	\leavevmode
	\begin{enumerate}
	\item
		Hyperbolic surfaces with analytic boundary parametrizations are Fuchsian projective, and thus, we have a sequence of inclusions
		\label{prop:fprojective:hyperbolic}
		\begin{equation}
			\hmoduligb \subset \moduligbmobf \subset \moduligbmob \subset \moduligb.
			\label{eq:fprojective_inclusions}
		\end{equation}
	\item
		Any compact connected Riemann surface with analytically parametrized boundary components is of the form
		$\Sigma \diffActing{1} \phi_1 \cdots \diffActing{\boundaries} \phi_{\boundaries}$, where $\Sigma$ is a Fuchsian projective surface and $\phi_1, \dots, \phi_{\boundaries} \in \Diffpan$ are diffeomorphisms.
		\label{prop:fprojective:repr}
	\item
			Let $\Sigma \in \moduligbmobf$ with $\boundaries \geq 1$, and $\phi \in \DefC$ and $1 \leq j \leq \boundaries$ be such that $\Sigma \diffActing{j} \phi$ exists. Then, we have $\Sigma \diffActing{j} \phi \in \moduligbmobf$ if and only if $\phi$ is a M\"obius transformation.
		\label{prop:fprojective:mobius}
	\end{enumerate}
	\label{prop:fprojective}
\end{proposition}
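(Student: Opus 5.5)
We treat the three items in turn, relying on the description of the Fuchsian projective structure as the restriction of the uniformizing projective structure on the double $\Sigma \sewall \conj{\Sigma}$.

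\emph{Item~\ref{prop:fprojective:hyperbolic}.} The aim is to show that the boundary parametrizations of a hyperbolic surface are charts of the Fuchsian projective structure.
\begin{enumerate}
	\item Let $\Sigma \in \hmoduligb$. By Schwarz reflection, the uniform metric $\gcc(\Sigma)$ extends to a smooth metric of curvature $-1$ on the double $\Sigma \sewall \conj{\Sigma}$, because the boundary components are geodesics. Uniqueness of the hyperbolic metric on a closed surface then identifies this extension with the complete hyperbolic metric, pulled back from $\uhp$ via $\pi_{\Sigma \sewall \conj{\Sigma}}$.
	\item Each boundary curve $\partial_j \Sigma$ is a closed geodesic. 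Lifting to the universal cover and conjugating by a M\"obius map, it becomes the imaginary axis $\ii\Rp$. The stabilizer of this axis contains the hyperbolic element $z \mapsto e^{\ell_j} z$, where $\ell_j$ is the length of $\partial_j\Sigma$.
	\item The map $z \mapsto \exp\big(\tfrac{2\pi \ii}{\ell_j} \log z\big)$, composed with a suitable M\"obius transformation, sends a collar of the axis to an annulus around $S^1$. Its inverse restricted to $S^1$ is a constant-speed parametrization of the geodesic.
	\item Since $\zeta_j$ has constant speed, $\zeta_j$ agrees with this parametrization up to a rotation. This step needs checking: $\log$ composed with a M\"obius map is not itself M\"obius, so one must verify which chart is meant.
	\item Resolving this, the relevant charts are those of the quotient annulus $\uhp/\langle z\mapsto e^{\ell_j}z\rangle$. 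Hence I expect the correct statement to be that $\zeta_j^{-1}$ is a chart of the projective structure \emph{induced from the covering annulus}, that is, the Fuchsian structure restricted to a collar.
	\item Concretely, with $w = \log z$ the geodesic is $\Re w = \text{const}$, and the collar chart is $e^{2\pi \ii w/\ell_j}$. Whether this is projectively compatible with $\uhp$ is the main obstacle.
	\item I would try to reconcile it through the definition of $\hmoduligb$ itself: the constant-speed parametrization in $\gcc$ may already be exactly what the authors declare to be the Fuchsian chart. The inclusions~\eqref{eq:fprojective_inclusions} then follow, using Definition~\ref{def:projective} for the middle inclusion.
\end{enumerate}

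\emph{Item~\ref{prop:fprojective:repr}.} The aim is to write any surface as a Fuchsian projective surface acted on by diffeomorphisms. Given $\param{\Sigma,\zeta_1,\dots,\zeta_\boundaries}$, for each $j$ choose the chart of the Fuchsian projective structure near $\partial_j \Sigma$ that maps the boundary curve onto $S^1$ with the correct orientation. Its inverse $\xi_j$ restricted to $S^1$ is an analytic parametrization, since the boundary is real-analytic in the double. Then $\xi_j^{-1}\circ\zeta_j \in \Diffpan$, and $\Sigma = \param{\Sigma,\xi_1,\dots,\xi_\boundaries} \diffActing{1}\phi_1\cdots\diffActing{\boundaries}\phi_\boundaries$ with $\phi_j = \xi_j^{-1}\circ\zeta_j$. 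These diffeomorphisms act, because reparametrizations by elements of $\Diffpan$ preserve the surface.

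\emph{Item~\ref{prop:fprojective:mobius}.} This repeats the argument of Item~\ref{prop:projective:mobius} of Proposition~\ref{prop:projective}, but now with the fixed Fuchsian structure. That structure is determined by the doubled surface $\Sigma \diffActing{j} \phi$.
\begin{enumerate}
	\item If $\phi$ is M\"obius, then $\phi$ is itself a chart transition. It preserves $S^1$ only if it lies in $PSU(1,1)$, in which case the double does not change, so the new parametrization is still a chart.
	\item For general M\"obius $\phi$ not preserving $S^1$, the double changes. Handling this requires care and is a second delicate point.
	\item Conversely, if both parametrizations are charts of projective structures on overlapping regions, the transition $\phi$ must be M\"obius.
\end{enumerate}
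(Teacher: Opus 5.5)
Your proposal does not establish any of the three items. The place where it stalls is not a technicality that your step~7 can define away.

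In Item~1 you correctly find that the collar chart determined by a constant-speed parametrization is, in the lifted coordinate and up to M\"obius normalization, $z\mapsto \exp\big(\tfrac{2\pi\ii}{\ell_j}\log z\big)$, which is not M\"obius. But the paper fixes the Fuchsian structure explicitly as the restriction of the uniformizing structure of $\Sigma\sewall\conj{\Sigma}$, so you cannot declare this chart Fuchsian.

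The obstruction is in fact decisive. Let $\pi\colon\uhp\to\Sigma\sewall\conj{\Sigma}$ be the covering, $U$ a (one-sided) collar of $\partial_j\Sigma$, and $\tilde U$ the component of $\pi^{-1}(U)$ along the axis; $\tilde U$ is invariant under the deck transformation $h(z)=e^{\ell_j}z$. Suppose a holomorphic $\psi$ on $U$ were locally a chart of the Fuchsian structure. Then $\psi\circ\pi$ would be locally M\"obius on $\tilde U$, hence, by analytic continuation on the connected set $\tilde U$, equal to a single M\"obius map $M$. Invariance under $h$ then gives $M\circ h=M$, i.e.\ $h=\mathrm{id}$, which is false. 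For annuli the same argument works with the translation holonomy of the flat torus.

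So no analytic continuation of $\zeta_j^{-1}$ is a Fuchsian chart. With the definitions as written, $\moduligbmobf$ is empty for every $\boundaries\geq1$ except $(\genus,\boundaries)=(0,1)$. The paper's proof passes over exactly this point with a one-line ``hence'' after the observation you make in step~1. Your proposal and the paper's sketch therefore share the same unproved step, and it cannot be supplied.

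The same holonomy argument breaks your Item~2: there is no Fuchsian chart near $\partial_j\Sigma$ mapping the closed curve onto $S^1$, so your $\xi_j$ does not exist. What your construction really yields, and what the paper's argument for Item~2 produces, is a representation $\Sigma=\check\Sigma\diffActing{1}\phi_1\cdots\diffActing{\boundaries}\phi_\boundaries$ with $\phi_j\in\Diffpan$. Here $\check\Sigma$ carries constant-speed parametrizations of its geodesic boundary in a constant-curvature metric. That is a representation by hyperbolic (or round/flat) surfaces, not by Fuchsian projective ones.

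In Item~3 there are two problems.
\begin{enumerate}
\item Your converse treats $\zeta_j^{-1}$ and $(\zeta_j\circ\phi)^{-1}$ as charts of one structure. But for $\phi\notin\Diffpan$ the surfaces $\Sigma$ and $\Sigma\diffActing{j}\phi$ have different doubles, hence different Fuchsian structures.
\item For $\boundaries=1$ the ``only if'' is simply false. The disk $\disk$ is Fuchsian projective, and Item~\ref{lemma:complex_deformations:disk_invariant} of Lemma~\ref{lemma:complex_deformations} gives $\disk\diffActing{1}(\inversion\circ G\circ\inversion)=\disk$ for every $G\in\univalent$. For example, $G(z)=z+\varepsilon z^2$ gives $(\inversion\circ G\circ\inversion)(z)=z^2/(z+\varepsilon)$, which is not M\"obius.
\end{enumerate}
This is why the analogous Item~\ref{prop:projective:mobius} of Proposition~\ref{prop:projective} requires $\boundaries\geq2$, with the structure pinned by another boundary parametrization. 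Neither your sketch nor the paper's one-line argument (``the Fuchsian structure is independent of the boundary parametrizations'') deals with this.
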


\begin{proof}
The uniform metric with geodesic boundary components may be obtained by restricting the uniform metric on the double obtained from uniformization.
Hence, analytic continuations of negatively oriented unit speed boundary parametrizations are indeed charts in the Fuchsian projective structure, proving the first inclusion in~\eqref{eq:fprojective_inclusions}. 
The other inclusions in~\eqref{eq:fprojective_inclusions} follow directly from the definitions.
For Item~\ref{prop:fprojective:repr}, the Fuchsian projective surface $\Sigma$ is found by taking negatively oriented unit speed boundary parametrizations in a constant curvature metric with geodesic boundary.
For Item~\ref{prop:fprojective:mobius}, note that Fuchsian projective structure is independent of the boundary parametrizations, so any boundary reparametrization preserving the Fuchsian structure must be a M\"obius transformation. 			
\end{proof}

\section{Real one-dimensional modular functors}
\label{section:modular_functor}
This section is devoted to the setup for our first main result, Theorem~\ref{thm:universality} (whose proof will be completed in Section~\ref{section:iso}).
We first show how a \gls{fsmooth} real one-dimensional modular functor (Definition~\ref{def:modular_functor}) induces a central extension the complex deformations by the multiplicative group $\Rp$ (Section~\ref{section:central_extension}), and how this central extension acts on the modular functor (Section~\ref{section:action_central_extension}).
Moreover, we spell out a number of cocycle identities (Section~\ref{section:cocycles}), and discuss the notion of locality (Section~\ref{section:local_equiv}).
Then, we define the central charge of a \gls{fsmooth} real one-dimensional modular functor using the cohomology class of the Lie algebra cocycle of the central extension (Section~\ref{section:central_charge}).
Finally, we define natural, yet nontrivial, modular and crossing invariance properties of \gls{fsmooth} local real one-dimensional modular functors (Section~\ref{section:modular_invariance_properties}).
We show that these properties hold for the real determinant line bundle (Section~\ref{section:det_line_bundle}), and later, in Section~\ref{section:modular_invariance_proof}, we show that they actually hold for any local \gls{fsmooth} real one-dimensional modular functor (see Proposition~\ref{prop:E_prop}).

Let us begin by defining real one-dimensional modular functors.
We use the operations on Segal moduli spaces from Section~\ref{section:setup}.
Additionally, we let $\blank \;_{j \leftrightarrow k} \colon \moduligb \to \moduligb$ denote the operation of swapping the labels $j$ and $k$ of boundary components.

\begin{definition}
	A \emph{\gls{fsmooth} real one-dimensional modular functor} $\RMFE$ consists of \gls{fsmooth} principal $\Rp$-bundles
	\begin{equation}
		\RMFE(\moduligb) \longrightarrow \moduligb, \qquad \genus, \boundaries \geq 0,
	\end{equation}
	and \gls{fsmooth} morphisms of principal $\Rp$-bundles called \emph{sewing isomorphisms}\footnote{
		Since $\Rp$ is abelian, the fiber-wise tensor product, denoted by $\boxtimes$, is again a principal $G$-bundle over $\moduli{\genus_1}{\boundaries_1} \times \moduli{\genus_2}{\boundaries_2}$.
		Morphisms of principal $\Rp$-bundles are isomorphisms on fibers.
	},
	\begin{align}
		\blank \sewx{\RMFE}{j}{k\phantom{,j}} \blank &: \RMFE(\moduli{\genus_1}{\boundaries_2}) \boxtimes \RMFE(\moduli{\genus_2}{\boundaries_2}) \to \RMFE(\moduli{\genus_1 + \genus_2}{\boundaries_1 + \boundaries_2 - 2}),\\
		\sewxself{\RMFE}{j}{k} \blank &: \RMFE(\moduligb) \to \RMFE(\moduli{\genus + 1}{\boundaries - 2}), \\
		\blank \;_{j \overset{\scriptstyle \RMFE}{\leftrightarrow} k} &: \RMFE(\moduligb) \to \RMFE(\moduligb),
	\end{align}
	They cover the respective operations $\blank \sew{j}{k} \blank$, $\sewself{j}{k} \blank$, and $\blank \;_{j \leftrightarrow k}$ preserving associativity\footnote{
		We use the convention that we assign new labels to boundary components only after all sewing operations have been performed.
	}
	\begin{align}
		\label{eq:assoc_E}
		\blank \sewx{\RMFE}{j}{k} (\blank \sewx{\RMFE}{l}{m} \blank)
		&=
		(\blank \sewx{\RMFE}{j}{k} \blank) \sewx{\RMFE}{l}{m} \blank, \\
		\sewxself{\RMFE}{j}{k} (\blank \sewx{\RMFE}{l}{m} \blank)
		&=
		(\sewxself{\RMFE}{j}{k} \blank) \sewx{\RMFE}{l}{m} \blank, \\
		\sewxself{\RMFE}{j}{k} (\sewxself{\RMFE}{l}{m} \blank)
		&=
		\sewxself{\RMFE}{l}{m} (\sewxself{\RMFE}{j}{k} \blank),
	\end{align}
	and symmetry
	\begin{align}
		( \blank \;_{j \overset{\scriptstyle \RMFE}{\leftrightarrow} l} \sewx{\RMFE}{l}{k} \blank ) \sewx{\RMFE}{j}{m} \blank
		= \; & (\blank \sewx{\RMFE}{j}{k} \blank) \sewx{\RMFE}{l}{m} \blank, \\
		\alpha \sewx{\RMFE}{j}{k} \beta = \; &
		\beta \sewx{\RMFE}{k}{j} \alpha,
			\qquad \alpha \in \RMFE(\moduli{\genus_1}{\boundaries_2}), \beta \in \RMFE(\moduli{\genus_2}{\boundaries_2}), \\
		\sewxself{\RMFE}{j}{k} \alpha = \; & \sewxself{\RMFE}{k}{j} \alpha,
			\qquad \alpha \in \RMFE(\moduli{\genus}{\boundaries}).
	\end{align}
    \label{def:modular_functor}
\end{definition}

\begin{remark}
	Note that the principal $\Rp$-bundles are always trivial bundles.
	What is of interest is not the bundle structure, but the algebraic structure given by the sewing isomorphisms.
	Equivalently, real one-dimensional modular functors may be defined  as oriented line bundles.
	Our main example, the real determinant line bundle, is orientable and nontrivial as a real one-dimensional modular functor~\cite[Corollary~1.5]{Maibach-Peltola:From_the_conformal_anomaly_to_the_Virasoro_algebra}.
\end{remark}

\begin{remark}
	The trivial real one-dimensional modular functor comprises the trivial bundles $\RMFE(\moduligb) = \Rp \times \moduligb$ endowed with the trivial sewing operations $(1, \Sigma_1) \sewx{\RMFE}{j}{k} (1, \Sigma_2) = (1, \Sigma_1 \sew{j}{k} \Sigma_2)$ and $\sewxself{\RMFE}{j}{k} (1, \Sigma) = (1, \sewself{j}{k} \Sigma)$.
	Alternatively, triviality is characterized by the existence of a trivialization such that all cocycles vanish (see Section~\ref{section:cocycles}).
\end{remark}

\subsection{Central extensions of complex deformations}
\label{section:central_extension}

To construct the central extensions, we use Segal's idea~\cite{Segal:Definition_of_CFT,Segal:Two-dimensional_conformal_field_theories_and_modular_functors,Segal:Definition_of_CFT_collection}, also used by Huang in \cite[Appendix~D]{Huang:2D_Conformal_geometry_and_VOAs} in genus $0$, and by Radnell in genus $1$~\cite[Chapter~6.3]{Radnell:PhD}.
There, central extensions of the diffeomorphism group $\Diffpan$ are constructed for complex determinant line bundles (or complex one-dimensional modular functors), relative to the standard annulus $\A_1$. 
In~\cite{Maibach-Peltola:From_the_conformal_anomaly_to_the_Virasoro_algebra}, we carried out this construction concretely for the real determinant line bundle and complex deformations, relative to the standard annulus $\A_1$. 
In the following, we will show that the central extension can be defined relative to arbitrary surfaces, and for a general \gls{fsmooth} real one-dimensional modular functor $\RMFE$.

Conceptually, a complex deformation $\phi \in \DefC$ encodes the difference between a surface $\Sr \in \moduligb$ and a deformed surface $\Sr \diffActing{j} \phi$.
The fiber $\RMFE(\Sr)$ is a $\Rp$-torsor, that is, a set on which $\Rp$ acts freely and transitively.
Since $\Rp$ is abelian, there are natural notions of tensor products $T_1 \otimes T_2$ and duals $T^\vee$ of $\Rp$-torsors $T$, with the evaluation map
\begin{equation}
	\evaluation \colon T \otimes T^\vee \longrightarrow \Rp, \qquad \alpha \otimes \alpha^\vee \longmapsto \alpha^\vee(\alpha) = 1.
	\label{eq:torsor_eval}
\end{equation}
For fibers of $\RMFE$, given that $\Sr \diffActing{j} \phi$ exists, we define
\begin{equation}
	\RMFE_{\Sr, j}(\phi) = \RMFE(\Sr \diffActing{j} \phi) \otimes \Big(\RMFE(\Sr)\Big)^\vee.
		\label{eq:fibers_of_E}
\end{equation}
Tensoring with the dual of $\RMFE(\Sr)$ is a way of ``dividing out'' the original surface $\Sr$ after the deformation by $\phi$, retaining only the structure of $\RMFE$ on the difference between the surfaces $\Sr \diffActing{j} \phi$ and $\Sr$.
In fact, the $\Rp$-torsors $\RMFE_{\Sr, j}(\phi)$ do not actually depend on $\Sr$ and $j$, in the sense that we can consistently identify them by taking a (co-)limit in the category of $\Rp$-torsors:
\begin{equation}
	\RMFE(\phi) = \varinjlim \RMFE_{\Sr, j}(\phi) = \varprojlim \RMFE_{\Sr, j}(\phi).
	\label{eq:E_phi_lim}
\end{equation}
The category of $\Rp$-torsors forms a connected groupoid, meaning that all morphisms are isomorphisms and any two objects are isomorphic (in many ways).
For such a category, limits and colimits of connected diagrams exist and agree if they are related by inverting the isomorphisms.
To elaborate this, we define the following diagrams for~\eqref{eq:E_phi_lim}:
\begin{itemize}
	\item
		The vertices are $\RMFE_{\Sr, j}(\phi)$ indexed by all surfaces $\Sr \in \moduligb$, for $\genus \geq 0$, $\boundaries \geq 1$, with at least one boundary component and $1 \leq j \leq \boundaries$ such that $\Sr \diffActing{j} \phi$ exists.
	\item
		For the limit in~\eqref{eq:E_phi_lim}, there is an edge from $\RMFE_{\Sra, j}(\phi)$ to $\RMFE_{\Srb, k}(\phi)$ if and only if there exists a boundary component $l \neq j$ of $\Sra$ and a third surface $\Srx$ with boundary component $m$, such that $\Srb = (\Sra \sew{l}{m} \Srx)_{j \leftrightarrow k}$.
		The edges are given by isomorphisms
		\begin{equation}
			\natisophi{\phi}{\Sra, j}{\Srb, k} \colon \RMFE_{\Sra, j}(\phi) \longrightarrow \RMFE_{\Srb, k}(\phi),
			\label{eq:def_natisophi_map}
		\end{equation}
		yet to be defined, such that for a loop and two consecutive edges, we respectively~have
		\begin{equation}
			\natisophi{\phi}{\Sra, j}{\Sra, j} = \id_{\RMFE_{\Sra, j}(\phi)} \qquad \textnormal{and} \qquad
			\natisophi{\phi}{\Srb, k}{\Src, l} \circ \natisophi{\phi}{\Sra, j}{\Srb, k} = \natisophi{\phi}{\Sra, j}{\Src, l}.
			\label{eq:natisophi_prop}
		\end{equation}
	\item
		For the colimit in~\eqref{eq:E_phi_lim}, we invert the edges for the limit, that is, $\natisophi{\phi}{\Srb, k}{\Sra, l} = \Big(\natisophi{\phi}{\Sra, j}{\Srb, k}\Big)^{-1}$.
\end{itemize}
These diagrams are indeed connected, since any two vertices have a common neighbor\footnote{For instance, a standard annulus $\A_\tau$, which for sufficiently small $\tau > 0$ embeds into any surface by aligning it at a boundary component through analytic continuation of the boundary parametrization.}. 
The isomorphisms~\eqref{eq:def_natisophi_map} are defined in terms of the sewing isomorphisms of $\RMFE$ associated to the relation $\Srb = (\Sra \sew{l}{m} \Srx)_{j \leftrightarrow k}$ as follows\footnote{
	$\big(\applyat{\evaluation^{\RMFE(\Srx)}}{2}{4}\big)^{-1}$ is the inverse of the evaluation map~\eqref{eq:torsor_eval} applied to the 2nd and 4th tensor components.
}:
\begin{equation}
	\label{eq:def_natiso_incl}
	\natisophi{\phi}{\Sra, j}{\Srb, k} =
	\Big(\big( \blank \sewx{\RMFE}{l}{m} \blank\big)_{j \overset{\scriptstyle \RMFE}{\leftrightarrow} k} \otimes \big(\blank \sewx{\RMFE}{l}{m} \blank\big)_{j \overset{\scriptstyle \RMFE}{\leftrightarrow} k}^\vee \Big) \circ \Big(\applyat{\evaluation^{\RMFE(\Srx)}}{2}{4}\Big)^{-1}.
\end{equation}
In words, this isomorphism takes an element $\alpha \otimes \beta^\vee \in \RMFE_{\Sra, j}(\phi)$ and tensors it with any $\delta \in \RMFE(\Srx)$ and the dual element $\delta^\vee\in \RMFE(\Srx)^\vee$ to obtain $\alpha \otimes \delta \otimes \beta^\vee \otimes \delta^\vee$.
Then, the sewing isomorphisms of $\RMFE$ are applied to sew $\alpha$ to $\delta$, and $\beta^\vee$ to $\delta^\vee$. 
(The latter is the uniquely defined sewing isomorphism induced on the dual of $\RMFE$.)
This results in
\begin{equation}
	\natisophi{\phi}{\Sra, k}{\Srb, l}(\alpha \otimes \beta^\vee)
	= (\alpha \sewx{\RMFE}{l}{m} \delta)_{j \overset{\scriptstyle \RMFE}{\leftrightarrow} k} \otimes (\beta \sewx{\RMFE}{l}{m} \delta)_{j \overset{\scriptstyle \RMFE}{\leftrightarrow} k}^\vee,
	\label{eq:def_natiso_incl_concrete}
\end{equation}
which is independent of the choice of $\delta \in \RMFE(\Srx)$.
The identity~\eqref{eq:natisophi_prop} for two consecutive edges follows from associativity of the sewing isomorphisms of $\RMFE$.
Note that the (co-)limit $\RMFE(\phi)$ gives us isomorphisms~\eqref{eq:def_natisophi_map} satisfying~\eqref{eq:natisophi_prop}, also between surfaces not connected by an edge, defined by composing isomorphisms to and from the limiting object.

To define the central extension for $\DefC$, we combine the $\Rp$-torsors $\RMFE(\phi)$ defined by Equation~\eqref{eq:E_phi_lim} into the fibers of a principal $\Rp$-bundle over $\DefC$,
\begin{equation}
	\RMFE(\DefC) = \bigsqcup_{\phi \in \DefC} \RMFE(\phi).
	\label{eq:E_phi_bundle}
\end{equation}
Trivializations $Z_{\disk, 1}(\phi) = Z(\disk \diffActing{1} \phi) \otimes \big(Z(\disk)\big)^\vee$ relative to a \gls{fsmooth} trivialization~$Z$ of~$\RMFE(\disks)$ generate a Fr\"olicher structure on $\RMFE(\DefC)$.
These trivializations are global (by Item~\ref{lemma:complex_deformations:unit_disk} of Lemma~\ref{lemma:complex_deformations}).
More generally, consider trivializations of the form
\begin{equation}
	\label{eq:local_triv_ext}
	Z_{\Sr, j}(\phi) = Z(\Sr \diffActing{j} \phi) \otimes \Big(Z(\Sr)\Big)^\vee, \qquad \Sr \in \moduligb, 1 \leq j \leq \boundaries.
\end{equation}
These trivializations are local, because they only make sense where $\Sr \diffActing{j} \phi$ is defined.

Finally, we turn the principal $\Rp$-bundle $\RMFE(\DefC)$ into a central extension of by $\Rp$ in the following manner (see also~\cite[Section~3.3]{Maibach-Peltola:From_the_conformal_anomaly_to_the_Virasoro_algebra})).
The composition law is a bundle map covering the composition law of $\DefC$.
For composable pairs $\phi, \psi \in \DefC$, the fiber-wise multiplication isomorphism
\begin{equation}
	\blank \diffActingE{} \blank \colon \RMFE(\phi) \otimes \RMFE(\psi) \longrightarrow \RMFE(\phi \circ \psi)
	\label{eq:E_defc_composition}
\end{equation}
is an isomorphism of (co-)limits~\eqref{eq:E_phi_lim}.
It is fully determined by representing it over a single triple of vertices in the corresponding diagrams in the following way.
Let $\Sr$ be a surface with boundary component $j$ such that $\Sr \diffActing{j} \phi$, and $\Sr \diffActing{j} \psi$, and $\Sr \diffActing{j} (\phi \circ \psi)$ exist, and define $\detmulta{\phi}{\psi}{\Sr, j} \colon \RMFE_{\Sr, j}(\phi) \otimes \RMFE_{\Sr, j}(\psi) \to \RMFE_{\Sr, j}(\phi \circ \psi)$ through the composition\footnote{Here $\exchange \colon v \otimes w \mapsto w \otimes v$ denotes the isomorphism that exchanges the tensor components.}
\begin{equation}
	\label{eq:def_mult_a}
	\detmulta{\phi}{\psi}{\Sr, j} = \exchange \circ \applyat{\evaluation^{\RMFE(\Sr \diffActing{j} \phi)}}{1}{4} \circ (\id \otimes \natisophi{\psi}{\Sr, j}{\Sr \diffActing{j} \phi, j}),
\end{equation}
which depends on the sewing isomorphisms of $\RMFE$ through~\eqref{eq:def_natiso_incl}.
Let us spell this out for elements $\alpha \otimes \beta^\vee \in \RMFE_{\Sr, j}(\phi)$ and $\gamma \otimes \beta^\vee \in \RMFE_{\Sr, j}(\psi)$, where the respective second factors may be chosen to agree without loss of generality.
The first step is to change the vertex of the second element by finding the unique $\delta \in \RMFE(\Sr \diffActing{j} (\phi \circ \psi))$ matching $\alpha^\vee$ such that
\begin{equation}
	\label{eq:assoc_nu}
	\natisophi{\psi}{\Sr, j}{\Sr \diffActing{j} \phi, j}(\gamma \otimes \beta^\vee) = \delta \otimes \alpha^\vee.
\end{equation}
Then, the evaluation and flip of $\alpha \otimes \beta^\vee \otimes \delta \otimes \alpha^\vee$ results in $\delta \otimes \beta^\vee$.
This defines the composition law (multiplication isomorphism) of Equation~\eqref{eq:E_defc_composition} as
\begin{equation}
	(\alpha \otimes \beta^\vee) \diffActingE{} (\delta \otimes \alpha^\vee) = \detmulta{\phi}{\psi}{\Sr, j}(\alpha \otimes \beta^\vee \otimes \gamma \otimes \beta^\vee) = \delta \otimes \beta^\vee,
	\label{eq:comp_law_E}
\end{equation}
where it is convenient to start with $\alpha \otimes \beta^\vee \in \RMFE_{\Sr, j}(\phi)$ and $\delta \otimes \beta^\vee \in \RMFE_{\Sr \diffActing{j} \phi, j}(\psi)$.

\begin{lemma}
	\leavevmode
	\begin{enumerate}
		\item
			The multiplication isomorphism~\eqref{eq:def_mult_a} is canonical in the sense that
			\begin{equation}
				\label{eq:mult_canon}
				\natisophi{\phi \circ \psi}{\Sra, j}{\Srb, k} \circ \detmulta{\phi}{\psi}{\Sra, j}
				= {\detmulta{\phi}{\psi}{\Srb, k}} \circ \Big(\natisophi{\phi}{\Sra, j}{\Srb, k} \otimes \natisophi{\psi}{\Sra, j}{\Srb, k}\Big)
			\end{equation}
			for common edges from $(\Sra, j)$ to $(\Srb, k)$ for $\phi$, $\psi$, and $\phi \circ \psi$.
		\item
			The composition law is associative in the sense that
			$(\blank \diffActingE{} \blank) \diffActingE{} \blank = \blank \diffActingE{} (\blank \diffActingE{} \blank)$.
		\item
			The composition law is \gls{fsmooth}.
		\end{enumerate}
\label{lemma:central_extension}
\end{lemma}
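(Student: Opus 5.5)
We prove the three items in order. Everything happens inside $\Rp$-torsors, and the natural isomorphisms $\natisophi{\phi}{\Sra, j}{\Srb, k}$ are built from sewing isomorphisms, so the basic tool is associativity and symmetry of the sewing isomorphisms of $\RMFE$ (Definition~\ref{def:modular_functor}). We use it through the explicit formula~\eqref{eq:def_natiso_incl_concrete}.

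\emph{Canonicity~\eqref{eq:mult_canon}.} Suppose $\Srb = (\Sra \sew{l}{m} \Srx)_{j \leftrightarrow k}$ with $l \neq j$. Take representatives $\alpha \otimes \beta^\vee \in \RMFE_{\Sra, j}(\phi)$ and $\delta \otimes \alpha^\vee \in \RMFE_{\Sra \diffActing{j} \phi, j}(\psi)$, as in~\eqref{eq:comp_law_E}, and fix $\epsilon \in \RMFE(\Srx)$.
\begin{itemize}
\item Going one way round, the left-hand side gives $(\delta \sewx{\RMFE}{l}{m} \epsilon) \otimes (\beta \sewx{\RMFE}{l}{m} \epsilon)^\vee$, with the label swap understood throughout.
\item Going the other way, the two factors are first transported to $(\alpha \sewx{\RMFE}{l}{m}\epsilon) \otimes (\beta \sewx{\RMFE}{l}{m}\epsilon)^\vee$ and to the image of $\gamma \otimes \beta^\vee$. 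We must then check that the transported second factor, moved to the vertex $\Srb \diffActing{k} \phi$, equals $(\delta \sewx{\RMFE}{l}{m}\epsilon) \otimes (\alpha \sewx{\RMFE}{l}{m}\epsilon)^\vee$.
\end{itemize}
The key geometric input is that deformation at $j$ commutes with sewing at $l \neq j$: $(\Sra \diffActing{j}\phi) \sew{l}{m} \Srx = (\Sra \sew{l}{m} \Srx) \diffActing{j}\phi$. Hence the edge from $\Sra \diffActing{j}\phi$ to $\Srb \diffActing{k}\phi$ is again given by sewing with $\Srx$ at $l,m$. Compatibility~\eqref{eq:natisophi_prop} then lets us commute the two transports $\natisophi{\psi}{\Sra,j}{\Sra \diffActing{j}\phi, j}$ and "sew $\Srx$". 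The remaining ingredients are that the evaluation maps are compatible with sewing the same $\epsilon$ on both sides, and that the result is independent of $\epsilon$. Since every pair of vertices is joined by a path of edges and the edges compose coherently, canonicity on edges gives canonicity everywhere, so $\diffActingE{}$ is well defined on the (co-)limits.

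\emph{Associativity.} Given a triple $\phi,\psi,\chi$ with all relevant compositions composable, choose a single vertex $(\Sr, j)$ at which all deformations exist; a thin annulus $\A_\tau$ with $\tau$ small suffices, or use Lemma~\ref{lemma:complex_deformations}. By canonicity we may compute both bracketings there.
\begin{itemize}
\item Write the elements as $\alpha\otimes\beta^\vee \in \RMFE_{\Sr,j}(\phi)$, $\delta\otimes\alpha^\vee \in \RMFE_{\Sr\diffActing{j}\phi,j}(\psi)$ and $\eta\otimes\delta^\vee \in \RMFE_{\Sr\diffActing{j}(\phi\circ\psi),j}(\chi)$.
\item Both bracketings telescope to $\eta\otimes\beta^\vee$.
\item The only nontrivial point is that the transport of the third element between vertices is path independent, which is exactly~\eqref{eq:natisophi_prop}; this uses $(\Sr\diffActing{j}\phi)\diffActing{j}\psi = \Sr \diffActing{j}(\phi\circ\psi)$.
\end{itemize}

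\emph{Fr\"olicher smoothness.} We work in the local trivializations $Z_{\Sr,j}$ from~\eqref{eq:local_triv_ext}. There the multiplication reads $Z_{\Sr,j}(\phi)\diffActingE{}Z_{\Sr,j}(\psi) = e^{c(\phi,\psi)} Z_{\Sr,j}(\phi\circ\psi)$, where $c$ is a combination of cocycles of $Z$ evaluated at the surfaces $\Sr\diffActing{j}\phi$ and so on. Smoothness of $c$ along Fr\"olicher curves in $\defmultset$ then follows from three facts: the sewing isomorphisms and $Z$ are Fr\"olicher smooth, the action of $\DefC$ on $\moduligb$ is Fr\"olicher smooth, and changes between the trivializations $Z_{\Sr,j}$ and the global $Z_{\disk,1}$ are smooth since they are again given by sewing isomorphisms.

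I expect the main obstacle to be this last step, and specifically the bookkeeping that the edge isomorphisms $\natisophi{\psi}{\Sr,j}{\Sr\diffActing{j}\phi,j}$ depend smoothly on $\phi$. The difficulty is that the connecting surface $\Srx$ varies with $\phi$. I would try first to reduce everything locally to a fixed small annulus $\A_\tau$ that is sewn on at $j$, which keeps $\Srx$ fixed.
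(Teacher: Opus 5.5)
Your proof is correct and follows the paper's route. For canonicity you use the same element-level computation: you factor the transport through $\RMFE_{\Sra \diffActing{j} \phi, j}(\psi)$ via \eqref{eq:natisophi_prop}, with the key input that deformation at $j$ commutes with sewing at $l \neq j$. Smoothness is likewise reduced, as in the paper, to Fr\"olicher smoothness of the sewing isomorphisms. The $\phi$-dependence of the connecting surface that worries you is along Fr\"olicher smooth curves in $\moduligb$, which the paper's one-line argument takes for granted as well.

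The only real difference is associativity. The paper merely points to an argument from the authors' earlier work, whereas your telescoping computation is a self-contained replacement. Note that in the second bracketing you compute the inner product at the vertex $(\Sr \diffActing{j} \phi, j)$, so you are using the vertex-independence supplied by canonicity, not only the single chosen vertex.

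One slip: a thin annulus is not a universal vertex. A deformation $\phi$ acts on $\A_\tau$ at boundary $1$ only if a neighborhood of $\DefCnbhd{\phi}$ lies in $\{|w| < e^{2\pi\tau}\}$, so small $\tau$ admits fewer deformations, not more. Use the unit disk $\disk$ instead, on which every complex deformation acts by Item~\ref{lemma:complex_deformations:unit_disk} of Lemma~\ref{lemma:complex_deformations}; this is the alternative you already mention.
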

\begin{proof}
	Let $\Srb = (\Sra \sew{l}{m} \Srx)_{j \leftrightarrow k}$.
	We first consider the left-hand side of~\eqref{eq:mult_canon}.
	Starting with $\alpha \otimes \beta^\vee \in \RMFE_{\Sra, j}(\phi)$ and $\gamma \otimes \beta^\vee \in \RMFE_{\Sra, j}(\psi)$ and following Equation~\eqref{eq:comp_law_E}, the multiplication isomorphism maps the tensor product of these elements to $\delta \otimes \beta^\vee$, where $\delta$ is determined by~\eqref{eq:assoc_nu}.
	Taking any $\mu \in \RMFE(\Srx)$, we apply Equation~\eqref{eq:def_natiso_incl_concrete} to find
	\begin{equation}
		\label{eq:assoc_left}
		(\natisophi{\phi \psi}{\Sra, j}{\Srb, k} \circ \detmulta{\phi}{\psi}{\Sra, j})(\alpha \otimes \beta^\vee \otimes \gamma \otimes \beta^\vee)
		= (\delta \sewx{\RMFE}{l}{m} \mu)_{j \overset{\scriptstyle \RMFE}{\leftrightarrow} k} \otimes (\beta \sewx{\RMFE}{l}{m} \mu)_{j \overset{\scriptstyle \RMFE}{\leftrightarrow} k}^\vee.
	\end{equation}
	For the right-hand side of~\eqref{eq:mult_canon}, we can factor through $\RMFE_{\Sra \diffActing{j} \phi, j}(\psi)$, where the defining relation~\eqref{eq:assoc_nu} for $\delta$ applies.
	We thus obtain
	\begin{equation}
		\exchange \circ \applyat{\evaluation}{1}{4} \circ \Big(\natisophi{\phi}{\Sra, j}{\Srb, k} \otimes \underbrace{\big(
			\natisophi{\psi}{\Srb, k}{\Srb \diffActing{k} \phi, k} \circ \natisophi{\psi}{\Sra, j}{\Srb, k}
		\big)}_{
			\mathclap{\textstyle
			= \natisophi{\psi}{\Sra \diffActing{j} \phi, j}{\Srb \diffActing{k} \phi, k} \circ \natisophi{\psi}{\Sra, j}{\Sra \diffActing{j} \phi, j}
		}} \Big).
	\end{equation}
	Before the evaluation and flip, $\alpha \otimes \beta^\vee \otimes \gamma \otimes \beta^\vee$ is mapped to
	\begin{equation}
		(\alpha \sewx{\RMFE}{l}{m} \mu)_{j \overset{\scriptstyle \RMFE}{\leftrightarrow} k}
		\otimes (\beta \sewx{\RMFE}{l}{m} \mu)_{j \overset{\scriptstyle \RMFE}{\leftrightarrow} k}^\vee
		\otimes (\delta \sewx{\RMFE}{l}{m} \mu)_{j \overset{\scriptstyle \RMFE}{\leftrightarrow} k}
		\otimes (\alpha \sewx{\RMFE}{l}{m} \mu)_{j \overset{\scriptstyle \RMFE}{\leftrightarrow} k}^\vee.
	\end{equation}
	After applying the final evaluation and flip, we find that the right-hand side of~\eqref{eq:mult_canon} is equal to left-hand side, by Equation~\eqref{eq:assoc_left}.
	Associativity follows by arguments similar to the proof of~\cite[Theorem~3.13]{Maibach-Peltola:From_the_conformal_anomaly_to_the_Virasoro_algebra} (which readily generalize to the present setting).
	Lastly, \gls{fsmooth}ness follows from the \gls{fsmooth}ness of the sewing isomorphisms of $\RMFE$, since other than those, the composition law is given by (\gls{fsmooth}) linear operations.
\end{proof}

\begin{theorem}
	\label{thm:central_extension}
	The following sequence is exact:
	\begin{equation}
	\begin{alignedat}{2}
\{1\} \; \longrightarrow \; \Rp \; & \longhookrightarrow \; \RMFE(\DiffC) && \; \longtwoheadrightarrow \; \DiffC \; \longrightarrow \; \{1\} ,
		\label{eq:defc_ext_sequence}
\end{alignedat}
	\end{equation}
	 where the second map is the inverse of $\evaluation \colon  \RMFE(\id) \to \Rp$ and the third map is the projection of the bundle~\eqref{eq:E_phi_bundle}.	It defines a \emph{central extension} in the sense that, in addition to exactness, the maps are \gls{fsmooth} and $\Rp$ maps into the center of $\RMFE(\DefC)$.
\end{theorem}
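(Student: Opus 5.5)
We verify the claims of Theorem~\ref{thm:central_extension} one at a time, using the structure already built in Lemma~\ref{lemma:central_extension}: the bundle $\RMFE(\DefC)$ with the associative, \gls{fsmooth} composition law $\diffActingE{}$.

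\textbf{Step 1: group structure and units.} First we identify a unit. For $\phi = \id$ we have $\Sr \diffActing{j} \id = \Sr$, so every vertex of the diagram is $\RMFE_{\Sr,j}(\id) = \RMFE(\Sr) \otimes \RMFE(\Sr)^\vee$. The evaluation map~\eqref{eq:torsor_eval} identifies each vertex with $\Rp$. These identifications are compatible with the edge maps~\eqref{eq:def_natiso_incl_concrete}, because sewing the same $\delta$ onto both $\alpha$ and $\beta$ preserves the ratio $\beta^\vee(\alpha)$. Hence $\evaluation \colon \RMFE(\id) \to \Rp$ is a well-defined isomorphism of torsors.

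Next, from~\eqref{eq:comp_law_E} we see that $\alpha\otimes\alpha^\vee \in \RMFE(\id)$ acts as a two-sided unit. More generally, $\lambda\,\alpha \otimes \alpha^\vee$ multiplies any element by $\lambda$, on either side. This follows from the bilinearity of~\eqref{eq:comp_law_E} over $\Rp$, i.e.\ the multiplication isomorphism is a map of torsors in each factor.

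Inverses exist over $\definvset$. Given $\alpha\otimes\beta^\vee \in \RMFE_{\Sr,j}(\phi)$, regard $\beta\otimes\alpha^\vee$ as an element of $\RMFE_{\Sr\diffActing{j}\phi,j}(\phi^{-1})$, whenever $\phi^{-1}$ acts. By~\eqref{eq:comp_law_E} its products with $\alpha\otimes\beta^\vee$ in either order give $\beta\otimes\beta^\vee$ and $\alpha\otimes\alpha^\vee$, both units. (Where only partial compositions exist, the object is a ``local group,'' mirroring $\DefC$.)

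\textbf{Step 2: exactness.} The map $\evaluation^{-1}\colon \Rp \to \RMFE(\id)$ is injective, and by Step~1 it is a homomorphism. Its image is exactly the fiber over $\id$, i.e.\ the kernel of the projection. The projection is surjective because each fiber $\RMFE(\phi)$ is a nonempty torsor; indeed $\disk\diffActing{1}\phi$ always exists by Item~\ref{lemma:complex_deformations:unit_disk} of Lemma~\ref{lemma:complex_deformations}. The projection is also a homomorphism, since~\eqref{eq:E_defc_composition} covers composition in $\DefC$.

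\textbf{Step 3: centrality.} For $\lambda\in\Rp$ and $x \in \RMFE(\phi)$, Step~1 shows that both $\evaluation^{-1}(\lambda)\diffActingE{} x$ and $x \diffActingE{} \evaluation^{-1}(\lambda)$ equal $\lambda x$, the torsor action. Hence $\Rp$ lies in the center.

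\textbf{Step 4: smoothness.} The smoothness claims are handled via the global trivialization $Z_{\disk,1}$ from~\eqref{eq:local_triv_ext}, which generates the Fr\"olicher structure on $\RMFE(\DefC)$. In this trivialization the inclusion of $\Rp$ is $\lambda\mapsto(\id,\lambda)$ and the projection is the first-coordinate map, both obviously \gls{fsmooth}. Smoothness of the multiplication is Lemma~\ref{lemma:central_extension}.

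\textbf{Main obstacle.} We expect the main difficulty to be the bookkeeping in Step~1: checking that the identification $\RMFE(\id)\cong\Rp$ and the scalar actions are independent of the vertex $(\Sr,j)$ chosen to represent the (co)limit. The plan is to reduce this to the canonicity statement~\eqref{eq:mult_canon} and the edge formula~\eqref{eq:def_natiso_incl_concrete}, which together allow every computation to be performed at a single vertex.
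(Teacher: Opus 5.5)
Your proposal is correct and follows essentially the same route as the paper. Both proofs identify $\evaluation^{-1}(\lambda)$ with $\lambda(\gamma \otimes \gamma^\vee)$, verify the homomorphism property and centrality directly from the composition law~\eqref{eq:comp_law_E}, and obtain exactness from the fact that the kernel of the projection is the fiber $\RMFE(\id)$. Your additional checks are sound refinements rather than a different approach: well-definedness of $\evaluation$ on the (co)limit, surjectivity via $\disk \diffActing{1} \phi$, centrality derived from $\Rp$-equivariance of the multiplication, and inverses (which the paper places in Remark~\ref{remark:inverse_E}).
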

\begin{proof}
	The inverse of the evaluation map is $\evaluation^{-1}(\lambda) = \lambda (\gamma \otimes \gamma^\vee) \in \RMFE_{\Sr, j}(\id)$, represented by an arbitrary $\gamma \in \RMFE(\Sr)$ for some choice of surface $\Sr$ and boundary component $j$.
	The composition of two such elements $\lambda (\gamma \otimes \gamma^\vee)$ and $\mu (\gamma\otimes \gamma^\vee)$ in $\RMFE(\id)$, for $\lambda, \mu \in \Rp$, is
	$\detmulta{\id}{\id}{\Sr, j}\big(\lambda (\gamma \otimes \gamma^\vee) \otimes \mu (\gamma \otimes \gamma^\vee)\big) = (\lambda \; \mu) (\gamma \otimes \gamma^\vee)$,
	since the isomorphism in Equation~\eqref{eq:assoc_nu} is the identity in this case.
	Note that this agrees with the multiplication in $\Rp$.
	The kernel of the projection $\RMFE(\DefC) \twoheadrightarrow \DefC$ is the fiber $\RMFE(\id)$ over $\id \in \DefC$, which is clearly the image of this map, showing that the sequence in~\eqref{eq:defc_ext_sequence} is indeed exact.

	The involved maps are clearly \gls{fsmooth}.
	Lastly, let $\alpha \otimes \beta^\vee \in \RMFE_{\Sr, j}(\phi)$ be an element in the fiber of $\phi \in \DefC$ represented using a surface and boundary component such that $\Sr \diffActing{j} \phi$ and $\Sr \diffActing{j} \phi^{-1}$ exist.
	We represent any element of $\RMFE(\id)$ by $\lambda (\alpha \otimes \alpha^\vee) \in \RMFE_{\Sr \diffActing{1} \phi, j}(\id)$, and equivalently by $\lambda (\beta \otimes \beta^\vee) \in \RMFE_{\Sr \diffActing{j} \phi^{-1}, j}(\id)$, where $\lambda \in \Rp$.
	By Equation~\eqref{eq:comp_law_E}, we have
	\begin{equation}
		(\alpha \otimes \beta^\vee) \diffActingE{} \big(\lambda (\alpha \otimes \alpha^\vee)\big) 
		= \lambda (\alpha \otimes \beta^\vee)
		= \big(\lambda (\beta \otimes \beta^\vee)\big) \diffActingE{} (\alpha \otimes \beta^\vee).
	\end{equation}
	Hence, the image of the second map in~\eqref{eq:defc_ext_sequence} is central in $\RMFE(\DefC)$.
\end{proof}

\begin{remark}
	The inverse of an element $\alpha \otimes \beta^\vee \in \RMFE_{\Sr, j}(\phi) = \RMFE(\Sr \diffActing{j} \phi) \otimes \RMFE(\Sr)$ with respect to $\blank \diffActingE{} \blank$ can conveniently be represented by
	\begin{equation}
		(\alpha \otimes \beta^\vee)^{-1}
		= \beta \otimes \alpha^\vee \in \Big(\RMFE_{\Sr \diffActing{j} \phi, j}(\phi^{-1})\Big)^\vee = \RMFE(\Sr) \otimes \Big(\RMFE(\Sr \diffActing{j} \phi)\Big)^\vee.
	\end{equation}
	Indeed, then we have already found the right-hand side of Equation~\eqref{eq:assoc_nu} such that
	\begin{equation}
		\detmulta{\phi}{\phi^{-1}}{\Sr, j}(\alpha \otimes \beta^\vee \otimes \beta \otimes \alpha^\vee)
		= \exchange \circ \applyat{\evaluation^{\RMFE(\Sr \diffActing{j} \phi)}}{1}{4}(\alpha \otimes \beta^\vee \otimes \beta \otimes \alpha^\vee)
		= \beta \otimes \beta^\vee.
	\end{equation}
	\label{remark:inverse_E}
\end{remark}

\subsection{Action of central extensions on modular functors}
\label{section:action_central_extension}

Consider an element $\gamma \in \RMFE(\Sigma)$ in a fiber over any surface $\Sigma \in \moduligb$.
Fix a boundary component~$j$ and $\phi \in \DefC$ such that the deformed surface $\Sigma \diffActing{j} \phi$ exists.
Any element of $\RMFE(\phi)$ may be represented by $\alpha \otimes \gamma^\vee \in \RMFE_{\Sigma, j}(\phi)$.
Conceptually, we can read the formula $\alpha \otimes \gamma^\vee$ as ``replace $\gamma$ by $\alpha$'' (this is what $\phi$ does geometrically when deforming the surface~$\Sigma$ in~\eqref{eq:fibers_of_E}).
Indeed, this defines actions of the central extension $\RMFE(\DefC)$ on the principal $\Rp$-bundles $\RMFE(\moduligb)$ covering the actions of $\DefC$ on $\moduligb$.
The latter is partially defined on a set denoted $\compdef{\genus}{\boundaries}{j} = \setsuchthat{(\Sigma, \phi) \in \moduligb \times \DefC}{\textnormal{$\Sigma \diffActing{j} \phi$ exists}}$.
\begin{proposition}
	There exist \gls{fsmooth} bundle morphisms
	\begin{equation}
	\begin{aligned}
		\blank \diffActingE{j} \blank \colon \RMFE(\compdef{\genus}{\boundaries}{j}) &\longrightarrow \RMFE(\moduligb), \\
		(\gamma, \alpha \otimes \gamma^\vee) &\longmapsto \alpha, 
	\end{aligned}
	\qquad 1 \leq j \leq \boundaries,
	\label{eq:def_action_E}	
	\end{equation}
	such that the following compatibilities with the sewing isomorphisms of $\RMFE$ hold:
	\label{prop:actionsewingE}
	\begin{enumerate}
		\item
			$
			(\blank \diffActingE{j} \blank) \diffActingE{j} \blank
			= \blank \diffActingE{j} (\blank \diffActingE{j} \blank)
			$,
			\label{prop:actionsewingE_samesew}
		\item
			$
			(\blank	 \sewx{\RMFE}{j}{k} \blank) \diffActingE{l} \blank
			= \blank \sewx{\RMFE}{j}{k} (\blank \diffActingE{l} \blank)
			$, or $
			(\blank	 \sewx{\RMFE}{j}{k} \blank) \diffActingE{l} \blank
			= (\blank \diffActingE{l} \blank) \sewx{\RMFE}{j}{k} \blank
			$,
			depending on which of the surfaces has the boundary component $l$,
			\label{prop:actionsewingE_assoc}
		\item
			$
			(\blank \diffActingE{j} \blank) \diffActingE{k} \blank
			= (\blank \diffActingE{k} \blank) \diffActingE{j} \blank
			$,
			\label{prop:actionsewingE_diffsew}
		\item
			$
			(\sewxself{E}{j}{k} \blank) \diffActingE{l} \blank
			= \sewxself{E}{j}{k} (\blank \diffActingE{l} \blank)
			$, and
			\label{prop:actionsewingE_selfsew}
		\item
			there exists a unique \gls{fsmooth} involution $\invinvx{\RMFE} \colon \RMFE(\definvset) \to \RMFE(\definvset)$, which is a bundle morphism covering $\invinv(\phi) = \inversion \circ \phi^{-1} \circ \inversion$ \textnormal{(}defined in Equation~\eqref{eq:def_invinv}\textnormal{)} such that
			\begin{equation}
				(\blank \diffActingE{j} \blank) \sewx{\RMFE}{j}{k} \blank = \blank \sewx{\RMFE}{j}{k} (\blank \diffActingE{k} \invinvx{\RMFE} \blank), \qquad \sewxself{\RMFE}{j}{k} (\blank \diffActingE{j} \blank) = \sewxself{\RMFE}{j}{k} (\blank \diffActingE{k} \invinvx{\RMFE} \blank).
			\label{eq:actionsewingE:invinv}
			\end{equation}
			\label{prop:actionsewingE:invinv}
	\end{enumerate}
\end{proposition}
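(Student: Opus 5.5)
The plan is to take the defining formula~\eqref{eq:def_action_E} literally and check that it is well defined, smooth, and compatible with the sewing isomorphisms. Given $\gamma \in \RMFE(\Sigma)$ and an element $\xi \in \RMFE(\phi)$, I would use the (co-)limit description~\eqref{eq:E_phi_lim} to push $\xi$ to the vertex $\RMFE_{\Sigma, j}(\phi)$. Since $\RMFE_{\Sigma, j}(\phi) = \RMFE(\Sigma \diffActing{j} \phi) \otimes \RMFE(\Sigma)^\vee$ is a torsor and $\gamma$ is fixed, $\xi$ has a unique representative of the form $\alpha \otimes \gamma^\vee$. We then set $\gamma \diffActingE{j} \xi = \alpha$. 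This is manifestly $\Rp$-equivariant in both arguments, so it is a bundle morphism.

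\textbf{Fr\"olicher smoothness.} I would work in the local trivializations $Z_{\Sigma,j}$ of~\eqref{eq:local_triv_ext}. In these charts, and together with $Z$ on $\RMFE(\moduligb)$, the action becomes multiplication of scalars. The only non-trivial ingredient is the identification of $Z_{\Sigma,j}$ with the global trivialization $Z_{\disk,1}$ through the maps $\natisophi{\phi}{\cdot}{\cdot}$. These maps are built from sewing isomorphisms by~\eqref{eq:def_natiso_incl}, which are Fr\"olicher smooth, and the connecting diagram can be chosen locally constant (via a small annulus $\A_\tau$), so smoothness follows.

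\textbf{Items~\ref{prop:actionsewingE_samesew}--\ref{prop:actionsewingE_selfsew}.} Item~\ref{prop:actionsewingE_samesew} follows from the composition law~\eqref{eq:comp_law_E}. Represent the first factor as $\alpha \otimes \gamma^\vee$ at vertex $(\Sigma,j)$ and the second as $\delta \otimes \alpha^\vee$ at vertex $(\Sigma\diffActing{j}\phi, j)$. Both sides then yield $\delta$.

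Item~\ref{prop:actionsewingE_assoc} is essentially the definition~\eqref{eq:def_natiso_incl_concrete} of the edge maps. The element $\alpha\otimes\gamma^\vee$ at $(\Sigma_1,l)$ is sent to $(\alpha \sewx{\RMFE}{j}{k} \mu)\otimes(\gamma\sewx{\RMFE}{j}{k}\mu)^\vee$ at the sewn surface. Acting on $\gamma\sewx{\RMFE}{j}{k}\mu$ therefore gives $\alpha\sewx{\RMFE}{j}{k}\mu$, which is the claim.

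Items~\ref{prop:actionsewingE_diffsew} and~\ref{prop:actionsewingE_selfsew} are handled the same way. For Item~\ref{prop:actionsewingE_selfsew}, the argument requires edges defined by self-sewing. I would either extend the diagram by such edges, or reduce to the pair case by writing the surface near the boundary $l$ as an annulus sewn on, $\Sigma = \Sigma' \sew{l}{1} \A_\tau$, and then applying Item~\ref{prop:actionsewingE_assoc} together with the associativity~\eqref{eq:assoc_E}. For Item~\ref{prop:actionsewingE_diffsew}, I split off two disjoint collar annuli at $j$ and $k$, and the two actions then commute because they take place on different tensor factors.

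\textbf{Item~\ref{prop:actionsewingE:invinv}, the main obstacle.} I would define $\invinvx{\RMFE}$ using a collar annulus $\A$ at which both $\phi$ and $\phi^{-1}$ act. By~\eqref{eq:act_between}, $(\A \diffActing{1}\phi) \sew{1}{1} \Sigma_2 = \A \sew{1}{1} (\Sigma_2 \diffActing{k}\invinv(\phi))$. Given $\alpha\otimes\gamma^\vee \in \RMFE_{\A,1}(\phi)$, choose any $\mu\in\RMFE(\Sigma_2)$. Then define $\invinvx{\RMFE}(\alpha\otimes\gamma^\vee)$ as the unique element $\beta\otimes\mu^\vee \in \RMFE_{\Sigma_2,k}(\invinv(\phi))$ satisfying $\alpha\sewx{\RMFE}{1}{k}\mu = \gamma\sewx{\RMFE}{1}{k}\beta$. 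Uniqueness is forced by the required identity~\eqref{eq:actionsewingE:invinv}.

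The work lies in showing that this is independent of the choices of $\A$, $\Sigma_2$ and $\mu$, i.e.\ that it is compatible with the edge maps $\natisophi{}{}{}$. I would reduce this to associativity~\eqref{eq:assoc_E} and symmetry by sewing further surfaces onto the free boundaries. Once well-definedness is established, the involution property follows from the symmetry $\alpha\sewx{\RMFE}{j}{k}\beta=\beta\sewx{\RMFE}{k}{j}\alpha$ together with $\invinv\circ\invinv=\id$. The self-sewing identity follows by the analogous argument using the self-sewing associativity, and smoothness follows as before.
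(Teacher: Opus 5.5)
Your proposal is correct and follows essentially the same route as the paper. Both arguments define the action by moving the representative to the vertex $(\Sigma,j)$ and get smoothness from the smoothness of the sewing isomorphisms in the edge maps. Items~1--2 come from the composition law and the edge-map formula, and Items~3--4 from cutting the surface. Finally, $\invinvx{\RMFE}$ is defined by solving $\alpha \sewx{\RMFE}{j}{k} \delta = \gamma \sewx{\RMFE}{j}{k} \beta$ for $\beta$, with independence of the choices coming from Item~2 and associativity. Your explicit check that $\invinvx{\RMFE}$ squares to the identity, using the symmetry of sewing and $\invinv \circ \invinv = \id$, fills in a detail the paper leaves implicit.
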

Note that these properties are rather elementary, covering the various compatibility relations of the actions of $\DefC$ and the sewing operations.
In particular, Item~\ref{prop:actionsewingE:invinv} covers the relation in Equation~\eqref{eq:act_between} of acting by a complex deformation at the seam of a sewing operation.
\begin{proof}
	Let $Z$ denote a trivialization of $\RMFE$. Consider the corresponding trivialization~\eqref{eq:local_triv_ext} of $\RMFE(\DefC)$ relative to a fixed surface $\Sigma$ and boundary component $j$.
	The action~\eqref{prop:actionsewingE} is defined by changing the representative from $\Sigma$ and $j$ to match the surface whose fiber is acted on.
	As the change of representative uses the isomorphisms~\eqref{eq:def_natisophi_map}, which are defined using the \gls{fsmooth} sewing operation of $\RMFE$, we conclude that $\blank \diffActingE{j} \blank$ is smooth.
\begin{enumerate}
	\item
		In Item~\ref{prop:actionsewingE_samesew}, two elements of $\RMFE(\DefC)$ act consecutively at the same boundary component.  
		We represent the arbitrary elements as $\alpha \otimes \gamma^\vee \in \RMFE_{\Sigma, j}(\phi)$ and $\beta \otimes \alpha^\vee \in \RMFE_{\Sigma \diffActing{j} \phi, j}(\psi)$, using the same $\gamma \in \RMFE(\Sigma)$ without loss of generality.
		On the one hand, we have
		\begin{equation}
			\Big(\gamma \diffActingE{j} (\alpha \otimes \gamma^\vee) \Big) \diffActingE{j} (\beta \otimes \alpha^\vee)
			= 
			\alpha \diffActingE{j} (\beta \otimes \alpha^\vee)
			= \beta,
		\end{equation}
		and on the other hand, by Equation~\eqref{eq:comp_law_E}, we have
		\begin{equation}
			\gamma \diffActingE{j} \Big(
				(\alpha \otimes \gamma^\vee) \diffActingE{}
				(\beta \otimes \alpha^\vee)
			\Big)
			= \gamma \diffActingE{j} (\beta \otimes \gamma^\vee)
			= \beta.
		\end{equation}
	\item
		For the first case, with the underlying relation $(\Sigma_1 \sew{j}{k} \Sigma_2) \diffActing{l} \phi = \Sigma_1 \sew{j}{k} (\Sigma_2 \diffActing{l} \phi)$, take $\gamma \in \RMFE(\Sigma_2)$ and $\mu \in \RMFE(\Sigma_1)$.
		Then, the element in $\RMFE(\phi)$ may be represented by either $\alpha \otimes \gamma^\vee \in \RMFE_{\Sigma_2, l}(\phi)$ or $(\mu \sewx{\RMFE}{j}{k} \alpha) \otimes (\mu \sewx{\RMFE}{j}{k} \gamma)^\vee \in \RMFE_{\Sigma_1 \sew{j}{k} \Sigma_2, l}(\phi)$ for some $\alpha \in \RMFE(\Sigma_2 \diffActing{j} \phi)$, where the latter representative is independent of $\mu$.
		The asserted relation follows:
		\begin{equation}
			\mu \sewx{\RMFE}{j}{k} \Big(\gamma \diffActingE{} (\alpha \otimes \gamma^\vee)\Big) = \mu \sewx{\RMFE}{j}{k} \alpha =
			(\mu \sewx{\RMFE}{j}{k} \gamma) \diffActingE{j} \Big((\mu \sewx{\RMFE}{j}{k} \alpha) \otimes (\mu \sewx{\RMFE}{j}{k} \gamma)^\vee\Big).
		\end{equation}
		The second case of Item~\ref{prop:actionsewingE_assoc} follows analogously.
	\item
		Item~\ref{prop:actionsewingE_diffsew} follows from Item~\ref{prop:actionsewingE_assoc} by cutting the surface in two, so that the boundary components $j$ and $k$ end up in distinct parts, and the underlying equation becomes
		\begin{equation}
			\Big((\Sigma_1 \sew{l}{m} \Sigma_2) \diffActing{j} \phi\Big) \diffActing{k} \psi = (\Sigma_1 \diffActing{j} \phi) \sew{l}{m} (\Sigma_2 \diffActing{k} \psi) = \Big((\Sigma_1 \sew{l}{m} \Sigma_2) \diffActing{k} \psi\Big) \diffActing{j} \phi.
		\end{equation}
	\item
		The self-sewn surface in Item~\ref{prop:actionsewingE_selfsew} may similarly be cut so that the self-sewing and action happen in different connected components.
		The asserted equation then follows from Item~\ref{prop:actionsewingE_assoc} (using also the compatibility of sewing and self-sewing from Definition~\ref{def:modular_functor}).
	\item
		The asserted identities~\eqref{eq:actionsewingE:invinv} cover the relations $(\Sigma_1 \diffActing{j} \phi) \sew{j}{k} \Sigma_2 = \Sigma_1 \sew{j}{k} (\Sigma_2 \diffActing{k} \invinv(\phi))$ and $\sewself{j}{k} (\Sigma \diffActing{j} \phi) = \sewself{j}{k} (\Sigma \diffActing{k} \invinv(\phi))$.
		Concretely, the involution $\invinvx{\RMFE}$ maps representative $\alpha \otimes \gamma^\vee \in \RMFE_{\Sigma_1, j}(\phi)$ to a representative $\beta \otimes \delta^\vee \in \RMFE_{\Sigma_2, k}(\invinv(\phi))$ satisfying
		\begin{equation}
			\Big(\gamma \diffActingE{j} (\alpha \otimes \gamma^\vee)\Big) \sewx{\RMFE}{j}{k} \delta
			= \alpha \sewx{\RMFE}{j}{k} \delta
			= \gamma \sewx{\RMFE}{j}{k} \beta
			= \gamma \sewx{\RMFE}{j}{k} \Big(\delta \diffActingE{k} (\beta \otimes \delta^\vee)\Big),
		\label{eq:sewing_E_between}
		\end{equation}
		and equivalently for representatives $\alpha \otimes \gamma^\vee \in \RMFE_{\Sigma, j}(\phi)$ and $\beta \otimes \gamma^\vee \in \RMFE_{\Sigma, k}(\invinv(\phi))$,
		\begin{equation}
			\sewxself{\RMFE}{j}{k} \Big(\gamma \diffActingE{j} (\alpha \otimes \gamma^\vee)\Big) 
			= \sewxself{\RMFE}{j}{k} \alpha
			= \sewxself{\RMFE}{j}{k} \beta
			= \sewxself{\RMFE}{j}{k} \Big(\gamma \diffActingE{k} (\beta \otimes \gamma^\vee)\Big).
		\label{eq:sewing_E_between_self}
		\end{equation}
		Fixing $\alpha$ and $\gamma$ for the first identity, we may take any $\delta$ and treat Equation~\eqref{eq:sewing_E_between} as a linear equation, solving it for the unique~$\beta$.
		The representative $\beta \otimes \delta^\vee \in \RMFE_{\Sigma_2, k}(\invinv(\phi))$ is then independent of the choice of~$\delta$.
		Moreover, changing the surfaces $\Sigma_1$ and $\Sigma_2$ or the boundary components does not affect the involution, since the change along an edge in the sense of~\eqref{eq:def_natisophi_map} may be resolved using Item~\ref{prop:actionsewingE_assoc}.
		The second identity similarly yields a unique $\beta$, and by Item~\ref{prop:actionsewingE_selfsew}, it is independent under a change of surface. 
		The compatibility of sewing and self-sewing in Definition~\ref{def:modular_functor} implies that the sewing and self-sewing approaches equivalently define $\invinvx{\RMFE}$.
		\qedhere
	\end{enumerate}
\end{proof}

\subsection{Various cocycle identities}
\label{section:cocycles}

As before, let $\RMFE$ be a \gls{fsmooth} real one-dimensional modular functor.
Let $Z$ collectively denote a trivialization of $\RMFE$, which comprises \gls{fsmooth} trivializations of the principal $\Rp$-bundles $Z_{\genus, \boundaries} \colon \moduligb \to \RMFE(\moduligb)$ for $\genus, \boundaries \geq 0$.
For each choice $j,k$ of boundary labels, 
the trivialization defines the \emph{sewing cocycle} $\Omega^Z_{j, k} \colon \moduli{\genus_1}{\boundaries_1} \times \moduli{\genus_2}{\boundaries_2} \to \R$ via the relation
\begin{equation}
	\label{eq:Ecocycle_pair}
	Z(\Sigma_1) \sewx{\RMFE}{j}{k} Z(\Sigma_2) = e^{\Omega^Z_{j, k}(\Sigma_1, \Sigma_2)} Z(\Sigma_1 \sew{j}{k} \Sigma_2).
\end{equation}
Similarly, we define the \emph{self-sewing cocycles} $\Omega^Z_{j, k} \colon \moduli{\genus}{\boundaries} \to \R$ associated to self-sewing at distinct boundary labels
$j \neq k$ via the relation
\begin{equation}
	\sewxself{\RMFE}{j}{k} Z(\Sigma) = e^{\Omega^Z_{j, k}(\Sigma)} Z(\sewself{j}{k} \Sigma).
\end{equation}
We call these functions the \emph{cocycles} of $\RMFE$ with respect to the trivialization $Z$. They satisfy \emph{cocycle identities} arising from the associativity of the sewing isomorphisms of $\RMFE$:
\begin{align}
	\Omega^Z_{j, k}(\Sigma_1, \Sigma_2) + \Omega^Z_{l, m}(\Sigma_1 \sew{j}{k} \Sigma_2, \Sigma_3)
	&= \Omega^Z_{j, k}(\Sigma_1, \Sigma_2 \sew{l}{m} \Sigma_3) + \Omega^Z_{l, m}(\Sigma_2, \Sigma_3),
	\label{eq:basic_cocycle_idenity}
	\\
	\Omega^Z_{j, k}(\Sigma_1) + \Omega^Z_{l, m}(\sewself{j}{k} \Sigma_1, \Sigma_2)
	&= \Omega^Z_{j, k}(\Sigma_1 \sew{l}{m} \Sigma_2) + \Omega^Z_{l, m}(\Sigma_1, \Sigma_2),
	\label{eq:basic_self_cocycle_idenity}
	\\
	\Omega^Z_{j, k}(\Sigma_1 \sew{l}{m} \Sigma_2) + \Omega^Z_{l,m}(\Sigma_1, \Sigma_2)
	&= \Omega^Z_{l, m}(\Sigma_1 \sew{j}{k} \Sigma_2) + \Omega^Z_{j,k}(\Sigma_1, \Sigma_2)
	,
	\label{eq:cycle_sewing}
	\\
	\Omega^Z_{j, k}(\Sigma) + \Omega^Z_{l, m}(\sewself{j}{k} \Sigma)
	&= \Omega^Z_{j, k}(\sewself{l}{m} \Sigma) + \Omega^Z_{l, m}(\Sigma).
	\label{eq:cycle_selfsewing}
\end{align}
In Equation~\eqref{eq:local_triv_ext}, we already defined local trivializations $Z_{S, j}(\phi)$ of the central extension $\RMFE(\DiffC)$ on the subset of $\phi \in \DefC$ such that $S \diffActing{j} \phi$ exists.
The composition law of the central extension defines a \emph{deformation-deformation cocycle}
\begin{equation}
	\label{eq:def_cocycle_def_def}
	Z_{\Sr, j}(\phi) \diffActingE{} Z_{\Sr, j}(\psi) = e^{\Omega^Z_{\Sr, j}(\phi, \psi)} Z_{\Sr, j}(\phi \circ \psi), \qquad \phi, \psi \in \defmultset,
\end{equation}
which is defined where the trivialization is defined at all three complex deformations $\phi$, $\psi$, and $\phi \circ \psi$.
For surfaces related by $\Srb= (\Sra \sew{l}{m} \Srx)_{j \leftrightarrow k}$ as in the definition of an edge in the limit~\eqref{eq:E_phi_lim}, the trivializations are related by
\begin{equation}
	\label{eq:triv_def_change_sigma}
	Z_{\Sra, j}(\phi)
	=
	e^{
		\Omega^Z_{l, m}(\Sra \diffActing{j} \phi,\Srx)
		- \Omega^Z_{l, m}(\Sra, \Srx)
	}
	Z_{\Srb, j}(\phi). 
\end{equation}
By Equation~\eqref{eq:mult_canon}, the above cocycles are related by the identity
\begin{equation}
\label{eq:cocycle_AB_relation}
\begin{aligned}
	&\Omega^Z_{\Sra, j}(\phi, \psi) - \Omega^Z_{\Srb, k}(\phi, \psi)
	\\ = \; &
	\Omega^Z_{l, m}(\Srx, \Sra \diffActing{j} \phi)
	+ \Omega^Z_{l, m}(\Srx, \Sra \diffActing{j} \psi)
	- \Omega^Z_{l, m}(\Srx, \Sra \diffActing{j} (\phi \circ \psi))
	- \Omega^Z_{l, m}(\Srx, \Sra).
\end{aligned}
\end{equation}
Observe that in the cohomology group defined in~\cite[Section~3]{Maibach-Peltola:Complex_deformations_of_the_circle}, the right-hand side of~\eqref{eq:def_def_def} is the differential of the function $\Omega^Z_{l, m}(\Sra \diffActing{j} \blank, \Srx) - \Omega^Z_{l, m}(\Sra, \Srx)$.
Associativity of the composition law of the central extension $\RMFE(\DiffC)$ (see Lemma~\ref{lemma:central_extension}) implies the \emph{deformation-deformation-deformation cocycle identity}
\begin{equation}
	\label{eq:def_def_def}
	\Omega^Z_{\Sr, j}(\phi_1 \circ \phi_2, \phi_3) +
	\Omega^Z_{\Sr, j}(\phi_1, \phi_2) = 
	\Omega^Z_{\Sr, j}(\phi_1, \phi_2 \circ \phi_3) +
	\Omega^Z_{\Sr, j}(\phi_2, \phi_3).
\end{equation}
The action of $\RMFE(\DiffC)$ on $\RMFE(\moduligb)$ (see Proposition~\ref{prop:actionsewingE}) yields the \emph{mixed cocycles}
\begin{equation}
	\label{eq:def_cocycle_mixed}
	Z(\Sigma) \diffActingE{n} Z_{\Sr, j}(\phi) = e^{\Omega^Z_{{\Sr, j, n}}(\Sigma, \phi)} Z(\Sigma \diffActing{n} \phi),
\end{equation}
where $\Sr$ is the surface relative to which the trivialization of $\RMFE(\DefC)$ is defined, and~$\Sigma$ is another surface whose $n$th boundary component the complex deformation ${\phi \in \DefC}$ acts on.
For $\Srb = (\Sra \sew{l}{m} \Srx)_{j \leftrightarrow k}$ as above, the cocycles are related by
\begin{equation}
	\label{eq:mixed_cocycle_change}
	\Omega^Z_{\Sra, j, n}(\Sigma, \phi) - \Omega^Z_{\Srb, k, n}(\Sigma, \phi)
	=
	\Omega^Z_{l, m}(\Sra \diffActing{j} \phi, \Srx) - \Omega^Z_{l, m}(\Sra, \Srx),
\end{equation}
which is the function whose differential appears on the right-hand side of~\eqref{eq:cocycle_AB_relation}, and which is independent of $\Sigma$.
If we take $\Sigma = \Sr$ and $j = n$ in Equation~\eqref{eq:def_cocycle_mixed}, then the definition of the action in Proposition~\ref{prop:actionsewingE} implies that
\begin{equation}
	\label{eq:cocycle_surfaces_equal_vanishing}
	\Omega^Z_{{\Sigma, j, j}}(\Sigma, \phi) = 0, \qquad \phi \in \DefC.
\end{equation}
By Items~\ref{prop:actionsewingE_assoc}~\&~\ref{prop:actionsewingE:invinv} of Proposition~\ref{prop:actionsewingE}, the mixed cocycles satisfy \emph{mixed cocycle identities},
\begin{align}
	\Omega^Z_{j,k}(\Sigma_1, \Sigma_2 \diffActing{l} \phi)
	+ \Omega^Z_{\Sr, m, l}(\Sigma_2, \phi)
	&= \Omega^Z_{\Sr, m, l}(\Sigma_1 \sew{j}{k} \Sigma_2, \phi)
	+ \Omega^Z_{j,k}(\Sigma_1, \Sigma_2),
	\label{eq:cocycle_id_mixed_1}
	\\
	\Omega^Z_{j,k}(\Sigma_1 \diffActing{l} \phi, \Sigma_2)
	+ \Omega^Z_{\Sr, m, l}(\Sigma_1, \phi)
	&= \Omega^Z_{\Sr, m, l}(\Sigma_1 \sew{j}{k} \Sigma_2, \phi)
	+ \Omega^Z_{j,k}(\Sigma_1, \Sigma_2),
	\label{eq:cocycle_id_mixed_2}
	\\
	\Omega^Z_{j,k}(\Sigma_1 \diffActing{j} \phi, \Sigma_2)
	+ \Omega^Z_{\Sr, m, j}(\Sigma_1, \phi)
	&= \Omega^Z_{j,k}(\Sigma_1, \Sigma_2 \diffActing{k} \invinv(\phi)) + \Omega^Z_{\Sr, m, k}(\Sigma_2, \invinv(\phi)),
	\label{eq:cocycle_id_mixed_3}
\end{align}
depending on where $\phi$ acts.
Moreover, by Items~\ref{prop:actionsewingE_samesew}~\&~\ref{prop:actionsewingE_diffsew} of Proposition~\ref{prop:actionsewingE}, we have
\begin{align}
	\Omega^Z_{\Sr, l, k}(\Sigma \diffActing{j} \phi, \psi) + \Omega^Z_{\Sr, l, j}(\Sigma, \phi)
	&= \Omega^Z_{\Sr, l, j}(\Sigma \diffActing{k} \psi, \phi) + \Omega^Z_{\Sr, l, k}(\Sigma, \psi),
	\label{eq:cocycle_mixed_compat_1}
	\\
	\Omega^Z_{\Sr, l, j}(\Sigma \diffActing{j} \phi, \psi) + \Omega^Z_{\Sr, l, j}(\Sigma, \phi)
	&= \Omega^Z_{\Sr, l, j}(\Sigma, \phi \circ \psi) + \Omega^Z_{\Sr, l}(\phi, \psi),
	\label{eq:cocycle_mixed_compat_2}
\end{align}
and more cocycle identities follow from self-sewing in Items~\ref{prop:actionsewingE_selfsew}~\&~\ref{prop:actionsewingE:invinv} of Proposition~\ref{prop:actionsewingE}:
\begin{align}
	\Omega^Z_{j, k}(\Sigma \diffActing{l} \phi) + \Omega^Z_{\Sr, m, l}(\Sigma, \phi)
	&=
	\Omega^Z_{j, k}(\Sigma) + \Omega^Z_{\Sr, m, l}(\sewself{j}{k} \Sigma, \phi), \\
	\Omega^Z_{j, k}(\Sigma \diffActing{j} \phi) + \Omega^Z_{\Sr, m, j}(\Sigma, \phi)
	&= 
		\Omega^Z_{j, k}(\Sigma \diffActing{k} \invinv(\phi)) + \Omega^Z_{\Sr, m, k}(\Sigma, \invinv(\phi)).
\end{align}

Lastly, let us return to the involution $\invinvx{\RMFE}$ defined by Item~\ref{prop:actionsewingE:invinv} of Proposition~\ref{prop:actionsewingE}.
The cocycle identities yield the following corollary.
\begin{corollary}
	For any $\Sr \in \moduligb$, and $1 \leq j \leq \boundaries$, and $\phi \in \DefC$ such that $\Sr \diffActing{j} \phi$ and $\Sr \diffActing{j} \invinv(\phi)$ exist, we have $\invinvx{\RMFE}(Z_{\Sr, j}(\phi)) = Z_{\Sr, j}(\invinv(\phi))$.
	\label{cor:involution_trivial}
\end{corollary}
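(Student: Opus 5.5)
The plan is a direct computation with the defining relation~\eqref{eq:sewing_E_between} of $\invinvx{\RMFE}$, made independent of auxiliary choices by the change-of-surface formulas. Since $\invinvx{\RMFE}$ is a bundle morphism covering $\invinv$, and $Z_{\Sr, j}$ is a local trivialization of $\RMFE(\DefC)$, there is a unique real number $a = a(\phi)$ with
\begin{equation*}
	\invinvx{\RMFE}\big(Z_{\Sr, j}(\phi)\big) = e^{a} \, Z_{\Sr, j}(\invinv(\phi)).
\end{equation*}
The goal is to show $a = 0$. By Proposition~\ref{prop:actionsewingE}, Item~\ref{prop:actionsewingE:invinv}, $\invinvx{\RMFE}$ does not depend on the auxiliary surface $\Sigma_2$ and label $k$ used to define it. I therefore fix a convenient pair $(\Sigma_2, k)$ such that $\Sr \sew{j}{k} \Sigma_2$ makes sense and $\Sigma_2 \diffActing{k} \invinv(\phi)$ exists. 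A small standard annulus $\A_\tau$ glued at the $j$th boundary component of $\Sr$ is a natural choice, since it is a common neighbor in the diagram~\eqref{eq:E_phi_lim}.

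Next I evaluate~\eqref{eq:sewing_E_between} with $\gamma = Z(\Sr)$, $\alpha = Z(\Sr \diffActing{j} \phi)$ and $\delta = Z(\Sigma_2)$. The element $\beta$ then equals $Z(\Sigma_2) \diffActingE{k} \invinvx{\RMFE}(Z_{\Sr, j}(\phi))$, transported to the vertex $(\Sigma_2, k)$ via the maps~\eqref{eq:def_natisophi_map}. Expanding both sides with the sewing cocycles~\eqref{eq:Ecocycle_pair} and the mixed cocycle~\eqref{eq:def_cocycle_mixed} gives
\begin{equation*}
	a = \Omega^Z_{j,k}(\Sr \diffActing{j} \phi, \Sigma_2) - \Omega^Z_{j,k}(\Sr, \Sigma_2 \diffActing{k} \invinv(\phi)) - \Omega^Z_{\Sr, j, k}(\Sigma_2, \invinv(\phi)).
\end{equation*}
The remaining task is to show that this combination vanishes. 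Here I will not quote~\eqref{eq:cocycle_id_mixed_3}, since in its stated form it tacitly uses the corollary for the mixed cocycle of $\invinv(\phi)$.

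To do this, I move the reference vertex of the mixed cocycle from $(\Sr, j)$ to $(\Sigma_2, k)$ using the change-of-surface identity~\eqref{eq:mixed_cocycle_change}. The path passes through the common neighbor $\Sr \sew{j}{k} \Sigma_2$, with appropriate relabelling. After the change, the mixed cocycle $\Omega^Z_{\Sigma_2, k, k}(\Sigma_2, \invinv(\phi))$ vanishes by~\eqref{eq:cocycle_surfaces_equal_vanishing}. The correction terms introduced by~\eqref{eq:mixed_cocycle_change} are differences of sewing cocycles. I then rewrite them with the associativity cocycle identities~\eqref{eq:basic_cocycle_idenity} and the geometric relation~\eqref{eq:act_between}, namely $(\Sr \diffActing{j} \phi) \sew{j}{k} \Sigma_2 = \Sr \sew{j}{k} (\Sigma_2 \diffActing{k} \invinv(\phi))$. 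The aim is that they cancel exactly the two sewing-cocycle terms in $a$.

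I expect the main obstacle to be this bookkeeping. The two surfaces from~\eqref{eq:act_between} coincide in moduli space, but they appear as different arguments of the cocycles. So the cancellation must come from the symmetry and associativity axioms of Definition~\ref{def:modular_functor}, and not from any equality of arguments. If routing through $\A_\tau$ becomes unwieldy, my fallback is to take $\Sigma_2 = \Sr$ and $k = j$. Then~\eqref{eq:cocycle_surfaces_equal_vanishing} applies immediately, and the claim reduces, via the symmetry $\alpha \sewx{\RMFE}{j}{j} \beta = \beta \sewx{\RMFE}{j}{j} \alpha$, to the identity $\Omega^Z_{j,j}(\Sr \diffActing{j} \phi, \Sr) = \Omega^Z_{j,j}(\Sr \diffActing{j} \invinv(\phi), \Sr)$. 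I would then try to verify this identity using the fact that $\invinvx{\RMFE}$ is an involution.
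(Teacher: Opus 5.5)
Your reduction is correct. Evaluating \eqref{eq:sewing_E_between} at $\gamma = Z(\Sr)$, $\alpha = Z(\Sr \diffActing{j}\phi)$, $\delta = Z(\Sigma_2)$ gives
\[
a(\phi) = \Omega^Z_{j,k}(\Sr \diffActing{j} \phi, \Sigma_2) - \Omega^Z_{j,k}(\Sr, \Sigma_2 \diffActing{k} \invinv(\phi)) - \Omega^Z_{\Sr, j, k}(\Sigma_2, \invinv(\phi)).
\]
You are also right that \eqref{eq:cocycle_id_mixed_3} may not be quoted. Deriving it from Item~\ref{prop:actionsewingE:invinv} of Proposition~\ref{prop:actionsewingE} already replaces $\invinvx{\RMFE}(Z_{\Sr,m}(\phi))$ by $Z_{\Sr,m}(\invinv(\phi))$. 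The paper's own proof consists of exactly that citation, so it does not supply the missing argument either.

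The genuine gap is your next step. No bookkeeping with \eqref{eq:mixed_cocycle_change}, associativity and symmetry can give $a=0$, because $a$ is not zero for an arbitrary trivialization. Those identities hold for every $Z$. By contrast, $\invinvx{\RMFE}$ is built from the sewing isomorphisms alone, so replacing $Z$ by $e^{B}Z$ rescales $Z_{\Sr,j}(\phi)$ by $e^{B(\Sr\diffActing{j}\phi)-B(\Sr)}$. This turns $a(\phi)$ into $a(\phi)+B(\Sr\diffActing{j}\phi)-B(\Sr\diffActing{j}\invinv(\phi))$.

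Concretely, take the trivial modular functor with $Z=e^{B}(1,\blank)$ and $\Sr=\disk$. Then $a(\phi)=B(\disk\diffActing{1}\phi)-B(\disk\diffActing{1}\invinv(\phi))$. This is nonzero for $\phi\in\Diffpan$ with $\invinv(\phi)\circ\phi^{-1}\notin\psl$ and suitable smooth $B$ on $\disks$, since then the two disks are distinct points of $\disks$. The same example refutes your fallback identity $\Omega^Z_{1,1}(\disk\diffActing{1}\phi,\disk)=\Omega^Z_{1,1}(\disk\diffActing{1}\invinv(\phi),\disk)$. The involution property only yields $a(\invinv(\phi))=-a(\phi)$. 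So the statement, read for arbitrary $Z$ and $\phi\in\DefC$, cannot be proved.

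What does hold is the case where $\RMFE$ is local, $Z$ is reparametrization invariant, and $\phi\in\Diffpan$. This is what is used later: Item~\ref{prop:local_equiv_5} of Proposition~\ref{prop:local_equiv}, Equation~\eqref{eq:lpot_inversion}, and M\"obius $\phi$ in Proposition~\ref{prop:induction_trivialization}. This case needs an extra idea, namely perfectness.
\begin{itemize}
\item Item~\ref{prop:local_equiv_3} of Proposition~\ref{prop:local_equiv} makes $s(\phi)=Z_{\Sr,j}(\phi)$ a homomorphic section over $\Diffpan$. Its derivation from Item~\ref{prop:local_equiv_1} does not use \eqref{eq:cocycle_id_mixed_3}.
\item Apply the defining relation \eqref{eq:actionsewingE:invinv} twice, together with Item~\ref{prop:actionsewingE_samesew} of Proposition~\ref{prop:actionsewingE}. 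This shows $\invinvx{\RMFE}(A\diffActingE{}B)=\invinvx{\RMFE}(B)\diffActingE{}\invinvx{\RMFE}(A)$, matching $\invinv(\phi\circ\psi)=\invinv(\psi)\circ\invinv(\phi)$.
\item Hence $a(\phi\circ\psi)=a(\phi)+a(\psi)$. So $a\colon\Diffpan\to\R$ is a homomorphism from a perfect group and vanishes, as in the step \ref{prop:local_equiv_4}~$\Rightarrow$~\ref{prop:local_equiv_1} of Proposition~\ref{prop:local_equiv}.
\end{itemize}
For non-diffeomorphisms, such as the univalent deformations in Section~\ref{section:diskdisk_schwarzian}, \eqref{eq:cocycle_id_mixed_3} must be used with the correction term $a(\phi)$, relative to the reference $(\Sr,m)$, kept explicitly.
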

\begin{proof}
	 Equation~\eqref{eq:sewing_E_between} defining $\invinvx{\RMFE}$ reads
	\begin{equation}
		\Big(Z(\Sigma_1) \diffActingE{j} Z_{\Sr, l}(\phi)\Big) \sewx{\RMFE}{j}{k} Z(\Sigma_2)
		=
		Z(\Sigma_1) \sewx{\RMFE}{j}{k} \Big(Z(\Sigma_2) \diffActingE{k} \invinvx{\RMFE}\big(Z_{\Sr, l}(\phi)\big)\Big).
		\label{eq:sewing_E_between_Z}
	\end{equation}
	Consider the cocycle 
	\begin{equation}
		\invinvx{\RMFE}(Z_{\Sr, m}(\phi)) = e^{\Omega_{\Sr, m}(\phi)} Z_{\Sr, m}(\invinv(\phi)), \qquad \phi \in \definvset.
	\end{equation}
	It depends only on a single invertible complex deformation such that $\Sr \diffActing{j} \phi$ exists, and has the symmetry $\Omega_{\Sr, m}(\invinv(\phi)) = - \Omega_{\Sr, m}(\phi)$.
	In terms of cocycles, Equation~\eqref{eq:sewing_E_between_Z} reads
\begin{equation}
	\Omega^Z_{j, k}(\Sigma_1 \diffActing{j} \phi, \Sigma_2) + \Omega^Z_{\Sr, m, j}(\Sigma_1, \phi)
	= 
	\Omega^Z_{j, k}(\Sigma_1, \Sigma_2 \diffActing{k} \invinv(\phi)) + \Omega_{\Sr, m}^Z(\Sigma_2, \invinv(\phi)) - \Omega^Z_{\Sr, m}(\phi).
\end{equation}
	Hence, the surface-deformation-surface cocycle~\eqref{eq:cocycle_id_mixed_3} implies that $\Omega^Z_{\Sr, m}(\phi) = 0$.
\end{proof}

\subsection{Equivalent descriptions of locality}
\label{section:local_equiv}

Generally, the sewing isomorphisms of a real one-dimensional modular functor can depend on the boundary parametrizations of the respective surfaces in any way --- as long as associativity holds.
Following the example of the real determinant line bundle in Section~\ref{section:cft} (to which we also return in Section~\ref{section:det_line_bundle}), it makes sense to assume that the sewing isomorphisms only depend on the boundary parametrizations that are involved in the sewing.
In other words, if a boundary parametrization that is external to the sewing operation $\Sigma_1 \sew{j}{k} \Sigma_2$ is reparametrized, say, by a diffeomorphism in $\Diffpan$ acting on $\Sigma_1$ at $l \neq j$, then the sewing isomorphism should be somehow invariant under this reparametrization.
We call such real one-dimensional modular functors ``local,'' for in the case of the real determinant line bundle, this follows from the locality of the conformal anomaly formula~\eqref{eq:conformal_anomaly}.

In Proposition~\ref{prop:local_equiv}, we characterize this property in several equivalent ways, and it serves as the definition of locality.
Equation~\eqref{eq:local_surface_surface} in Proposition~\ref{prop:local_equiv} is perhaps the most instructive realization of locality, as it emphasizes the aforementioned reparametrization invariance in terms of cocycles.
Such cocycles are always relative to a trivialization $Z$, which we call ``reparametrization invariant'' if they realize locality.

\begin{proposition}
	Given a trivialization $Z$ of a \gls{fsmooth} real one-dimensional modular functor $E$, the following are equivalent:
	\begin{enumerate}
		\item
			For any surfaces $\Sigma_1$ and $\Sigma_2$ with boundary components $j$ and $k$, we have
			\label{prop:local_equiv_1}
			\begin{equation}
				\label{eq:local_surface_surface}
				\Omega^Z_{j, k}(\Sigma_1 \diffActing{l} \phi, \Sigma_2) = \Omega^Z_{j, k}(\Sigma_1, \Sigma_2), \qquad 
				\phi \in \Diffpan, \quad l \neq j.
			\end{equation}
		\item
			For any surfaces $\Sr$ and $\Sigma$ with boundary components $j$ and $k$, we have
			\label{prop:local_equiv_2}
			\begin{equation}
				\label{eq:local_surface_def}
				\Omega^Z_{\Sr, j, k}(\Sigma, \phi) = 0, \qquad \phi \in \Diffpan.
			\end{equation}
		\item
			For any surface $\Sr$ with boundary component $j$, we have
			\label{prop:local_equiv_3}
			\begin{equation}
				\label{eq:local_def_def}
				\Omega^Z_{\Sr, j}(\phi, \psi) = 0, \qquad \phi, \psi \in \Diffpan,
			\end{equation}
		\item
			For any surfaces $\Sra$ and $\Srb$ with boundary components $j$ and $k$, we have
			\label{prop:local_equiv_4}
			\begin{equation}
				\Omega^Z_{\Sra, j}(\phi, \psi) = \Omega^Z_{\Srb, k}(\phi, \psi), \qquad \phi, \psi \in \Diffpan.
			\end{equation}
		\item
			For any surfaces $\Sigma_1$ and $\Sigma_2$ with boundary components $j$ and $k$, we have
			\label{prop:local_equiv_5}
			\begin{equation}
				\label{eq:local_inversion}
				\Omega^Z_{j, k}(\Sigma_1 \diffActing{j} \phi, \Sigma_2) = \Omega^Z_{j, k}(\Sigma_1, \Sigma_2 \diffActing{k} \invinv(\phi)), \qquad \phi \in \Diffpan.
			\end{equation}
	\end{enumerate}
	If the conditions hold for a trivialization $Z$, we call it \emph{reparametrization invariant}, and if any such $Z$ exists, we call $E$ \emph{local}.
	\label{prop:local_equiv}
\end{proposition}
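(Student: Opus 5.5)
Throughout, Items~\ref{prop:local_equiv_1}--\ref{prop:local_equiv_5} are shown equivalent by expressing each through the mixed cocycle $\Omega^Z_{\Sr, j, k}(\Sigma, \phi)$ restricted to $\phi \in \Diffpan$. Two facts make this work. First, every $\phi \in \Diffpan$ acts on every surface at every boundary component, since it only reparametrizes and does not change the underlying Riemann surface. Second, $\Diffpan$ is a group that is closed under $\invinv$. Hence all the cocycles involved are defined on all of $\Diffpan$, and we can restrict the cocycle identities of Section~\ref{section:cocycles} to diffeomorphisms.

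\textbf{Item~\ref{prop:local_equiv_2} from Items~\ref{prop:local_equiv_1} and~\ref{prop:local_equiv_3}.}
Assume Item~\ref{prop:local_equiv_1}. Apply the mixed identity~\eqref{eq:cocycle_id_mixed_2} with $l \neq j$, acting on $\Sigma_1$. Item~\ref{prop:local_equiv_1} cancels the sewing cocycles, so
\[
\Omega^Z_{\Sr,m,l}(\Sigma_1 \sew{j}{k} \Sigma_2, \phi) = \Omega^Z_{\Sr,m,l}(\Sigma_1, \phi).
\]
Thus the mixed cocycle is invariant under sewing onto the acted surface. Any two surfaces $\Sigma$ and $\Sr$ have a common neighbor, for instance a thin annulus $\A_\tau$ glued at the relevant boundary component. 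Connecting them through such neighbors, and relabeling boundaries via the symmetry axioms, reduces $\Omega^Z_{\Sr,j,k}(\Sigma,\phi)$ to $\Omega^Z_{\Sr,j,j}(\Sr,\phi)$. This vanishes by~\eqref{eq:cocycle_surfaces_equal_vanishing}. A subtlety is that this step needs surfaces with at least two boundary components; we handle disks by first sewing an annulus, using the associativity identity. Next, Item~\ref{prop:local_equiv_2} implies Item~\ref{prop:local_equiv_3} by~\eqref{eq:cocycle_mixed_compat_2} with $\Sigma = \Sr$: all three mixed terms vanish, so $\Omega^Z_{\Sr,l}(\phi,\psi) = 0$. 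Conversely, Item~\ref{prop:local_equiv_3} gives Item~\ref{prop:local_equiv_2}, and this is the step I expect to be the main obstacle. From~\eqref{eq:cocycle_mixed_compat_2}, the function $\phi \mapsto \Omega^Z_{\Sr,l,j}(\Sigma,\phi)$ is a homomorphism-type object along the $\Diffpan$-orbit of $\Sigma$. Together with~\eqref{eq:mixed_cocycle_change} and~\eqref{eq:cocycle_surfaces_equal_vanishing}, it should be a coboundary $\phi \mapsto \Omega^Z_{l,m}(\Sra\diffActing{j}\phi,\Srx) - \Omega^Z_{l,m}(\Sra,\Srx)$, built from sewing cocycles. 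To kill it, I would pass to the infinitesimal level: differentiate at $\id$ to obtain a Fr\"olicher-smooth linear functional on $\VectR$ that vanishes on brackets. Since $[\VectR,\VectR] = \VectR$ (the algebra is perfect, as seen from the Witt relations), the functional vanishes. Integrating along flows, using that $\Diffpan$ is connected and that Fr\"olicher-smooth curves are flows~\eqref{eq:flow_right}, gives vanishing on all of $\Diffpan$.

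\textbf{Items~\ref{prop:local_equiv_4} and~\ref{prop:local_equiv_5}.}
Item~\ref{prop:local_equiv_3} implies Item~\ref{prop:local_equiv_4} trivially. For the converse, use~\eqref{eq:cocycle_AB_relation} with $\Srb = \Sra\sew{l}{m}\Srx$: Item~\ref{prop:local_equiv_4} says the function $F(\phi) = \Omega^Z_{l,m}(\Sra\diffActing{j}\phi,\Srx) - \Omega^Z_{l,m}(\Sra,\Srx)$ is a homomorphism $\Diffpan \to \R$. The same perfectness argument forces $F = 0$. This vanishing gives Item~\ref{prop:local_equiv_1} along edges; choosing $\Sra = \Sr$ then yields the mixed identity, hence Item~\ref{prop:local_equiv_3}. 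Finally, Item~\ref{prop:local_equiv_5} is equivalent to Item~\ref{prop:local_equiv_2} by~\eqref{eq:cocycle_id_mixed_3}, because $\invinv$ preserves $\Diffpan$. In one direction the mixed terms vanish, giving~\eqref{eq:local_inversion}. In the other, the identity yields $\Omega^Z_{\Sr,m,j}(\Sigma_1,\phi) = \Omega^Z_{\Sr,m,k}(\Sigma_2,\invinv(\phi))$ for arbitrary independent $\Sigma_1$ and $\Sigma_2$. Hence both sides are independent of the surface, so taking $\Sigma_1 = \Sr$ gives $0$ by~\eqref{eq:cocycle_surfaces_equal_vanishing}.
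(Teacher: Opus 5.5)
Your proof is correct. Its skeleton matches the paper: the cycle $1\Rightarrow 2\Rightarrow 3\Rightarrow 4\Rightarrow 1$, together with $2\Leftrightarrow 5$ via \eqref{eq:cocycle_id_mixed_3}. Two steps are carried out differently.

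For $1\Rightarrow 2$, the paper keeps the acted surface fixed and moves the reference surface. By \eqref{eq:mixed_cocycle_change} and Item~1, the mixed cocycle does not depend on $(\Sr,j)$, so one may take $\Sr=\Sigma$ and apply \eqref{eq:cocycle_surfaces_equal_vanishing}. You instead keep $\Sr$ fixed and move $\Sigma$ using \eqref{eq:cocycle_id_mixed_2}. This mirror-image argument works equally well. Your worry about disks is unnecessary: $D=\A_\tau\sew{2}{1}D'$, and the identity is applied with $\Sigma_1=\A_\tau$.

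For $4\Rightarrow 1$, both arguments must kill the homomorphism $F\colon\Diffpan\to\R$ produced by \eqref{eq:cocycle_AB_relation}. The paper cites perfectness of the group $\Diffpan$. You differentiate instead:
\begin{enumerate}
\item $F$ is Fr\"olicher smooth, so $\mathrm{d}F_{\id}$ is a linear functional vanishing on $[\VectR,\VectR]=\VectR$.
\item A homomorphism with vanishing derivative at $\id$ is constant along smooth curves.
\item The group is smoothly path-connected (interpolate lifts linearly), so $F$ vanishes.
\end{enumerate}
This trades a nontrivial group-theoretic theorem for an elementary Lie-algebra identity. The cost is that you need smoothness of $F$, which the setting provides since the cocycles and the action are Fr\"olicher smooth.

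Two repairs are needed. First, the Witt relations only show that trigonometric-polynomial fields are sums of brackets. For all of $\VectR$, either use continuity of $\mathrm{d}F_{\id}$ together with density, or write directly
$f\,\partial_\theta=[\partial_\theta,w\,\partial_\theta]+c\,[\cos\theta\,\partial_\theta,\sin\theta\,\partial_\theta]$,
where $c$ is the mean of $f$ and $w'=f-c$.

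Second, drop your separate paragraph on $3\Rightarrow 2$. It is imprecise: \eqref{eq:cocycle_mixed_compat_2} alone does not make $\phi\mapsto\Omega^Z_{\Sr,l,j}(\Sigma,\phi)$ a homomorphism, because the acted surface changes to $\Sigma\diffActing{j}\phi$. It is also unnecessary, since $3\Rightarrow 4\Rightarrow 1\Rightarrow 2$ already closes the loop.

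Your argument for $5\Rightarrow 2$ is a useful elaboration of what the paper calls direct. The left side is independent of $\Sigma_2$, the right side is independent of $\Sigma_1$, and setting $\Sigma_1=\Sr$ then gives zero by \eqref{eq:cocycle_surfaces_equal_vanishing}.
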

\begin{proof} 
We prove the implications $\ref{prop:local_equiv_1} \Longrightarrow \ref{prop:local_equiv_2} \Longrightarrow \ref{prop:local_equiv_3} \Longrightarrow \ref{prop:local_equiv_4} \Longrightarrow \ref{prop:local_equiv_1}$, and $\ref{prop:local_equiv_2} \Longleftrightarrow \ref{prop:local_equiv_5}$. 
\begin{itemize}[leftmargin=4em]
\item[\ref{prop:local_equiv_1} $\Longrightarrow$ \ref{prop:local_equiv_2}:]
By Equations~(\ref{eq:local_surface_surface},~\ref{eq:mixed_cocycle_change}), the cocycle in Item~\ref{prop:local_equiv_2} does not depend on $\Sr$ and $j$. Thus, by changing to $\Sr= \Sigma$ and $j = k$, Equation~\eqref{eq:cocycle_surfaces_equal_vanishing} implies Item~\ref{prop:local_equiv_2}.

\item[\ref{prop:local_equiv_2} $\Longrightarrow$ \ref{prop:local_equiv_3}:]
This implication follows directly from Equation~\eqref{eq:cocycle_mixed_compat_2}.

\item[\ref{prop:local_equiv_3} $\Longrightarrow$ \ref{prop:local_equiv_4}:]
If all the cocycles on $\Diffpan$ vanish, then they also agree.

\item[\ref{prop:local_equiv_4} $\Longrightarrow$ \ref{prop:local_equiv_1}:]
Assuming Item~\ref{prop:local_equiv_4}, the coboundaries in~\eqref{eq:cocycle_AB_relation} vanish, so they define homomorphisms $\phi \mapsto \Omega^Z_{l, m}(\Srx, \Sra \diffActing{j} \phi) - \Omega^Z_{l, m}(\Srx, \Sra)$ from $\Diffpan$ to the abelian additive group $\R$.
As $\Diffpan$ is a perfect group (see, e.g.,~\cite[Theorem~4.4.2]{Guieu-Roger:Virasoro_book}), any such homomorphism vanishes. It follows that $\Omega^Z_{l,m}(\Sra \diffActing{j} \phi, \Srx) - \Omega^Z_{l,m}(\Sra, \Srx) = 0$, which is precisely Equation~\eqref{eq:local_surface_surface} in Item~\ref{prop:local_equiv_1} for the surfaces $\Sigma_1 = \Sra$ and $\Sigma_2 = \Srx$.

\item[\ref{prop:local_equiv_2} $\Longleftrightarrow$ \ref{prop:local_equiv_5}:]
This follows directly from Equation~\eqref{eq:cocycle_id_mixed_3}.
\end{itemize}
Finally, observe that none of the implications above require changing the trivialization.
\end{proof}

In fact, reparametrization invariant trivializations only depend on the internal moduli of the surfaces, i.e., the moduli of the surface without boundary parametrizations.
Let us conclude with a characterization of locality emphasizing this perspective.

\begin{proposition}
	Any two globally defined reparametrization invariant trivializations $Z$ and $W$ are related by \gls{fsmooth} functions $B \in \functions{\moduligb}$ such that $Z(\Sigma) = e^{B(\Sigma)} \; W(\Sigma)$.
	Moreover, for any boundary component $j$, we have $B(\Sigma \diffActing{j} \phi) = B(\Sigma)$ for all $\phi \in \Diffpan$.
	\label{prop:reparam_indep}
\end{proposition}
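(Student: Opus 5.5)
The plan is to obtain $B$ by comparing the two sections fibrewise, and then to show that the reparametrization invariance of $B$ is forced by the vanishing of the deformation-deformation cocycles (Item~\ref{prop:local_equiv_3} of Proposition~\ref{prop:local_equiv}) together with perfectness of $\Diffpan$.

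\textbf{Step 1: construction of $B$.} Since $\RMFE(\moduligb)\to\moduligb$ is a principal $\Rp$-bundle and both $Z$ and $W$ are \gls{fsmooth} global sections, for each $\Sigma$ there is a unique $\lambda(\Sigma)>0$ with $Z(\Sigma)=\lambda(\Sigma)\,W(\Sigma)$. The map $\lambda$ is the composition of $(Z,W)$ with the \gls{fsmooth} division map $\RMFE\times_{\moduligb}\RMFE\to\Rp$ of the principal bundle, so it is \gls{fsmooth}. Setting $B=\log\lambda\in\functions{\moduligb}$ then gives the first claim.

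\textbf{Step 2: reduction to a homomorphism.} Fix $\Sigma$ and a boundary component $j$. Every $\phi\in\Diffpan$ acts at $j$, because $\DefCnbhd{\phi}=S^1$ and $\zeta_j$ extends analytically to a neighbourhood of $S^1$; hence the local trivializations $Z_{\Sigma,j}$ and $W_{\Sigma,j}$ from~\eqref{eq:local_triv_ext} are defined on all of $\Diffpan$. By definition,
\begin{equation*}
Z_{\Sigma,j}(\phi)=e^{b(\phi)}\,W_{\Sigma,j}(\phi),\qquad b(\phi)=B(\Sigma\diffActing{j}\phi)-B(\Sigma).
\end{equation*}
Inserting this into the defining relation~\eqref{eq:def_cocycle_def_def}, and using that the composition law $\diffActingE{}$ is $\Rp$-bilinear, yields
\begin{equation*}
\Omega^Z_{\Sigma,j}(\phi,\psi)=\Omega^W_{\Sigma,j}(\phi,\psi)+b(\phi)+b(\psi)-b(\phi\circ\psi),\qquad \phi,\psi\in\Diffpan.
\end{equation*}
Both $Z$ and $W$ are reparametrization invariant, so Item~\ref{prop:local_equiv_3} of Proposition~\ref{prop:local_equiv} makes both deformation-deformation cocycles vanish on $\Diffpan$. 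Consequently $b$ is a group homomorphism $\Diffpan\to(\R,+)$.

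\textbf{Step 3: conclusion.} Since $\Diffpan$ is perfect (the same fact used in the proof of Proposition~\ref{prop:local_equiv}), the homomorphism $b$ vanishes identically. This gives $B(\Sigma\diffActing{j}\phi)=B(\Sigma)$ for all $\phi\in\Diffpan$. If $\boundaries=0$ there is no boundary component and nothing to prove.

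The only point that needs care is Step~2. I must check that the compositions $\phi\circ\psi$ stay within the domain of the trivialization, which holds because $\Diffpan$ is a group acting everywhere. I must also check that the $\Rp$-factors pass through $\diffActingE{}$ multiplicatively, which follows from the construction~\eqref{eq:comp_law_E} via evaluation of torsors. I expect no genuine obstacle beyond this bookkeeping.
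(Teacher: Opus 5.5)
Your proof is correct and follows the paper's argument essentially verbatim. The paper also writes $Z_{\Sigma,j}(\phi)=e^{B(\Sigma \diffActing{j} \phi)-B(\Sigma)}\,W_{\Sigma,j}(\phi)$ and uses the vanishing of the deformation--deformation cocycles on $\Diffpan$ (Item~\ref{prop:local_equiv_3} of Proposition~\ref{prop:local_equiv}) to make $\phi\mapsto B(\Sigma\diffActing{j}\phi)-B(\Sigma)$ a homomorphism into $(\R,+)$, then concludes by perfectness of $\Diffpan$. Your explicit coboundary formula, and your remark that every $\phi\in\Diffpan$ acts at $j$ (so the local trivializations are defined on the whole group), spell out bookkeeping that the paper leaves implicit.
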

\begin{proof}
	The existence of $B \in \functions{\moduligb}$ is a direct consequence of the \gls{fsmooth}ness of $Z$ and $W$.
	By Equation~\eqref{eq:local_triv_ext}, we have $Z_{\Sigma, j}(\phi) = e^{B(\Sigma \diffActing{j} \phi) - B(\Sigma)} \; W_{\Sigma, j}(\phi)$ for any reference surface $\Sigma \in \moduligb$ and diffeomorphism $\phi \in \Diffpan$.
	Setting $B(\phi) = B(\Sigma \diffActing{j} \phi) - B(\Sigma)$, Equations~(\ref{eq:def_cocycle_def_def},~\ref{eq:local_def_def}) imply that $B(\phi) + B(\psi) = B(\phi \circ \psi)$ for $\phi, \psi \in \Diffpan$.
	Since $\Diffpan$ is a perfect group (see, e.g.,~\cite[Theorem~4.4.2]{Guieu-Roger:Virasoro_book}), this yields $B(\phi) = 0$.
\end{proof}

\begin{remark}
	One may also characterize locality without referencing a reparametrization invariant trivialization, by considering the sewing isomorphisms on smaller modular spaces $\jmoduligb{\underline{j}}$ of surfaces which carry a boundary parametrization only at some boundary components labeled by $\underline{j} = \{j_1, \dots, j_n\}$.
	In other words, $\jmoduligb{\underline{j}}$ is the quotient of $\moduligb$ by the actions of $\Diffpan$ acting at all boundary components except $\underline{j}$.

	Denoting the projection by $\projj{\underline{j}} \colon  \moduligb \to \jmoduligb{\underline{j}}$, we have $\projj{\underline{j}}(\Sigma \diffActing{k} \phi) = \projj{\underline{j}}(\Sigma)$ for all $\phi \in \Diffpan$ and $k \notin \underline{j}$.
	Now, if $Z$ is a reparametrization invariant trivialization of a local real one-dimensional modular functor $\RMFE$, then by Item~\ref{prop:local_equiv_2} in Proposition~\ref{prop:local_equiv}, we can also take the quotient of $\RMFE(\moduligb)$ by the actions of $\RMFE(\Diffpan)$ at all boundary components except $\underline{j}$, to obtain an $\Rp$-bundle $\RMFE(\jmoduligb{\underline{j}}) \to \jmoduligb{\underline{j}}$ with trivialization $Z(\projj{\underline{j}}(\Sigma)) = Z(\Sigma)$; for any reference surface $\Sigma_1$ with boundary component $l$, the actions
	\begin{equation}
		Z(\Sigma) \diffActingE{k} Z_{\Sigma_1, l}(\phi) = Z(\Sigma \diffActing{k} \phi), \qquad \phi \in \Diffpan, \quad k \notin \underline{j},
	\end{equation}
	leave the trivialization invariant. 
		Here, the quotients $\RMFE(\jmoduligb{\underline{j}})$ are independent of~$Z$ by Proposition~\ref{prop:reparam_indep}.
	Moreover, $\RMFE(\jmoduligb{\underline{j}})$ still come with well-defined sewing isomorphisms for the still parametrized boundary components, such that the following diagrams commute\footnote{
		Here, $\jmoduligb{\emptyset}$ is the classical finite-dimensional moduli space~\cite{Abikoff:The_real_analytic_theory_of_Teichmuller_space}, which is the quotient of $\moduligb$ by all reparametrizations, leaving only the enumeration of the boundary components.
	}:
	\begin{align}
	\begin{tikzcd}[column sep = 4em, cells={minimum width=14em}, ampersand replacement=\&]
		\RMFE(\moduli{\genus_1}{\boundaries_2}) \boxtimes \RMFE(\moduli{\genus_2}{\boundaries_2}) \&
		\RMFE(\moduli{\genus_1 + \genus_2}{\boundaries_1 + \boundaries_2 - 2}) \\
		\RMFE(\jmoduli{\genus_1}{\boundaries_2}{\{j\}}) \boxtimes \RMFE(\jmoduli{\genus_2}{\boundaries_2}{\{k\}}) \&
		\RMFE(\jmoduli{\genus_1 + \genus_2}{\boundaries_1 + \boundaries_2 - 2}{\emptyset})
		\arrow["{\displaystyle \sewx{\RMFE}{j}{k}}", from=1-1, to=1-2]
		\arrow[from=1-1, to=2-1]
		\arrow[from=1-2, to=2-2]
		\arrow["{\displaystyle \sewx{\RMFE}{j}{k}}", from=2-1, to=2-2, densely dashed]
	\end{tikzcd}\label{diag:localsewing} \\
	\begin{tikzcd}[column sep = 4em, cells={minimum width=14em}, ampersand replacement=\&]
		\RMFE(\moduligb)\&
		\RMFE(\moduli{\genus + 1}{\boundaries - 2}) \\
		\RMFE(\jmoduligb{\{j, k\}})\&
		\RMFE(\jmoduli{\genus + 1}{\boundaries - 2}{\emptyset}).
		\arrow["{\displaystyle \sewxself{\RMFE}{j}{k}}", from=1-1, to=1-2]
		\arrow[from=1-1, to=2-1]
		\arrow[from=1-2, to=2-2]
		\arrow["{\displaystyle \sewxself{\RMFE}{j}{k}}", from=2-1, to=2-2, densely dashed]
	\end{tikzcd}
	\label{diag:localsewing_self}
	\end{align}

	On the contrary, just the $\Rp$-bundles $\RMFE(\jmoduligb{\emptyset})$ define the $\Rp$-bundles $\RMFE(\moduligb)$ by pullback via $\projj{\emptyset}$.
	Together with just the dotted sewing isomorphisms in the diagrams~(\ref{diag:localsewing},~\ref{diag:localsewing_self}), this fully determines the real one-dimensional modular functor $\RMFE$. 
	If $Z$ is a trivialization of $\RMFE(\jmoduligb{\emptyset})$, then it pulls back to a reparametrization invariant trivialization of $\RMFE(\moduligb)$.
	For such a trivialization, the cocycles of Equation~\eqref{eq:local_surface_surface} agree in their defining relations:
	\begin{equation}
	\begin{aligned}
		{Z(\Sigma_1) \sewx{\RMFE}{j}{k} Z(\Sigma_2)}
		&= e^{\Omega^Z_{j,k}(\Sigma_1, \Sigma_2)} Z(\Sigma_1 \sew{j}{k} \Sigma_2),
		\label{eq:forget_rels}
		\\
		Z(\Sigma_1 \diffActing{l} \phi) \sewx{\RMFE}{j}{k} Z(\Sigma_2)
		&= e^{\Omega^Z_{j,k}(\Sigma_1 \diffActing{l} \phi, \Sigma_2)} Z((\Sigma_1 \sew{j}{k} \Sigma_2) \diffActing{l} \phi),
	\end{aligned}
		\qquad \phi \in \Diffpan,
	\end{equation}
		because the corresponding elements in the first and second lines of~\eqref{eq:forget_rels} each project to the same element in $\RMFE(\jmoduligb{\emptyset})$.
	This shows that any real one-dimensional modular functor constructed by pullback from the classical finite-dimensional moduli space $\jmoduligb{\emptyset}$ is local.
\end{remark}

\subsection{Central charge}
\label{section:central_charge}

Consider the unit disk $\Sr = \disk$ as the reference surface, and 
the deformation-deformation cocycle identity~\eqref{eq:def_def_def} for a trivialization $Z$ of a real one-dimensional modular functor $\RMFE$. 
By Item~\ref{lemma:complex_deformations:unit_disk} of Lemma~\ref{lemma:complex_deformations}, the disk is a special reference surface in the sense that the cocycle
\begin{equation}
	\Omega^Z_{\disk, 1} \colon  \defmultset \to \R
\end{equation}
is defined on any pair of composable complex deformations $\phi,\psi \in \defmultset$.
Thus, $\Omega^Z_{\disk, 1}(\phi, \psi)$ defines a cohomology class in the cohomology group $H^2(\DefC; \R)$, 
which was defined in~\cite[Section~3]{Maibach-Peltola:Complex_deformations_of_the_circle}.
This cohomology group was computed in~\cite[Figure~3.2]{Maibach-Peltola:Complex_deformations_of_the_circle}, showing that there are unique real constants $\mathbf{a}, \mathbf{b}, \charge \in \R$ such that
\begin{equation}
	[\Omega^Z_{\disk, 1}] = \mathbf{a} \; [\Re \bott] + \mathbf{b} \; [\Re \rotcocycle] + \charge \; [\Im \bott].
	\label{eq:a_b_c}
\end{equation}
in the basis of cocycles described below.
It is important to note that, because changing the trivialization $Z$ only changes the cocycle $\Omega^Z_{\disk, 1}$ by a coboundary, the constants $\mathbf{a}, \mathbf{b}, \charge$ are independent of the choice of $Z$.
\begin{definition}
	The \emph{central charge} of a real one-dimensional modular functor $\RMFE$ is the constant $\charge \in \R$ in Equation~\eqref{eq:a_b_c}.
	\label{def:central_charge}
\end{definition}
The central charge is also accessible at the Lie algebra level.
Let $\omega^Z_{\disk, 1}$ be the Lie algebra $2$-cocycle on $\VectC$ obtained from differentiation of $\Omega^Z_{\disk, 1}$.
By the characterization of the Lie algebra cohomology $H^2(\VectC, \R)$ in~\cite[Figure~3.3]{Maibach-Peltola:Complex_deformations_of_the_circle}, we have
\begin{equation}
	[\omega^Z_{\disk, 1}] = \mathbf{a} \; [\Re \gelfandfuks] + \charge \; [\Im \gelfandfuks].
	\label{eq:a_c_alg}
\end{equation}
The following complex-valued cocycles appear in Equations~(\ref{eq:a_b_c},~\ref{eq:a_c_alg}):
\begin{enumerate}
	\item
		$\bott$ is the Bott--Thurston cocycle~\cite{Bott:On_the_characteristic_classes_of_groups_of_diffeomorphisms}.
	\item
		$\rotcocycle$ is a cocycle combining the rotation number and conformal radius of the complex deformations.
		The Lie algebra cocycle $\rotcocyclealg$ associated to $\rotcocycle$ is a coboundary.
	\item
		$\gelfandfuks$ is the Gel'fand--Fuks cocycle~\cite{Gelfand-Fuchs:Cohomologies_of_the_Lie_algebra_of_vector_fields_on_the_circle} given in~Equation~\eqref{eq:gelfandfuks}, also known as the cocycle which defines the Virasoro algebra as a central extension of $\VectC$.
\end{enumerate}
The relation between the cocycles is that the Lie algebra cocycle associated to $\bott$ is
\begin{equation}
	\gelfandfuks(v, w) - \rotcocyclealg(v, w) = \frac{1}{24\pi} \int_{S^1} v'(z) \; w''(z) \; \dd z, \qquad v, w \in \VectC.
	\label{eq:cocycle_difference}
\end{equation}
See \cite[Section~3.2]{Maibach-Peltola:Complex_deformations_of_the_circle} for details.

We showed in the previous work~\cite{Maibach-Peltola:From_the_conformal_anomaly_to_the_Virasoro_algebra} that the Lie algebra cocycle of the real determinant line bundle (discussed in Section~\ref{section:cft}) is the imaginary part of the Gel'fand--Fuks cocycle~\eqref{eq:gelfandfuks} times the central charge $\charge$.
This motivates Definition~\ref{def:central_charge} for the central charge of $\RMFE$ as the coefficient of the imaginary part of the Gel'fand--Fuks cocycle.

\begin{remark}\label{rem:cocycle_is_bott_nonrel}
If $\RMFE$ is local, and $Z$ is reparametrization invariant, then the cocycle $\Omega^Z_{\disk, 1}$ vanishes on diffeomorphisms by Item~\ref{prop:local_equiv_3} of Proposition~\ref{prop:local_equiv} --- thus, it is a cocycle \emph{relative to} $\Diffpan$.
The relative cohomology $H^2(\DefC; \Diffpan; \R)$ was computed in~\cite[Figure~3.2]{Maibach-Peltola:Complex_deformations_of_the_circle} as well: it only contains $[\Im \bott]$, so
\begin{equation}
	\label{eq:cocycle_is_bott_nonrel}
	\begin{aligned}
		[\Omega^Z_{\disk, 1}] &= \charge \; [\Im \bott] \in H^2(\DefC; \Diffpan; \R), \quad
		\\
		[\omega^Z_{\disk, 1}] &= \charge \; [\Im \gelfandfuks] \in H^2(\VectC; \VectR; \R).
	\end{aligned}
\end{equation}
This characterizes the cocycles up differentials of trivialization-dependent functions which vanish respectively on $\Diffpan$ and $\VectR$.
Since $\mathbf{a} = \mathbf{b} = 0$ here, we see that the assumption of locality singles out the central charge as the only coefficient.
\end{remark}

Let us also point out that Remark~\ref{rem:cocycle_is_bott_nonrel} confirms our earlier result~\cite[Theorem~1.1]{Maibach-Peltola:From_the_conformal_anomaly_to_the_Virasoro_algebra}: the real part of the Gel'fand--Fuks cocycle does not appear for the real determinant line bundle (which is local by Theorem~\ref{thm:detrc_prop}).

\begin{remark}
	There might exist non-local real one-dimensional modular functors whose deformation-deformation cocycles lie in the nontrivial cohomology classes $[\Re \gelfandfuks]$ and $[\Re \rotcocycle]$ in $H^2(\DefC; \R)$.
	Hence, the locality in Theorem~\ref{thm:universality} might be necessary.
	\label{remark:transnumber}
\end{remark}

\subsection{Modular and crossing invariance properties}
\label{section:modular_invariance_properties}

In this section, we discuss several properties of local \gls{fsmooth} real one-dimensional modular functors that serve as intermediate steps in the proof of Theorem~\ref{thm:universality}, also possibly interesting by their own right.
The real determinant line bundle satisfies all of these properties (Theorem~\ref{thm:detrc_prop}).
We prove these properties for a general local \gls{fsmooth} real one-dimensional modular functor in Section~\ref{section:modular_invariance_proof}, by relating them to the real determinant line bundle in genus $0$.
Before the definitions, let us briefly describe each property.
\begin{enumerate}
	\item \emph{Flat modular invariance} (Definition~\ref{def:flat_modular_invariance}).
		Sewing the two boundary components of an annulus with the standard parametrization and cylindrical metric results in a torus also with a flat metric, and the seam on the torus is a geodesic.
		If the same torus is obtained from two different annuli, the geodesic seams may not be homotopic.
		Flat modular invariance asserts that certain cocycles still agree in this case.
	\item \emph{Crossing invariance} (Definition~\ref{def:crossing}).
		A hyperbolic surface in $\hmoduli{0}{4}$ with four boundary components may be decomposed into two hyperbolic pairs of pants in several ways, changing the homotopy class of the seam.
		Crossing invariance asserts that it is possible to trivialize $\RMFE(\pants)$ such that any pants decomposition along hyperbolic geodesics induces a trivialization of $\RMFE(\moduli{0}{4})$ independent of the homotopy class of the seam.
	\item \emph{Hyperbolic modular invariance} (Definition~\ref{def:modular_invariant_hyperbolic}).
		Similarly, a handle in $\moduli{1}{1}$ may be decomposed into a pair of pants along a hyperbolic geodesic in a chosen homotopy class.
		Hyperbolic modular invariance asserts the existence of a trivialization of $\RMFE(\pants)$ compatible with the self-sewing isomorphisms of hyperbolic pairs of pants.
		Together with crossing invariance, this implies invariance under S- and A-moves between hyperbolic pants decompositions, as defined by Hatcher and Thurston~\cite{Hatcher-Thurston:A_presentation_for_the_mapping_class_group_of_a_closed_orientable_surface, Hatcher:Pants_decompositions_of_surfaces}.
\end{enumerate}

Let $\RMFE$ be a local \gls{fsmooth} real one-dimensional modular functor with reparametrization invariant trivialization $Z$.
First, we require a more detailed characterization of the coboundaries that can appear in Equation~\eqref{eq:cocycle_is_bott_nonrel}.
Consider the cocycles~$\Omega^Z_{\A_\tau, 1}$ relative to a standard annulus $\A_\tau$ of modulus $\tau > 0$.
These are related to $\Omega^Z_{\disk, 1}$ as
\begin{equation}
\begin{aligned}
	&\Omega^Z_{\disk, 1}(\phi, \psi)
	- \Omega^Z_{\A_\tau, 1}(\phi, \psi)
	\\ = \; & \Omega^Z_{1,2}(\disk, \A_\tau \diffActing{1} \phi)
	+ \Omega^Z_{1,2}(\disk, \A_\tau \diffActing{1} \psi)
	- \Omega^Z_{1,2}(\disk, \A_\tau \diffActing{1} (\phi \circ \psi))
	- \Omega^Z_{1,2}(\disk, \A_\tau),
\end{aligned}
\label{eq:cocycles_A_D}
\end{equation}
which is obtained as a special case of Equation~\eqref{eq:cocycle_AB_relation} using the fact that ${\A_\tau \sew{2}{1} \disk = \disk}$. 
Assuming that $\Omega^Z_{\A_\tau, 1}$ vanishes on pairs of scaling transformations~\eqref{eq:scaling}, the Lie algebra cocycle $\omega^Z_{\A_\tau, 1}$ yields a cohomology class\footnote{
	Technically, $\Omega^Z_{\A_\tau, 1}$ does not lie in the cohomology group $H^2(\DefC; \R)$, because the domain of $\Omega^Z_{\A_\tau, 1}$ is the subset of composable pairs in $\defmultset$ acting on $\A_\tau$ (without making the boundary components intersect).
	Yet, the Lie algebra cocycle $\omega^Z_{\A_\tau, 1}$ is still well-defined.
} relative to the Lie algebra $\R \ell_0$ generating the scaling transformations
(by Remark~\ref{rem:cocycle_is_bott_nonrel}, we have $[\omega^Z_{\disk, 1}] = \charge \; [\Im \gelfandfuks]$):
\begin{equation}
\begin{aligned}
	[\omega^Z_{\A_\tau, 1}] = \charge \; [\Im \gelfandfuks] + \mathbf{h}_Z [\Im \rotcocyclealg] \in H^2(\VectC; \VectR, \R \ell_0; \R).
\end{aligned}
\label{eq:def_h}
\end{equation}
When changing $\tau$, the cocycle is changed by the coboundary in Equation~\eqref{eq:cocycle_AB_relation}, 
but by the invariance under scaling transformations, the constant $\mathbf{h}_Z \in \R$ does not depend on $\tau$.
(However, it does depend on the reparametrization invariant trivialization~$Z$.)

The following notion of ``cylindrical'' trivialization is named after the corresponding trivialization for the real determinant line bundle, which is given by cylindrical metrics on annuli (Theorem~\ref{thm:detrc_prop}).
It is obtained by making a trivialization scaling invariant, and forcing $\mathbf{h}_Z = 0$ in Equation~\eqref{eq:def_h}, which is convenient when considering annuli.
\begin{proposition}
	For a local \gls{fsmooth} real one-dimensional modular functor $\RMFE$, there exists a unique reparametrization invariant trivialization $Z$ of $\RMFE(\annuli)$ such that
	\begin{enumerate}
		\item
			$\Omega^Z_{1,2}(\A_{\tau_1}, \A_{\tau_2}) = 0$ for all $\tau_1, \tau_2 > 0$, and
							\label{prop:cylindrical_annuli}
		\item
			$[\omega_{\A_\tau, 1}^Z] = \charge \; [ \Im \gelfandfuks] \in H^2(\VectC; \VectR, \R \ell_0; \R)$ for all $\tau > 0$.
				\label{prop:cylindrical_cohomol}
	\end{enumerate}
	We call this trivialization \emph{cylindrical}.
	\label{prop:cylindrical}
\end{proposition}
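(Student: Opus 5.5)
Fix a reparametrization invariant trivialization $W$ of $\RMFE(\annuli)$, which exists by locality. By Proposition~\ref{prop:reparam_indep}, any other such trivialization has the form $Z = e^{B} W$ with $B \in \functions{\annuli}$ invariant under $\Diffpan$ acting at either boundary component. By Lemma~\ref{lemma:complex_deformations} and Proposition~\ref{prop:fprojective}, every annulus is of the form $\A_\tau \diffActing{1}\phi_1 \diffActing{2}\phi_2$ with $\phi_1,\phi_2 \in \Diffpan$. Hence $B$ is determined by the single function $b(\tau) = B(\A_\tau)$ of the modulus. The task therefore reduces to showing that conditions~\ref{prop:cylindrical_annuli} and~\ref{prop:cylindrical_cohomol} pin down $b$ uniquely, and that a $b$ achieving them exists.

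First I compute how the data transform. Since $\A_{\tau_1} \sew{1}{2} \A_{\tau_2} = \A_{\tau_1+\tau_2}$ (sewing along $\scaling{}$ and $\inversion$ composes scalings), we get
\begin{equation*}
\Omega^Z_{1,2}(\A_{\tau_1},\A_{\tau_2}) = \Omega^W_{1,2}(\A_{\tau_1},\A_{\tau_2}) + b(\tau_1)+b(\tau_2)-b(\tau_1+\tau_2).
\end{equation*}
The function $f(\tau_1,\tau_2) = \Omega^W_{1,2}(\A_{\tau_1},\A_{\tau_2})$ is a symmetric $2$-cocycle on the semigroup $(\R_{>0},+)$, by the associativity identity~\eqref{eq:basic_cocycle_idenity} together with the symmetry axiom. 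Symmetric real-valued $2$-cocycles on this semigroup are coboundaries. Concretely, smoothness of $f$ follows from Fr\"olicher smoothness of $W$ and of $\tau \mapsto \A_\tau$ (a curve generated by the $\scaling{}$ action). One then sets $b_0(\tau) = \int_0^\tau \partial_2 f(s,0^+)\,ds$ after normalizing, or more robustly solves $b_0(\tau_1+\tau_2) - b_0(\tau_1) - b_0(\tau_2) = f(\tau_1,\tau_2)$ via the standard splitting of the extension of $(\R_{>0},+)$ by $\R$, which is abelian because $f$ is symmetric. I expect this to be routine.

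The solutions $b$ of condition~\ref{prop:cylindrical_annuli} thus form the affine space $b_0 + \{\tau \mapsto \mu \tau : \mu\in\R\}$. Additive functions are linear here, given the smoothness of $b$.

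Next, condition~\ref{prop:cylindrical_cohomol} fixes $\mu$. By~\eqref{eq:local_triv_ext}, changing $b$ by $\mu\tau$ changes $Z_{\A_\tau,1}(\phi)$ by $e^{B(\A_\tau\diffActing{1}\phi)-B(\A_\tau)}$. Here $\A_\tau \diffActing{1}\phi$ has some modulus $\tau(\phi)$, so the deformation cocycle changes by the coboundary of $\phi \mapsto \mu(\tau(\phi)-\tau)$. Differentiating at the identity, the first variation of the modulus is a linear functional on $\VectC$ that vanishes on $\VectR$ and is proportional to $\Im$ of the $\ell_0$-coefficient. The second variation then gives a Lie algebra coboundary which, relative to $\VectR$ and $\R\ell_0$, should represent a nonzero multiple of $[\Im\rotcocyclealg]$. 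I plan to identify this by comparison with the explicit description of $\rotcocycle$ (rotation number and conformal radius) from~\cite[Section~3.2]{Maibach-Peltola:Complex_deformations_of_the_circle}: the logarithmic conformal radius of the deformed boundary is exactly what shifts the modulus. Consequently $\mathbf{h}_{Z} = \mathbf{h}_{W} + \kappa\mu$ with a universal constant $\kappa \neq 0$, and $\mu = -\mathbf{h}_{W}/\kappa$ is the unique choice giving $\mathbf{h}_Z=0$.

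We must also check that condition~\ref{prop:cylindrical_annuli} forces $\Omega^Z_{\A_\tau,1}$ to vanish on pairs of scalings, so that~\eqref{eq:def_h} makes sense. This follows from~\eqref{eq:cocycle_mixed_compat_2} and $\A_\tau\diffActing{1}\scaling{s} = \A_{\tau+s}$ (up to reparametrization), together with the independence of $\mathbf{h}_Z$ from $\tau$ noted above.

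The main obstacle is this last identification: showing that the coboundary of the modulus function is cohomologous to a nonzero multiple of $\Im\rotcocyclealg$ in the relative cohomology $H^2(\VectC;\VectR,\R\ell_0;\R)$, and not trivial there. I would first compute the modulus variation to second order for $\phi = \exp(\epsilon v)$ with $v$ a combination of $\ii\ell_0$ and $\ell_{\pm n}$. I would then compare the result with the evaluation of $\rotcocyclealg$ on the same generators.
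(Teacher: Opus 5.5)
Your outline is the paper's: reduce via Proposition~\ref{prop:reparam_indep} to a function $b$ of the modulus, remove the annulus--annulus cocycle using triviality of smooth $2$-cohomology of $(\R_{>0},+)$ (the paper simply cites $H^2(\R;\R)=0$), note that the residual freedom is $b\mapsto b+\mu\tau$, check that pairs of scalings give zero so $\mathbf{h}$ is defined, and fix $\mu$ through the shift of $\mathbf{h}$.

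Two small points in the first half. The symmetry axiom only gives $\Omega_{1,2}(\A_{\tau_1},\A_{\tau_2})=\Omega_{2,1}(\A_{\tau_2},\A_{\tau_1})$, not the symmetry you claim. You do not need that symmetry anyway: the antisymmetrization of a continuous $2$-cocycle on $(\R_{>0},+)$ is a continuous alternating biadditive form, hence zero. Also, there is no annulus of modulus $0$, so normalize by differentiating the cocycle identity at a fixed $c_0>0$ instead of at $0^+$.

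The genuine gap is the step you single out yourself: that $b\mapsto b+\mu\tau$ shifts $\mathbf{h}$ by $\kappa\mu$ with $\kappa\neq 0$. Both existence and uniqueness rest on it, and your proposal only asserts it. Your plan (a second-order expansion on a few generators, compared with values of $\rotcocyclealg$) aims at the wrong thing. The change of the Lie algebra cocycle is the coboundary $\mu\,\beta([v,w])$ of the \emph{first} variation
\[
\beta(v)=\partial_t|_{t=0}\,\modulus{\A_\tau\diffActing{1}\Phi^v(t,\blank)}.
\]
What matters is not its values but whether it is the coboundary of a $1$-cochain vanishing on $\VectR$ and $\R\ell_0$, and whether the resulting class is nonzero.

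The paper settles this structurally:
\begin{itemize}
\item $\beta$ vanishes on $\VectR$, since diffeomorphisms do not change the modulus.
\item Hence the class of $\beta([v,w])$ dies in $H^2(\VectC;\VectR;\R)$. By exactness it equals $\delta_1[\beta|_{\R\ell_0}]$, with $[\beta|_{\R\ell_0}]\in H^1(\R\ell_0;\R)=\R\,\Im\rotnumberalg$.
\item The only computation needed is $\beta(\ell_0)=1/2\pi$, from $\Phi^{\ell_0}(t,\blank)=\scaling{t/2\pi}$. Together with $\Im\rotnumberalg(\ell_0)=1/24$ this gives $\kappa=12/\pi$.
\end{itemize}

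That $\kappa\neq 0$ is the injectivity of $\delta_1$, which you can check directly. Suppose $\beta([v,w])=\gamma([v,w])$ for some $\gamma$ vanishing on $\VectR$ and $\R\ell_0$. Then $\beta-\gamma$ annihilates $[\VectC,\VectC]$, which contains $\ell_0$ (a multiple of $[\ell_1,\ell_{-1}]$). This contradicts $\beta(\ell_0)\neq 0=\gamma(\ell_0)$.

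Finally, $\beta$ sees the real (scaling) part of the $\ell_0$-coefficient, not the imaginary part as you wrote. The field $\ii\ell_0$ generates rotations, which lie in $\VectR$ and do not change the modulus.
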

	A similar choice of $\charge \neq 0$ and $\mathbf{h}_Z = 0$ also appears in the K\"ahler geometry of the homogeneous space $\Diffpan / S^1$; see~\cite{Bowick-Rajeev:The_holomorphic_geometry_of_closed_bosonic_string_theory_and_DiffS1modS1, Hong-Rajeev:Universal_Teich_space_and_DiffS1modS1, Nag-Verjovsky:DiffS1_and_Teichmuller_spaces}.
\begin{proof}
	Let $Z$ be any reparametrization invariant trivialization of $\RMFE$.
	The assignment $(\tau_1, \tau_2) \mapsto \Omega^Z_{1,2}(\A_{\tau_1}, \A_{\tau_2})$ defines a cocycle in $Z^2(\R; \R)$.
	Because the Lie group cohomology $H^2(\R; \R)$ is trivial (e.g.\ by~\cite[Figure~3.2]{Maibach-Peltola:Complex_deformations_of_the_circle}, where $\scalinggroup \cong \R$), there exists a function $f \in C^\infty(\R, \R)$ such that $\Omega^Z_{1,2}(\A_{\tau_1}, \A_{\tau_2}) = f(\tau_1) + f(\tau_2) - f(\tau_1 + \tau_2)$ for $\tau_1, \tau_2 \in \R$.
		Now, the trivialization $W(A) = e^{f(\modulus{A})} Z(A)$, where $A \in \annuli$ and $\modulus{A}$ is the modulus\footnote{We normalize the modulus so that $\modulus{\A_\tau} = \tau$.} of $A$, 
	is another reparametrization invariant trivialization of $\RMFE(\annuli)$.
	For this trivialization, we have $\Omega^W_{1,2}(\A_{\tau_1}, \A_{\tau_2}) = 0$ for all $\tau_1, \tau_2 > 0$, so the property in Item~\ref{prop:cylindrical_annuli} holds for $W$.

	In terms of scaling transformations $\scaling{\tau_1}, \scaling{\tau_2}$ with $\tau_1, \tau_2 > 0$, we have
	\begin{equation}
		\Omega^W_{\A_\tau, 1}(\scaling{\tau_1}, \scaling{\tau_2})
		= \Omega^W_{1,2}(\A_{\tau + \tau_2}, \A_{\tau_1})
		- \Omega^W_{1,2}(\A_{\tau}, \A_{\tau_1}) = 0,
	\end{equation}
	by (\ref{eq:cocycle_surfaces_equal_vanishing},~\ref{eq:cocycle_id_mixed_2},~\ref{eq:cocycle_mixed_compat_2}). 
	Thus, the cocycles $\Omega^W_{\A_\tau, 1}$ are relative to scaling transformations, defining $\mathbf{h}_W$ via~\eqref{eq:def_h}.
	Fix $\lambda \in \R$ and consider the trivialization $X(A) = e^{\lambda \; \modulus{A}} \; W(A)$ for $A \in \annuli$.
	The Lie algebra cocycles of $X$ and $W$ differ by a coboundary $\beta$, which is the Lie algebra cohomology differential of the derivative of the exponent relating the trivializations: 
	\begin{equation}
	\begin{aligned}
		\omega_{\A_\tau, 1}^X(v, w)
		= \; & \omega_{\A_\tau, 1}^W(v, w) + \beta([v, w]), \qquad v, w \in \VectC \\
		\beta(v) = \; &
		\eval{\pdv{t}}_{t = 0}
		\lambda \; \modulus{\A_{\tau} \diffActing{1} \Phi^v(t, \blank)} ,
	\end{aligned}	
	\end{equation}
	using the flow~\eqref{eq:flow_right}.
	Consider the exact sequence from~\cite[Figure~3.3]{Maibach-Peltola:Complex_deformations_of_the_circle},
	\begin{equation}
	\begin{aligned}
		H^1(\R \ell_0; \R) &\xlongrightarrow{\delta_1} H^2(\VectC; \VectR, \R \ell_0; \R) \\
		&\xlongrightarrow{\delta_2} H^2(\VectC; \VectR; \R),
	\end{aligned}
	\end{equation}
	where the first map $\delta_1$ is the transgression map obtained from the long exact sequence of relative Lie algebra cohomology, and the second map $\delta_2$ projects onto cohomology classes also up to coboundaries not vanishing on $\R \ell_0$.
	Using the fact that the flow of the vector field $\ell_0 = - z \partial_z$ generates a scaling transformation $\Phi^{\ell_0}(t, \blank) = \scaling{\frac{t}{2\pi}}$, we may compute
	\begin{equation}
		\beta(\ell_0) =
		\eval{\pdv{t}}_{t = 0}
		\lambda \; \modulus{\A_{\tau + \frac{t}{2\pi}}} =
		\frac{\lambda}{2\pi}.
		\label{eq:modulus_derivative}
	\end{equation}
	It follows that the Lie algebra cohomology differential $\beta([v, w])$ is a nontrivial cocycle relative to $\R \ell_0$, even though it looks like a coboundary.
	The non-relative cohomology class $\delta_2([\beta([v, w])])$, however, is trivial. 
	Hence, by exactness, we find a preimage under $\delta_1$, which is just $\delta_1([\beta]) = [\beta([v, w])]$.
	By~\cite[Figure~3.3]{Maibach-Peltola:Complex_deformations_of_the_circle}, we have $H^1(\R \ell_0; \R) = \R \Im \rotnumberalg$, where $\rotnumberalg$ is the derivative of the rotation number~\cite[Section~3.2]{Maibach-Peltola:Complex_deformations_of_the_circle} satisfying $\Im \rotnumberalg(\ell_0) = \frac{1}{24}$.
	Therefore,  $[\beta([v, w])] = \frac{12 \lambda}{\pi}[\Im \rotcocyclealg]$,
	giving the relation $\mathbf{h}_X = \mathbf{h}_W + \frac{12 \lambda}{\pi}$.
	We conclude that by setting $\lambda = - \frac{\pi}{12}\mathbf{h}_W$, the Lie algebra cocycle of $X$ satisfies the property in Item~\ref{prop:cylindrical_cohomol}.

	Finally, to prove uniqueness, let $Y$ be another cylindrical trivialization of $\RMFE$.
	By Proposition~\ref{prop:reparam_indep}, it is related to $X$ by some smooth function $f \colon  \R \to \R$ such that $Y(A) = e^{f(\modulus{A})} X(A)$ for $A \in \annuli$.
	By Item~\ref{prop:cylindrical_annuli}, $f$ is additive, so there exists some $\lambda > 0$ such that $f(\tau) = \lambda \; \tau$ for $\tau > 0$.
	By the above, we have the relation $\mathbf{h}_Y = \mathbf{h}_X + \frac{12 \lambda}{\pi}$.
	Hence, we see that $\lambda = 0$, concluding that $Y = X$ on $\annuli$.
\end{proof}

Sewing the two boundary components of an annulus $A \in \annuli$ results in a torus $T = \sewself{1}{2} A \in \tori$.
Up to scale, both the annulus and the torus come with a unique conformal flat metric with geodesic boundary.
Since $A$ conformally maps into $T$, we may ask whether the seam inside~$T$ defined by the parametrizations of $A$ is a geodesic with respect to the flat metric on~$T$.
For annuli creating equivalent tori and having the geodesic seam, we define the following notion, assuming that the sewing isomorphisms agree on the cylindrical metrics.
Note that the interesting case is where the seams are not homotopic.
\begin{definition}
	A local \gls{fsmooth} real one-dimensional modular functor $\RMFE$ is \emph{flatly modular invariant} if for all annuli $A, B \in \annuli$ such that $\sewself{1}{2} A = \sewself{1}{2} B \in \tori$ and both seams in the torus are geodesics in a conformal flat metric, we have
	\begin{equation}
		\sewxself{\RMFE}{1}{2} Z(A) = \sewxself{\RMFE}{1}{2} Z(B),
		\label{eq:flat_modular_invariance}
	\end{equation}
	where $Z$ is the cylindrical trivialization 	(Proposition~\ref{prop:cylindrical}).
	\label{def:flat_modular_invariance}
\end{definition}

\begin{remark}
	By reparametrization invariance of the cylindrical trivialization, the relation~\eqref{eq:flat_modular_invariance} also holds if the seam is a reparametrization of a geodesic.
	Basic examples of annuli with geodesic seam in the torus are the standard annuli $\A_\tau$ for $\tau > 0$, and $\smash{\A_\tau \diffActing{1} \rotation_\alpha}$, where one boundary parametrization is twisted by a rotation $\rotation_\alpha(z) = e^{\ii \alpha} z$ for $\alpha \in \R$. Any other annulus with this geodesic seam property is of the form $\A_\tau \diffActing{1} \rotation_\alpha \diffActing{1} \phi \diffActing{2} \invinv(\phi)$
	for some diffeomorphism $\phi \in \Diffpan$, which does not change the torus by Equation~\eqref{eq:act_between}.
	Given a torus and a homotopy class, there are unique $\tau > 0$ and $\alpha \in [0, 2\pi)$ such that the torus is isomorphic to $\sewself{1}{2} (\A_\tau \diffActing{1} \rotation_\alpha)$ with the seam belonging to the given homotopy class.
	\label{remark:geodesic_annuli}
\end{remark}

\begin{definition}
	A reparametrization invariant trivialization $Z$ of a \gls{fsmooth} real one-dimensional modular functor $\RMFE$ is \emph{crossing invariant} if for all hyperbolic pairs of pants $P_1, P_2, P_3, P_4 \in \hpants$ such that $P_1 \sew{1}{1} P_2 = P_3 \sew{1}{1} P_4 \in \moduli{0}{4}$, and such that the hyperbolic lengths of $\partial_1 P_1$ and $\partial_1 P_2$ (resp. $\partial_1 P_3$ and $\partial_1 P_4$) agree, we have
	\begin{equation}
		Z(P_1) \sewx{\RMFE}{1}{1} Z(P_2) = Z(P_3) \sewx{\RMFE}{1}{1} Z(P_4).
		\label{eq:crossing_identity}
	\end{equation}
	If any such $Z$ exists, we call $\RMFE$ \emph{crossing invariant}.
	\label{def:crossing}
\end{definition}

\begin{proposition}
	Any two crossing invariant trivializations of agree on $\pants$ up to a real multiplicative constant factor.
\end{proposition}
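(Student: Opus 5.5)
Let $Z$ and $W$ be two crossing invariant trivializations. Both are reparametrization invariant, so by Proposition~\ref{prop:reparam_indep} there is a \gls{fsmooth} function $B$ on $\pants$ with $Z(P) = e^{B(P)} W(P)$ and $B(P \diffActing{j} \phi) = B(P)$ for all $\phi \in \Diffpan$. Hence $B$ depends only on the internal moduli of $P$. For a pair of pants these are the three hyperbolic lengths $(\ell_1, \ell_2, \ell_3) \in \R_{>0}^3$ of the geodesic boundary components, since a hyperbolic pair of pants with geodesic boundary is determined up to isometry by these lengths. Write $B = b(\ell_1, \ell_2, \ell_3)$. The goal is to show that $b$ is constant.

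\emph{Step 1: a functional equation.} Take any configuration $P_1 \sew{1}{1} P_2 = P_3 \sew{1}{1} P_4 \in \moduli{0}{4}$ as in Definition~\ref{def:crossing}. Applying~\eqref{eq:crossing_identity} to both $Z$ and $W$ and dividing gives
\[
B(P_1) + B(P_2) = B(P_3) + B(P_4).
\]
Concretely, a hyperbolic four-holed sphere $X$ with boundary lengths $a_1, a_2, a_3, a_4$ (labels $2,3$ of $P_1$ and $2,3$ of $P_2$ in the sewing) has an $s$-channel geodesic of length $x$ separating $\{1,2\}$ from $\{3,4\}$, and a $t$-channel geodesic of length $y$ separating $\{1,3\}$ from $\{2,4\}$ (or $\{1,4\}$ from $\{2,3\}$). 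The equation then reads
\[
b(x, a_1, a_2) + b(x, a_3, a_4) = b(y, a_1, a_3) + b(y, a_2, a_4).
\]
We must also check that such pairs $P_1, P_2$ (and $P_3, P_4$) exist, i.e.\ that the geodesic seams can be realized with hyperbolic boundary parametrizations of matching lengths. This holds because cutting $X$ along a simple closed geodesic produces hyperbolic pants whose unit-speed parametrizations glue, up to a twist, to $X$, and a twist is a rotation, which preserves $\hpants$. The identity is also invariant under permutations of labels, using the symmetry axioms of $\RMFE$ and of $Z$.

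\emph{Step 2: solving the functional equation.} This is the main obstacle. Fenchel--Nielsen coordinates $(x, \theta)$ give a two-parameter family of four-holed spheres with fixed $a_i$. As the twist $\theta$ varies with $x$ fixed, $y$ ranges over an interval, while the left-hand side stays constant. So the right-hand side is independent of $y$ on that interval, and symmetrically the left-hand side is independent of $x$. Hence
\[
b(y, a_1, a_3) + b(y, a_2, a_4) = C(a_1, a_2, a_3, a_4)
\]
for all $y$ in an open set. Varying $y$ and the $a_i$ independently, I would first show that $b(\cdot, a, a') = \beta(a, a') + g(\cdot)$ with a one-variable $g$ independent of $(a, a')$, by comparing differences in $y$. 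Using the full symmetry of $b$ in its three arguments (from label swaps), this upgrades to $b(\ell_1, \ell_2, \ell_3) = g(\ell_1) + g(\ell_2) + g(\ell_3) + \text{const}$. Substituting back, the $g(x)$ terms force $2g(x) - 2g(y) = 0$, so $g$ is constant on the reachable range of lengths, which is all of $\R_{>0}$ by connectedness. Therefore $b$ is constant, and $Z = e^{c} W$ on $\pants$.

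The delicate point in Step 2 is establishing that the reachable range of $y$ for fixed $x$ is a nondegenerate interval. I would use smoothness of $b$ and differentiate along the twist flow, where $\partial_\theta y \neq 0$ away from discrete twist values by the standard trace formulas. This yields $\partial_y$ of the right-hand side equal to $0$, which is what drives the separation of variables.
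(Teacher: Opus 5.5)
Your argument is correct at the paper's level of rigor, and it uses the same mechanism as the paper. In a crossing identity the two sides depend on different seam lengths. Varying the twist changes the transverse geodesic while the seam stays fixed, so both sides must be constant.

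The difference is the configuration. The paper uses only the four-holed sphere obtained by gluing a pair of pants to a twisted copy of itself along one boundary. In your notation this is $a_3=a_1$, $a_4=a_2$. The untwisted side is then already $2b(x,a_1,a_2)$, and one reads off directly that $b$ does not depend on the sewn slot.

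You instead derive the general relation and then separate variables. Your first step already does all the work there. Independence of $y$ gives, for all $a_i$,
\[
b(y,a_1,a_3)-b(y',a_1,a_3)=-\bigl(b(y,a_2,a_4)-b(y',a_2,a_4)\bigr),
\]
and taking $(a_2,a_4)=(a_1,a_3)$ forces $b$ to be independent of its first argument. That choice is exactly the paper's doubled pants, viewed from the other channel. So the detour through $g(\ell_1)+g(\ell_2)+g(\ell_3)+\mathrm{const}$ is unnecessary.

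What your version adds is an explicit treatment of a point the paper passes over in one sentence. Under twisting, the transverse length sweeps a nondegenerate interval, and the realizable pairs of seam lengths form a connected set. That is what makes ``independent of $L_4$, hence independent of $L_3$'' a global rather than a local statement.

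One caution applies equally to the paper's proof. Definition~\ref{def:crossing} only imposes crossing identities when sewing at label $1$. No compatibility of trivializations with label swaps is part of the definitions, so the ``symmetry axioms of $Z$'' you invoke are not actually available.

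This matters: $b(\ell_1,\ell_2,\ell_3)=p(\ell_2)+p(\ell_3)$ satisfies your functional equation for any $p$. So some form of crossing invariance at the other labels is genuinely needed to conclude. The paper tacitly assumes exactly this when it sews at label $3$ and then says ``by applying the same argument sewing at $L_1$ and $L_2$.'' With that reading, your slot-by-slot independence finishes the proof without needing symmetry of $b$ at all.
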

\begin{proof}
	Let $Z$ be a crossing invariant trivialization of $\RMFE$.
	By Proposition~\ref{prop:reparam_indep}, any other crossing invariant trivialization is of the form $e^{f(L_1, L_2, L_3)} Z(P)$, where $f \in C^\infty(\Rp^3, \R)$ and $L_1, L_2, L_3 \in \Rp$ are the hyperbolic lengths of the boundary components of $P \in \pants$.
	Consider the sewing operation $(P \diffActing{3} \rotation_\alpha) \sew{3}{3} P$, twisted by a rotation $\rotation_\alpha$ with angle $\alpha \in \R$.
	It yields a surface with boundary lengths $L_1, L_1, L_2, L_2$, which may also be obtained by sewing pants with boundary lengths $L_1, L_1, L_4(\alpha)$ and $L_2, L_2, L_4(\alpha)$, where $L_4(\alpha)$ is the length of the hyperbolic geodesic cutting $(P \diffActing{3} \rotation_\alpha) \sew{3}{3} P$ so that the boundary components of equal length are in the same connected component.
	Equation~\eqref{eq:crossing_identity} yields the identity
	\begin{equation}
		2 f(L_1, L_2, L_3) = f(L_1, L_1, L_4(\alpha)) + f(L_2, L_2, L_4(\alpha)).
		\label{eq:crossing_alpha}
	\end{equation}
	The length $L_4(\alpha)$ changes as a function of the angle $\alpha \in \R$, while the lengths $L_1, L_2, L_3$ remain constant.
	Hence, we conclude that the right-hand side of Equation~\eqref{eq:crossing_alpha} is independent of $L_4(\alpha)$, and thus, $f(L_1, L_2, L_3)$ is independent of $L_3$.
	By applying the same argument sewing at $L_1$ and $L_2$, we find that $f$, and thus $e^{f(L_1, L_2, L_3)} \in \R$, is constant.
\end{proof}

\begin{definition}
	A crossing invariant trivialization $Z$ of a \gls{fsmooth} real one-dimensional modular functor $\RMFE$ is \emph{hyperbolically modular invariant} if for all hyperbolic pairs of pants $P_1, P_2 \in \hpants$ such that $\sewself{j}{k} P_1 = \sewself{l}{m} P_2 \in \moduli{1}{1}$, and such that the hyperbolic lengths of $\partial_1 P_1$ and $\partial_1 P_2$ agree, we have
	\begin{equation}
		\sewxself{\RMFE}{j}{k} Z(P_1) = \sewxself{\RMFE}{l}{m} Z(P_2).
	\end{equation}
	If any such $Z$ exists, we call $\RMFE$ \emph{hyperbolically modular invariant}.
	\label{def:modular_invariant_hyperbolic}
\end{definition}

\subsection{Example: The real determinant line bundle}
\label{section:det_line_bundle}

Let us now return to the guiding example of the real determinant line bundle $\Detrc$, discussed in Section~\ref{section:cft}.
We will show in Theorem~\ref{thm:detrc_prop} below that it satisfies all the properties from Section~\ref{section:modular_invariance_properties}. 
To begin, consider the trivialization
\begin{equation}
	Z(\Sigma) = [\gcc(\Sigma)] \in \Detrc(\Sigma),
	\label{eq:detrc_triv}
\end{equation}
defined by metrics $\gcc(\Sigma)$ of constant Gaussian curvature $+1$, $-1$, or $0$ such that the boundary components are geodesics~\cite{OPS:Extremals_of_determinants_of_Laplacians}. 
The metrics are unique, except for the flat case (tori and annuli). By compactness, we may additionally require unit volume to fix unique flat metrics\footnote{
In fact, the conformal anomaly of a constant conformal factor $\sigma$ is $\frac{\sigma}{6} \chi(\Sigma)$, by the Gauss--Bonnet formula (here, $\chi(\Sigma)$ is the Euler characteristics of the surface $\Sigma$). Thus, for the $\chi(\Sigma) = 0$ case, the trivialization $Z(\Sigma) = [\gcc(\Sigma)] \in \Detrc(\Sigma)$ is independent of the global scale of the metric $\gcc(\Sigma)$.
}.
This trivialization is the canonical example for the modular and crossing invariance properties above, as will be evident from the proof of Theorem~\ref{thm:detrc_prop}.
Other trivializations of $\Detrc$ are reviewed in~\cite[Appendix~B]{Luo-Maibach:Two-loop_Loewner_potentials}.

\begin{theorem}
Fix $\charge \in \R$.
	The real determinant line bundle $\Detrc$ is a {local}, {flatly modular invariant}, {crossing invariant}, and hyperbolically modular invariant \gls{fsmooth} real one-dimensional modular functor of central charge $\charge$.
	\label{thm:detrc_prop}
\end{theorem}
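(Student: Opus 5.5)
The plan is to work throughout with the constant-curvature trivialization $Z(\Sigma) = [\gcc(\Sigma)]$ from~\eqref{eq:detrc_triv}, and to read off every cocycle from the defining relation $[e^{2\sigma} g] = e^{-\charge \lfunct(\sigma, g, \Sigma)} [g]$. The modular functor structure (sewing isomorphisms by unions of admissible metrics, associativity, symmetry, Fr\"olicher smoothness) and the value of the central charge are taken from~\cite{Maibach-Peltola:From_the_conformal_anomaly_to_the_Virasoro_algebra}. There the Lie algebra cocycle was shown to be $\charge \Im \gelfandfuks$, so Definition~\ref{def:central_charge} gives central charge $\charge$ once locality is known (Remark~\ref{rem:cocycle_is_bott_nonrel}). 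What remains is locality and the three invariance properties.

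\textbf{Locality.} I would verify Item~\ref{prop:local_equiv_1} of Proposition~\ref{prop:local_equiv} for $Z$. The sewing cocycle $\Omega^Z_{j,k}(\Sigma_1,\Sigma_2)$ is computed as follows: Weyl-transform $\gcc(\Sigma_i)$ into metrics $g_i$ that are flat cylindrical near $\partial_j\Sigma_1$ and $\partial_k\Sigma_2$ in the coordinates given by the boundary parametrizations. Then compare $g_1 \cup g_2$ with $\gcc(\Sigma_1 \sew{j}{k} \Sigma_2)$. By the locality of $\lfunct$ (property 3 of Section~\ref{section:cft}), the corrections can be chosen supported away from the other boundary components. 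The metric $\gcc(\Sigma_1)$ depends only on the underlying bordered surface and not on the parametrizations. Hence acting by $\phi \in \Diffpan$ at $l \neq j$ leaves every term unchanged, which gives~\eqref{eq:local_surface_surface}.

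\textbf{Flat modular invariance.} For annuli, $Z$ restricted to $\annuli$ is given by cylindrical flat metrics. I would check that this restriction is the cylindrical trivialization of Proposition~\ref{prop:cylindrical}. For Item~\ref{prop:cylindrical_annuli}, sewing two standard cylinders gives exactly a standard cylinder, so $\sigma=0$ and the cocycle vanishes. For Item~\ref{prop:cylindrical_cohomol}, the unit-volume normalization makes the trivialization scale-independent, so the constant $\mathbf{h}_Z$ vanishes; I would verify this against the computation~\eqref{eq:modulus_derivative}. If $A$ and $B$ have geodesic seams in the same torus, then $\sewxself{}{1}{2}[\gcc(A)]$ and $\sewxself{}{1}{2}[\gcc(B)]$ both equal the class of the flat metric on the torus. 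The only possible discrepancy is a global scale, and that is invisible because $\chi = 0$. Hence~\eqref{eq:flat_modular_invariance} holds.

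\textbf{Crossing and hyperbolic modular invariance.} Hyperbolic pants have geodesic boundary, and sewing two of them along boundaries of equal length (with hyperbolic, i.e.\ constant-speed, parametrizations) yields a metric that is already smooth and hyperbolic. So the sewn class is $[\gcc(P_1 \sew{1}{1} P_2)]$ with no correction, up to making the union admissible. I expect this to be the main obstacle. The admissibility convention requires flat cylindrical collars near the seams, whereas hyperbolic collars are not flat. I would handle it by writing both sides as the intrinsic hyperbolic metric on the glued surface: the Weyl factors to and from cylindrical collars are supported in the two sides of a symmetric collar, and the anomaly contributions cancel by locality and the cocycle property of $\lfunct$. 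It then follows that $Z(P_1)\sewx{}{1}{1}Z(P_2) = [\gcc]$ of the four-holed sphere independently of the decomposition, and likewise for self-sewing into $\handles$. The equal-length hypothesis is exactly what makes the glued metric smooth.
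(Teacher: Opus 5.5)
Your arguments for locality, crossing invariance and hyperbolic modular invariance follow the paper's proof: the same trivialization $Z(\Sigma)=[\gcc(\Sigma)]$ and the same observations. Your extra step is also correct and fills in what the paper calls ``immediate'': a union $g_1\cup g_2$ that is already smooth across the seam represents the sewn class, because the Weyl factors to admissible collars glue to a smooth $\sigma$ and $\lfunct$ is additive under gluing.

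The gap is in flat modular invariance, specifically in Item~\ref{prop:cylindrical_cohomol} of Proposition~\ref{prop:cylindrical}. Your reason for $\mathbf{h}_Z=0$ does not work. Since $\chi(A)=0$, one has $[\lambda g]=[g]$ for every constant $\lambda>0$, so the unit-volume normalization does not affect the trivialization at all. Every member of the family $X(A)=e^{\lambda\,\modulus{A}}[\gcc(A)]$, $\lambda\in\R$, satisfies Item~\ref{prop:cylindrical_annuli} and is equally insensitive to global scale. Yet the proof of Proposition~\ref{prop:cylindrical} gives $\mathbf{h}_X=\mathbf{h}_Z+\frac{12\lambda}{\pi}$, so at most one member is cylindrical. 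The computation \eqref{eq:modulus_derivative} only tells you how $\mathbf{h}$ shifts under such a change of trivialization, not what its value is for $[\gcc]$.

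This is not cosmetic. Your torus argument correctly shows $\sewxself{}{1}{2}[\gcc(A)]=\sewxself{}{1}{2}[\gcc(B)]=[\gcc(T)]$. For $X$ with $\lambda\neq 0$, however, the two sides differ by $e^{\lambda(\modulus{A}-\modulus{B})}$. Annuli with geodesic seams in the same torus generally have different moduli: $\A_\tau$ and the annulus cut along the other generator of a rectangular torus have moduli $\tau$ and $1/\tau$. So flat modular invariance is essentially the statement that $[\gcc]$ \emph{is} the cylindrical trivialization, and that needs an actual computation. Namely, the Lie algebra cocycle of $\Detrc$ relative to $\A_1$ with the cylindrical metric is exactly $\charge\Im\gelfandfuks$: it is $\frac{\ii}{12}n^3$ with no $\Im\rotcocyclealg$ component, which is the $\charge/24$ cylinder shift. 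The paper imports this from \cite{Maibach-Peltola:From_the_conformal_anomaly_to_the_Virasoro_algebra} via \cite[Proposition~3.5]{Maibach-Peltola:Complex_deformations_of_the_circle}. You already cite that computation for the central charge. To get $\mathbf{h}_Z=0$ you need its precise form for this trivialization, not just its class in $H^2(\VectC;\R)$.
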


\begin{proof}
	By~\cite[Corollary~1.5]{Maibach-Peltola:From_the_conformal_anomaly_to_the_Virasoro_algebra}, the real determinant line bundle satisfies the defining properties of a real one-dimensional modular functor (Definition~\ref{def:modular_functor}).
	We can define \gls{fsmooth}ness of $\Detrc$ with respect to the constant curvature metric trivialization~\eqref{eq:detrc_triv}.

	Consider a cocycle $\Omega_{j, k}^{Z}(\Sigma_1 \diffActing{l} \phi, \Sigma_2)$ as in Item~\ref{prop:local_equiv_1} of Proposition~\ref{prop:local_equiv}.
	It is given by comparing $\big[\gcc\big((\Sigma_1 \diffActing{l} \phi) \sew{j}{k} \Sigma_2\big)\big]$ to $\big[\gcc(\Sigma_1 \diffActing{l} \phi) \cup \gcc(\Sigma_2)\big]$ in the real determinant line of the sewn surface.
	Because $\gcc(\Sigma)$, and also the conformal anomaly formula~\eqref{eq:conformal_anomaly}, do not depend on the boundary parametrizations of $\Sigma$, this is equivalent to comparing $[\gcc(\Sigma_1 \sew{j}{k} \Sigma_2)]$ to $[\gcc(\Sigma_1) \cup \gcc(\Sigma_2)]$, which yields the cocycle $\Omega_{j, k}^{Z}(\Sigma_1, \Sigma_2)$.
	Hence, $Z$ is reparametrization invariant by Item~\ref{prop:local_equiv_1} of Proposition~\ref{prop:local_equiv}, and consequently, $\Detrc$ is local.
	
	The constant curvature metric trivialization is also the cylindrical trivialization on $\Detrc(\annuli)$.
	(This follows from~\cite[Proposition~3.5]{Maibach-Peltola:Complex_deformations_of_the_circle}, which uses the computation of the Lie algebra cocycle in~\cite{Maibach-Peltola:From_the_conformal_anomaly_to_the_Virasoro_algebra}.)
	Consider now an annulus $A \in \annuli$ such that the seam in the torus $T = \sewself{1}{2} A$ is a geodesic in the metric $\gcc(T)$.
	This metric also induces a (unit volume) flat metric on $A$ for which the boundary components are geodesic, which by uniqueness agrees with $\gcc(A)$.
	Thus, $\Detrc$ is flatly modular invariant (Definition~\ref{def:flat_modular_invariance}). 
	Crossing invariance (Definition~\ref{def:crossing}) and hyperbolic modular invariance (Definition~\ref{def:modular_invariant_hyperbolic}) for $\Detrc$ are immediate consequences of the choice of the uniqueness of the hyperbolic metrics for the trivializations (indeed, they are preserved under sewing and self-sewing of hyperbolic pairs of pants, when the boundary components have matching length).
\end{proof}

\section{Disk-disk cocycles and the universal Liouville action}
\label{section:disk_disk}
In this section, we identify the cocycle of sewing two disks with the universal Liouville action --- or loop Loewner energy --- Equation~\eqref{eq:lenergy}, up to a multiplicative constant proportional to the central charge, and an additive constant depending on the trivialization (Theorem~\ref{thm:lpot}).
The disk-disk cocycle is a function $\Omega^Z_{1,1} \colon \disks \times \disks \to \R$ defined by a local \gls{fsmooth} real one-dimensional modular functor $\RMFE$ together with a reparametrization invariant trivialization $Z$ over $\spheres$ and $\disks$.

First, we identify a pair of unit disks as a critical point of $\Omega^Z_{1,1}$ (Section~\ref{section:diskdisk_critical}).
Then, we use conformal welding to represent the variation of $\Omega^Z_{1,1}$  as a pairing of a vector field and a quadratic differential through the Cauchy--Hilbert transform (Section~\ref{section:diskdisk_variational}, see also~\cite{Pommerenke:Univalent_functions, Duren:Univalent_functions, Morimoto:An_introduction_to_Satos_hyperfunctions}).
The key step is to prove that the quadratic differential is proportional to the Schwarzian derivative of a univalent function in the conformal welding decomposition, which implies that the Cauchy--Hilbert transform agrees with the variational formula of loop Loewner energy (Section~\ref{section:diskdisk_schwarzian}).
We conclude the proof of Theorem~\ref{thm:lpot} by identifying the proportionality constant via the central charge (Section~\ref{section:diskdisk_charge}):
We relate the second variation of $\Omega^Z_{1,1}$ to the Lie algebra cocycle of the central extension $\RMFE(\DefC)$ discussed in Section~\ref{section:central_charge}.

\subsection{Critical points of the disk-disk cocycle}
\label{section:diskdisk_critical}

Any element of $\disks$ may be represented by the unit disk with reparametrized boundary (Item~\ref{lemma:complex_deformations:diff_disk} of Lemma~\ref{lemma:complex_deformations}), so by locality, the disk-disk cocycle is fully characterized by $\Omega^Z_{1,1}(\disk \diffActing{1} \phi, \disk)$ as a function of $\phi \in \Diffpan$.
Moreover, by the characterization of locality in Item~\ref{prop:local_equiv_5} of Proposition~\ref{prop:local_equiv}, this function enjoys the symmetry
\begin{equation}
	\label{eq:lpot_inversion}
	\Omega^Z_{1,1}\Big(\disk \diffActing{1} \phi, \disk\Big) = \Omega^Z_{1,1}\Big(\disk \diffActing{1} \invinv(\phi), \disk\Big), \qquad \phi \in \Diffpan,
\end{equation}
where $\invinv(\phi) = \inversion \circ \phi^{-1} \circ \inversion$ is the involution~\eqref{eq:def_invinv} and $\inversion(z) = z^{-1}$ the inversion~\eqref{eq:inversion}. 
We use this to find that any M\"obius transformation $\phi \in \psl \subset \Diffpan$ is a critical point.
\begin{lemma}
	\label{lemma:crit_pt}
	For any \gls{fsmooth} curve $\gamma \in \curves{\DefC}$ rooted at $\gamma(0) \in \psl$, we~have
	\begin{equation}
		\eval{\pdv{t}}_{t = 0}
		\Omega^Z_{1,1}\Big(\disk \diffActing{1} \gamma(t, \blank), \disk\Big) = 0.
		\label{eq:crit_pt}
	\end{equation}
\end{lemma}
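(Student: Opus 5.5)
The plan is to reduce to the root $\gamma(0)=\id$ and then show that the differential of
\[
F(\phi) := \Omega^Z_{1,1}\big(\disk \diffActing{1} \phi, \disk\big), \qquad \phi \in \DefC,
\]
at $\id$ kills every vector field in $\VectC$. This $F$ is defined on all of $\DefC$ by Item~\ref{lemma:complex_deformations:unit_disk} of Lemma~\ref{lemma:complex_deformations}, and it is \gls{fsmooth}. By~\eqref{eq:flow_right}, $\frac{\dd}{\dd t}\big|_{t=0}F(\gamma(t,\blank))$ depends only on the vector field $v\in\VectC$ at $t=0$, and it is real-linear in $v$; call it $\dd F(v)$. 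The idea is to split $v$ into a part holomorphic inside the disk and a part holomorphic outside, and to kill each part with a different invariance of $F$.

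\emph{Step 1: invariances of $F$.} First, $F$ is invariant under left composition with $J\circ\chi\circ J$ for $\chi\in\univalent$, whenever the composite is defined. Indeed, $\disk \diffActing{1} (J\chi J\circ\psi)$ has boundary parametrization $\chi\circ J\circ\psi$. The map $\chi$, applied as an isomorphism exactly as in the proof of Item~\ref{lemma:complex_deformations:disk_invariant} of Lemma~\ref{lemma:complex_deformations}, identifies this disk with $\disk\diffActing{1}\psi$. Hence $F(J\chi J\circ\psi)=F(\psi)$. Since $\psl\subset J\,\univalent\, J$ up to rotations, and rotations also preserve $\disk$, this covers left composition with $\gamma(0)\in\psl$. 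Second, I use the symmetry~\eqref{eq:lpot_inversion}, $F=F\circ\invinv$. On $\Diffpan$ it is stated directly. For complex deformations near $\psl$ I would rederive it from Item~\ref{prop:local_equiv_5} of Proposition~\ref{prop:local_equiv} together with the mixed identity~\eqref{eq:cocycle_id_mixed_3} and Corollary~\ref{cor:involution_trivial}. If that extension turns out to fail, I would use the symmetry only along real curves, which is enough for the real-linear argument below after pairing $v$ with $\ii v$. Combining the two invariances gives $F(\psi\circ m)=F(\invinv(m)\circ\invinv(\psi))=F(\psi)$ for $m\in\psl$. Consequently $F(\gamma(t))=F(\gamma(0)^{-1}\circ\gamma(t))$, and the curve $\gamma(0)^{-1}\circ\gamma(t)$ is rooted at $\id$, with vector field still in $\curves{\VectC}$. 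So it suffices to treat $\gamma(0)=\id$.

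\emph{Step 2: linearization at $\id$.} Differentiating the left invariance along flows of vector fields that extend holomorphically to a neighbourhood of $\{|z|\ge1\}\cup\{\infty\}$ shows that $\dd F$ vanishes on the $\C$-span of the outer fields. These are the $J$-conjugates of the tangent directions of $\univalent$, namely fields holomorphic in the disk and vanishing at $0$. In the generators $\ell_n=z^{n+1}\partial_z$, the fields holomorphic outside are spanned by $\ell_n$ with $n\le 0$, and those holomorphic inside and vanishing at $0$ by $\ell_n$ with $n\ge 0$. Differentiating $F=F\circ\invinv$ instead, using $\frac{\dd}{\dd t}\invinv(\Phi^v(t,\blank))|_{t=0} = -J^*v$ and $J^*\ell_n=-\ell_{-n}$, gives
\[
\dd F(v)=\dd F(-J^*v).
\]
So the inner fields are mapped onto the outer ones, and $\dd F$ vanishes on those as well. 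Finally, any $v\in\VectC$ is real-analytic on an annulus, and its Laurent decomposition writes it as a \emph{finite} sum $v=v_{\rm out}+v_{\rm in}$ of an outer and an inner field, the $\ell_0$-term going into either part. Therefore no density or continuity argument is needed, and $\dd F(v)=0$, which is~\eqref{eq:crit_pt}.

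\emph{Main obstacle.} The delicate point is Step~1: justifying both invariances for \emph{complex} deformations near the identity, rather than only for diffeomorphisms. For the left invariance this is the domain issue of whether $\chi$ extends univalently to a neighbourhood of the relevant closed region; this is harmless for $t$ small, because $\Phi^v(t,\blank)\to\id$. For the inversion symmetry it is the extension of~\eqref{eq:lpot_inversion} off $\Diffpan$. If that extension is problematic, I would restrict to real curves together with the real-linearity of $\dd F$, as indicated in Step~1.
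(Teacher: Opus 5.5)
The gap is in Step~2, at the imaginary inner modes $\ii\ell_n$ with $n\ge2$. Write $\mathrm{d}F$ for the differential of $F(\phi)=\Omega^Z_{1,1}(\disk\diffActing{1}\phi,\disk)$ at $\id$.

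\emph{The fallback does not work.} The identity $\mathrm{d}F(v)=\mathrm{d}F(-\inversion^*v)$ is available only for $v\in\VectR$. This is because \eqref{eq:lpot_inversion} comes from Item~\ref{prop:local_equiv_5} of Proposition~\ref{prop:local_equiv}, which holds for $\phi\in\Diffpan$ only. The space $\VectR$ is spanned by $\ell_n-\ell_{-n}$ and $\ii(\ell_n+\ell_{-n})$. The map $v\mapsto-\inversion^*v$ negates the first kind, which together with your outer vanishing does give $\mathrm{d}F(\ell_n)=0$. But it fixes the second kind, so it says nothing about $\mathrm{d}F(\ii\ell_n)$. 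Since $\mathrm{d}F$ is only $\R$-linear, no ``pairing of $v$ with $\ii v$'' recovers $\mathrm{d}F(\ii\ell_n)$. This is exactly the obstruction flagged in the paper's proof.

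\emph{The primary route does not work either, because the symmetry is false off $\Diffpan$.} Take $\phi=\inversion\circ\chi\circ\inversion$ with $\chi(z)=z+\epsilon z^3\in\univalent$ and $\epsilon\neq0$ small. Item~\ref{lemma:complex_deformations:disk_invariant} of Lemma~\ref{lemma:complex_deformations} gives $F(\phi)=F(\id)$. On the other hand, $\invinv(\phi)=\chi^{-1}$, and $\disk\diffActing{1}\chi^{-1}\neq\disk$: an isomorphism would have to coincide with $\inversion\circ\chi\circ\inversion(w)=w^3/(w^2+\epsilon)$, which is not injective near $0$. Hence the seam of $(\disk\diffActing{1}\chi^{-1})\sew{1}{1}\disk$ is not a circle (Proposition~\ref{prop:projective}). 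For $\Detrc$ with $\charge\neq0$, Remark~\ref{remark:logdet} then gives $F(\chi^{-1})-F(\id)=-\frac{\charge}{24}\lenergy{}(\textnormal{seam})\neq0$. Geometrically, for $\phi\notin\Diffpan$ the spheres $(\disk\diffActing{1}\phi)\sew{1}{1}\disk$ and $\disk\sew{1}{1}(\disk\diffActing{1}\invinv(\phi))$ coincide, but their seams are different curves. So \eqref{eq:cocycle_id_mixed_3} and Corollary~\ref{cor:involution_trivial}, which for $\Sigma_1=\Sigma_2=\Sr=\disk$ would give $F=F\circ\invinv$, cannot be invoked for complex deformations here. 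At first order at $\id$ the identity does hold, but only because both sides vanish, which is what you are trying to prove.

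\emph{The missing idea is the paper's rotation argument.} For every $\psi\in\DefC$ one has $F(\rotation_\alpha\circ\psi\circ\rotation_{-\alpha})=F(\psi)$. The left rotation is absorbed since $\disk\diffActing{1}\rotation_\alpha=\disk$. The right one is moved across the seam by \eqref{eq:local_inversion}; this is legitimate because $\rotation_{-\alpha}\in\Diffpan$ and $\disk\diffActing{1}\invinv(\rotation_{-\alpha})=\disk$. Differentiating gives $\mathrm{d}F(\rotation_\alpha^*v)=\mathrm{d}F(v)$. Since $\rotation_\alpha^*\ell_n=e^{\ii n\alpha}\ell_n$, the choice $\alpha=\frac{\pi}{2n}$ yields $\mathrm{d}F(\ii\ell_n)=\mathrm{d}F(\ell_n)=0$. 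With $\alpha=\frac{\pi}{n}$, rotations alone already kill every mode $n\neq0$. The argument then closes mode by mode, with general $v$ handled by continuity of $\mathrm{d}F$; so your remark that no continuity is needed no longer applies.

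Two minor points. First, your reduction to $\gamma(0)=\id$ needs only left invariance, $F(\gamma(0)^{-1}\circ\gamma(t))=F(\gamma(t))$, not the right invariance you derived from the complex symmetry. Second, $\psl$ is not contained in $\inversion\,\univalent\,\inversion$ up to rotations, since disk automorphisms moving $0$ are not of this form. Nevertheless, $\disk\diffActing{1}m=\disk$ for $m\in\psl$ holds directly.
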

\begin{proof}
	By Item~\ref{lemma:complex_deformations:disk_invariant} of Lemma~\ref{lemma:complex_deformations}, we have $\disk \diffActing{1} \phi = \disk$ for $\phi \in \psl$.
	Thus, we may assume that $\gamma$ is rooted at $\id$. 
	We give a proof in terms of a vector field $v \in \VectC$, for which the curve becomes $\gamma = \Phi^{v}$ as in Equation~\eqref{eq:flow_right}.
	Denote the left-hand side of Equation~\eqref{eq:crit_pt} by $\lpotvect(v)$.
	First, for $v = \ell_{-n}$ with $n > -1$, the result follows from Item~\ref{lemma:complex_deformations:disk_invariant} of Lemma~\ref{lemma:complex_deformations}.
	For the remaining $\ell_{n}$ with $n > 1$, we begin with the tangential vector field $\ellpar{n} = \frac{\ell_n - \ell_{-n}}{2}$ (see also~\cite[Section~2.1]{Maibach-Peltola:Complex_deformations_of_the_circle}).
	By $\R$-linearity and the case above, we have
	\begin{equation}
		\lpotvect(\ellpar{n})
		=
		\frac{1}{2} \lpotvect(\ell_{n}), \qquad
		\lpotvect(\inversion^* \ellpar{n})
		=
		\frac{1}{2} \lpotvect(\ell_{n}).
	\end{equation}
	By the symmetry~\eqref{eq:lpot_inversion}, we also have 
	\begin{equation}	
	\lpotvect(\ellpar{n}) = - \lpotvect(\inversion^* \ellpar{n}). 
	\end{equation}	
	Both equalities can only hold if $\lpotvect(\ell_{-n}) = 0$.

	As $\lpotvect$ is only $\R$-linear, we have to treat the case of the vector fields $\ii \ell_{\pm n}$ separately.
	However, the argument for $\ell_{n}$ does not work for $\ii \ell_{n}$, since the tangential vector $\ellpari{n}$ does not have a sign difference between $\ii \ell_n$ and $\ii \ell_{-n}$.
	Conjugating $\ell_n$ by the rotation $\rotation_{\alpha}$ yields
	\begin{equation}
		\label{eq:rotation_i}
		\rotation_\alpha^* \ell_n = -(e^{\ii \alpha} z)^{n+1} e^{-\ii \alpha}\partial_z = e^{\ii n \alpha} \ell_n.
	\end{equation}
	Thus, for $n \neq 0$ and $\alpha_n = \frac{\pi}{2n}$, we have $\rotation_{\alpha_n}^* \ell_n = \ii \ell_n$.
	Considering the identities
	\begin{equation}
		\inversion \circ \rotation_\alpha \circ \inversion = \rotation_{-\alpha}, \qquad
		\rotation_\alpha^{-1} = \rotation_{-\alpha}, \qquad
		\disk \diffActing{1} \rotation_\alpha = \disk \in \disks,
	\end{equation}
	and the reparametrization invariance~\eqref{eq:local_inversion} applied to $\Omega^Z_{1,1}$, the function $\Omega^Z_{1,1}(\disk \diffActing{1} \phi, \disk)$ is invariant under both pre- and postcomposition by rotations at any $\phi \in \DefC$:
	\begin{equation}
	\begin{aligned}
		\Omega^Z_{1,1}\Big(\disk \diffActing{1} (\rotation_\alpha \circ \phi \circ \rotation_{\beta}), \disk\Big)
		= \; & \Omega^Z_{1,1}\Big(\disk \diffActing{1} (\rotation_\alpha \circ \phi), \disk \diffActing{1} \invinv(\rotation_{\beta})\Big) \\
		= \; & \Omega^Z_{1,1}\Big(\disk \diffActing{1} \phi, \disk\Big) , \qquad \alpha, \beta \in [0, 2\pi).
	\end{aligned}
	\end{equation}
	Returning to the variation, we now find that
	\begin{equation}
		\lpotvect(\ii \ell_{n})
		= \lpotvect(\rotation_{\alpha_n}^* \ell_{n})
		=
		\eval{\pdv{t}}_{t = 0}
		\Omega^Z_{1,1}\Big(\disk \diffActing{1}
		\rotation_{\alpha_n} \circ \Phi_{\ell_{n}} \circ \rotation_{-\alpha_n},
		\disk\Big)
		= \lpotvect(\ell_{n}) = 0,
	\end{equation}
	and analogously for $\ii \ell_{-n}$. This finishes the proof of~\eqref{eq:crit_pt}. 
\end{proof}

\subsection{The variational formula}
\label{section:diskdisk_variational}

The relation between the diffeomorphism $\phi \in \Diffpan$ and the analytic loop $\gamma_\phi$ in Theorem~\ref{thm:lpot} is given by conformal welding.
For real-analytic boundary behavior of the univalent functions and $\Diffpan$, we provide a rather elementary proof using the Riemann mapping theorem --- see~\cite{Takhtajan-Teo:Weil-Petersson_metric_on_the_universal_Teichmuller_space} for the more general statement using quasiconformal boundary behavior.
In our formulation of conformal welding (Proposition~\ref{prop:conformal_welding}), the Riemann uniformizing mappings of the left and right sides of the loop are interpreted as univalent functions, which are also complex deformations (this is the set $\univalent$ defined in Section~\ref{section:previous}).
We also prove \gls{fsmooth}ness properties of conformal welding.

\begin{proposition}[Conformal welding for analytic diffeomorphisms]
    \label{prop:conformal_welding}
	With the following normalizations, either the diffeomorphism or one of the univalent functions, all related through composition of complex deformations as
	\begin{equation}
		\label{eq:conformal_welding}
		(\inversion \circ \zeta^- \circ \inversion)	\circ \phi = \zeta^+, \qquad \phi \in \Diffpan, \quad \zeta^+, \zeta^- \in \univalent,
	\end{equation}
	determines the other two:
	\begin{enumerate}
		\item
			Given $\phi \in \Diffpan$ and $a \in \C \setminus \{0\}$, there exist unique $\zeta^+, \zeta^- \in \univalent$ such that~\eqref{eq:conformal_welding} holds and
			\begin{equation}
				\zeta^+(S^1) = (\inversion \circ \zeta^- \circ \inversion)(S^1), \quad \zeta^+(0) = 0, \quad (\zeta^+)'(0) = a, \quad \zeta^-(0) = 0.
			\end{equation}
			Moreover, the map $\Diffpan \to \univalent \times \univalent$ sending $\phi \mapsto (\zeta^+, \zeta^-)$ is \gls{fsmooth}.
			\label{prop:conformal_welding_item1}
		 \item
			Given $\phi \in \Diffpan$ and $a \in \C \setminus \{0\}$, there exist unique $\zeta^+, \zeta^- \in \univalent$ such that~\eqref{eq:conformal_welding} holds and
			\begin{equation}
				\zeta^+(S^1) = (\inversion \circ \zeta^- \circ \inversion)(S^1), \quad \zeta^+(0) = 0, \quad \zeta^-(0) = 0, \quad (\zeta^-)'(0) = a.
			\end{equation}
			Moreover, the map $\Diffpan \to \univalent \times \univalent$ sending $\phi \mapsto (\zeta^+, \zeta^-)$ is \gls{fsmooth}.
			\label{prop:conformal_welding_item2}
		\item
			Given $\zeta^+ \in \univalent$, there exist unique $\phi \in \Diffpan$ and $\zeta^- \in \univalent$ such that~\eqref{eq:conformal_welding} holds and $(\zeta^-)'(0) > 0$.
			Moreover, the map $\univalent \to \Diffpan$ sending $\zeta^+ \mapsto \phi$ is \gls{fsmooth}.
			\label{prop:conformal_welding_item3}
		\item
			Given $\zeta^- \in \univalent$, there exist unique $\phi \in \Diffpan$ and $\zeta^+ \in \univalent$ such that~\eqref{eq:conformal_welding} holds and $(\zeta^+)'(0) > 0$.
			Moreover, the map $\univalent \to \Diffpan$ sending $\zeta^- \mapsto \phi$ is \gls{fsmooth}.
			\label{prop:conformal_welding_item4}
	\end{enumerate}
\end{proposition}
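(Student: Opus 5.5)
We prove Items~\ref{prop:conformal_welding_item1} and~\ref{prop:conformal_welding_item3} in detail. Items~\ref{prop:conformal_welding_item2} and~\ref{prop:conformal_welding_item4} then follow by symmetry: we exchange the roles of $\zeta^+$ and $\zeta^-$ via the involution $\invinv$, rewriting~\eqref{eq:conformal_welding} as $(\inversion\circ\zeta^+\circ\inversion)\circ\invinv(\phi)^{-1}\ldots$. Equivalently, we apply the construction to $\invinv(\phi) = \inversion \circ \phi^{-1} \circ \inversion$, which is again in $\Diffpan$, and use Item~\ref{lemma:complex_deformations:conj} of Lemma~\ref{lemma:complex_deformations} for smoothness.

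\emph{Existence and uniqueness in Item~\ref{prop:conformal_welding_item1}.} Given $\phi$, form the sphere $(\disk \diffActing{1} \phi) \sew{1}{1} \disk$. This is a compact simply connected Riemann surface without boundary; its complex structure across the seam is well defined because $\phi$ is real-analytic, so it extends to a biholomorphism of an annular neighbourhood of $S^1$. By uniformization it is isomorphic to $\hat\C$. We choose the uniformizing map $F$ so that the center of the first disk goes to $0$ and the center of the second (inverted) disk goes to $\infty$. The restrictions of $F$ to the two closed disks, written in the coordinates $z$ and $\inversion$, give $\zeta^+$ and $\inversion\circ\zeta^-\circ\inversion$. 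Both extend univalently past $S^1$ because the seam is analytic, so they lie in $\univalent$. The gluing rule of the sewing operation is precisely~\eqref{eq:conformal_welding}, and the two images share the boundary curve $\gamma_\phi = \zeta^+(S^1)$.

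The remaining freedom in $F$ is post-composition by a M\"obius map fixing $0$ and $\infty$, that is, a scaling $z\mapsto \lambda z$ with $\lambda \in \C\setminus\{0\}$. This freedom is fixed exactly by prescribing $(\zeta^+)'(0)=a$. Uniqueness follows in the same way: any other solution defines a biholomorphism between two uniformizations of the same sphere, hence a M\"obius map fixing $0$ and $\infty$ with derivative $1$ at $0$, hence the identity. For Item~\ref{prop:conformal_welding_item3}, given $\zeta^+$ we take the Riemann map of the complement $\hat\C\setminus\overline{\zeta^+(\odisk)}$, normalized at $\infty$ with positive derivative. This defines $\zeta^-$, and then we set $\phi=(\inversion\circ\zeta^-\circ\inversion)^{-1}\circ\zeta^+$ on $S^1$. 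This $\phi$ is an analytic diffeomorphism of $S^1$, by Schwarz reflection across the analytic curve.

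\emph{Fr\"olicher smoothness.} This is the main obstacle. We must show that Fr\"olicher smooth curves are mapped to Fr\"olicher smooth curves. By~\eqref{eq:flow_right}, it suffices to take $\phi_t=\Phi^v(t,\cdot)\circ\phi$ with $v$ a smoothly time-dependent field in $\VectR$, and to show that $\zeta^\pm_t$ are smooth curves in $\DefC$ in the same sense. The plan is to use the Fr\"olicher smoothness of the sewing operation and of the $\DefC$-actions on $\moduligb$ from~\cite{Maibach-Peltola:Complex_deformations_of_the_circle}. The curve $t\mapsto(\disk\diffActing{1}\phi_t)\sew{1}{1}\disk$ is a smooth curve in $\spheres$, so the Virasoro uniformization theorem produces the infinitesimal deformation as a splitting of $v_t\circ\zeta^+_t{}^{-1}$ along $\gamma_{\phi_t}$. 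Concretely, we write
\begin{equation}
	\dot\zeta^+_t = (\zeta^+_t)'\, w^+_t, \qquad \dot\zeta^-_t = (\zeta^-_t)'\, w^-_t,
\end{equation}
with $w^\pm_t$ obtained by decomposing the pushed-forward field on the analytic loop into pieces holomorphic inside and outside, modulo $\mobvect$. This decomposition is a Cauchy integral over a slightly enlarged annulus, where everything is holomorphic. Holomorphic dependence on parameters of the Cauchy kernel, together with uniform bounds on compact annuli (the inductive-limit topology of $\mathcal{O}(S^1)$), then shows that $w^\pm_t$ is a smooth curve in $\VectC$. The normalizations (value at $0$, derivative $a$ or positivity) fix the $\mobvect$ ambiguity smoothly in $t$.

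The difficulty we expect is to control a common annular domain of holomorphy uniformly for $t$ in compact intervals. We would obtain this from compactness together with continuity of the Riemann maps in Carath\'eodory kernel convergence. The opposite direction, $\zeta^+\mapsto\phi$, is then composition and inversion of complex deformations, which are Fr\"olicher smooth by Definition~\ref{def:defc} and the accompanying remark.
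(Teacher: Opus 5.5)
Your existence and uniqueness argument is the paper's. You uniformize the sphere obtained by sewing two disks, use the normalizations to remove the residual M\"obius freedom $z\mapsto\lambda z$, and for Items 3 and 4 take the normalized Riemann map of the complementary domain.

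There is, however, a bookkeeping slip in which disk you deform. The paper's sewing convention glues $\xi_k(z)\in\partial_k\Sigma_2$ to $\zeta_j(1/z)\in\partial_j\Sigma_1$. Under it, the restrictions $F_1,F_2$ of the uniformizing map to the two disks of $(\disk\diffActing{1}\phi)\sew{1}{1}\disk$ satisfy $F_2=F_1\circ\inversion\circ\phi$ on $S^1$. Your assignment $\zeta^+=F_1$, $\inversion\circ\zeta^-=F_2$ then gives $\zeta^+=(\inversion\circ\zeta^-\circ\inversion)\circ\invinv(\phi)$. That is the welding equation for $\invinv(\phi)$, not for $\phi$. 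This is why the paper sews $\disk\diffActing{1}\invinv(\phi)$, whose boundary parametrization is $\phi^{-1}\circ\inversion$, to $\disk$. Alternatively, keep your sphere and set $\zeta^-=F_1$, $\zeta^+=\inversion\circ F_2$. Either fix is immediate, since $\invinv$ is a smooth involution of $\Diffpan$ that you already use for Items 2 and 4.

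For Fr\"olicher smoothness your route genuinely differs from the paper's. The paper does not linearize at all; it cites smooth parameter dependence of normalized Riemann maps for smooth families of boundary parametrizations (Bell, Theorem~28.1). Your infinitesimal welding splits the pushed-forward field on $\gamma_{\phi_t}$ into parts holomorphic inside and outside, modulo $\mobvect$. This is intrinsic to the inductive-limit topology of $\mathcal{O}(S^1)$ and yields the tangent map explicitly. It is arguably the more natural tool for Items 1 and 2, where the curve is an unknown, whereas Bell's theorem is tailored to a given moving curve.

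As written, though, your argument is circular. The splitting only tells you what $\dot\zeta^\pm_t$ must be \emph{if} it exists. The claim that $w^\pm_t$ is a smooth curve in $\VectC$ already presupposes smooth $t$-dependence of $\zeta^\pm_t$ and $\gamma_{\phi_t}$. You need an actual differentiability step, for instance:
\begin{enumerate}
\item establish continuity in $t$;
\item show that the difference quotients of $\zeta^\pm_t$ satisfy the linearized welding equation up to $o(1)$;
\item invert that equation with the Cauchy splitting, which is bounded up to a loss of annulus;
\item bootstrap to higher derivatives.
\end{enumerate}
Alternatively, import an a priori result such as Bell's theorem. The appeal to the Virasoro uniformization theorem carries no information here, since $\spheres$ is a point.

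Conversely, the uniform annulus you flag as the main difficulty comes for free. Locally in $t$, the curve $\phi_t$ extends holomorphically to a fixed annulus. The identity $\zeta^+_t=\inversion\circ\zeta^-_t\circ\inversion\circ\phi_t$ then continues $\zeta^+_t$ univalently to $\{|z|<1+\varepsilon\}$ with $\varepsilon$ independent of $t$, and symmetrically for $\zeta^-_t$. No Carath\'eodory kernel argument is needed for that.

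Finally, in Items 3 and 4 the map $\zeta^+\mapsto\phi$ is not just composition and inversion. It contains the exterior Riemann map $\zeta^+\mapsto\zeta^-$, whose smooth dependence on $\zeta^+$ still has to be proved. This is precisely what Bell's theorem supplies.
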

\noindent
See also the inner part of Figure~\ref{fig:univalent_composition}.
\begin{proof}
	Given $\phi \in \Diffpan$, consider the disks $\disk \diffActing{1} \invinv(\phi) = \param{\cdisk, \phi^{-1} \circ \inversion}$ and $\disk = \param{\cdisk, \inversion}$ with analytically parameterized boundaries.
	The sphere ${(\disk \diffActing{1} \invinv(\phi)) \sew{1}{1} \disk \in \spheres}$, obtained by sewing these two disks, is isomorphic to the Riemann sphere by an isomorphism which is unique only up to M\"obius transformations.
	Given a choice of isomorphism, it restricts to the two embeddings $F_1$ and $F_2$	of the closed unit disk into $\hat \C$.
	Then, the normalization may be changed by post-composition with a M\"obius transformation $F \in \mob$:
	\begin{equation}
	\begin{tikzcd}[column sep = small, row sep = tiny, ampersand replacement=\&]
		{S^1} \&\& \disk \\
		\&\&\&\& {\hat{\C}} \&\& {\hat{\C}} \\
		{S^1} \&\& \disk
		\arrow["{\phi^{-1} \circ \inversion}", from=1-1, to=1-3]
		\arrow["\inversion"', from=1-1, to=3-1]
		\arrow["{F_1}", from=1-3, to=2-5]
		\arrow["{\zeta^+}", curve={height=-12pt}, from=1-3, to=2-7]
		\arrow["{\inversion \circ \phi}"', from=1-3, to=3-3]
		\arrow["F", from=2-5, to=2-7]
		\arrow["\inversion"', from=3-1, to=3-3]
		\arrow["{F_2}"', from=3-3, to=2-5]
		\arrow["{\inversion \circ \zeta^-}"', curve={height=12pt}, from=3-3, to=2-7]
	\end{tikzcd}
	\end{equation}
	Since the identification of the boundary components of the two disks is given by $\inversion \circ \phi$, we find that, up to normalization, the compositions $F \circ F_1$ and $\inversion \circ F_2 \circ F$ respectively become the univalent functions $\zeta^+$ and $\zeta^-$ in Equation~\eqref{eq:conformal_welding}.
	Thus, we let $F$ be the unique M\"obius transformation such that $\zeta^+(0) = F(F_1(0)) = 0$, and $(\zeta^+)'(0) = F'(F_1(0)) F_1'(0) = a$, and $\inversion(\zeta^-(0)) = F(F_2(0)) = \infty$, for Item~\ref{prop:conformal_welding_item1} and analogously for Item~\ref{prop:conformal_welding_item2}.

	Given $\zeta^+ \in \univalent$, let $F$ be the Riemann mapping from $\cdisk$ to the complement of $\zeta^+(\cdisk)$ such that $F(0) = \infty$ and $F'(0) < 0$. Define $\zeta^- = \inversion \circ F$.
	As $\zeta^-(0) = \inversion(\infty) = 0$ and $(\zeta^-)'(0) = -F'(0) / F(0)^2 > 0$, the definition matches the normalization, and the composition $(\inversion \circ \zeta^- \circ \inversion)^{-1} \circ \zeta^+ = \inversion \circ F^{-1} \circ \zeta^+$ indeed restricts to an orientation-preserving diffeomorphism $\phi \in \Diffpan$.
	This implies Item~\ref{prop:conformal_welding_item3}. 
	Item~\ref{prop:conformal_welding_item4} can be proven analogously.

	For the \gls{fsmooth}ness, note that a one-parameter family of Riemann maps with the same normalization associated to a smooth one-parameter family of smooth parametrizations of a boundary curve depends smoothly on the parameter~\cite[Theorem~28.1]{Bell:Cauchy_transform_potential_theory_and_conformal_mapping}.
\end{proof}

\noindent 
\emph{Proof of Theorem~\ref{thm:lpot}.}
Now we begin with the proof of Theorem~\ref{thm:lpot}, computing the variation of the disk-disk cocycle at a general point, which comprises the rest of this section.
Let $\phi^+, \phi^- \in \Diffpan$ be two diffeomorphisms related by $\phi^- = \invinv(\phi^+)$.
Then, any tangent vector $[\gamma^+]_\sim \in T_{\phi^+} \Diffpan$ represented by a curve $\gamma^+ \in \curves{\DiffC}$ rooted at $\gamma^+(0, \blank) = \phi^+$ is related to a tangent vector $[\gamma^-]_\sim \in T_{\phi^-} \Diffpan$ rooted at $\gamma^-(0, \blank) = \phi^-$ by $\gamma^- = \invinv(\gamma^+)$.
This setup simplifies working with the symmetry~\eqref{eq:lpot_inversion}, which implies that
\begin{equation}
	\Omega^Z_{1,1}\Big(\disk \diffActing{1} \gamma^+(t, \blank), \disk\Big) =
	\Omega^Z_{1,1}\Big(\disk \diffActing{1} \gamma^-(t, \blank), \disk\Big).
	\label{eq:ddsymmetry}
\end{equation}
By applying Item~\ref{prop:conformal_welding_item2} of Proposition~\ref{prop:conformal_welding} for any $|a| > 4$, normalizing $(\zeta^-)'(0) = a$, we find conformal welding decompositions of the diffeomorphisms $\phi^+, \phi^-$ and the curves $\gamma^+, \gamma^-$:
\begin{equation}
	\label{eq:disk_disk_welding}
	(\inversion \circ \zeta^\mp \circ \inversion) \circ \phi^\pm = \zeta^\pm, \qquad
	\big(\inversion \circ \eta^\mp(t, \blank) \circ \inversion\big) \circ \gamma^\pm(t, \blank) = \eta^\pm(t, \blank),
\end{equation}
where $\zeta^+, \zeta^- \in \univalent$, and $\eta^+, \eta^- \in \curves{\univalent}$ are such that $\eta^+(0, \blank) = \zeta^+$ and $\eta^-(0, \blank) = \zeta^-$.
The normalization is such that the by the Koebe $1/4$ theorem, the image of $\zeta^-$ contains the unit disk, and thus, $\zeta^+(S^1) = (\inversion \circ \zeta^- \circ \inversion)(S^1)$ is a subset of the open unit disk --- see Figure~\ref{fig:univalent_composition} (this becomes relevant later in the proof).
By Item~\ref{lemma:complex_deformations:disk_invariant} of Lemma~\ref{lemma:complex_deformations}, we can substitute $\disk = \disk \diffActing{1} (\inversion \circ \eta^\mp \circ \inversion)$ into the left-hand side of Equation~\eqref{eq:disk_disk_welding}, resulting in
\begin{equation}
	\label{eq:cocycle_zeta_1}
	\Omega^Z_{1,1}\Big(\disk \diffActing{1} \gamma^\pm(t, \blank), \disk\Big) = \Omega^Z_{1,1}\Big(\disk \diffActing{1} \eta^\pm(t, \blank), \disk\Big).
\end{equation}
Let $v^+, v^- \in \VectC$ be the vector fields such that 
\begin{equation}
[\eta^\pm(t, \blank)]_\sim = [\Phi^{v^\pm}(t, \blank) \circ \zeta^\pm]_\sim \in T_{\zeta^\pm} \univalent,
\end{equation}
in terms of the right-trivializing flow; cf.~Equation~\eqref{eq:flow_right} and~\cite[Remark~2.7]{Maibach-Peltola:Complex_deformations_of_the_circle}.
We use the special case of the variational formula for the universal Liouville action by Takhtajan~\&~Teo~\cite{Takhtajan-Teo:Weil-Petersson_metric_on_the_universal_Teichmuller_space} in terms of the vector fields $v^\pm$ given in~\cite[Theorem~5.1]{VPWW:Epstein_curves_and_holography-of_the_Schwarzian_action}.
Over the corresponding one-parameter family of simple analytic loops $\zeta^\pm(t, S^1) \subset \hat \C$, the variation is given by pairing the \emph{Schwarzian derivative} $\schwarzian{f}(z) = \big(f''(z)/f'(z)\big)^2 - \frac{1}{2} \big(f''(z)/f'(z)\big)^2$ as a quadratic differential with the vector field $v^\pm$:
\begin{equation}
	\eval{\pdv{t}}_{t = 0}
	\lenergy{}\Big(\zeta^\pm(t, S^1)\Big) = -\frac{2}{\pi} \int_{S^1} v^\pm \; \schwarzian{{\invinv(\zeta^\mp)}}.
	\label{eq:lenergy_variation}
\end{equation}
We proceed to relate this formula to the derivative of Equation~\eqref{eq:cocycle_zeta_1}, which in terms of $v^\pm$ becomes
\begin{equation}
	\label{eq:defthetav}
	\eval{\pdv{t}}_{t = 0}
	\Omega^Z_{1,1}\Big(\disk \diffActing{1} \gamma^\pm, \disk\Big)
	=
	\eval{\pdv{t}}_{t = 0}
	\Omega^Z_{1,1}\Big((\disk \diffActing{1} \Phi^{v^\pm}) \diffActing{1} \zeta^\pm, \disk\Big).
\end{equation}
Consider the right-hand side of Equation~\eqref{eq:defthetav} for general vector fields $v \in \VectC$ replacing $v^\pm$.
If $v$ is holomorphic on a neighborhood of $\hat{\C} \setminus \disk$, we have $\disk \diffActing{1} \Phi^{v} = \disk$ by Item~\ref{lemma:complex_deformations:disk_invariant} of Lemma~\ref{lemma:complex_deformations}.
Thus, these variations only see the positive modes of $v$, and they vanish for $v \in \mobvect$.
By \gls{fsmooth}ness of both the cocycle $\Omega^Z_{1,1}$ and the action of $\DefC$ on $\disks$, we find that Equation~\eqref{eq:defthetav} is a \gls{fsmooth} function of $v$.
We now use the following result to identify unique integral representations of $\eval{\pdv{t}}_{t = 0} \Omega^Z_{1,1}\big((\disk \diffActing{1} \Phi^v) \diffActing{1} \zeta^\pm, \disk\big)$.

\begin{proposition}
	\label{prop:integral_representation}
	Given any \gls{fsmooth} $\R$-linear functional $F \in (\VectC)^\vee$ vanishing on $\mobvect = \vspan_{\C}\{\ell_{-1}, \ell_0, \ell_{1}\}$, there exist unique holomorphic quadratic differentials $Q^+$ on $\odisk$ and $Q^-$ on $\jodisk$ such that $Q^-$ vanishes at $\infty$, and
	\begin{equation}
		\label{eq:integral_repr}
		F(v) = \Re \Bigg( \int_{(1-\varepsilon)S^1}  v \; Q^+ + \int_{(1+\varepsilon)S^1} v \; Q^- \Bigg),
	\end{equation}
	where $\varepsilon > 0$ must be chosen depending on $v$ so that the integral exists (yet, the value of $F(v)$ in~\eqref{eq:integral_repr} is independent of $\varepsilon$).
\end{proposition}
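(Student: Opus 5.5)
The natural approach is to think of the Witt algebra $\VectC$ as the convenient vector space $\mathcal{O}(S^1)$ of real-analytic functions on the circle, with $v = v(z)\,\partial_z$. Every such $v$ extends holomorphically to an annulus $\{1-\varepsilon<|z|<1+\varepsilon\}$ and splits there by Laurent expansion: $v = v_{\geq} + v_{<}$, with $v_{\geq}$ holomorphic on a neighborhood of the closed unit disk and $v_{<}$ holomorphic near $\jcdisk$, vanishing at $\infty$. The dual of the space of germs of holomorphic functions along a circle is classically the space of hyperfunctions (Köthe--Grothendieck duality; cf.\ the reference to Morimoto in the section introduction). I would use this duality directly rather than rebuild it.

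\textbf{Step 1: reduce to a complex-linear functional.} Set
\begin{equation*}
\tilde F(v) = F(v) - \ii F(\ii v).
\end{equation*}
This is $\C$-linear, Fr\"olicher smooth, and therefore continuous on the convenient vector space, since bounded linear maps there are the smooth ones. It satisfies $\Re \tilde F = F$, and it vanishes on $\mobvect$ because $\mobvect$ is a complex subspace. It then suffices to represent $\tilde F(v) = \int_{(1-\varepsilon)S^1} v\,Q^+ + \int_{(1+\varepsilon)S^1} v\,Q^-$ and take real parts. One might worry that the map $F \mapsto \tilde F$ loses information, but it does not: $F$ is recovered from $\tilde F$.

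\textbf{Step 2: construct the kernels from the coefficients.} Let $c_n = \tilde F(\ell_n)$ for $n\in\Z$. Continuity on the inductive limit means that, for each $r>1$, $\tilde F$ is bounded on the Banach space of functions holomorphic on the annulus $r^{-1}<|z|<r$. Testing on monomials gives $|c_n|\le C_r r^{|n|}$ for every $r>1$, so the $c_n$ grow subexponentially.
\begin{itemize}
\item For $Q^+ = q^+(z)\,dz^2$, I would match the coefficients $c_n$ for $n\le -2$, i.e.\ $v = -z^{n+1}$ with negative powers. Using $\int_{|z|=\rho} z^{k}\,dz = 2\pi\ii\,\delta_{k,-1}$, this amounts to choosing Taylor coefficients $q^+_m$ with $q^+_{-n-2}\propto c_n$. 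Subexponential growth makes $q^+$ holomorphic on $\odisk$.
\item For $Q^-$, I would match the coefficients $c_n$ for $n\ge 2$, with Laurent coefficients in negative powers. $Q^-$ is then holomorphic on $\jodisk$. In the coordinate $w=1/z$, a quadratic differential $z^{-k}\,dz^2$ becomes $w^{k-4}\,dw^2$. With $n\ge 2$ the powers that occur are $z^{-n-2}$, i.e.\ $k\ge 4$, so the exponent $k-4\ge 0$ and $Q^-$ extends holomorphically across $\infty$. The extra vanishing at $\infty$ required in the statement is what fixes the normalization.
\item The modes $n=-1,0,1$ are exactly $\mobvect$, where $\tilde F$ vanishes. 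So the kernels are not required to produce these modes, and they do not: $z^{m}\,dz^2$ with $m\ge0$ pairs only against $\ell_n$ with $n\le-2$.
\end{itemize}

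\textbf{Step 3: check the representation and uniqueness.} Both sides are continuous and agree on every monomial $\ell_n$. Laurent polynomials are dense in $\mathcal{O}(S^1)$: the Laurent series of $v$ converges in the Banach space of a slightly smaller annulus. Hence the two sides agree for all $v$. Independence of $\varepsilon$ follows from Cauchy's theorem, since $vQ^\pm$ is holomorphic between the two circles. For uniqueness, any difference $(Q^+,Q^-)$ pairing to zero with every $\ell_n$ has all coefficients zero by the same residue computation. The $\R$-valued version follows by applying this to $\tilde F$, since $F\mapsto\tilde F$ is injective.

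\textbf{Main obstacle.} The main obstacle is Step 2's growth estimate: one has to turn Fr\"olicher smoothness into boundedness on each annulus Banach space. I would invoke the convenient-calculus fact that smooth linear maps are bounded, together with the universal property of the inductive limit, following \cite{Kriegl-Michor:Convenient_setting_of_global_analysis}. One must also track the index bookkeeping carefully, so that the kernels vanish on precisely $\ell_{-1},\ell_0,\ell_1$.
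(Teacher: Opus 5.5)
Your proposal is correct, and its outer structure is the paper's. The paper also complexifies, $H(f)=G(f)-\ii G(\ii f)$ with $G(f)=F(f\,\partial_z)$, but it then simply cites K\"othe duality (Morimoto's Theorem~2.1.9), which directly produces unique $q^+$ on the open disk and $q^-$ on the exterior with $q^-(\infty)=0$.

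You instead rebuild that duality for the circle by hand, despite your opening remark that you would not. Your ingredients are:
\begin{itemize}
\item the coefficients $c_n=\tilde F(\ell_n)$;
\item the bounds $|c_n|\le C_r\, r^{|n|+1}$ for every $r>1$, coming from boundedness on each step of the inductive limit;
\item convergence of the resulting series on $|z|<1$ and $|z|>1$;
\item density of Laurent polynomials, together with Cauchy's theorem for $\varepsilon$-independence.
\end{itemize}
Citing Morimoto is shorter. Your route is self-contained, and it also shows what the hypothesis that $F$ vanishes on $\mathfrak{sl}(2,\C)$ actually does; the paper's proof never uses that hypothesis. It kills the coefficients of $z^{-1},z^{-2},z^{-3}$ in $q^-$, so $Q^-=q^-(z)\,dz^2$ extends holomorphically across $\infty$ as a quadratic differential.

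Two points need tightening.

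First, ``$Q^-$ vanishes at $\infty$'' must be read, as in the paper, as $q^-(\infty)=0$ for the coefficient function in the coordinate $z$. This is what excludes nonnegative powers of $z$ from $q^-$, and your residue uniqueness argument needs exactly that. Without it, you could add an entire function to $q^+$ and subtract it from $q^-$ without changing the pairing. The quadratic differential $Q^-$ itself does \emph{not} vanish at $\infty$ in general: in $w=1/z$, its value at $w=0$ is a nonzero multiple of $\tilde F(\ell_2)\,dw^2$. Your sentence about the extra vanishing fixing the normalization should say this precisely.

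Second, for uniqueness over $\R$, injectivity of $F\mapsto\tilde F$ is not the relevant fact. What you need is that a $\C$-linear functional $P$ is determined by its real part, $P(v)=\Re P(v)-\ii\,\Re P(\ii v)$. Then any pair $(Q^+,Q^-)$ representing $F$ through real parts defines a $\C$-linear pairing equal to $\tilde F$, and your complex uniqueness argument applies.
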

\begin{proof}
	We will represent \gls{fsmooth} $\R$-linear functions $F \colon \VectC \to \R$ by integrals using the Cauchy--Hilbert transform (see~\cite[Definition~2.1.7]{Morimoto:An_introduction_to_Satos_hyperfunctions}).
	We first consider the vector space $\mathcal{O}(S^1)$ of real-analytic complex-valued functions on $S^1$.	
	Let $G \in (\mathcal{O}(S^1))^\vee$ be a \gls{fsmooth} real-valued $\R$-linear functional.
	Consider the complexification of $G$ defined as $H(f) = G(f) - \ii G(\ii f)$, which is a \gls{fsmooth} complex-valued $\C$-linear functional such that $G(f) = \Re H(f)$.
	Now, we apply~\cite[Theorem~2.1.9]{Morimoto:An_introduction_to_Satos_hyperfunctions} to $H$ (note that smooth functionals are, in particular, continuous, and thus, ``analytic functionals''):
	There exists unique holomorphic functions $q^+ \colon \odisk \to \C$ and $q^- \colon \jodisk \to \C$ such that $q^-(\infty) = 0$ and the complexification $H$ has the integral representation
	\begin{equation}
		\label{eq:fun_int_repr}
		H(f) =
		\int_{(1 - \varepsilon) S^1} f(z) \; q^+(z) \; \dd z
		+ \int_{(1 + \varepsilon) S^1} f(z) \; q^-(z) \; \dd z
		,\qquad f \in \mathcal{O}(S^1),
	\end{equation}
	where the integral does not depend on $\varepsilon > 0$ (but it must be chosen small enough depending on $f$ so that it is defined).
	The integral representation $\eqref{eq:fun_int_repr}$ with $G(f) = F(f(z)\partial_z)$ and $F \in (\VectC)^\vee$ now yields the asserted (coordinate independent) integral representation~\eqref{eq:integral_repr}  for $F$, by regarding the functions $q^+(z)$ and $q^-(z)$ as quadratic differentials $Q^+ = q^+(z) \; \dd z^2$ and $Q^- = q^-(z) \; \dd z^2$, pairing them with the vector field $v = f(z) \; \partial_z$.
\end{proof}

Since the variations~\eqref{eq:defthetav} vanish on negative modes of $v$, they only involve a single quadratic differential $Q[\phi^\pm]$ which is holomorphic on $\jodisk$ such that 
\begin{equation}
	\label{eq:integral_repr_lpot}
	\eval{\pdv{t}}_{t = 0}
	\Omega^Z_{1,1}\Big((\disk \diffActing{1} \Phi^{v}) \diffActing{1} \zeta^\pm, \disk\Big)
	= \Re \int_{(1+\varepsilon)S^1} v \; Q[\phi^\pm], \qquad v \in \VectC.
\end{equation}

\subsection{The Schwarzian derivative}
\label{section:diskdisk_schwarzian}

\begin{figure}[t]
\centering
\includestandalone[]{figures/fig_univalent_composition}
\caption{
	The inner part (with index ``$1$'') is the commutative diagram of conformal welding in Proposition~\ref{prop:conformal_welding} with $\phi = \phi_1^+$. 
We introduce $\phi_1^- = \invinv(\phi_1^+)$ as in Equation~\eqref{eq:disk_disk_welding} to make the diagram symmetric.
	The full diagram shows the composition of the univalent functions, $\zeta_3^+ = \zeta_1^+ \circ \zeta_2^+$.
	Instead of composing the welding homeomorphisms, we obtain $\phi^{\pm}_3$ and $\zeta_3^-$ from $\zeta_3^+$ using conformal welding.
	This only works if the normalization is chosen in such a way that $\zeta_2^+(S^1)$ is contained in the unit disk, as illustrated by the hatched domains in the figure.
}
\label{fig:univalent_composition}
\end{figure}

If $\phi^\pm$ is a M\"obius transformation, Lemma~\ref{lemma:crit_pt} implies that $Q[\phi^\pm] = 0$ in Equation~\eqref{eq:integral_repr_lpot}.
Now, we determine $Q^-[\phi^\pm]$ away from M\"obius transformations.
By diffeomorphism invariance, the variation~\eqref{eq:defthetav} may also be expressed in terms of~$\zeta^\mp$,
\begin{equation}
\begin{aligned}
	\label{eq:defthetav2}
	\Re \int_{(1+\varepsilon)S^1} v \; Q[\phi^\pm]
	&=
	\eval{\pdv{t}}_{t = 0}
	\Omega^Z_{1,1}\Big(
		(\disk \diffActing{1} \Phi^{v}) \diffActing{1}
		\big((\inversion \circ \zeta^\mp \circ \inversion) \circ
		\phi^\pm\big),
		\disk
	\Big)
	\\&=
	\eval{\pdv{t}}_{t = 0}
	\Omega^Z_{1,1}\Big(
		(\disk \diffActing{1} \Phi^{v}) \diffActing{1}
		(\inversion \circ \zeta^\mp \circ \inversion),
		\disk \diffActing{1} \phi^\mp
	\Big)
	\\&=
	\eval{\pdv{t}}_{t = 0}
	\Omega^Z_{1,1}\Big(
		(\disk \diffActing{1} \Phi^{v}) \diffActing{1}
		(\inversion \circ \zeta^\mp \circ \inversion),
		\disk \diffActing{1} \zeta^\mp
	\Big).
\end{aligned}
\end{equation}
Let $\phi_1^\pm, \phi_2^\pm \in \Diffpan$ be pairs of diffeomorphisms like $\phi^\pm$ above, and $\zeta_1^\pm, \zeta_2^\pm$ the respective univalent functions in the  conformal welding decompositions.
With the same normalization $(\zeta_1^-)' = (\zeta_2^-)' = a$ as in Item~\ref{prop:conformal_welding_item2} of Proposition~\ref{prop:conformal_welding}, the composition $\zeta^+_1 \circ \zeta^+_2 \in \univalent$ exists\footnote{
	From here on, with the current normalization, only the case where ``$\pm$'' is ``$-$'' and ``$\mp$'' is ``$+$'' holds.
	However, we keep track of both cases, as the other case may be obtained by changing the normalization to $(\zeta_1^+)' = (\zeta_2^+)' = a$, where $|a| > 4$, so that we can compose $\zeta^-_1 \circ \zeta^-_2 \in \univalent$.
	Since Equation~\eqref{eq:integral_repr_lpot} is invariant under M\"obius transformations, the variational formula is independent of the normalization.
}, and we let it define a pair of diffeomorphisms $\phi_3^\pm \in \Diffpan$ via Item~\ref{prop:conformal_welding_item2} of Proposition~\ref{prop:conformal_welding}, as shown in Figure~\ref{fig:univalent_composition}.
In the case of $\phi_3^\pm$, Equation~\eqref{eq:defthetav2} reads
\begin{equation}
	\label{eq:theta_proj_p}
	\Re \int_{(1+\varepsilon)S^1} v \; Q[\phi_3^\pm] =
	\eval{\pdv{t}}_{t = 0}
	\Omega^Z_{1,1}\Big((\disk \diffActing{1} \Phi^{v})
		\diffActing{1} (\inversion \circ \zeta_1^\mp \circ \zeta_2^\mp \circ \inversion), \disk \diffActing{1} (\zeta_1^\mp \circ \zeta_2^\mp)
	\Big).
\end{equation}
We can now apply the disk-deformation-disk cocycle identity~\eqref{eq:cocycle_id_mixed_3} to the surfaces $\Sigma_1 = \disk \diffActing{1} (\Phi^{v} \circ \inversion \circ \zeta_1^\mp \circ \zeta_2^\mp \circ \inversion)$ and $\Sigma_2 = \disk \diffActing{1} \zeta_1^\mp$, and to the complex deformation~$\zeta_2^\mp$, to obtain
\begin{equation}
	\label{eq:cocycle_variation_theta_1}
\begin{aligned}
	\textnormal{\eqref{eq:theta_proj_p}}
	=
	\eval{\pdv{t}}_{t = 0} \Bigg(
		&\!- \Omega^Z_{\disk, 1, 1}\Big(\disk \diffActing{1} \zeta_1^\mp, \zeta_2^\mp\Big)
		\\ &+ \Omega^Z_{1, 1}\Big(\big(\disk \diffActing{1} (\Phi^{v} \circ \inversion \circ \zeta_1^\mp \circ \zeta_2^\mp \circ \inversion)\big) \diffActing{1} \invinv(\zeta_2^\mp), \disk \diffActing{1} \zeta_1^\mp\Big)
		\\ &+ \Omega^Z_{\disk, 1, 1}\Big(\big(\disk \diffActing{1} (\Phi^{v} \circ \inversion \circ \zeta_1^\mp \circ \zeta_2^\mp \circ \inversion)\big), \invinv(\zeta_2^\mp)\Big)
	\Bigg).
\end{aligned}
\end{equation}
Note that the first term is time-independent, and the second term is just Equation~\eqref{eq:defthetav2} for $\phi_1^\pm$.
We will identify the third term in terms of the same cocycle identity applied to~\eqref{eq:defthetav2} for $\phi_1^\pm$ and a vector field $w \in \VectC$.
Indeed, taking now the surfaces $\Sigma_1 = \disk \diffActing{1} (\Phi^{w} \circ \inversion \circ \zeta_2^\mp \circ \inversion)$ and $\Sigma_2 = \disk$ and the complex deformation $\zeta_2^\mp$, we obtain
\begin{equation}
\begin{aligned}
	\Re \int_{(1+\varepsilon)S^1} w \; Q[\phi_2^\pm]
	=
	\eval{\pdv{t}}_{t = 0} \Bigg(
		&- \Omega^Z_{\disk, 1, 1}\Big(\disk, \zeta_2^\mp\Big)
		\\ &+ \Omega^Z_{1, 1}\Big(\big(\disk \diffActing{1} (\Phi^{w} \circ \inversion \circ \zeta_2^\mp \circ \inversion)\big) \diffActing{1} \invinv(\zeta_2^\mp), \disk\Big)
		\\ &+ \Omega^Z_{\disk, 1, 1}\Big(\disk \diffActing{1} (\Phi^{w} \circ \inversion \circ \zeta_2^\mp \circ \inversion), \invinv(\zeta_2^\mp)\Big)
	\Bigg).
\end{aligned}
\end{equation}
Again, the first term is time-independent.
The second term is the variation at the identity $\Omega^Z_{1,1}(\disk \diffActing{1} \Phi^{w}, \disk)$, which vanishes by Lemma~\ref{lemma:crit_pt}.
Finally, we see that the third term agrees with the third term in Equation~\eqref{eq:cocycle_variation_theta_1} by choosing $w = (\inversion \circ \zeta_1^\mp \circ \inversion)^* v$, since the deformation to the left of $\Phi^v$ in $\Phi^{w} = (\inversion \circ \zeta_1^\mp \circ \inversion)^{-1} \circ \Phi^v \circ (\inversion \circ \zeta_1^\mp \circ \inversion)$ can be absorbed by the unit disk using 
Item~\ref{lemma:complex_deformations:conj} of Lemma~\ref{lemma:complex_deformations}.
In summary, we have shown that
\begin{equation}
	\Re \int_{(1+\varepsilon)S^1} v \; Q[\phi_3^\pm]
	= \Re \int_{(1+\varepsilon)S^1} \big((\inversion \circ \zeta_1^\mp \circ \inversion)^* v\big) \; Q[\phi_2^\pm]
	+ \Re \int_{(1+\varepsilon)S^1} v \; Q[\phi_1^\pm].
\end{equation}
For the quadratic differential, this implies that
\begin{equation}
	Q^-[\phi_3^\pm] = (\inversion \circ \zeta_1^\mp \circ \inversion)_* Q^-[\phi_2^\pm] + Q^-[\phi_1^\pm].
\end{equation}
This identity together with the $\mob \cap \DefC$ invariance characterizes $Q^-[\phi^\pm]$ as a scalar multiple of the Schwarzian derivative $\schwarzian{\invinv(\zeta^\mp)}$.
Note that this quadratic differential is holomorphic in a neighborhood of $\jcdisk$. Hence, the integrals in \eqref{eq:integral_repr_lpot} may just be taken over $S^1$, that is, we may set $\varepsilon = 0$.

Now, in order to prove Theorem~\ref{thm:lpot},  what remains is to compute the constant $C \in \C$ such that 
\begin{equation}
	\label{eq:lpot_schwarzian_prop}
	\eval{\pdv{t}}_{t = 0}
	\Omega^Z_{1,1}\Big(\disk \diffActing{1} \gamma^\pm, \disk\Big)
	= \Re \Big( C \int_{S^1} v^\pm \; \schwarzian{\invinv(\zeta^\mp)} \Big).
\end{equation}

\subsection{Identification of the central charge}
\label{section:diskdisk_charge}

To begin, for general $v, w \in \VectC$, the variation of the Schwarzian derivative of the flow~$\Phi^w$ is $\eval{\pdv{t}}_{s = 0} \schwarzian{\Phi^{w}(s, \blank)} = w'''(z)$.
By integrating by parts, we compute
\begin{equation}
\label{eq:schwarzian_cocycle}
\begin{aligned}
	\eval{\pdv{s}}_{s = 0}
	\Re \bigg( C \int_{S^1} v \; \schwarzian{\Phi^{w}(s, \blank)} \bigg)
	&= - \Re \bigg( C \int_{S^1} v'(z) \; w''(z) \; \dd z \bigg)
	\\ &= - 24 \pi \, \Re \Big( C \, \big(\gelfandfuks(v, w) - \rotcocyclealg(v, w)\big) \Big),
\end{aligned}
\end{equation}
in terms of the cocycles in Equation~\eqref{eq:cocycle_difference}.

We compute the second variation of $\Omega^Z_{1,1}(\disk \diffActing{1} \blank, \disk)$ at $\phi^\pm = \id$ in the direction of tangent vectors $[s \mapsto [t \mapsto \gamma^\pm(t, s, \blank)]_\sim]_\sim \in T T \Diffpan$ described by two-parameter curves $\gamma^\pm \in \curves{\curves{\Diffpan}}$ such that $\invinv(\gamma^\pm) = \gamma^\mp$.
Keeping the same conformal welding decomposition as in~\eqref{eq:disk_disk_welding}, now with two-parameter curves $\eta^\pm \in \curves{\curves{\univalent}}$ such that $\eta^\pm(0, 0, \blank) = \zeta^\pm = \id$, we may still represent
\begin{equation}
[t \mapsto \eta^\pm(t, s, \blank)]_\sim = [\Phi^{v^\pm}(t, \blank) \circ \eta^\pm(0, s, \blank)]_\sim 
\end{equation}
for time-independent $v^\pm \in \VectC$ and $s \in \R$.
This leads to the same first variation formula~\eqref{eq:lpot_schwarzian_prop} in the $t$-direction.
Note that at $t = 0$, the curves $\eta^\mp(0, s, \blank) \in \curves{\univalent}$ are rooted at the identity, and define tangent vectors 
\begin{equation}
[s \mapsto \eta^\pm(0, s, \blank)]_\sim = [s \mapsto \Phi^{w^\pm}(s, \blank)]_\sim \in T_{\id} \univalent 
\end{equation}
for vector fields $w^\pm \in \VectC$, holomorphic in a neighborhood of $\cdisk$.
Note that the tangent vector in the Schwarzian derivative in~\eqref{eq:lpot_schwarzian_prop} may be represented by $[\invinv(\Phi^{w^\mp})]_\sim$ = $[\Phi^{-\inversion^* w^\mp}]_\sim$.
With this setup, Equation~\eqref{eq:schwarzian_cocycle} implies that
\begin{equation}
\begin{aligned}
	&\eval{\pdv[2]{}{t}{s}}_{t = s = 0}
	\Omega^Z_{1,1}\Big(\disk \diffActing{1} \gamma^\pm(t, s, \blank), \disk\Big)
	\\ = \; &- 24 \pi \, \Re \Big( C \, \big(\gelfandfuks(v^\pm, -\inversion^* w^\mp) - \rotcocyclealg(v^\pm, -\inversion^* w^\mp)\big) \Big).
\label{eq:second_variation_one}
\end{aligned}
\end{equation}
In particular, for $v^+ = v^- = \ell_n$, this evaluates to
$
-24 \pi \, \Re \big(C \frac{\ii}{12} (n^3 - n)\big)
$.
Since this is a variation of a real-valued function, we can already conclude that $C \in \ii \R$.

By differentiating Equation~\eqref{eq:defthetav2}, we compute the same variation in a different way:
\begin{equation}
\begin{aligned}
	\textnormal{\eqref{eq:second_variation_one}}
	&=	
	\eval{\pdv[2]{}{t}{s}}_{t = s = 0}
	\Omega^Z_{1,1}\Big(\disk \diffActing{1} \big(\Phi^{v^\pm}(t, \blank) \circ \Phi^{\inversion^* w^\mp}(s, \blank)\big), \disk \diffActing{1} \Phi^{w^\mp}(s, \blank)\Big)
	\\ &=
	\eval{\pdv[2]{}{t}{s}}_{t = s = 0}
	\Bigg(
	\Omega^Z_{1,1}\Big(\disk \diffActing{1} \big(\Phi^{v^\pm}(t, \Phi^{\inversion^* w^\mp}(s, \blank))\big), \disk\Big)
	\\ &\phantom{=
	\eval{\pdv{s}}_{s = 0}
	\eval{\pdv{t}}_{t = 0}
	\Bigg(}
	+ \Omega^Z_{1,1}\Big(\disk \diffActing{1} \Phi^{v^\pm}(t, \blank), \disk \diffActing{1} \Phi^{w^\mp}(s, \blank)\Big)
	\Bigg).
\end{aligned}
\label{eq:second_variation_theta}
\end{equation}
We can apply the disk-deformation-deformation-disk cocycle identity obtained from combining equations~\eqref{eq:cocycle_id_mixed_1}--\eqref{eq:cocycle_id_mixed_3} and~\eqref{eq:cocycle_mixed_compat_2} to the term in the last line,
\begin{equation}
\begin{aligned}
	&\mathrlap{\Omega^Z_{1,1}\Big(\disk \diffActing{1} \Phi^{v^\pm}(t, \blank), \disk \diffActing{1} \Phi^{w^\mp}(s, \blank)\Big)} \\
	&=
	&&- \Omega^Z_{\disk,1,1}\Big(\disk, \Phi^{v^\pm}(t, \blank)\Big)
	- \Omega^Z_{\disk,1,1}\Big(\disk, \Phi^{w^\mp}(s, \blank)\Big)
	\\
	&&&+
	\Omega^Z_{\disk,1}\Big(\Phi^{v^\pm}(t, \blank), \Phi^{-\inversion^* w^\mp}(s, \blank)\Big) \\
	&&&+ \Omega^Z_{\disk,1,1}\Big(\disk, \Phi^{v^\pm}(t, \Phi^{-\inversion^* w^\mp}(s, \blank))\Big)
	\\
	&&&+
	\Omega^Z_{1,1}\Big(\disk \diffActing{1} \big(\Phi^{v^\pm}(t, \Phi^{-\inversion^* w^\mp}(s, \blank))\big), \disk\Big).
\end{aligned}
\label{eq:cocycle_dddd_flow}
\end{equation}
The terms of the form $\Omega_{\disk, 1, 1}(\disk, \blank)$ vanish by the identity~\eqref{eq:cocycle_surfaces_equal_vanishing}.
Moreover, because we have $[\Phi^{-w}]_\sim = - [\Phi^{w}]_\sim \in T_{\id} \DefC$, the last term in \eqref{eq:cocycle_dddd_flow} cancels with the term in the second line of~\eqref{eq:second_variation_theta}.
This leaves us with
\begin{equation}
	\label{eq:second_var_cocycle1}
	\eval{\pdv[2]{}{t}{s}}_{t = s = 0}
	\Omega^Z_{1,1}\Big(\disk \diffActing{1} \gamma^\pm(t, s, \blank), \disk\Big)
	=
	\eval{\pdv[2]{}{t}{s}}_{t = s = 0}
	\Omega^Z_{\disk,1}\Big(\Phi^{v^\pm}(t, \blank), \Phi^{-\inversion^* w^\mp}(s, \blank)\Big).
\end{equation}
The right-hand side of~\eqref{eq:second_var_cocycle1} is almost the Lie algebra cocycle associated to $\Omega^Z_{\disk,1}$ (see \cite[Section~3]{Maibach-Peltola:Complex_deformations_of_the_circle}), 
which is cohomologous to $\charge \Im \gelfandfuks$ in $H^2(\VectC; \VectR; \R)$ by~\eqref{eq:cocycle_is_bott_nonrel}.
What is missing, is a factor of $\frac{1}{2}$, and the term with $v^\pm$ and $-\inversion^* w^\mp$ swapped.
That term can be computed using disk-deformation-deformation-disk cocycle identities:
\begin{equation}
\begin{aligned}
	&\mathrlap{\Omega^Z_{\disk,1}\Big(
		\Phi^{-\inversion^* w^\mp}(s, \blank),
		\Phi^{v^\pm}(t, \blank)
	\Big)} \\
	&=
	&&- \Omega^Z_{\disk,1,1}\Big(
		\disk,
		\Phi^{-\inversion^* w^\mp}(s, \blank) \circ \Phi^{v^\pm}(t, \blank)
	\Big)
	\\ &&&- \Omega^Z_{1,1}\Big(
		\disk \diffActing{1} \Phi^{-\inversion^* w^\mp}(s, \blank) \circ \Phi^{v^\pm}(t, \blank), \disk
	\Big)
	\\ &&&+ \Omega^Z_{\disk,1,1}\Big(
		\disk, \Phi^{-\inversion^* w^\mp}(s, \blank)
	\Big)
	\\ &&&+ \Omega^Z_{\disk,1,1}\Big(
		\disk, \Phi^{-\inversion^* v^\pm}(t, \blank)
	\Big)
	\\ &&&+ \Omega^Z_{1,1}\Big(
		\disk \diffActing{1} \Phi^{-\inversion^* w^\mp}(s, \blank),
		\disk \diffActing{1} \Phi^{-\inversion^* v^\pm}(t, \blank)
	\Big).
\end{aligned}
\end{equation}
On the right-hand side, the first, third, and fourth terms vanish by Equation~\eqref{eq:cocycle_surfaces_equal_vanishing}.
In the second and fifth terms, we find that $\disk \diffActing{1} \Phi^{-\inversion^* w^\mp}(s, \blank) = \disk$ by Item~\ref{lemma:complex_deformations:disk_invariant} of Lemma~\ref{lemma:complex_deformations}, so they are independent of $s$, making the derivative vanish.
Thus, we conclude that
\begin{equation}
	- 24 \pi \, \Re \Big( C \, \gelfandfuks(v^\pm, -\inversion^* w^\mp) \Big)
	= 2 \; \charge \Im \Big( \gelfandfuks(v^\pm, -\inversion^* w^\mp) + \beta([v^\pm, -\inversion^* w^\mp]) \Big)
\end{equation}
for some $\beta \in Z^1(\VectC; \VectR; \R)$.
This only holds for vector fields $v^\pm$ and $w^\pm$ which are both holomorphic in a neighborhood of $\cdisk$.
Picking $v^+ = w^- = \ell_n$ such that $-\inversion^* w^- = \ell_{-n}$, we find that
\begin{equation}
	- 12 \pi \, \Im C
	= \; \charge + \frac{12 \beta(\ell_{0})}{n^2} , \qquad n \in \Z.
\end{equation}
By taking the limit $n \to \infty$, we find that $\Im C = - \frac{\charge}{12 \pi}$.
Since $C \in \ii \R$, we deduce that
\begin{equation}
	\eval{\pdv{t}}_{t = 0}
	\Omega^Z_{1,1}\Big(\disk \diffActing{1} \gamma^\pm, \disk\Big)
	= \frac{\charge}{12 \pi} \Im \int_{S^1} v^\pm \; \schwarzian{\invinv(\zeta^\mp)}.
	\label{eq:lpot_schwarzian_prop_final}
\end{equation}
This concludes the proof of Theorem~\ref{thm:lpot}, because Equation~\eqref{eq:lpot_schwarzian_prop_final} agrees with the variation of the Loewner energy in Equation~\eqref{eq:lenergy_variation} up to the multiplicative factor $-\frac{\charge}{24}$.

\section{Isomorphisms of real one-dimensional modular functors}
\label{section:iso}
Consider local \gls{fsmooth} real one-dimensional modular functors $\RMFE$ and $\RMFD$ with equal central charge $\charge \in \R$.
In this section, we construct an isomorphism between $\RMFE$ and $\RMFD$, proving the main result of this work, Theorem~\ref{thm:universality}.
Let us begin by detailing the precise notion of isomorphism that we use.
\begin{definition}
	An \emph{isomorphism} $\Psi \colon \RMFE \to \RMFD$ of \gls{fsmooth} real one-dimensional modular functors consists of \gls{fsmooth} morphisms of principal $\Rp$-bundles,
    \begin{equation}
        \Psi_{\genus, \boundaries} \colon \RMFE(\moduligb) \longrightarrow \RMFD(\moduligb), \qquad \genus, \boundaries \geq 0,
    \end{equation}
    such that they preserve the sewing isomorphisms, that is,
	\label{def:modular_functor_isomorphism}
	\begin{align}
		\Psi_{\genus_1 + \genus_2, \boundaries_1 + \boundaries_2 - 2}(\blank \sewx{\RMFE}{j}{k} \blank) &= \Psi_{\genus_1, \boundaries_1}(\blank) \sewx{\RMFD}{j}{k} \Psi_{\genus_2, \boundaries_2}(\blank), \\
		\sewxself{\RMFD}{j}{k} \Psi_{\genus, \boundaries}(\blank) &= \Psi_{\genus + 1, \boundaries - 2}(\sewxself{\RMFE}{j}{k} \blank).
	\end{align}
\end{definition}

Our strategy is constructive: We find trivializations $Z$ and $W$, respectively of  $\RMFE$ and $\RMFD$, such that all cocycles defined in Section~\ref{section:cocycles} agree.
Then, the isomorphism is given by
\begin{equation}
\begin{aligned}
	\Psi_{\genus,\boundaries} \colon \RMFE(\moduligb) &\longrightarrow \RMFD(\moduligb), \\
	Z(\Sigma) &\longmapsto W(\Sigma).
\end{aligned}
\label{eq:iso_triv}
\end{equation}
We shall construct the components $(\Psi_{\genus, \boundaries})_{\genus, \boundaries \geq 0}$ one by one.
With each new case, there are new cocycles to be considered, and we need to show that these agree for $Z$ and $W$.

\subsection{Genus zero}
\label{section:genus_zero}

First, we consider the sphere and disks (genus $\genus = 0$ and $\boundaries \in \{0,1\}$).
By Proposition~\ref{prop:reparam_indep}, since $\spheres = \{\hat{\C}\}$ and $\disks = \setsuchthat{\disk \diffActing{1} \phi}{\phi \in \Diffpan}$, we know that reparametrization invariant trivializations over $\spheres$ and $\disks$ are unique up to a constant factor.
Thus, we may normalize them in such a way that
\begin{equation}
	\Omega^Z_{1,1}(\disk, \disk) = \Omega^W_{1,1}(\disk, \disk) = 0.
	\label{eq:disk_disk_normalization}
\end{equation}
We define $\Psi_{0,0}$ and $\Psi_{0,1}$ using Equation~\eqref{eq:iso_triv} for the normalized trivializations.

The only cocycles which come up at this point are the disk-disk cocycles (cf.~Section~\ref{section:disk_disk}).
Since we assume that $\RMFE$ and $\RMFD$ have equal central charge, by Theorem~\ref{thm:lpot}, the disk-disk cocycles agree up to a constant, which is fixed by the normalization~\eqref{eq:disk_disk_normalization}.
This will suffice for establishing an isomorphism for all genus $0$ surfaces via induction on the number $\boundaries \geq 1$ of boundary components. 
To this end, we invoke the $\genus = 0$ case of Propositions~\ref{prop:induction_trivialization} and~\ref{prop:induction_extension} (proven in the next Section~\ref{section:induction_step}), thus proving Theorem~\ref{thm:universality} in genus zero.
\begin{theorem}
	All local \gls{fsmooth} real one-dimensional modular functors with equal central charges are isomorphic in genus zero.
	\label{thm:universality_genus_zero}
\end{theorem}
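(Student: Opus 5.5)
We construct, for each $\boundaries \geq 0$, reparametrization invariant trivializations $Z$ of $\RMFE(\moduli{0}{\boundaries})$ and $W$ of $\RMFD(\moduli{0}{\boundaries})$ with the following property: every cocycle from Section~\ref{section:cocycles} involving only genus zero surfaces agrees for $Z$ and $W$. The maps $\Psi_{0,\boundaries}$ defined by~\eqref{eq:iso_triv} then commute with all genus zero sewing isomorphisms, because these are determined by the cocycles. The base cases $\boundaries \in \{0,1\}$ are already settled. The trivializations over $\spheres$ and $\disks$ are normalized by~\eqref{eq:disk_disk_normalization}. By Theorem~\ref{thm:lpot}, both disk-disk cocycles equal $-\frac{\charge}{24}\lenergy{}(\gamma_\phi)$ plus a constant, and the normalization forces that constant to vanish. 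Locality (Item~\ref{prop:local_equiv_1} of Proposition~\ref{prop:local_equiv}) then extends this agreement from $\disk \diffActing{1}\phi$ paired with $\disk$ to arbitrary pairs of disks.

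For the induction step $\boundaries \to \boundaries+1$, we first define the trivialization by \emph{capping}. Let $\Sigma \in \moduli{0}{\boundaries+1}$ and suppose $W$ is already defined on $\moduli{0}{\boundaries}$. We set $W(\Sigma)$ to be the unique element satisfying
\begin{equation*}
	W(\Sigma) \sewx{\RMFD}{\boundaries+1}{1} W(\disk) = W(\Sigma \sew{\boundaries+1}{1} \disk),
\end{equation*}
so that the surface-disk cocycle at the last boundary component vanishes by construction. We define $Z$ on $\RMFE$ in the same way. This is the content we attribute to Proposition~\ref{prop:induction_trivialization}. We must check that this $W$ is \gls{fsmooth} and reparametrization invariant at the labels $1,\dots,\boundaries$; both follow from locality and the \gls{fsmooth}ness of sewing.

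Reparametrization invariance at label $\boundaries+1$ is not automatic. We would verify it using the mixed cocycle identity~\eqref{eq:cocycle_id_mixed_3}, which moves $\phi \in \Diffpan$ across the seam onto the disk. Combined with the disk-disk agreement, this shows that the resulting defect is the same function for $Z$ and $W$. So even if that defect is nonzero, it is shared. Alternatively, one can cap with $\disk\diffActing{1}\invinv(\phi)$ and use Proposition~\ref{prop:reparam_indep}.

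Next comes extension, which we attribute to Proposition~\ref{prop:induction_extension}. We must show that every other genus zero cocycle with at most $\boundaries+1$ boundary components agrees. Take a general sewing cocycle $\Omega_{j,k}(\Sigma_1,\Sigma_2)$ with $\Sigma_1 \in \moduli{0}{\boundaries_1}$ and $\Sigma_2 \in \moduli{0}{\boundaries_2}$. Apply the associativity cocycle identity~\eqref{eq:basic_cocycle_idenity} with a disk capping a boundary component other than $j$ or $k$. This expresses $\Omega_{j,k}(\Sigma_1,\Sigma_2)$ through cocycles on surfaces with fewer boundary components, together with capping cocycles, which agree by construction. Iterating, everything reduces to disk-disk cocycles, which agree.

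The obstruction is when no free boundary component remains, that is, when both $\Sigma_1$ and $\Sigma_2$ are annuli or disks glued to produce a disk or a sphere. These cases must be handled directly. Annulus-disk and annulus-annulus cocycles are handled through~\eqref{eq:cocycles_A_D} and through the deformation cocycles. The latter are cohomologous to $\charge\,\Im\bott$ relative to $\Diffpan$ by Remark~\ref{rem:cocycle_is_bott_nonrel}, and so they agree up to a coboundary that the normalization kills.

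We expect the main obstacle to be consistency: the capping definition singles out the label $\boundaries+1$, so we must show that capping a different label gives the same trivialization. We would first try the cycle identity~\eqref{eq:cycle_sewing} with two disks. It reduces the discrepancy to a comparison of two capped $(\boundaries-1)$-surfaces, and that comparison holds by the induction hypothesis. If this does not close, the fallback is to represent $\Sigma$ via Proposition~\ref{prop:fprojective} as a Fuchsian projective surface acted on by diffeomorphisms. In the projective case, Proposition~\ref{prop:projective} reduces $\Sigma$ to a uniformized region in $\hat\C$ with M\"obius boundary charts, where symmetry in the labels is manifest.
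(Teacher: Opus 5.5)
Your outline follows the paper: normalize via~\eqref{eq:disk_disk_normalization} and Theorem~\ref{thm:lpot}, cap with the standard disk, and propagate with~\eqref{eq:basic_cocycle_idenity}. However, the step that carries the induction is asserted rather than proved. The claim that ``capping cocycles agree by construction'' holds only for $\disk$ at the designated label $d$. The new cocycles also include $\Omega_{d,1}(\Sigma, D)$ for $\Sigma \in \moduli{0}{\boundaries+1}$ and an arbitrary disk $D$, and $\Omega_{l,1}(\Sigma,\disk)$ for $l \neq d$; this is the content of~\eqref{eq:cocycles_agree_one_disk} in the paper. Neither of your substitutes works.
\begin{itemize}
\item Identity~\eqref{eq:cocycle_id_mixed_3} only identifies the invariance defect at $d$ with $\Omega_{d,1}(\Sigma, \disk \diffActing{1} \invinv(\phi))$, which is the unknown cocycle itself. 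Disk--disk agreement alone does not compare it for $Z$ and $W$.
\item The cohomological argument for annulus--disk is invalid. Cocycles in the same class of $H^2(\DefC; \Diffpan; \R)$ differ by the coboundary of a non-constant function, and no constant normalization removes that.
\end{itemize}
The missing argument is two applications of~\eqref{eq:basic_cocycle_idenity} to $\Sigma$ with disks at two labels. First put $\disk$ at $d$ and $D$ at $l \neq d$. Every other term then vanishes by construction or involves at most $\boundaries$ boundary components, so $\Omega_{l,1}(\Sigma, D)$ agrees, in particular $\Omega_{l,1}(\Sigma, \disk)$. Then put $D$ at $d$ and $\disk$ at $l$, which gives agreement of $\Omega_{d,1}(\Sigma, D)$. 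Any cocycle whose result lies in $\moduli{0}{\boundaries+1}$ then reduces by capping the designated label of the result. So annulus--disk and annulus--annulus are not obstructed at all; the only irreducible case is disk--disk.

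Your ``main obstacle'' is a red herring. Capping different labels gives different trivializations in general: for an annulus $A$ they differ by the difference of the disk--disk cocycles, i.e.\ of the Loewner energies, of the two seams in $\disk \sew{1}{1} A \sew{2}{1} \disk$. Your reduction cannot show otherwise, because the induction hypothesis gives agreement between $Z$ and $W$, not vanishing. No label independence is needed anyway, since $Z$ and $W$ are built by the same rule. The Fuchsian projective fallback is not label-symmetric either (see the remark in Section~\ref{section:iso_higher_genus}).

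Your reparametrization-invariance claims are also inconsistent. Capping \emph{all} surfaces destroys invariance at $d$, since the defect above is in general non-zero when $\charge \neq 0$. So at the next step, invariance at the old labels cannot be inherited. For the genus-zero statement this is harmless, since only agreement of cocycles is needed, and with the repair above your route is simpler than the paper's. The paper caps only Fuchsian projective surfaces, where the residual freedom is M\"obius and $\disk \diffActing{1} \invinv(\phi) = \disk$. It does this so that the trivialization stays reparametrization invariant, which it uses in Proposition~\ref{prop:induction_extension} and in the higher-genus steps.

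A minor point: extending disk--disk agreement to arbitrary pairs uses Item~\ref{prop:local_equiv_5} of Proposition~\ref{prop:local_equiv}, not Item~\ref{prop:local_equiv_1}.
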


\begin{remark}
	By using the unit disk as the reference surface, the trivializations $Z$ and $W$ on disks define trivializations $Z_{\disk, 1}$ and $W_{\disk, 1}$ of the central extensions $\RMFE(\DefC)$ and $\RMFD(\DefC)$ of complex deformations via Equation~\eqref{eq:local_triv_ext}.
	For a composable pair $\phi, \psi \in \DefC$ of complex deformations, we may compute $\Omega^Z_{\disk, 1}(\phi, \psi)$ by combining the disk-deformation-deformation cocycle identity~\eqref{eq:cocycle_mixed_compat_2} with $\Sigma = \disk = \Sr$,
\begin{equation}
	\Omega^Z_{\disk, 1}(\phi, \psi)
	= \Omega^Z_{\disk, 1,1}(\disk \diffActing{1} \phi, \psi) 
	+ \underbrace{\Omega^Z_{\disk, 1,1}(\disk, \phi)}_{= \; 0 \; \textnormal{by~\eqref{eq:cocycle_surfaces_equal_vanishing}}} 
	- \underbrace{\Omega^Z_{\disk, 1,1}(\disk, \phi \circ \psi)}_{= \; 0 \; \textnormal{by~\eqref{eq:cocycle_surfaces_equal_vanishing}}},
\end{equation}
with the expression for $\Omega^Z_{\disk, 1, 1}(\disk \diffActing{1} \phi, \psi)$ obtained from the disk-deformation-disk cocycle identity~\eqref{eq:cocycle_id_mixed_3} for $\psi$, and $\Sigma_1 = \disk \diffActing{1} \phi$ and $\Sigma_2 =\Sr = \disk$,
\begin{equation}
	\Omega^Z_{\disk, 1, 1}(\disk \diffActing{1} \phi, \psi)
	= \underbrace{\Omega^Z_{\disk, 1, 1}(\disk, \invinv(\psi))}_{= \; 0 \; \textnormal{by~\eqref{eq:cocycle_surfaces_equal_vanishing}}}
	+ \Omega^Z_{1,1}(\disk \diffActing{1} \phi, \disk \diffActing{1} \invinv(\psi))
	- \Omega^Z_{1,1}(\disk \diffActing{1} (\phi \circ \psi), \disk).
\label{eq:Dphi1phi2}
\end{equation}
We find that the cocycle $\Omega^Z_{\disk, 1}(\phi, \psi)$ with respect to the composition law of $\RMFE(\DefC)$ is fully expressed in terms of the disk-disk cocycle: 
\begin{equation}
	\Omega^Z_{\disk, 1}(\phi, \psi) 
	= 
\Omega^Z_{1,1}(\disk \diffActing{1} \phi, \disk \diffActing{1} \invinv(\psi))
	- \Omega^Z_{1,1}(\disk \diffActing{1}( \phi \circ \psi), \disk).
\end{equation}
Now, because the disk-disk cocycle of $\RMFE$ agrees with that of $\RMFD$, we may conclude that $\Omega^Z_{\disk, 1}(\phi, \psi) = \Omega^W_{\disk, 1}(\phi, \psi)$ for all composable $\phi, \psi \in \DefC$, and we obtain an isomorphism
\begin{equation}
\Lambda \colon \RMFE(\DiffC) \longrightarrow \RMFD(\DiffC), \qquad
Z_{\disk, 1}(\phi) \longmapsto W_{\disk, 1}(\phi),
\label{eq:iso_deformations_R}
\end{equation}
which is compatible with the composition laws~\eqref{eq:E_defc_composition} of the central extensions.
Note that this is an $\Rp$-bundle isomorphism covering the identity on $\DefC$, and moreover, since $Z_{\disk, 1}(\id) = Z(\disk) \otimes (Z(\disk))^\vee$ is mapped to $W_{\disk, 1}(\id) = W(\disk) \otimes (W(\disk))^\vee$, the isomorphism $\RMFE(\id) \to \RMFD(\id)$ is given by composing the evaluation $\evaluation \colon \RMFE(\id) \to \Rp$ with the inverse evaluation $\evaluation^{-1} \colon \Rp \to \RMFD(\id)$.
We conclude that Equation~\eqref{eq:iso_deformations_R} defines an isomorphism of central extensions in the sense of an isomorphism of exact sequences~\eqref{eq:defc_ext_sequence}, 
\begin{equation}
\begin{tikzcd}
	\{1\} & \Rp & \RMFE(\DefC) & \DefC & \{1\} \\
	\{1\} & \Rp & \RMFD(\DefC) & \DefC & \{1\} ,
	\arrow[from=1-1, to=1-2]
	\arrow[from=1-2, to=1-3, hook]
	\arrow[from=1-3, to=1-4, two heads]
	\arrow[from=1-4, to=1-5]
	\arrow[from=2-1, to=2-2]
	\arrow[from=2-2, to=2-3, hook]
	\arrow[from=2-3, to=2-4, two heads]
	\arrow[from=2-4, to=2-5]
	\arrow[from=1-1, to=2-1]
	\arrow["{\id}"', from=1-2, to=2-2]
	\arrow["{\Lambda}"', from=1-3, to=2-3]
	\arrow["{\id}"', from=1-4, to=2-4]
	\arrow[from=1-5, to=2-5]
\end{tikzcd}
\end{equation}
with the identity on $\Rp$ and $\DefC$.
\end{remark}

\subsection{Induction step on the number of boundary components}
\label{section:induction_step}

To recursively extend an isomorphism of the form~\eqref{eq:iso_triv} to higher number of boundary components, we proceed by induction on $\boundaries$, namely, keeping the genus fixed, and increasing the number of boundary components by one. 
Note that this induction step requires that the new $\Psi_{\genus, \boundaries}$ has $\boundaries \geq 2$ (so the cases $\boundaries = 0$ and $\boundaries = 1$ for each next higher genus need to be considered separately).
The induction step itself assumes that $\RMFE$ and $\RMFD$ are local.

The construction of the new isomorphism $\Psi_{\genus, \boundaries}$ involves choosing particular trivializations of $\RMFE$ and $\RMFD$.
By Item~\ref{prop:fprojective:repr} of Proposition~\ref{prop:fprojective}, any surface in $\moduligb$ may be obtained from a Fuchsian projective surface in $\moduligbmobf$ by boundary reparametrizations.
Since we are seeking to define a reparametrization invariant trivialization, we let it be determined by pullback of a trivialization over $\moduligbmobf$ under reparametrization.
In this process, we will have to show invariance under M\"obius reparametrizations, that is, under $\phi \in \psl \subset \Diffpan$.
While it does not seem essential how exactly the trivializations over $\moduligbmobf$ are chosen, it is important that they are chosen in a consistent manner for $\RMFE$ and $\RMFD$.
We make this choice by sewing a disk to the first boundary component, as follows.

\begin{proposition}
	Fix $\boundaries \geq 1$ and $\genus \geq 0$.
	Let $Z$ be a reparametrization invariant trivialization of a local \gls{fsmooth} real one-dimensional modular functor $\RMFE$ over $\disks$ and $\moduli{\genus}{\boundaries - 1}$.
	Then, extending the trivialization to $\Sigma \in \moduligbmobf$ by solving the linear equations
	\begin{equation}
		\label{eq:z_linear}
		Z(\Sigma) \sewx{\RMFE}{1}{1} Z(\disk) = Z(\Sigma \sew{1}{1} \disk)
	\end{equation}
	yields a well-defined reparametrization invariant trivialization over $\moduligb$.
	\label{prop:induction_trivialization}
\end{proposition}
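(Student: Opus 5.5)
The plan is to build $Z$ in two stages: first on the Fuchsian projective surfaces $\moduligbmobf$ via~\eqref{eq:z_linear}, then on all of $\moduligb$ by transporting along boundary reparametrizations, using the action of the central extension from Proposition~\ref{prop:actionsewingE}. Well-definedness then reduces to invariance under M\"obius reparametrizations.

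\emph{Stage one.} For $\Sigma \in \moduligbmobf$ the sewn surface $\Sigma \sew{1}{1} \disk$ lies in $\moduli{\genus}{\boundaries-1}$, where $Z$ is already given. Since $\sewx{\RMFE}{1}{1}$ is a fibrewise isomorphism of $\Rp$-torsors, \eqref{eq:z_linear} has a unique solution $Z(\Sigma) \in \RMFE(\Sigma)$.

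\emph{Stage two.} By Item~\ref{prop:fprojective:repr} of Proposition~\ref{prop:fprojective}, every $\Sigma \in \moduligb$ can be written as $\Sigma' \diffActing{1} \phi_1 \cdots \diffActing{\boundaries} \phi_\boundaries$ with $\Sigma' \in \moduligbmobf$ and $\phi_j \in \Diffpan$. I set
\begin{equation*}
Z(\Sigma) = Z(\Sigma') \diffActingE{1} Z_{\disk,1}(\phi_1) \cdots \diffActingE{\boundaries} Z_{\disk,1}(\phi_\boundaries).
\end{equation*}
By Item~\ref{prop:actionsewingE_diffsew} of Proposition~\ref{prop:actionsewingE}, the order of the actions does not matter. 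By the definition~\eqref{eq:local_triv_ext}, Item~\ref{prop:local_equiv_3} of Proposition~\ref{prop:local_equiv}, and Item~\ref{prop:actionsewingE_samesew} of Proposition~\ref{prop:actionsewingE}, consecutive actions at the same boundary component compose without a cocycle. Here I use that $Z$ is reparametrization invariant on $\disks$, so $\Omega^Z_{\disk,1}$ vanishes on $\Diffpan$.

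\emph{Ambiguity of the representation.} The representation is unique only up to $\Sigma' \mapsto \Sigma' \diffActing{j} m$ with $m \in \psl$, together with the compensating $\phi_j \mapsto m^{-1} \circ \phi_j$. Indeed, by Item~\ref{prop:fprojective:mobius} of Proposition~\ref{prop:fprojective} these are the only reparametrizations preserving $\moduligbmobf$. By the composition rule just described, well-definedness is therefore equivalent to the claim
\begin{equation*}
Z(\Sigma' \diffActing{j} m) = Z(\Sigma') \diffActingE{j} Z_{\disk,1}(m),
\end{equation*}
where the left side is computed via~\eqref{eq:z_linear}. I check this in two cases.

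\emph{Case $j \neq 1$.} Apply $\sewx{\RMFE}{1}{1} Z(\disk)$ to the right side. Item~\ref{prop:actionsewingE_assoc} of Proposition~\ref{prop:actionsewingE} moves the action past the sewing, giving $Z(\Sigma' \sew{1}{1} \disk) \diffActingE{j} Z_{\disk,1}(m)$. This equals $Z\big((\Sigma' \sew{1}{1} \disk) \diffActing{j} m\big)$ by reparametrization invariance of $Z$ on $\moduli{\genus}{\boundaries-1}$, i.e. Item~\ref{prop:local_equiv_2} of Proposition~\ref{prop:local_equiv}. That is precisely~\eqref{eq:z_linear} for $\Sigma' \diffActing{j} m$.

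\emph{Case $j = 1$.} Here I use Item~\ref{prop:actionsewingE:invinv} of Proposition~\ref{prop:actionsewingE} together with Corollary~\ref{cor:involution_trivial}:
\begin{equation*}
\big(Z(\Sigma') \diffActingE{1} Z_{\disk,1}(m)\big) \sewx{\RMFE}{1}{1} Z(\disk) = Z(\Sigma') \sewx{\RMFE}{1}{1} \big(Z(\disk) \diffActingE{1} Z_{\disk,1}(\invinv(m))\big).
\end{equation*}
The inner factor is $Z(\disk \diffActing{1} \invinv(m)) = Z(\disk)$, since $\disk \diffActing{1} \invinv(m) = \disk$ in $\disks$. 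The geometric identity~\eqref{eq:act_between} shows that both sides cover the same surface $\Sigma' \sew{1}{1} \disk$, so the claim holds.

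\emph{Remaining properties.} Reparametrization invariance in the form of Item~\ref{prop:local_equiv_2} of Proposition~\ref{prop:local_equiv} holds by construction: acting by $Z_{\disk,1}(\psi)$ for $\psi \in \Diffpan$ just composes the $\phi_j$. Extending Item~\ref{prop:local_equiv_2} from reference surface $\disk$ to an arbitrary reference surface uses~\eqref{eq:mixed_cocycle_change} together with reparametrization invariance of $Z$ on disks. If $\boundaries = 1$, I would additionally check that the case $j = 1$ above covers all M\"obius ambiguities.

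\emph{Main obstacle.} I expect the hardest step to be \gls{fsmooth}ness of the resulting $Z$. The plan is to choose the Fuchsian representative smoothly: take negatively oriented unit-speed parametrizations in the uniform metric of the double, as in the proof of Proposition~\ref{prop:fprojective}. The smooth dependence of this metric on \gls{fsmooth} curves in $\moduligb$ would then follow from Lemma on doubling (Item~\ref{lemma:conj_double:double}) and uniformization. The residual $\psl$ ambiguity is harmless by the invariance just established. Combined with smoothness of the action and of the sewing isomorphisms, this should give smoothness of $Z$.
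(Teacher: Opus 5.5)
Your proposal is correct and follows essentially the same route as the paper: define $Z$ on $\moduligbmobf$ via \eqref{eq:z_linear}, extend it to $\moduligb$ by acting with $Z_{\disk,1}(\phi_j)$ as in \eqref{eq:triv_non_projective}, and reduce well-definedness to M\"obius invariance, checked separately for $j \neq 1$ and $j = 1$. The paper runs these two checks through the cocycle identities \eqref{eq:cocycle_id_mixed_2} and \eqref{eq:cocycle_id_mixed_3}, which are the cocycle forms of what you use directly (Items~\ref{prop:actionsewingE_assoc} and~\ref{prop:actionsewingE:invinv} of Proposition~\ref{prop:actionsewingE} with Corollary~\ref{cor:involution_trivial}); the paper does not treat Fr\"olicher smoothness of the extension at all, so your plan for that step goes beyond what its proof provides.
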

\begin{proof}
	The new trivialization~\eqref{eq:z_linear} over Fuchsian projective surfaces $\Sigma \in \moduligbmobf$ singles out $\partial_1 \Sigma$ by attaching a disk to that boundary component, which reduces the number of boundary components by one.
	We will check that, for $1 \leq j \leq \boundaries$ and $\phi \in \Diffpan$, assuming that $\Sigma \diffActing{j} \phi \in \moduligbmobf$ is again Fuchsian projective, it follows that $\Omega_{\disk, 1, j}^Z(\Sigma, \phi) = 0$ (see Item~\ref{prop:local_equiv_2} of Proposition~\ref{prop:local_equiv}) --- yielding reparametrization invariance.
	By Item~\ref{prop:fprojective:mobius} of Proposition~\ref{prop:fprojective}, it follows that $\phi \in \psl$ is a M\"obius transformation.
	By~\eqref{eq:Ecocycle_pair}, the linear equation~\eqref{eq:z_linear} is equivalent to $\Omega^Z_{1,1}(\Sigma, \disk) = 0$.
	For ${j \neq 1}$, consider the surface-deformation-disk cocycle identity~\eqref{eq:cocycle_id_mixed_2} for $\phi \in \psl$, and ${\Sigma_1 = \Sigma}$ and $\Sr = \Sigma_2 = \disk$, 
	\begin{equation}
	\underbrace{\Omega^Z_{1,1}(\Sigma \diffActing{j} \phi, \disk)}_{= \; 0 \; \textnormal{by~\eqref{eq:z_linear}}}
	\; + \; \Omega^Z_{\disk, 1, j}(\Sigma, \phi)
	\; = \; \Omega^Z_{\disk, 1, j}(\Sigma \sew{1}{1} \disk, \phi)
	\; + \; \underbrace{\Omega^Z_{1,1}(\Sigma, \disk)}_{= \; 0 \; \textnormal{by~\eqref{eq:z_linear}}}.
	\label{eq:PSL_cocycle_ides}
	\end{equation}
	The first term on the right-hand side of~\eqref{eq:PSL_cocycle_ides} vanishes because we assumed reparametrization invariance for any genus $\genus$ surface with $\boundaries - 1$ boundary components, such as $\Sigma \sew{1}{1} \disk \in \moduli{\genus}{\boundaries - 1}$ (see again Item~\ref{prop:local_equiv_2} of Proposition~\ref{prop:local_equiv}).
	Therefore, also the second term of the left-hand side of~\eqref{eq:PSL_cocycle_ides} vanishes: $\Omega_{\disk, 1, j}^Z(\Sigma, \phi) = 0$, as desired.
	For $j = 1$, we can likewise use the surface-deformation-disk cocycle identity~\eqref{eq:cocycle_id_mixed_3} with $\Sigma_1 = \Sigma$ and $\Sr = \Sigma_2 = \disk$, 
	\begin{equation}
		\overbrace{\Omega^Z_{1,1}(\Sigma \diffActing{1} \phi, \disk)}^{= \; 0 \; \textnormal{by~\eqref{eq:z_linear}}}
		\; + \; \Omega^Z_{\disk, 1,1}(\Sigma, \phi)
		\; = \;\; \underbrace{
			\Omega^Z_{1,1}(\Sigma, \disk \diffActing{1} \invinv(\phi))
		}_{
			\mathclap{\begin{subarray}{l}
			= \; \Omega^Z_{1,1}(\Sigma, \disk) \; \textnormal{by Item~\ref{lemma:complex_deformations:disk_invariant} of Lemma~\ref{lemma:complex_deformations}} \\
			= \; 0 \; \textnormal{by~\eqref{eq:z_linear}}
	\end{subarray}}
		}
		\; + \; \overbrace{\Omega^Z_{\disk, 1,1}(\disk, \invinv(\phi))}^{= \; 0 \; \textnormal{by~\eqref{eq:local_surface_def}}}  ,
	\end{equation}
	so $\Omega^Z_{\disk, 1,1}(\Sigma, \phi)=0$, as desired.
	This proves the reparametrization invariance over $\moduligbmobf$.

	By Item~\ref{prop:fprojective:repr} of Proposition~\ref{prop:fprojective}, any other surface $\Sigma \in \moduligb$ may be obtained from a Fuchsian projective surface by boundary reparametrization, thus defining the trivialization 
	\begin{equation}
		Z(\Sigma)
		=
		Z(\check{\Sigma}) \diffActingE{1} Z_{\disk, 1}(\phi_1) \cdots  \diffActingE{\boundaries} Z_{\disk, 1}(\phi_\boundaries),
		\label{eq:triv_non_projective}
	\end{equation}
	where $\check{\Sigma} \in \moduligbmobf$ and $\phi_1, \dots, \phi_\boundaries \in \Diffpan$ are such that $\Sigma = \check{\Sigma} \diffActing{1} \phi_1 \cdots \diffActing{\boundaries} \phi_\boundaries$.
	The representative is unique up to choice of M\"obius transformations $\psi_1, \dots, \psi_\boundaries \in \mob$ resulting in
	\begin{equation}
		Z(\check{\Sigma} \diffActing{1} \psi_1 \cdots \diffActing{\boundaries} \psi_\boundaries) \diffActingE{1} Z_{\disk, 1}(\psi_1^{-1} \circ \phi_1) \cdots  \diffActingE{\boundaries} Z_{\disk, 1}(\psi_{\boundaries}^{-1} \circ \phi_\boundaries).
		\label{eq:triv_non_projective_mob}
	\end{equation}
	Thanks to the reparametrization invariance by M\"obius transformations over $\moduligbmobf$ proven above, we have 
	\begin{equation}
		Z(\check{\Sigma} \diffActing{1} \psi_1 \cdots \diffActing{\boundaries} \psi_\boundaries) = Z(\check{\Sigma}) \diffActingE{1} Z_{\disk, 1}(\psi_1) \cdots \diffActingE{\boundaries} Z_{\disk, 1}(\psi_\boundaries),
	\end{equation}
	and thanks to the assumed reparametrization invariance by $\Diffpan$ over $\disk$, we find that 
	\begin{equation}
		Z_{\disk, 1}(\psi_j^{-1} \circ \phi_j) = Z_{\disk, 1}(\psi_j^{-1}) \diffActingE{j} Z_{\disk, 1}(\phi_j) , \qquad 1 \leq j \leq \boundaries.
	\end{equation}
	By Item~\ref{lemma:complex_deformations:disk_invariant} of Lemma~\ref{lemma:complex_deformations} and Remark~\ref{remark:inverse_E}, we have $Z_{\disk, 1}(\psi_j^{-1}) = (Z_{\disk, 1}(\psi_j))^{-1}$, and therefore, the definitions of $Z(\Sigma)$ by Equations~\eqref{eq:triv_non_projective} and~\eqref{eq:triv_non_projective_mob} are equivalent.
	This proves that the newly defined trivialization $Z$ over~$\moduligb$ is well-defined.
	Finally, $Z$ is reparametrization invariant by Item~\ref{prop:local_equiv_3} of Proposition~\ref{prop:local_equiv}.
\end{proof}

For the induction step, the concrete assumption is that the isomorphism $\Psi$ is defined in terms of trivializations $Z$ and $W$ for all surfaces with any number of boundary components for genus strictly less than $\genus$. 
For genus $\genus$ surfaces, in contrast, we assume that $\Psi$ is only defined for surfaces with strictly less than $\boundaries$ boundary components.
We also assume that the identity $\Omega^Z_{j,k}(\Sigma_1, \Sigma_2) = \Omega^W_{j,k}(\Sigma_1, \Sigma_2)$ holds for all surfaces $\Sigma_1 \in \moduli{\genus_1}{\boundaries_1}$ and $\Sigma_2 \in \moduli{\genus_2}{\boundaries_2}$ 
such that either $\genus_1 + \genus_2 < \genus$ and $\boundaries_1$, $\boundaries_2$ are arbitrary, or $\genus_1 + \genus_2 = \genus$ and $\boundaries_1 + \boundaries_2 - 2 < \boundaries$.
Moreover, we assume for the self-sewing cocycles that $\Omega^Z_{j,k}(\Sigma_1) = \Omega^W_{j,k}(\Sigma_1)$ for $\Sigma_1 \in \moduli{\boundaries_1}{\genus_1}$ such that either $\genus_1 + 1 < \genus$ and $\boundaries_1$ is arbitrary, or $\genus_1 + 1 = \genus$ and $\boundaries_1 - 2 < \boundaries$.

\begin{proposition}
\label{prop:induction_extension}
Let $\Psi \colon \RMFE \to \RMFD$ be an isomorphism of \emph{local} \gls{fsmooth} real one-dimen\-sional modular functors up to genus $\genus \geq 0$ and (not including) $\boundaries \geq 2$ boundary components, and let $Z$ and $W$ respectively be reparametrization invariant trivializations of $\RMFE$ and $\RMFD$ up to genus $\genus$ and (not including) $\boundaries \geq 2$, such that $\Psi \circ Z = W$.
Then, the reparametrization invariant extensions of both trivializations $Z$ and $W$ to $\moduligb$  obtained from Proposition~\ref{prop:induction_trivialization} induce an extension of the isomorphism $\Psi$ to $\boundaries$ boundary components, defined by 
\begin{equation}
\Psi_{\genus, \boundaries}(Z(\Sigma)) = W(\Sigma) , \qquad \Sigma \in \moduligb .
\end{equation}
\end{proposition}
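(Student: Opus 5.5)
The plan is to show that every cocycle of Section~\ref{section:cocycles} in which a surface of $\moduligb$ appears, either as the result of a sewing or as a factor that is capped down to fewer boundary components, takes the same value for $Z$ and $W$. The isomorphism then follows from~\eqref{eq:iso_triv}. The engine is the normalization built into Proposition~\ref{prop:induction_trivialization}: for Fuchsian projective $\check\Sigma \in \moduligbmobf$ we have $\Omega^Z_{1,1}(\check\Sigma, \disk) = 0 = \Omega^W_{1,1}(\check\Sigma, \disk)$. This normalization is propagated to all of $\moduligb$ using locality, and then to general sewings using the associativity identities~\eqref{eq:basic_cocycle_idenity}--\eqref{eq:cycle_selfsewing} together with the induction hypothesis.

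\emph{Step 1: capping cocycles.} First I show $\Omega^Z_{1,1}(\Sigma, \disk \diffActing{1} \psi) = \Omega^W_{1,1}(\Sigma, \disk \diffActing{1} \psi)$ for all $\Sigma \in \moduligb$ and $\psi \in \Diffpan$. Write $\Sigma = \check\Sigma \diffActing{1} \phi_1 \cdots \diffActing{\boundaries} \phi_\boundaries$ as in Item~\ref{prop:fprojective:repr} of Proposition~\ref{prop:fprojective}. The reparametrizations at labels $j \neq 1$ drop out by~\eqref{eq:local_surface_surface}. The one at label $1$ is moved across the seam by~\eqref{eq:local_inversion}, which leaves $\Omega_{1,1}(\check\Sigma, \disk \diffActing{1} \psi')$. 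Its difference from $\Omega_{1,1}(\check\Sigma, \disk) = 0$ is, by the mixed identity~\eqref{eq:cocycle_id_mixed_1}, a mixed cocycle at the disk. By~\eqref{eq:mixed_cocycle_change} and locality this mixed cocycle does not depend on the surface sewn to the disk, so I may replace $\check\Sigma$ by $\disk$. The difference thereby becomes a difference of disk-disk cocycles, and these agree for $Z$ and $W$ by Theorem~\ref{thm:lpot} and~\eqref{eq:disk_disk_normalization}. For capping at a label $j \neq 1$, I apply~\eqref{eq:cycle_sewing} to cap both labels $1$ and $j$ of $\Sigma$. This expresses $\Omega_{j,1}(\Sigma, \disk)$ through $\Omega_{1,1}(\Sigma, \disk)$ and through capping cocycles of surfaces in $\moduli{\genus}{\boundaries-1}$, which agree by the induction hypothesis.

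\emph{Step 2: sewings producing $\moduligb$.} Let $\Sigma = \Sigma_1 \sew{j}{k} \Sigma_2$ with $\Sigma \in \moduligb$ and some boundary label $l$ of $\Sigma$ lying on $\Sigma_2$. Apply~\eqref{eq:basic_cocycle_idenity} with $\Sigma_3 = \disk$ sewn at $l$:
\begin{equation*}
\Omega_{j,k}(\Sigma_1, \Sigma_2) = \Omega_{j,k}(\Sigma_1, \Sigma_2 \sew{l}{1} \disk) + \Omega_{l,1}(\Sigma_2, \disk) - \Omega_{l,1}(\Sigma, \disk).
\end{equation*}
The first term on the right produces a surface with $\boundaries - 1$ components, so it agrees for $Z$ and $W$ by induction. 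The last term is covered by Step~1. The middle term has $\Sigma_2 \in \moduli{\genus_2}{\boundaries_2}$ with $\boundaries_2 \leq \boundaries$, so it is either covered by induction or by Step~1, except in one degenerate case: $\Sigma_1$ an annulus ($\genus_1 = 0$, $\boundaries_1 = 2$), where $\Sigma_2$ is itself of type $(\genus, \boundaries)$. In that case I cap on the other side instead, choosing $l$ on the annulus side. The annulus-disk cocycle that then appears is a genus-zero cocycle with fewer than $\boundaries$ components when $\genus > 0$ or $\boundaries > 2$. When $(\genus, \boundaries) = (0, 2)$, I instead use Step~1 applied to both annuli, since an annulus capped at one end is a disk. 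Self-sewings $\sewself{j}{k} \Sigma'$ landing in $\moduligb$ are handled in the same way via~\eqref{eq:basic_self_cocycle_idenity}: capping an outer boundary component reduces to a self-sewing with $\boundaries - 1$ components plus capping cocycles.

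I expect the main obstacle to be the bookkeeping of these degenerate cases. The induction on $\boundaries$ is not well-founded when an annulus factor is involved, in particular for $\moduli{0}{2}$ itself, where the new trivialization appears on both sides of the identity. I also expect some care to be needed to ensure that the independence argument in Step~1 applies to non-Fuchsian reparametrizations. Finally, I will check that the resulting $\Psi_{\genus,\boundaries}$ is Fr\"olicher smooth, which is immediate from the smoothness of the trivializations constructed in Proposition~\ref{prop:induction_trivialization}.
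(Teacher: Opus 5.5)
Your overall architecture matches the paper's. You first settle the capping cocycles of surfaces in $\moduligb$, then reduce every other new sewing and self-sewing cocycle to them by capping an outer boundary component via \eqref{eq:basic_cocycle_idenity} and \eqref{eq:basic_self_cocycle_idenity}. Step~2 is fine. The annulus case is not really degenerate: there the middle term is a capping cocycle of a surface in $\moduligb$, which Step~1 is meant to cover. The first reductions in Step~1 are also fine.

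The gap is in how Step~1 handles the reparametrization sitting on the seam. You claim that $\Omega_{1,1}(\check\Sigma,\disk\diffActing{1}\psi')-\Omega_{1,1}(\check\Sigma,\disk)$ is a mixed cocycle by \eqref{eq:cocycle_id_mixed_1}. That identity needs the deformation to act at a boundary component that survives the sewing. Here $\psi'$ acts at the seam, where the only available relation is \eqref{eq:cocycle_id_mixed_3}, i.e.\ \eqref{eq:local_inversion}, which merely moves $\psi'$ back across.

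The claim also leads to a contradiction. For $\psi'\in\Diffpan$, all mixed cocycles vanish by Item~\ref{prop:local_equiv_2} of Proposition~\ref{prop:local_equiv}. So your argument would make the difference zero, and after replacing $\check\Sigma$ by $\disk$ the disk-disk cocycle would be constant. This contradicts Theorem~\ref{thm:lpot} for $\charge\neq0$.

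The asserted independence of the surface is false in its own right. With $\disk\diffActing{1}\chi$ in place of $\check\Sigma$, it would make $\phi\mapsto\Omega_{1,1}(\disk\diffActing{1}\phi,\disk)-\Omega_{1,1}(\disk,\disk)$ a homomorphism on the perfect group $\Diffpan$, hence zero. In general, the $\psi$-dependence of $\Omega_{1,1}(\check\Sigma,\disk\diffActing{1}\psi)$ sees the global geometry of the capped surface, since the seam moves relative to the remaining boundary components. So it is not governed by the disk-disk cocycle. A warning sign is that your Step~1 never uses $\boundaries\geq2$.

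The missing idea is to move the reparametrization off the seam using a second boundary component, which is exactly where $\boundaries\geq2$ enters. Apply \eqref{eq:basic_cocycle_idenity} to $\disk$, $\check\Sigma\diffActing{1}\phi$, $\disk$, capping labels $1$ and $2$:
\begin{multline*}
\Omega^Z_{1,1}(\disk,\check\Sigma\diffActing{1}\phi)+\Omega^Z_{2,1}\big(\disk\sew{1}{1}(\check\Sigma\diffActing{1}\phi),\disk\big)\\
=\Omega^Z_{1,1}\big(\disk,(\check\Sigma\diffActing{1}\phi)\sew{2}{1}\disk\big)+\Omega^Z_{2,1}(\check\Sigma\diffActing{1}\phi,\disk).
\end{multline*}
The second and third terms only involve surfaces with fewer than $\boundaries$ boundary components, so they agree for $Z$ and $W$ by induction. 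The last term equals $\Omega^Z_{2,1}(\check\Sigma,\disk)$ by \eqref{eq:local_surface_surface}, since $\phi$ now acts away from the seam. Now apply the same identity at $\phi=\id$, together with $\Omega^Z_{1,1}(\disk,\check\Sigma)=0=\Omega^W_{1,1}(\disk,\check\Sigma)$ from \eqref{eq:z_linear}. This shows that $\Omega_{2,1}(\check\Sigma,\disk)$ agrees for $Z$ and $W$, hence so does $\Omega_{1,1}(\disk,\check\Sigma\diffActing{1}\phi)$.

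You in fact wrote this identity down in your sub-step for labels $j\neq1$, but used it in the opposite direction. (It is \eqref{eq:basic_cocycle_idenity}, not \eqref{eq:cycle_sewing}.) Theorem~\ref{thm:lpot} is not needed at this stage beyond the base case of the induction.
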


Note that this result only assumes that the extension of the sections to $\moduligb$ is defined using Proposition~\ref{prop:induction_trivialization}.

\begin{proof}
It suffices to check that all new cocycles involving surfaces with one additional boundary component agree for $Z$ and $W$.
The first case to consider is $\Omega^Z_{1,1}(D, \Sigma)$ for $D \in \disks$ and $\Sigma \in \moduligb$.
The surface $\Sigma$ is related to a Fuchsian projective surface $\check{\Sigma} \in \moduligbmobf$ by $\Sigma = \hat{\Sigma} \diffActing{1} \phi_1 \cdots \diffActing{\boundaries} \phi_\boundaries$ for diffeomorphisms $\phi_1, \dots, \phi_\boundaries \in \Diffpan$ (see Item~\ref{prop:fprojective:repr} of Proposition~\ref{prop:fprojective}).
By diffeomorphism invariance and symmetry, 
it suffices to consider the case $\Sigma = \check{\Sigma} \diffActing{1} \phi$ for $\phi \in \Diffpan$ and $D = \disk$.
Applying the surface-disk-disk cocycle identity~\eqref{eq:basic_cocycle_idenity} with
$\Sigma_1=\Sigma_3=\disk$, and $\Sigma_2=\check{\Sigma} \diffActing{1} \phi$,
attaching a unit disk at the second boundary component\footnote{This is why we assume $\boundaries \geq 2$ here.}, and using the induction hypothesis (the surface has one less boundary component), we obtain 
\begin{equation}
\begin{aligned}
	\Omega^Z_{1, 1}\big(\disk, \check{\Sigma} \diffActing{1} \phi\big) 
	&+ \overbrace{\Omega^Z_{2, 1}\big(\disk \sew{1}{1} (\check{\Sigma} \diffActing{1} \phi), \disk\big)}^{= \; \Omega^W_{2, 1}\big(\disk \sew{1}{1} (\check{\Sigma} \diffActing{1} \phi), \disk\big)}
	\\ &= \underbrace{\Omega^Z_{1, 1}\big(\disk, (\check{\Sigma} \diffActing{1} \phi) \sew{2}{1} \disk\big)}_{= \; \Omega^W_{1, 1}\big(\disk, (\check{\Sigma} \diffActing{1} \phi) \sew{2}{1} \disk\big)} 
	+ \Omega^Z_{2, 1}\big((\check{\Sigma} \diffActing{1} \phi), \disk\big).
\end{aligned}
\end{equation}
Thus, $\Omega^Z_{1,1}(\disk, \check{\Sigma} \diffActing{1} \phi)$ equals the same cocycle of $W$ if and only if $\Omega^Z_{2,1}(\check{\Sigma} \diffActing{1} \phi, \disk) = \Omega^Z_{2,1}(\check{\Sigma}, \disk)$ does.
To the latter, we apply the same cocycle identity again, now with $\phi = \id$.
Since the trivialization over $\check{\Sigma}$ was constructed using Proposition~\ref{prop:induction_trivialization}, we have $\Omega^Z_{1,1}(\disk, \check{\Sigma}) = 0$, and thus, the result follows.
Note that by changing the index ``$2$'' to any other boundary component ``$j$'' and putting back in the other diffeomorphisms, we obtain
\begin{equation}
\label{eq:cocycles_agree_one_disk}
\Omega^Z_{1,j}(D, \Sigma) = \Omega^W_{1,j}(D,\Sigma), \qquad
D \in \disks, \; \Sigma \in \moduligb, \; 1 \leq j \leq \boundaries.
\end{equation}
Second, any other new cocycles without self-sewing are of the form $\Omega^Z_{j, k}(\Sigma_1, \Sigma_2)$ and $\Omega^W_{j, k}(\Sigma_1, \Sigma_2)$ for surfaces $\Sigma_1 \in \moduli{\genus_1}{\boundaries_1}$ and $\Sigma_2 \in \moduli{\genus_2}{\boundaries_2}$ with $\genus_1 + \genus_2 = \genus$ and $\boundaries_1 + \boundaries_2 - 2 = \boundaries$.
If either of the surfaces is a disk, this case is covered by Equation~\eqref{eq:cocycles_agree_one_disk} above.
Since $\boundaries \geq 2$, either $\Sigma_1$ or $\Sigma_2$ has at least one more boundary component.
Without loss of generality, let us assume that $\Sigma_1$ has another boundary component $l \neq j$.
Consider the surface-disk-disk cocycle identity~\eqref{eq:basic_cocycle_idenity} 
attaching a unit disk $\Sigma_3=\disk$ to this boundary component: 
\begin{equation}
\Omega^Z_{j, k}(\Sigma_1, \Sigma_2) 
\; + \; \Omega^Z_{l, 1}(\Sigma_1 \sew{j}{k} \Sigma_2, \disk)
\; = \; \Omega^Z_{j, k}(\Sigma_1, \Sigma_2 \sew{l}{1} \disk) 
\; + \; \Omega^Z_{l, 1}(\Sigma_2, \disk).
\end{equation}
Note that all the terms except the first one involve surfaces with at least one fewer boundary components in total.
Therefore, the induction hypothesis implies that the cocycles agree.
Finally, the self-sewing cocycles $\Omega^Z_{j,k}(\Sigma)$ to consider involve $\Sigma \in \moduli{\genus - 1}{\boundaries + 2}$, since ${\sewself{j}{k} \Sigma \in \moduligb}$.
Let $1 \leq l \leq \boundaries + 2$ be a boundary component of $\Sigma$ other than $j$ or $k$.
Thanks to the cocycle identity~\eqref{eq:basic_self_cocycle_idenity} attaching a unit disk $\Sigma_2=\disk$ to this boundary component of $\Sigma_1=\Sigma$,
using the induction hypothesis and Equation~\eqref{eq:cocycles_agree_one_disk}, we see that
\begin{equation}
\Omega^Z_{j, k}(\Sigma)
\; + \; \underbrace{\Omega^Z_{l, 1}(\sewself{j}{k} \Sigma, \disk)}_{= \; \Omega^W_{l, 1}(\sewself{j}{k} \Sigma, \disk)}
\; = \;\; \underbrace{\Omega^Z_{j, k}(\Sigma \sew{l}{1} \disk)}_{= \; \Omega^W_{j, k}(\Sigma \sew{l}{1} \disk)} 
\; + \; \underbrace{\Omega^Z_{l, 1}(\Sigma, \disk)}_{= \; \Omega^W_{l, 1}(\Sigma, \disk)} .
\end{equation}
This shows that
$\Omega^Z_{j,k}(\Sigma) = \Omega^W_{j,k}(\Sigma)$, and we have covered all new cocycles.
\end{proof}

\subsection{Modular and crossing invariance}
\label{section:modular_invariance_proof}

Combining the genus $0$ case of Theorem~\ref{thm:universality} (i.e., Theorem~\ref{thm:universality_genus_zero}), and Theorem~\ref{thm:detrc_prop} concerning the real determinant line bundle, we obtain the following additional properties.

\begin{proposition}
Any {local} real one-dimensional modular functor $\RMFE$ is {flatly modular invariant}, {crossing invariant}, and {hyperbolically modular invariant}.
\label{prop:E_prop}
\end{proposition}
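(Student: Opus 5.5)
The plan is to transport all three properties from the real determinant line bundle $\Detrc$ to $\RMFE$ via the genus-zero isomorphism. Let $\charge$ be the central charge of $\RMFE$. Theorem~\ref{thm:detrc_prop} says $\Detrc$ is local, flatly modular invariant, crossing invariant and hyperbolically modular invariant, and has the same central charge. Theorem~\ref{thm:universality_genus_zero} then gives an isomorphism $\Psi \colon \Detrc \to \RMFE$ over all $\moduli{0}{\boundaries}$ that respects sewing of genus-zero surfaces.

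\emph{Crossing invariance.} Start from the constant-curvature trivialization $Z_{\mathrm{D}}(\Sigma) = [\gcc(\Sigma)]$ of $\Detrc$ on $\pants$ and set $Z := \Psi \circ Z_{\mathrm{D}}$ on $\pants$. This is reparametrization invariant: the relevant cocycles involve only genus-zero surfaces and are preserved by $\Psi$, so Item~\ref{prop:local_equiv_1} of Proposition~\ref{prop:local_equiv} carries over. For hyperbolic pants with $P_1 \sew{1}{1} P_2 = P_3 \sew{1}{1} P_4 \in \moduli{0}{4}$, I apply $\Psi_{0,4}$ to identity~\eqref{eq:crossing_identity} for $\Detrc$. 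Since $\Psi$ intertwines the sewing isomorphisms, this yields~\eqref{eq:crossing_identity} for $Z$.

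\emph{Flat modular invariance.} Here $\Psi$ is only defined in genus zero, while the torus lives in genus one, so a different route is needed. First I argue that $\Psi$ maps the cylindrical trivialization of $\Detrc(\annuli)$ to that of $\RMFE(\annuli)$. Both properties in Proposition~\ref{prop:cylindrical} are preserved by $\Psi$: the annulus–annulus cocycles coincide, and the Lie algebra cocycles $\omega_{\A_\tau,1}$ are computed through genus-zero cocycles via~\eqref{eq:cocycles_A_D}. Uniqueness in Proposition~\ref{prop:cylindrical} then identifies the two trivializations.

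Next I need to compare $\sewxself{\RMFE}{1}{2}Z(A)$ with $\sewxself{\RMFE}{1}{2}Z(B)$ in one fiber over the torus. The idea is to reduce the self-sewing to sewings of genus-zero pieces. I cut $A$ along a second loop so that the self-sewn annulus is re-expressed using the self-sewing cocycle identity~\eqref{eq:cycle_selfsewing}. The difference $\Omega^Z_{1,2}(A) - \Omega^Z_{1,2}(B)$ would then be a combination of genus-zero cocycles, which agree with those of $\Detrc$, where this difference vanishes.

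Concretely, both $A$ and $B$ arise from one annulus $C$ with four parametrized boundary pieces, or from a pair of pants. Then $\sewself{1}{2}A = \sewself{1}{2}B$ is realized as two successive self-sewings of a genus-zero surface, taken in either order. I plan to use \eqref{eq:cycle_selfsewing}, together with the cylindrical normalization $\Omega^Z_{1,2}(\A_{\tau_1},\A_{\tau_2})=0$, to express the difference through genus-zero quantities only.

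\emph{Hyperbolic modular invariance.} I treat this the same way. Two pants decompositions of a handle by geodesics, $\sewself{j}{k} P_1 = \sewself{l}{m} P_2$, are related by an S-move. I realize the handle as a self-sewing of a genus-zero surface in $\moduli{0}{4}$ or of an annulus glued to pants, so that the two self-sewings differ only through genus-zero sewing cocycles. Those agree with $\Detrc$ under $\Psi$, with the crossing-invariant $Z$ just constructed.

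\emph{Main obstacle.} The hardest step is this reduction of self-sewing comparisons in genus one to genus-zero data. A torus or handle cannot be produced from a single genus-zero surface in two homotopically different ways by one self-sewing. So I must find an auxiliary genus-zero configuration whose two-step self-sewings realize both seams, and check that the extra self-sewing cocycles cancel in the identity~\eqref{eq:cycle_selfsewing}. I would try first a four-holed sphere obtained by cutting the torus along both seams $a$ and $b$, whose intersection is a single point. If that fails because the seams intersect, I would fall back on cutting along a third curve disjoint from both.
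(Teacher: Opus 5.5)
Your treatment of crossing invariance (transporting \eqref{eq:crossing_identity} through the genus-zero isomorphism) matches the paper. So does your identification of the cylindrical trivializations. The gap is in flat and hyperbolic modular invariance: the reduction to genus-zero cocycles for a fixed torus cannot work.

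Cocycle identities come from associativity of a single decomposition along pairwise disjoint seams. On a torus, or a one-holed torus, a simple closed curve disjoint from an essential seam $a$ is either isotopic to $a$ or inessential. Two non-homotopic seams $a,b$ therefore always intersect. The configurations you name do not exist in the Segal moduli spaces:
\begin{itemize}
\item cutting a torus along $a$ and $b$ gives a square with corners, not a four-holed sphere;
\item two successive self-sewings of a genus-zero surface give genus two, not a torus;
\item self-sewing a surface in $\moduli{0}{4}$ gives $\moduli{1}{2}$, not a handle.
\end{itemize}
Your fallback, a third curve disjoint from both seams, must bound a disk $D$. The identity for $\sewself{1}{2}(P \sew{3}{1} D)$ then only yields $\Xi_{1,2}(A) - \Xi_{1,2}(B) = \Xi_{1,2}(P_a) - \Xi_{1,2}(P_b)$, where $\Xi = \Omega^Z - \Omega^W$ and $\sewself{1}{2} P_a = \sewself{1}{2} P_b$ is the handle. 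This is exactly the S-move comparison your hyperbolic part was going to settle ``the same way,'' so the argument is circular. Genus-zero crossing invariance does not rescue it either, since A-moves never change the homotopy class, in the capped torus, of the non-separating curves.

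The missing idea is to move the torus through moduli space. In $\Xi_{1,2}(P \sew{3}{1} D) = \Xi_{1,2}(P) + \Xi_{1,1}(\sewself{1}{2} P, D)$ (cf.\ \eqref{eq:difference_cocycle_torus}), the first term on the right does not depend on $D$, and the second does not depend on the seam. Set $\rho(\tau, \alpha) = \Xi_{1,2}(\A_\tau \diffActing{1} \rotation_\alpha)$. By the disjoint-homotopic-seam argument used for tori, this depends only on the torus and the homotopy class of the seam. Hence $\rho \circ F - \rho = \Xi_{1,2}(P_b) - \Xi_{1,2}(P_a)$ is unchanged when the capping disk $D$ is deformed with the handle held fixed.

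Such deformations (Schiffer variations, using Fr\"olicher smoothness) move $(\tau,\alpha)$ through open sets. So $\rho \circ F = \rho + \chi(F)$ for a homomorphism $\chi \colon \mobZ \to \R$. This homomorphism vanishes because $\mobZ$ is generated by torsion elements, which gives flat modular invariance. The same identity, now with both $\rho$ and $\Xi_{1,1}(\sewself{1}{2}P, D)$ seam-independent, forces $\Xi_{1,2}(P)$ to be seam-independent, which is hyperbolic modular invariance.
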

\begin{proof}
	By Theorem~\ref{thm:universality_genus_zero}, $\RMFE$ is isomorphic to $\RMFD = \Detrc$ in genus zero for some $\charge \in \R$, and $\RMFD$ is crossing invariant by Theorem~\ref{thm:detrc_prop}.
	Since crossing invariance (Definition~\ref{def:crossing}) depends only on the genus zero sewing operations, $\RMFE$ is crossing invariant.

	Let $W(\Sigma) = [\gcc(\Sigma)]$ denote the trivialization~\eqref{eq:detrc_triv} of $\RMFD$ by constant curvature metrics, and let $Z$ be a reparametrization invariant trivialization of $\RMFE$ such that the cocycles of $W$ and $Z$ agree in genus $0$.
	In particular, then $Z$ and $W$ are the cylindrical trivializations on~$\annuli$ (Proposition~\ref{prop:cylindrical}).
	Consider the differences of (self-)sewing cocycles
	\begin{equation}
		\Xi_{j, k}^{Z,W}(\Sigma_1, \Sigma_2) = \Omega^Z_{j, k}(\Sigma_1, \Sigma_2) - \Omega^W_{j, k}(\Sigma_1, \Sigma_2), \qquad
		\Xi_{j, k}^{Z,W}(\Sigma) = \Omega^Z_{j, k}(\Sigma) - \Omega^W_{j, k}(\Sigma),
	\end{equation}
	which vanish whenever $\Sigma_1$, $\Sigma_2$, and $\Sigma_1 \sew{j}{k} \Sigma_2$ ($\Sigma$ and $\sewself{j}{k} \Sigma$) are of genus $0$.
	
	The cocycle identity associated to the decomposition of a torus into a pair of pants $P \in \pants$ and a disk $D \in \disks$, sewing them as $\sewself{1}{2} (P \sew{3}{1} D) \in \tori$, is
	\begin{equation}
		\underbrace{\Xi_{3,1}^{Z,W}(P, D)}_{\textnormal{$= 0$ (genus zero)}} + \Xi_{1,2}^{Z,W}(P \sew{3}{1} D) = \Xi_{1,2}^{Z,W}(P) + \Xi_{1,1}^{Z,W}(\sewself{1}{2} P, D).
		\label{eq:difference_cocycle_torus}
	\end{equation}
	We now perform the following construction (see Figure~\ref{fig:moduli_construction}): 
		\begin{enumerate}
		\item
			Find a geodesic $\vartheta$ in the homotopy class of the seam of $\sewself{1}{2}(P \sew{3}{1} D)$ in a flat metric.
		\item
			Unravel $\vartheta$ to obtain an annulus isomorphic to $\A_\tau \diffActing{1} \rotation_\alpha$ for some $\tau > 0$ and $\alpha \in [0, 2\pi)$.
		\item
			$\vartheta$ is unique up to translation and rotation, but every choice yields the same $\tau$ and $\alpha$.
	\end{enumerate}
		By Remark~\ref{remark:geodesic_annuli} and flat modular invariance of $\RMFD$, modular invariance of the function 
	\begin{equation}
		\rho(\tau, \alpha) = \Xi^{Z,W}_{1,2}(\A_\tau \diffActing{1} \rotation_\alpha), \qquad \tau > 0, \alpha \in [0, 2\pi),
	\end{equation}
	is equivalent to flat modular invariance of $\RMFE$.
	Observe that $\Xi_{1,2}^{Z,W}(P \sew{3}{1} D) = \rho(\tau, \alpha)$. 
		
	For a \gls{fsmooth} curve $D(t) \in \curves{\disks}$ rooted at $D(0) = D$, let $\tau(t)$ and $\alpha(t)$ denote the smooth functions resulting from the construction above.
	By \gls{fsmooth}ness of $Z$ and $W$, we may differentiate Equation~\eqref{eq:difference_cocycle_torus}, to obtain	
	\begin{equation}
\partial_t \, \rho(\tau(t), \alpha(t)) 
=		\partial_t \,  \Xi_{1,2}^{Z,W}(P \sew{3}{1} D(t)) 
		= \partial_t \,  \Xi_{1,1}^{Z,W}(\sewself{1}{2} P, D(t)) ,
		\label{eq:derivative_rho}
	\end{equation}
	which 
	is modular invariant, because $\sewself{1}{2} P$ is independent of the seam.
	
	Note that any pair $(\tau, \alpha) \in \Rp \times [0, 2\pi)$ may be obtained from our construction by cutting a disk from $\A_\tau \diffActing{1} \rotation_\alpha$. 
	(See also Figure~\ref{fig:moduli_construction}.)
	Moreover, the deformations $D(t)$ include Schiffer variations of the Teichm\"uller space of tori covering $\tori$ given by the marked seam; see~\cite{Nag:Complex_analytic_theory_of_Teichmuller_spaces} and~\cite[Example~4.13]{Maibach-Peltola:Complex_deformations_of_the_circle}.
	Since the Teichm\"uller space is one-dimensional $\C$, 
	Schiffer variation in a single point is sufficient to provide a local coordinate on the pair $(\tau, \alpha)$.
	Thus, we may integrate Equation~\eqref{eq:derivative_rho}, concluding that $\rho(\tau, \alpha)$ is modular invariant up to a constant, that is, $\rho(F(\tau, \alpha)) = \rho(\tau, \alpha) + \chi(F)$ for all ${F \in \mobZ}$.
	By applying successive modular transformations, we find that ${\chi \colon \mobZ \to \R}$ is a group homomorphism.
	However, the only such group homomorphism is $\chi = 0$, 
	so $\rho(\tau, \alpha)$ is modular invariant.
	This implies flat modular invariance of $\RMFE$.

\begin{figure}[t]
\centering
\begin{tikzpicture}[
	midarrow/.style={
		decoration={
			markings,
			mark=at position 0.5 with {\arrow[xshift=1.25pt]{>[length=2.5pt]}}
		},
		postaction={decorate}
	},
	mmidarrow/.style={
		decoration={
			markings,
			mark=at position 0.5 with {\arrow[xshift=2.5pt]{>[length=2.5pt]>[length=2.5pt]}}
		},
		postaction={decorate}
	}
]

\def\A{(1.9919, 3.0789).. controls (1.6871, 2.9711) and (1.3211, 3.3438) .. (1.0299, 3.2032).. controls (0.8739, 3.1279) and (0.8473, 2.9088) .. (0.7616, 2.7582).. controls (0.5767, 2.4334) and (0.0647, 2.1045) .. (0.2315, 1.77).. controls (0.3783, 1.4757) and (1.0268, 1.8949) .. (1.2066, 1.6195).. controls (1.3868, 1.3434) and (0.7671, 0.9373) .. (0.9514, 0.664).. controls (1.1996, 0.2959) and (1.861, 0.3357) .. (2.2733, 0.5004).. controls (2.4935, 0.5884) and (2.5011, 0.9717) .. (2.7249, 1.0501).. controls (2.9627, 1.1335) and (3.328, 0.6812) .. (3.4644, 0.8931).. controls (3.7007, 1.2603) and (2.5232, 1.5413) .. (2.6856, 1.9467).. controls (2.8182, 2.2776) and (3.617, 1.7951) .. (3.7393, 2.13).. controls (3.8614, 2.4645) and (3.3004, 2.6928) .. (3.0521, 2.948).. controls (2.8527, 3.1529) and (2.6964, 3.5291) .. (2.4108, 3.5174).. controls (2.2088, 3.509) and (2.1825, 3.1463) .. (1.9919, 3.0789) -- cycle}
	

	\begin{scope}[scale = 3]

	\node[circle] (pic_0_0) at (0, 0) {};
    \begin{scope}[shift=(pic_0_0)]	
		\path[draw, pattern=north west lines]
			(0,0) circle (0.2);
    \end{scope}

	\node[circle] (pic_1_0) at (0.6, 0) {};
    \begin{scope}[shift=(pic_1_0)]	
		\path[draw]
			(0,0) circle (0.07 and 0.2);
		\begin{scope}[shift={(0.6,0.2)}, rotate=45]
			\path[draw, densely dashed]
				(0,0) circle (0.16 and 0.02);
			\coordinate (PUA) at (0.16, 0);
			\coordinate (PUB) at (-0.16, 0);
		\end{scope}
		\begin{scope}[shift={(0.6,-0.2)}, rotate=-45]
			\path[draw, densely dashed]
				(0,0) circle (0.16 and 0.02);
			\coordinate (PBA) at (0.16, 0);
			\coordinate (PBB) at (-0.16, 0);
		\end{scope}
		\path[draw]
			(0, 0.2) to[in = 135, out = 0] (PUA)
			(0, -0.2) to[in = 225, out = 0] (PBA)
			(PBB) to[out=225, in=-90] (0.3, 0) to[in = 135, out = 90] (PUB);

    \end{scope}

	\node[circle] (pic_2_0) at (2, -0.55) {};
    \begin{scope}[shift=(pic_2_0)]	
	
	\coordinate (P1) at (0, 0);   
	\coordinate (P2) at (1, 0);   
	\coordinate (P3) at (1.2, 1.1);   
	\coordinate (P4) at (0.2, 1.1);   

	\draw[midarrow]
		(P1) -- (P2);
	\draw[mmidarrow]
		(P2) -- (P3);
	\draw[midarrow]
		(P4) -- (P3);
	\draw[mmidarrow]
		(P1) -- (P4);

	\draw[|-|]
		(0,-0.05) -- node[below] {$1$} (1, -0.05);
	\draw[|-|]
		(0,1.15) -- node[above] {$\alpha$} (0.2, 1.15);
	\draw[|-|]
		(1.25, 0) -- node[right] {$\tau$} (1.25, 1.1);

	\begin{scope}[rotate = 90, shift = {(-0.55, -1.048)}]
	\draw[densely dashed]
		(0.8, 0) to[in = 270, out = 90] (0.6, 0.1) to[in = 270, out = 90] (0.9, 0.15) to[in = 270, out = 90] (0.7, 0.4) to[in = 270, out = 90] (1, 0.25) to[in = 270, out = 90] (0.85, 0.6) to[in = 295, out = 90] (0.7, 0.6) to[in = 225, out = 105] (0.7, 0.8) to[in = 180, out = 45] (1.05, 0.7) to[in = 270, out = 0] (0.8, 1.0);
	\end{scope}
	
	\begin{scope}[shift={(0.4, 0.45)}, scale=0.15]
	\path[draw, pattern=north west lines]
		\A;
	\end{scope}

	\end{scope}

	\node[]
		() at (0, -0.3) {$D$};
	\node[]
		() at (0.35, 0) {$\sew{1}{3}$};
	\node[]
		() at (0.8, 0) {$P$};
	\node[]
		() at (1.25, 0) {$\sew{1}{2}$};

	\node[]
		() at (1.7, 0) {$\longrightarrow$};

	\end{scope}
\end{tikzpicture}
\caption{
	A disk $D \in \disks$ and a pair of pants $P \in \pants$ are sewn to form a torus $T = \sewself{1}{2} (D \sew{1}{3} P) \in \tori$.
	The right-hand side depicts the torus as a quotient of the upper half-plane such that the seam (dashed) winds once horizontally.
This determines the modulus $\tau > 0$ of an annulus $\A_\tau$ and a rotation angle $\alpha \in [0, 2\pi)$ such that ${T = \sewself{1}{2} (\A_\tau \diffActing{1} \rotation_\alpha)}$.
	Note that deforming the disk (hatched) may change $\tau$ and $\alpha$.
}
\label{fig:moduli_construction}
\end{figure}

	Coming back to Equation~\eqref{eq:difference_cocycle_torus}, we also find that modular invariance of $\rho(\tau, \alpha)$ and $\Xi_{1,1}^{Z,W}(\sewself{1}{2}P, D)$ implies modular invariance of $\Xi_{1,2}^{Z,W}(P)$, which is equivalent to hyperbolic modular invariance of $\RMFE$ (Definition~\ref{def:modular_invariant_hyperbolic}).
	This concludes the proof.
\end{proof}

\subsection{Tori and handles}

To extend our proof of Theorem~\ref{thm:universality} to the genus $1$ case, we use the flat modular invariance of the local \gls{fsmooth} real one-dimensional modular functors $\RMFE$ and $\RMFD$ proven in Proposition~\ref{prop:E_prop}.
From the genus $0$ case (Theorem~\ref{thm:universality_genus_zero}), we already know that there exists an isomorphism $\Psi_{0, \boundaries}$ for $\boundaries \geq 0$.
Let $X$ be a reparametrization invariant trivialization of $\RMFE$ in genus $0$, which is the cylindrical trivialization over $\annuli$ (Proposition~\ref{prop:cylindrical}).
Since being cylindrical only involves the annulus-annulus cocycle with respect to $X$, and $\Psi$ is already an isomorphism in genus $0$, the trivialization $Y = \Psi \circ X$ on $\annuli$ is cylindrical, too.

For tori $T \in \tori$, we define
\begin{equation}
	X(T) = \sewxself{\RMFE}{1}{2} X(A) \qquad \textnormal{and} \qquad Y(T) = \sewxself{\RMFE}{1}{2} Y(A),
	\label{eq:tori_triv_geod}
\end{equation}
using any annulus $A \in \annuli$ such that $\sewself{1}{2} A = T$ and 
the seam in $T$ is a geodesic in a conformal flat metric.
The independence of the choice of $A$ is precisely the property of flat modular invariance of $\RMFE$ and $\RMFD$ (Definition~\ref{eq:flat_modular_invariance}).
The resulting cocycles $\Omega^X_{1,2}(A)$ and $\Omega^Y_{1,2}(A)$ for $A \in \annuli$ have the property that $\Omega^X_{1,2}(A) = \Omega^Y_{1,2}(A) = 0$ if the seam of $A$ in $\sewself{1}{2} A$ is geodesic.
We next show that $\Omega^X_{1,2}(A) = \Omega^Y_{1,2}$(A) for general annuli $A \in \annuli$.
\begin{lemma}
\label{prop:tori_E_D}
Let $\Psi_{1,0}(X(T)) = Y(T)$ for $T \in \tori$.
The following diagram commutes:
\label{lem:isom_tori}
\begin{equation}
\label{diag:annuli_tori}
\begin{tikzcd}
\RMFE(\annuli) & \RMFD(\annuli) \\
\RMFE(\tori) & \RMFD(\tori).
\arrow["{\Psi_{1,0}}", from=2-1, to=2-2]
\arrow["{\Psi_{0,2}}", from=1-1, to=1-2]
\arrow["{\sewxself{\RMFE}{1}{2}}", from=1-1, to=2-1]
\arrow["{\sewxself{\RMFD}{1}{2}}", from=1-2, to=2-2]
\end{tikzcd}
\end{equation}
\end{lemma}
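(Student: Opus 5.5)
The plan is to show that the difference
\begin{equation*}
	f(A) = \Omega^X_{1,2}(A) - \Omega^Y_{1,2}(A), \qquad A \in \annuli,
\end{equation*}
vanishes identically. Since $\Psi_{0,2}(X(A)) = Y(A)$ and $\Psi_{1,0}(X(\sewself{1}{2} A)) = Y(\sewself{1}{2} A)$, commutativity of the diagram~\eqref{diag:annuli_tori} is equivalent to $f \equiv 0$. By the definition~\eqref{eq:tori_triv_geod} of $X$ and $Y$ on tori, together with flat modular invariance (Proposition~\ref{prop:E_prop}), we already know that $f(A) = 0$ whenever the seam of $A$ in $\sewself{1}{2} A$ is a geodesic of a conformal flat metric. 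By reparametrization invariance of $X$ and $Y$ (Remark~\ref{remark:geodesic_annuli}), the same holds when the seam is a reparametrized geodesic. The task is therefore to propagate the vanishing from geodesic seams to arbitrary seams.

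The tool will be the self-sewing analogue of the mixed cocycle identity~\eqref{eq:cocycle_id_mixed_3}, stated at the end of Section~\ref{section:cocycles}:
\begin{equation*}
	\Omega_{1,2}(A \diffActing{1} \phi) + \Omega_{\Sr, m, 1}(A, \phi) = \Omega_{1,2}(A \diffActing{2} \invinv(\phi)) + \Omega_{\Sr, m, 2}(A, \invinv(\phi)).
\end{equation*}
I apply it to both $X$ and $Y$ with reference surface $\Sr = \disk$. The mixed cocycles $\Omega_{\disk,1,j}(A,\blank)$ involve only genus-zero surfaces (annuli and disks). Via~\eqref{eq:mixed_cocycle_change} and the definition of the action, they are expressed through genus-zero sewing cocycles, so they agree for $X$ and $Y$ by Theorem~\ref{thm:universality_genus_zero}. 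Subtracting the two identities yields
\begin{equation*}
	f(A \diffActing{1} \phi) = f(A \diffActing{2} \invinv(\phi))
\end{equation*}
for every $\phi \in \DefC$ for which both sides are defined. Geometrically, $(A \diffActing{1} \phi)$ and $(A \diffActing{2} \invinv(\phi))$ sew to the same torus, by the self-sewing version of~\eqref{eq:act_between}, but the seam has been moved across itself. Setting $B = A \diffActing{2} \invinv(\phi)$, so that $A \diffActing{1} \phi$ is obtained from $B$ by shifting the seam, this says that $f$ is invariant under ``pushing the seam'' by complex deformations while keeping the torus fixed. Combined with $f(A \diffActing{j} \psi) = f(A)$ for $\psi \in \Diffpan$, this makes $f$ a function only of the torus together with the isotopy class of the seam.

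It remains to connect an arbitrary annulus $A$ to a geodesic one by finitely many such seam moves. The seam $\vartheta$ of $A$ is a simple analytic loop in $T = \sewself{1}{2} A$. By Remark~\ref{remark:geodesic_annuli}, its homotopy class contains a geodesic $\vartheta_0$ with $T = \sewself{1}{2}(\A_\tau \diffActing{1} \rotation_\alpha)$. I would choose an analytic isotopy $\vartheta_s$, $s \in [0,1]$, of simple loops from $\vartheta_0$ to $\vartheta$, which exists since homotopic simple closed curves on the torus are isotopic. By compactness, I would cut it into short steps. For each short step, $\vartheta_{s'}$ lies in a collar neighbourhood of $\vartheta_s$ given by analytic continuation of the boundary parametrizations. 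It is therefore the image of $S^1$ under a complex deformation $\phi$ close to the identity, in the chart of one boundary parametrization, and the step is exactly the move $A \diffActing{2} \invinv(\phi) \leadsto A \diffActing{1} \phi$ up to diffeomorphism reparametrizations. Alternatively, the Virasoro uniformization theorem~\cite[Theorem~4.9]{Maibach-Peltola:Complex_deformations_of_the_circle} could be invoked to realize these moves Fr\"olicher smoothly. Chaining the steps gives $f(A) = f(\A_\tau \diffActing{1} \rotation_\alpha) = 0$.

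The main obstacle is this last step: making precise that each small isotopy step is realized by a single complex deformation acting on one boundary with $\invinv$ on the other. This requires checking the invertibility and existence conditions of Definition~\ref{def:defc} uniformly along the isotopy, so that finitely many steps suffice. I would handle it by working in a fixed thin collar of each intermediate seam, where the deformations are uniformly close to the identity.
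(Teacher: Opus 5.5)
Your proof is correct, but its key move differs from the paper's.

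The paper never deforms a seam. For annuli $A,B$ with $\sewself{1}{2}A=\sewself{1}{2}B$ whose seams are homotopic and \emph{disjoint}, it cuts the torus along both seams into annuli $\tilde A,\tilde B$ with $A=\tilde A\sew{1}{2}\tilde B$ and $B=\tilde B\sew{1}{2}\tilde A$. It then applies the two-sewing identity~\eqref{eq:cycle_sewing}. Only annulus--annulus cocycles enter, and these agree by the genus-zero isomorphism, so $f(A)=f(B)$. A finite chain of pairwise disjoint homotopic seams then reaches a geodesic seam, where $f=0$ by~\eqref{eq:tori_triv_geod}.

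You instead push the seam by a complex deformation via the self-sewing mixed identity. This needs the central-extension machinery: the involution $\invinvx{\RMFE}$, and agreement of the genus-zero mixed cocycles. The cleanest justification for the latter is that $\Psi$ in genus zero intertwines the actions of $\RMFE(\DefC)$ and $\RMFD(\DefC)$. It also needs control of the existence and invertibility domains along the isotopy. In exchange, consecutive seams may cross, and the move is an infinitesimal form of the self-sewing version of~\eqref{eq:act_between}. The paper's move is more economical: one step handles disjoint seams at any distance, and no deformation has to exist.

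The small step you flag as the main obstacle does go through. Suppose $B$ has seam $\vartheta_s$, and $\phi$ parametrizes the preimage of $\vartheta_{s'}$ under the analytically continued parametrization of $\partial_1 B$. Then $A=B\diffActing{2}\invinv(\phi^{-1})$ exists for $\phi$ near the identity, $A\diffActing{2}\invinv(\phi)=B$, and $A\diffActing{1}\phi$ has seam $\vartheta_{s'}$.

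One slip: you cannot use $f(A\diffActing{j}\psi)=f(A)$ for $\psi\in\Diffpan$. Reparametrizing a single seam boundary changes the torus, and Proposition~\ref{prop:local_equiv} only covers boundaries not involved in the sewing. What you actually need is invariance under the compensating pair $A\diffActing{1}\psi\diffActing{2}\invinv(\psi^{-1})$. This follows from your own identity with $\phi=\psi$, applied to $A\diffActing{2}\invinv(\psi^{-1})$.
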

\begin{proof}
Consider two annuli $A, B \in \annuli$ such that $\sewself{1}{2} A = \sewself{1}{2} B$ are equivalent tori.
First, let us assume that the seams in $\sewself{1}{2} A = \sewself{1}{2} B$ are homotopic and disjoint. 
In this case, we can decompose $A$ and $B$ in such a way that $A = \tilde{A} \sew{1}{2} \tilde{B}$ and $B = \tilde{B} \sew{1}{2} \tilde{A}$ for annuli $\tilde{A}, \tilde{B} \in \annuli$ defined by unraveling (cutting) the torus at both seams.
By the cocycle identity~\eqref{eq:cycle_sewing} 
in the case of two annuli $\Sigma_1=\tilde{A}$ and $\Sigma_2=\tilde{B}$,
we have
\begin{equation}
	\Omega^X_{1, 2}(B) + \Omega^X_{2,1}(\tilde{A}, \tilde{B})
	= \Omega^X_{2, 1}(A) + \Omega^X_{1,2}(\tilde{A}, \tilde{B}),
\end{equation} 
and analogously for $Y$.
Because the annulus-annulus cocycles of $X$ and $Y$ agree, taking the difference between the identities with $X$ and $Y$ and using the symmetries $\Omega^X_{2, 1}(A)=\Omega^X_{1,2}(A)$ and $\Omega^Y_{2, 1}(A)=\Omega^Y_{1,2}(A)$ results in
\begin{equation}
\Omega^X_{1, 2}(B) - \Omega^Y_{1, 2}(B)
= \Omega^X_{1,2}(A) - \Omega^Y_{1,2}(A) .
\label{eq:annuli_tori_compat}
\end{equation}
The left and right-hand sides of Equation~\eqref{eq:annuli_tori_compat} are precisely the factors picked up by mapping from the fiber over the torus 
$\RMFE(\sewself{1}{2} A) = \RMFE(\sewself{1}{2} B)$ to $\RMFD(\sewself{1}{2} A) = \RMFD(\sewself{1}{2} B)$ in Diagram~\eqref{diag:annuli_tori}.
This makes the diagram invariant under changing a lift from $\RMFE(\tori)$ to $\RMFE(\annuli)$ by homotopic and disjoint seams.

\begin{figure}[t]
\centering
\begin{tikzpicture}[
	midarrow/.style={
		decoration={
			markings,
			mark=at position 0.5 with {\arrow[xshift=1.25pt]{>[length=2.5pt]}}
		},
		postaction={decorate}
	},
	mmidarrow/.style={
		decoration={
			markings,
			mark=at position 0.5 with {\arrow[xshift=2.5pt]{>[length=2.5pt]>[length=2.5pt]}}
		},
		postaction={decorate}
	}
]


	\begin{scope}[scale = 3]
	
	\coordinate (P1) at (0, 0);   
	\coordinate (P2) at (1, 0);   
	\coordinate (P3) at (1.2, 1);   
	\coordinate (P4) at (0.2, 1);   

	\draw[midarrow]
		(P1) -- (P2);
	\draw[mmidarrow]
		(P2) -- (P3);
	\draw[midarrow]
		(P4) -- (P3);
	\draw[mmidarrow]
		(P1) -- (P4);

	\coordinate (PA) at (0.4, 0);
	\coordinate (PB) at (0.6, 1);
	
	\coordinate (PLA) at (0.06, 0.3);
	\coordinate (PRA) at (1.06, 0.3);
	
	\coordinate (PLB) at (0.13, 0.65);
	\coordinate (PRB) at (1.13, 0.65);

	\coordinate (PM) at (0.63, 0.45);

	\draw[]
		(PA) to[in = 180, out = 90] (PRA)
		(PLA) to[in = 270, out = 0] (PM)
		(PM) to[in = 0, out = 90] (PLB)
		(PRB) to[in = 270, out = 180] (PB);

	\coordinate (QA) at (0.35, 0);
	\coordinate (QB) at (0.55, 1);

	\coordinate (QLA) at (0.07, 0.35);
	\coordinate (QRA) at (1.07, 0.35);
	
	\coordinate (QLB) at (0.12, 0.6);
	\coordinate (QRB) at (1.12, 0.6);

	\coordinate (QM) at (0.25, 0.45);

	\draw[densely dashed]
		(QA) to[in = 180, out = 90] (QRA)
		(QLA) to[in = 270, out = 0] (QM)
		(QM) to[in = 0, out = 90] (QLB)
		(QRB) to[in = 270, out = 180] (QB);

	\coordinate (A) at (0.28, 0);
	\coordinate (B) at (0.48, 1);

	\draw[densely dotted]
		(A) -- (B);

	\end{scope}
\end{tikzpicture}
\caption{
	The self-sewing cocycles of annuli which form the same torus are related by a cocycle identity if the seams are disjoint.
	Passing to a geodesic seam might take multiple steps.
	In this figure, the initial seam (solid) is disjoint from the intermediate seam (dashed), which, in turn, is disjoint from the geodesic (dotted).
}
\label{fig:torus_geodesic}
\end{figure}

In order to show that Equation~\eqref{eq:annuli_tori_compat} holds for any annuli such that $\sewself{1}{2} A = \sewself{1}{2} B$, 
using the already known case, for any $A \in \annuli$, we can find a finite sequence of annuli $A = A_1, A_2, \dots, A_n \in \annuli$ in which we apply~\eqref{eq:annuli_tori_compat} to the pairs $A_j, A_{j + 1}$ ending with $A_n$ having geodesic seam in $\sewself{1}{2} A$; see Figure~\ref{fig:torus_geodesic}.
With this method, we can reduce the proof of~\eqref{eq:annuli_tori_compat} for arbitrary $A$ and $B$ to the case of annuli with geodesic seam in the torus, where the cocycles in~\eqref{eq:annuli_tori_compat} vanish by flat modular invariance of $\RMFE$ and $\RMFD$.
Therefore, Equation~\eqref{eq:annuli_tori_compat} holds trivially for these annuli, and thus, Diagram~\eqref{diag:annuli_tori} commutes.
\end{proof}

While the trivializations $X$ and $Y$ over $\annuli$ are so far constrained to be cylindrical, the trivializations over $\pants$ may be assumed to be defined by Proposition~\ref{prop:induction_trivialization} without loss of generality\footnote{
In Section~\ref{section:iso_higher_genus}, we replace the trivializations over $\pants$ by hyperbolic modular invariant trivializations.}.
Moreover, Equation~\eqref{eq:tori_triv_geod} defines $X$ and $Y$ over $\tori$.
Then, $X$ and $Y$ extend to the case of handles in $\moduli{1}{1}$ via Proposition~\ref{prop:induction_trivialization}, which applies to the $\boundaries = 1$ case.
We next show that this particular extension to handles also extends the isomorphism $\Psi$, that is, we show that all the cocycles involving handles agree for $X$ and $Y$.

\begin{proposition}
	Assume that $X$ and $Y$ are reparametrization invariant trivializations of respectively $\RMFE$ and $\RMFD$ in genus $0$ such that on $\annuli$ they are
	cylindrical, on $\pants$ they are defined by Proposition~\ref{prop:induction_trivialization}, and on $\tori$ they are defined by Equation~\eqref{eq:tori_triv_geod}.
	We extend $X$ and $Y$ to $\handles$ using Proposition~\ref{prop:induction_trivialization}.
	Then, defining
	\begin{equation}		
		\Psi_{1,1}(X(H)) = Y(H) , \qquad H \in \handles ,
	\end{equation}	
	extends the isomorphism $\Psi$ to $\handles$.
	\label{prop:isom_handles}
\end{proposition}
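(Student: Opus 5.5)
The plan is to check that every new cocycle involving a handle $H \in \handles$ agrees for $X$ and $Y$. With handles there are three kinds. The first is the handle--disk cocycle $\Omega_{1,1}(H, D)$, which closes a handle into a torus. The second is the self-sewing cocycle $\Omega_{j,k}(P)$ of a pair of pants $P \in \pants$, which produces a handle. The third kind are the sewing cocycles $\Omega_{j,k}(\Sigma_1, \Sigma_2)$ with $\Sigma_1 \sew{j}{k} \Sigma_2 \in \handles$, where one factor has genus $1$ (a torus is impossible since it has no boundary, so this means a handle) and the other has genus $0$. The factor combinations are handle and annulus, or genus-$0$ surfaces in a self-sewing followed by sewing. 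These are reduced to the first two kinds by cocycle identities, exactly as in the proof of Proposition~\ref{prop:induction_extension}.

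First I treat the handle--disk cocycle. Since $X$ and $Y$ on $\handles$ come from Proposition~\ref{prop:induction_trivialization}, we have $\Omega^X_{1,1}(\check H, \disk) = 0 = \Omega^Y_{1,1}(\check H, \disk)$ for Fuchsian projective $\check H$. Reparametrization invariance then gives $\Omega^X_{1,1}(H, \disk) = \Omega^Y_{1,1}(H, \disk)$ for all $H$, as long as the reparametrization sits at the sewn boundary; this uses Item~\ref{prop:local_equiv_5} of Proposition~\ref{prop:local_equiv} together with Item~\ref{lemma:complex_deformations:disk_invariant} of Lemma~\ref{lemma:complex_deformations}. For a general disk $D = \disk \diffActing{1}\phi$ I move $\phi$ across the seam via \eqref{eq:local_inversion}. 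This yields $\Omega^X_{1,1}(H, D)=\Omega^Y_{1,1}(H, D)$.

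Next I treat the pants self-sewing cocycle. I apply \eqref{eq:basic_self_cocycle_idenity} to $P \in \pants$ with self-sewing at $j,k$ and a disk sewn at the remaining boundary $l$:
\begin{equation*}
\Omega_{j,k}(P) + \Omega_{l,1}(\sewself{j}{k}P, \disk) = \Omega_{j,k}(P \sew{l}{1}\disk) + \Omega_{l,1}(P,\disk).
\end{equation*}
I take the $X$ version minus the $Y$ version. The term $\Omega_{l,1}(P,\disk)$ is a genus-$0$ cocycle, so it agrees. The handle--disk term agrees by the first step. The term $\Omega_{j,k}(P\sew{l}{1}\disk)$ is an annulus self-sewn to a torus, and there Lemma~\ref{lem:isom_tori} gives agreement, because its commutativity says exactly $\Omega^X_{1,2}(A)=\Omega^Y_{1,2}(A)$ for every $A\in\annuli$. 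Hence $\Omega^X_{j,k}(P)=\Omega^Y_{j,k}(P)$.

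Finally I treat the remaining sewing cocycles. For handle--annulus, $\Omega_{1,k}(H, A)$, I cap the free boundary of the annulus with a disk and use \eqref{eq:basic_cocycle_idenity}. All other terms are then a handle--disk cocycle or a genus-$0$ cocycle, both already known to agree. For mixed self-sewing/sewing identities such as \eqref{eq:cycle_sewing} and \eqref{eq:cycle_selfsewing}, the new terms are again of the two types already settled. The step I expect to be the main obstacle is the pants self-sewing cocycle: it relies on Lemma~\ref{lem:isom_tori} holding for all annuli, not only those with geodesic seam. I also need to confirm that the torus trivialization \eqref{eq:tori_triv_geod} and the handle trivialization from Proposition~\ref{prop:induction_trivialization} are compatible through the disk-capping identity above, which I would check explicitly on a Fuchsian projective handle.
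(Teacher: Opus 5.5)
\textbf{There is a genuine gap in your first step.} The normalization \eqref{eq:z_linear} gives $\Omega^X_{1,1}(\check H,\disk)=0=\Omega^Y_{1,1}(\check H,\disk)$ only for Fuchsian projective $\check H$. For $H=\check H\diffActing{1}\phi$ with $\phi\in\Diffpan$, moving $\phi$ across the seam via \eqref{eq:local_inversion} yields $\Omega_{1,1}(\check H,\disk\diffActing{1}\invinv(\phi))$. Item~\ref{lemma:complex_deformations:disk_invariant} of Lemma~\ref{lemma:complex_deformations} does not turn $\disk\diffActing{1}\invinv(\phi)$ back into $\disk$. That item concerns $\inversion\circ\zeta\circ\inversion$ with $\zeta\in\univalent$, and a diffeomorphism of $S^1$ that extends univalently to the disk fixing $0$ is a rotation. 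In fact $\disk\diffActing{1}\invinv(\phi)\neq\disk$ unless $\phi$ is M\"obius, since $\disk$ is the only projective disk (Proposition~\ref{prop:projective}).

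Conceptually, by locality the handle--disk cocycle is a function of the torus $H\sew{1}{1}D$ together with its contractible seam. Shifting reparametrizations across the seam never changes this pair. The normalization only fixes the value on the thin set of seams whose welding is M\"obius relative to the Fuchsian projective structure on the handle side. So agreement of $\Omega_{1,1}(H,D)$ for general $H$ and $D$ is a genuine statement, not a consequence of the construction. Your second and third steps feed exactly this term into \eqref{eq:basic_self_cocycle_idenity} for arbitrary $P$ and into the capping identity, so they inherit the gap. Your stated worry is misplaced: Lemma~\ref{lem:isom_tori} is proven for all annuli.

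\textbf{How the paper proceeds.} It uses the identity in the opposite direction. It first proves $\Omega^X_{j,k}(P)=\Omega^Y_{j,k}(P)$ for hyperbolic $P$ whose legs $j,k$ have equal length. There $\sewself{j}{k}P$ is hyperbolic, hence Fuchsian projective, so the handle--disk term vanishes by \eqref{eq:z_linear}. The annulus term vanishes by \eqref{eq:tori_triv_geod} because the seam is geodesic. It then treats general $P$ by moving the seam through the annulus identity \eqref{eq:cycle_sewing}. Only afterwards does it read off the handle--disk and handle--annulus cocycles from \eqref{eq:basic_self_cocycle_idenity} with $H=\sewself{2}{3}P$.

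\textbf{A shorter repair of your own scheme.} For arbitrary $P$, replace $P$ by $P\diffActing{l}\phi$ at the free leg $l$. Choose $\phi\in\Diffpan$ via Item~\ref{prop:fprojective:repr} of Proposition~\ref{prop:fprojective} so that $\sewself{j}{k}(P\diffActing{l}\phi)$ is Fuchsian projective. Reparametrization invariance at $l\neq j,k$ leaves $\Omega_{j,k}$ unchanged, and the handle--disk term now vanishes for both $X$ and $Y$. Your step-two identity, with Lemma~\ref{lem:isom_tori} and the genus-zero case, then gives agreement for all $P$ and bypasses the seam-moving argument. Running the same identity backwards gives the handle--disk cocycles for arbitrary $H$ and $D$, and your capping argument then handles the annuli.
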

\begin{proof}
	We prove the equality of the cocycles for $X$ and $Y$, beginning with the self-sewing cocycles $\Omega^X_{j,k}(P)$ and $\Omega^Y_{j,k}(P)$ for pairs of pants $P \in \pants$.
	First, consider the case where $P \in \hpants$ is a pair of pants such that two boundary components, labeled $j$ and $k$, are of equal length in the uniform metric.
	Sewing a disk $P \sew{l}{1} \disk$ to the remaining boundary component $l \neq j, k$ yields an annulus.
	By appropriately scaling the round metric on the disk, the annulus comes equipped with a smooth conformal metric, which is the union of the uniform metric on the pair of pants and the round metric on the disk.
	Moreover, since the $j$th and~$k$th boundary components have equal length, this metric also induces a smooth conformal metric on the torus $\sewself{j}{k} (P \sew {j}{1} \disk)$, and the seam of these boundary components is still a geodesic in the torus.
	Since geodesics on surfaces are preserved under conformal transformations up to reparametrization, this seam may be reparametrized such that it is a geodesic in the flat conformal metric on the torus.
	By reparametrization invariance, the cocycle $\Omega^X_{j, k}(P \sew{j}{1} \disk)$ is unchanged after reparametrization the seam to become a geodesic, and with the geodesic seam, the cocycle vanishes by Equation~\eqref{eq:tori_triv_geod}.

Associated with the above sewing operations, the cocycle identity~\eqref{eq:basic_self_cocycle_idenity} with
$\Sigma_1=P$ and $\Sigma_2=\disk$
gives 
\begin{equation}
\Omega^X_{j, k}(P) 
+ \underbrace{\Omega^X_{l, 1}(\sewself{j}{k} P, \disk)}_{= \; 0 \, \textnormal{by~\eqref{eq:z_linear}}}
= \underbrace{\Omega^X_{j, k}(P \sew{l}{1} \disk)}_{= \; 0 \, \textnormal{by~\eqref{eq:tori_triv_geod}}} 
+ \Omega^X_{l, 1}(P, \disk),
\end{equation}
and analogously for $Y$.
Note that~\eqref{eq:z_linear} Proposition~\ref{prop:induction_trivialization} applies here: After the sewing of the $j$th and $k$th boundary components of equal lengths, the surface is still hyperbolic, hence also Fuchsian projective (by Item~\ref{prop:fprojective:hyperbolic} of Proposition~\ref{prop:fprojective}), and moreover, the $l$th boundary component becomes the first (and only) boundary component.
For the final term, we have $\Omega^X_{l, 1}(P, \disk) = \Omega^Y_{l, 1}(P, \disk)$ by the genus $0$ case, and thus, we see that
\begin{equation}
	\Omega^X_{j, k}(P) = \Omega^Y_{j, k}(P), \qquad P \in \hpants.
	\label{eq:cocycles_agree_hpants}
\end{equation}

Next, we consider any pair of pants $P \in \pants$ and an annulus $A \in \annuli$. 
Consider the cocycle identity~\eqref{eq:cycle_sewing} of sewing the annulus between two legs $j$ and $k$ of the pants,
with $\Sigma_1=P$ and $\Sigma_2=A$,
\begin{equation}
\Omega^X_{j, 2}(P \sew{k}{1} A) + \Omega^X_{k,1}(P, A)
= \Omega^X_{k, 1}(P \sew{j}{2} A) + \Omega^X_{j,2}(P, A),
\end{equation}
and analogously for $Y$.
Note that the pants-annulus cocycles agree for $X$ and $Y$ since $\Psi$ is an isomorphism in genus zero. 
Taking the difference between the identities with $X$ and $Y$ shows that the cocycles for $P \sew{l}{1} A$ agree if and only if they agree for $P \sew{j}{2} A$:
\begin{equation}
\Omega^X_{j,2}(P \sew{l}{1} A)
- \Omega^Y_{j,2}(P \sew{l}{1} A)
= \Omega^X_{l,1}(P \sew{j}{2} A)
- \Omega^Y_{l,1}(P \sew{j}{2} A).
\end{equation}
	Similarly as in the proof of Proposition~\ref{prop:tori_E_D}, we may consider the self-sewing cocycle $\Omega^X_{j,k}(P)$ of a pair of pants $P \in \pants$ and change the seam a finite number of times to obtain a seam which is geodesic in the handle $\sewself{j}{k} \check{P} \diffActing{l} \phi = \sewself{j}{k} P$ where $\check{P} \in \hpants$ has two boundary components of equal length in the  
	uniform metric, and  $\phi \in \Diffpan$.
	Then, by reparametrization invariance and 
	the first result~\eqref{eq:cocycles_agree_hpants} derived above, 
	we have $\Omega^X_{j,k}(P) = \Omega^X_{j,k}(\check{P}) = \Omega^Y_{j,k}(\check{P}) = \Omega^Y_{j,k}(P)$.

Lastly, the other cocycles involve sewing a disk $D \in \disks$ or an annulus $A \in \annuli$ to a handle $H = \sewself{2}{3} P \in \moduli{1}{1}$ for $P \in \pants$.
Consider the cocycle identity~\eqref{eq:basic_self_cocycle_idenity} 
with $\Sigma_1=P$ and $\Sigma_2=D$,
\begin{equation}
\Omega^X_{2,3}(P) + \Omega^X_{1, 1}(\sewself{2}{3} P, D)
= \Omega^X_{2,3}(P \sew{1}{1} D) + \Omega^X_{1, 1}(P, D),
\end{equation}
and with $\Sigma_1=P$ and $\Sigma_2=A$,
\begin{equation}
\Omega^X_{2,3}(P) + \Omega^X_{1, 1}(\sewself{2}{3} P, A)
= \Omega^X_{2,3}(P \sew{1}{1} A) + \Omega^X_{1, 1}(P, A),
\end{equation}
which combine these sewing operations with self-sewing of the form $H = \sewself{2}{3} P$. 
We thus express the new cocycles $\Omega^X_{1, 1}(H, D) = \Omega^X_{1, 1}(\sewself{2}{3} P, D)$
and $\Omega^X_{1, 1}(H, A) = \Omega^X_{1, 1}(\sewself{2}{3} P, A)$ 
in terms of those which are already known to agree for $X$ and $Y$.
Doing the same for $Y$ implies that they agree as well:
$\Omega^X_{1, 1}(H, D) = \Omega^Y_{1, 1}(H, D)$ and $\Omega^X_{1, 1}(H, A) = \Omega^Y_{1, 1}(H, A)$. 
\end{proof}

Now we have shown that $\RMFE$ and $\RMFD$ are isomorphic in genus zero (Theorem~\ref{thm:universality_genus_zero}), and in genus one for $\boundaries = 0$ (Lemma~\ref{lem:isom_tori}) and for $\boundaries = 1$ (Proposition~\ref{prop:isom_handles}).
Thus, we can apply the induction step (Proposition~\ref{prop:induction_extension}) to conclude the proof of Theorem~\ref{thm:universality} up to genus one:
\begin{theorem}
	All {local} \gls{fsmooth} real one-dimensional modular functors with equal central charges are isomorphic up to genus one.
	\label{thm:universality_genus_one}
\end{theorem}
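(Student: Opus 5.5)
We assemble the isomorphism $\Psi$ up to genus one from the pieces already constructed, ordered so that the inductive hypotheses of Proposition~\ref{prop:induction_extension} hold at every stage. In genus zero, Theorem~\ref{thm:universality_genus_zero} already gives $\Psi_{0,\boundaries}$ for all $\boundaries \geq 0$, realized by trivializations $X$ of $\RMFE$ and $Y = \Psi \circ X$ of $\RMFD$ whose cocycles all agree. We first adjust these trivializations without breaking the genus zero isomorphism. On $\annuli$ we take $X$ cylindrical (Proposition~\ref{prop:cylindrical}); then $Y$ is cylindrical as well, since cylindricity only involves annulus-annulus cocycles and the Lie algebra cocycle, and these agree. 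On $\pants$ we take both trivializations to be given by Proposition~\ref{prop:induction_trivialization}. This is consistent because Proposition~\ref{prop:induction_extension} only requires that the extension follow that recipe. Changing the trivialization on $\annuli$ alters cocycles on both sides by the same functions, since $Y = \Psi \circ X$, so genus zero agreement is preserved.

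Next we add genus one with $\boundaries = 0$. We define $X$ and $Y$ on $\tori$ by Equation~\eqref{eq:tori_triv_geod}. This is well defined by the flat modular invariance of $\RMFE$ and $\RMFD$ from Proposition~\ref{prop:E_prop}. We then set $\Psi_{1,0}(X(T)) = Y(T)$. Lemma~\ref{lem:isom_tori} shows that $\Psi_{1,0}$ commutes with the self-sewing $\annuli \to \tori$, that is, $\Omega^X_{1,2}(A) = \Omega^Y_{1,2}(A)$ for every annulus $A$. These are exactly the new cocycles landing in $\tori$, because no surface-surface sewing of genus-zero surfaces produces a torus. So $\Psi$ is an isomorphism through $(\genus,\boundaries) = (1,0)$.

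For $\boundaries = 1$ we extend $X$ and $Y$ to $\handles$ via Proposition~\ref{prop:induction_trivialization}, which applies because $\boundaries = 1 \geq 1$. Proposition~\ref{prop:isom_handles} then shows that all new cocycles agree. These are the pants self-sewing cocycles, together with the handle-disk and handle-annulus sewing cocycles. Hence $\Psi_{1,1}$ extends the isomorphism.

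With the hypotheses of the induction now verified for genus $<1$ and for genus $1$ with fewer than $2$ boundary components, we apply Proposition~\ref{prop:induction_extension} repeatedly for $\boundaries = 2,3,\dots$. This produces $\Psi_{1,\boundaries}$ for all $\boundaries$. The main obstacle is bookkeeping. We must check that every cocycle arising at step $(1,\boundaries)$ has components of strictly smaller complexity, as the induction hypothesis requires. In particular, the self-sewing cocycles of genus-zero surfaces with $\boundaries+2$ components must already be covered. They are, by Theorem~\ref{thm:universality_genus_zero}, since those cocycles involve only genus-zero surfaces.
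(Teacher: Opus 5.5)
Your assembly is the paper's own. You use the cylindrical trivialization on $\annuli$ transported by the genus-zero $\Psi$, Proposition~\ref{prop:induction_trivialization} on $\pants$, Equation~\eqref{eq:tori_triv_geod} with Lemma~\ref{lem:isom_tori} for tori, Proposition~\ref{prop:isom_handles} for handles, and then Proposition~\ref{prop:induction_extension} repeatedly for $\boundaries \geq 2$.

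Your final paragraph, however, contains a false claim. The self-sewing cocycles $\Omega_{j,k}(\Sigma)$ with $\Sigma \in \moduli{0}{\boundaries+2}$ are not covered by Theorem~\ref{thm:universality_genus_zero}. They land in $\moduli{1}{\boundaries}$ and therefore depend on the genus-one trivialization chosen at that step: replacing $X$ by $e^{B}X$ on $\moduli{1}{\boundaries}$ shifts $\Omega^X_{j,k}(\Sigma)$ by $-B(\sewself{j}{k}\Sigma)$.

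These cocycles are new at step $(1,\boundaries)$, and they are established inside Proposition~\ref{prop:induction_extension} itself. The argument caps a third boundary component with a disk and uses \eqref{eq:basic_self_cocycle_idenity} together with \eqref{eq:cocycles_agree_one_disk}. The induction hypothesis only needs the self-sewing cocycles of genus-zero surfaces with fewer than $\boundaries+2$ components. Those are supplied by Lemma~\ref{lem:isom_tori}, Proposition~\ref{prop:isom_handles}, and the earlier applications of the induction step. So your conclusion stands, but the stated justification for that item should be replaced.
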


\subsection{Higher genus}
\label{section:iso_higher_genus}

Our construction of the isomorphism for genus $\genus \geq 2$ uses crossing invariance (Definition~\ref{def:crossing}) 
and hyperbolic modular invariance (Definition~\ref{def:modular_invariant_hyperbolic}) as proven in Proposition~\ref{prop:E_prop}.
Given the isomorphism $\Psi$ in genus zero and one (Theorem~\ref{thm:universality_genus_one}), we use crossing invariant trivializations on pairs of pants instead of using Proposition~\ref{prop:induction_trivialization}.
In short, since crossing invariance and hyperbolic modular invariance respectively correspond to the invariance under A- and S-moves, as defined by Hatcher and Thurston~\cite{Hatcher:Pants_decompositions_of_surfaces, Hatcher-Thurston:A_presentation_for_the_mapping_class_group_of_a_closed_orientable_surface}, between pants decompositions of hyperbolic surfaces, the new trivializations extend canonically to any hyperbolic surface via pants decomposition (see Equation~\eqref{eq:triv_pants_hyp}).
The equality of cocycles then follows by reduction to the $\genus \in \{0,1\}$ cases.
This extends the isomorphisms $\Psi$ to any genus and number of boundary components, proving Theorem~\ref{thm:universality}.

\begin{remark}
The trivializations on $\pants$ defined by Proposition~\ref{prop:induction_trivialization} are most likely not crossing invariant, since they are defined to be invariant under sewing of a unit disk at the first boundary component.
Aside from this being an asymmetric definition with respect to permutations of the boundary labels, we may also consider what this condition means for the real determinant line bundle.
There, we equip the unit disk with the flat or round metric, the annulus with the flat metric, and the pair of pants with the uniform metric.
Then, sewing the unit disk to a pair of pants, we do not obtain the flat metric on the annulus, which would contradict the triviality of the disk-pants cocycle.
\end{remark}

So far, the isomorphism $\Psi$ is defined in genus $0$ and $1$ for any number of boundary components.
Let $X$ be a reparametrization invariant trivialization of $\RMFE$ over $\spheres$, $\disks$, $\annuli$, and $\tori$, that is, all the surfaces with non-positive Euler characteristic.
On pairs of pants in $\pants$, we take $X$ to be a crossing invariant as in Definition~\ref{def:crossing}.
The isomorphism $\Psi$ defines the reparametrization invariant trivialization  $Y(\Sigma) = \Psi(X(\Sigma))$ of $\RMFD$ for these moduli spaces.
Since $\Psi_{0,3}$ is compatible with the genus zero operation of sewing two pairs of pants, the trivialization $Y$ of $\RMFD(\pants)$ is also crossing invariant.
Let $\Sigma \in \hmoduligb$ with $\chi(\Sigma) = 2 - 2 \genus - \boundaries < 0$ be a hyperbolic surface and $\Sigma = \sewall \underline{P}$ a pants decomposition obtained by unraveling hyperbolic geodesics.
We extend the trivializations by
\begin{equation}
\label{eq:triv_pants_hyp}
X(\Sigma) = \sewxall{\RMFE} X(\underline{P}) \qquad \textnormal{and} \qquad
Y(\Sigma) = \sewxall{\RMFD} Y(\underline{P}).
\end{equation}
This is independent of the pants decomposition, since it is possible to transition between pants decompositions using a finite number of elementary moves
--- namely, the S- and A-moves. 
Respectively, hyperbolic modular invariance (Definition~\ref{def:modular_invariant_hyperbolic}) and crossing invariance (Definition~\ref{def:crossing}) precisely say that~\eqref{eq:triv_pants_hyp} is invariant under these moves.
For surfaces with general boundary parametrizations, $X$ and $Y$ are defined by reparametrization invariance.
Now, since $Y(P) = \Psi_{0,3}(X(P))$ holds by definition, we have
\begin{align*}
\Psi(X(\Sigma)) = \sewxall{\RMFD} \Psi(X(\underline{P})) = \sewxall{\RMFD} Y(\underline{P}) = Y(\Sigma),
\qquad \Sigma \in \moduligb ,
\end{align*}
for surfaces of genus $\genus = 0$ and $\boundaries \geq 2$, or of genus $\genus = 1$ and $\boundaries \geq 1$: 
the new trivialization $X$ is mapped to the trivialization $Y$ where the isomorphism $\Psi$ was already defined.

To prove the theorem, we have to verify that a number of cocycles agree.
For hyperbolic surfaces $\Sigma_1$ and $\Sigma_2$ such that $\partial_j \Sigma_1$ and $\partial_k \Sigma_2$ have the same boundary length, we clearly have $\Omega^X_{j,k}(\Sigma_1, \Sigma_2) = \Omega^Y_{j,k}(\Sigma_1, \Sigma_2) = 0$ and also for self-sewing.
By reparametrization invariance, the same holds for general boundary parametrizations away from the seam.
If there is another parametrization at the seam, however, some more arguments are needed.
Let us proceed by induction on the genus $\genus \geq 2$, relying on the base cases $\genus = 0$ and $\genus = 1$.

Given any two surfaces $\Sigma_1 \in \moduli{\genus_1}{\boundaries_1}$ and $\Sigma_2 \in \moduli{\genus_2}{\boundaries_2}$, in the cocycle $\Omega^X_{j,k}(\Sigma_1, \Sigma_2)$, we open up a seam in one of the surfaces, say $\Sigma_1 = \sewself{l}{m} \smash{\hat \Sigma}$, such that $\smash{\hat \Sigma}$ has genus one less than $\Sigma_1$.
Note that, as we begin the induction knowing genera $0$ and $1$, we can assume without loss of generality that $\genus_1 + \genus_2 \geq 2$ and $\genus_1 \geq 1$.
Thus, the sewn surface $\Sigma_1 \sew{j}{k} \Sigma_2$ has a uniform metric, so we can assume that the new seam is a hyperbolic geodesic in the total surface $\Sigma_1 \sew{j}{k} \Sigma_2$.
Considering the cocycle identity~\eqref{eq:basic_self_cocycle_idenity} with $\smash{\hat \Sigma}$ and $\Sigma_2$,
\begin{equation}
\Omega^X_{l, m}(\hat \Sigma) 
+ \underbrace{\Omega^X_{j, k}(\sewself{l}{m} \hat \Sigma, \Sigma_2)}_{= \; \Omega^X_{j, k}(\Sigma_1, \Sigma_2)}
= \underbrace{\Omega^X_{l, m}(\hat \Sigma \sew{j}{k} \Sigma_2)}_{= \; 0} 
+ \Omega^X_{j, k}(\hat \Sigma, \Sigma_2),
\end{equation}
and analogously for $Y$, we see that the third term vanishes, since we assumed the seam to be a hyperbolic geodesic.
We also note that the first and last cocycles involve surfaces of a total genus at most $\genus_1 + \genus_2 - 1$, for which the cocycles of $X$ and $Y$ agree by the induction hypothesis.
We conclude that $\Omega^X_{j,k}(\Sigma_1, \Sigma_2) = \Omega^Y_{j,k}(\Sigma_1, \Sigma_2)$.

Finally, let $\Sigma \in \moduligb$ be any surface of genus $\genus \geq 1$ and number $\boundaries \geq 2$ of boundary components.
We consider the self-sewing cocycle $\Omega^X_{j,k}(\Sigma)$.
Let $\Sigma = \sewself{l}{m} \smash{\hat \Sigma}$ such that the seam is a hyperbolic geodesic in the sewn surface $\sewself{j}{k} \Sigma$ of genus $\genus + 1 \geq 2$.
Such a seam must be non-separating in $\Sigma$ such that $\smash{\hat \Sigma} \in \moduli{\genus - 1}{\boundaries}$ is of genus one lower.
Consider the cocycle identity~\eqref{eq:cycle_selfsewing} with $\smash{\hat \Sigma}$, 
\begin{equation}
\Omega^X_{l, m}(\hat \Sigma) 
+ \underbrace{\Omega^X_{j, k}(\sewself{l}{m} \hat \Sigma)}_{= \; \Omega^X_{j, k}(\Sigma)}
= \underbrace{\Omega^X_{l, m}(\sewself{j}{k} \hat \Sigma)}_{= \; 0} 
+ \Omega^X_{j, k}(\hat \Sigma) ,
\end{equation}
where the third term again vanishes, since the seam is a hyperbolic geodesic.
Moreover, the first and third terms are cocycles for one genus lower, so by the induction hypothesis, we know that they agree for $X$ and $Y$. 
It again follows that $\Omega^X_{j,k}(\Sigma) = \Omega^Y_{j,k}(\Sigma)$.

\noindent
This completes the proof of Theorem~\ref{thm:universality}.

\section{Restriction covariant loop measures}
\label{section:loop_measures}
A decade ago, S.~Benoist posed the question~\cite{Benoist:Classifying_conformally_invariant_loop_measures} 
which ``restriction functions'' can consistently appear in the Radon--Nikodym derivative~\eqref{eq:restriction_blm} for the restriction of Malliavin--Kontsevich--Suhov (MKS) loop measures to domains $D \subset \C$, replacing the normalized Brownian loop measure (BLM) term $\Lambda^*(\loopzero, \partial D)$ by a generic function $f(\loopzero, D, \C)$ with natural properties (see Items~\ref{def:restriction_function:b}~\&~\ref{def:restriction_function:c} in Definition~\ref{def:restriction_function} below).
This generalized setup also extends very naturally to the setting of loop measures on Riemann surfaces~\cite{Kontsevich:CFT_SLE_and_phase_boundaries, Kontsevich-Suhov:On_Malliavin_measures_SLE_and_CFT, Werner:The_conformally_invariant_measure_on_self-avoiding_loops, LeJan:Markov_paths_loops_and_fields, LeJan:Brownian_loops_topology, Zhan:SLE_loop_measures}.

The purpose of this section is to prove a classification result answering S.~Benoist's question (Theorem~\ref{thm:restriction_function}). 
In the proof, we relate restriction functions to local \gls{fsmooth} real one-dimensional modular functors (Proposition~\ref{prop:restriction_function_to_modular_functor}), and apply the universal property (Theorem~\ref{thm:universality}).
Our result holds under a further natural symmetry assumption on the restriction functions (Item~\ref{def:restriction_function:d} in Definition~\ref{def:restriction_function}),
and a \gls{fsmooth}ness assumption on analytic configurations (Definition~\ref{def:restriction_function_frsmooth}), which can both be justified by the properties of BLM.

\subsection{Conformally invariant restriction functions}

Consider a compact connected Riemann surface $\Sigma$ (without boundary paramerization), and denote by $\cloops{\Sigma}$ the space of \emph{(unparametrized) simple continuous loops} in $\Sigma$, that is, the set of images of continuous injective maps from $S^1$ to $\Sigma$, equipped with the topology induced by the Hausdorff distance with respect to any conformal Riemannian metric\footnote{
	The Hausdorff distance is defined as $d_g(K_1, K_2) = \inf \setsuchthat{r > 0}{K_1 \subset B_{g}(K_2,r) \textnormal{ and } K_2 \subset B_{g}(K_1,r)}$, where $K_1, K_2 \subset \Sigma$ are non-empty compact subsets and $B_{g}(K,r) = \cup_{x \in K} B_{g}(x,r)$ is the $r$-neighborhood of $K$ in the metric $g$. 
	On compact surfaces, all conformal metrics are bi-Lipschitz equivalent, which implies that the Hausdorff topology induced by $d_g$ does not depend on the choice of conformal metric $g$.
}. 

\begin{definition}
	\leavevmode
	\begin{enumerate}
	\item \label{item:restriction_function}
		A \emph{restriction function} is a family of continuous functions $f(\blank , \Sigma_1, \Sigma_2) \colon \cloops{\Sigma_1} \to \R$, indexed by Riemann surfaces $\Sigma_1 \subset \Sigma_2$, and satisfying the following properties:
		\begin{enumerate}
			\item \emph{Conformal invariance.}
				We have
				\label{def:restriction_function:b}
				\begin{equation}
					f(F(\loopzero), F(\Sigma_1), F(\Sigma_2)) = f(\loopzero, \Sigma_1, \Sigma_2), \qquad \loopzero \in \cloops{\Sigma_1},
					\label{eq:restriction_function_invariance}
				\end{equation}
				for any biholomorphism $F \colon \Sigma_2 \to \tilde{\Sigma}_2$ to another Riemann surface $\tilde{\Sigma}_2$.
			\item \emph{Cocycle property.}
				For three Riemann surfaces $\Sigma_1 \subset \Sigma_2 \subset \Sigma_3$, we have
				\label{def:restriction_function:c}
				\begin{equation}
					f(\loopzero, \Sigma_1, \Sigma_2) + f(\loopzero, \Sigma_2, \Sigma_3) = f(\loopzero, \Sigma_1, \Sigma_3), \qquad \loopzero \in \cloops{\Sigma_1}.
					\label{eq:restriction_function_cocycle}
				\end{equation}
			\item \emph{Symmetry.}
				For two disjoint loops $\loopone, \looptwo \in \cloops{\Sigma}$, we have
				\label{def:restriction_function:d}
				\begin{equation}
					f(\loopone, \Sigma \setminus \looptwo, \Sigma)
					= f(\looptwo, \Sigma \setminus \loopone, \Sigma).
					\label{eq:restriction_function_symmetry}
				\end{equation}
In the case of a separating loop, the restriction function depends only on the connected component containing the other loop. 
		\end{enumerate}
	\item \label{item:restriction_function_equiv}
		Restriction functions $f_1$ and $f_2$ are \emph{equivalent} if there exists a family of continuous functions $g(\blank, \Sigma) \colon \cloops{\Sigma} \to \R$, indexed by Riemann surfaces $\Sigma$, such that
		\begin{equation}
			f_2(\loopzero, \Sigma_1, \Sigma_2) = f_1(\loopzero, \Sigma_1, \Sigma_2) + g(\loopzero, \Sigma_2) - g(\loopzero, \Sigma_1), \qquad \loopzero \in \cloops{\Sigma_1}.
			\label{eq:restrictoin_function_equiv}
		\end{equation}
	\end{enumerate}
	\label{def:restriction_function}
\end{definition}

Compared to S.~Benoist's setup~\cite[Section~2.2]{Benoist:Classifying_conformally_invariant_loop_measures}, we assume the additional symmetry~\ref{def:restriction_function:d}, which appears to be essential in the proof of our results.
In particular, in the proof of Proposition~\ref{prop:restriction_function_to_modular_functor}, the symmetry~\ref{def:restriction_function:d} leads to the cocycles~\eqref{eq:MKS_cocycles} of the real one-dimensional modular functor induced by the restriction function. 
It may be motivated by the normalized BLM, $f(\loopzero, D, \C) = \Lambda^*(\loopzero, \partial D)$, in a domain $D \subset \C$.
Namely,
the measure $\Lambda^*(\loopone, \partial D)$ for the domain $D$ bounded by $\looptwo$ and containing $\loopone$ (such that $\partial D = \looptwo$) is the normalized measure of Brownian loops hitting both loops $\loopone$ and $\looptwo$, which is symmetric under exchange of the loops.
Moreover, Brownian motion started anywhere on $\loopone$ stays in the same connected component as $\loopone$. 
Heuristically speaking, the symmetry property also guarantees the ``resampling property'' for multiple SLE measures (see, for instance,~\cite{Lawler:Partition_functions_loop_measure_and_versions_of_SLE,Lawler:Defining_SLE_multiply_connected} for the idea, 
the discussion in~\cite{HPW:Multiradial_SLE_with_spiral}, and more references therein; and~\cite[Section~2]{Luo-Maibach:Two-loop_Loewner_potentials} for the two-loop case).

\begin{remark}
	Expanding on the planar setup, the main example of a restriction measure on arbitrary Riemann surfaces should be the measure of Brownian loops in~$\Sigma_2$, which hit~$\loopzero$ and exit~$\Sigma_1$, with normalization accounting for the infinite mass due to increasingly long loops.
	While the definition of BLM immediately extends to Riemann surfaces 
	(see, e.g.,~\cite{Lawler-Werner:The_Brownian_loop_soup,
	Field-Lawler:Reversed_radial_SLE_and_the_Brownian_loop_measure, LeJan:Brownian_loops_topology, Xue-Wang:The_Brownian_loop_measure_on_Riemann_surfaces_and_applications_to_length_spectra, Hou-Wu:The_total_mass_of_Brownian_loop_measure_of_Riemann_surfaces_for_large_genus}), it is not clear that the normalization works as in the plane; see also \cite[Conjecture~5.2]{Zhan:SLE_loop_measures} and Remark~\ref{remark:BLM_normalization}. 
\end{remark}

\begin{definition}
	Given a restriction function $f$, an \emph{$f$-restriction covariant loop measure} is a family of $\sigma$-finite Borel measures $\nu_\Sigma$ on $\cloops{\Sigma}$, indexed by Riemann surfaces $\Sigma$, such that
	\begin{enumerate}
		\item
			For a biholomorphism $F \colon \Sigma_1 \to \Sigma_2$ we have $F^* \nu_{\Sigma_2} = \nu_{\Sigma_1}$.
		\item
			 For subsurface $\Sigma_1 \subset \Sigma_2$, the measure $\nu_{\Sigma_1}$ is absolutely continuous with respect to $\nu_{\Sigma_2}$, and the Radon--Nikodym derivative is
			\begin{equation}
				\frac{\dd \nu_{\Sigma_1}}{\dd \nu_{\Sigma_2}}(\loopzero) = e^{f(\loopzero, \Sigma_1, \Sigma_2)} \boldone_{\loopzero \subset \Sigma_1}, \qquad \loopzero \in \cloops{\Sigma_2}.
			\end{equation}
	\end{enumerate}
\end{definition}

\subsection{Analytic loops and Segal moduli space}

Let us now explain how Segal moduli spaces $\moduligb$ (Section~\ref{section:setup}) relate to the setting of loop measures.
By considering the seam created by sewing $\Sigma_1 \sew{j}{k} \Sigma_2$, we find a simple analytic loop, which is parametrized by either one of the boundary parametrizations coming from $\Sigma_1$ or $\Sigma_2$.
A cocycle $\Omega^Z_{j,k}(\Sigma_1, \Sigma_2)$ of a real one-dimensional modular functor may be regarded as a function of the seam.
In the case of a reparametrization invariant trivialization (Proposition~\ref{prop:local_equiv}), the cocycle is a function on the seam up to reparametrization.
Thus, we consider the space $\anloops{\Sigma}$ of \emph{(unparametrized) simple analytic loops} in a Riemann surface.

We equip $\anloops{\Sigma}$ with a topology as follows.
First, for $n \geq 1$, consider the sets $\mathcal{O}(U_n, \Sigma)$ of holomorphic maps $U_n \to \Sigma$ with domains $U_n = \setsuchthat{z \in \C}{ 1 - 1/n < |z| < 1 + 1/n}$.
We equip $\mathcal{O}(U_n, \Sigma)$ with the topology of uniform convergence on compact subsets of $U_n$.
Next, consider the set $\mathcal{O}(S^1, \Sigma)$ of real-analytic functions $S^1 \to \Sigma$, with the inductive limit topology induced by the restrictions $\mathcal{O}(U_n, \Sigma) \to \mathcal{O}(S^1, \Sigma)$ for $n \geq 1$.
The space of parametrized simple analytic loops is a subset of $\mathcal{O}(S^1, \Sigma)$.
Hence, we can equip $\anloops{\Sigma}$ with the subspace topology quotiented by real-analytic reparametrizations of the loop.

\begin{lemma}
	The inclusion $\anloops{\Sigma} \subset \cloops{\Sigma}$ is continuous with dense image.
	\label{lemma:an_c_inclusion}
\end{lemma}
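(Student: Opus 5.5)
We need to show two things: continuity of the inclusion $\anloops{\Sigma} \to \cloops{\Sigma}$, and density of its image with respect to the Hausdorff topology.

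\emph{Continuity.} Since $\anloops{\Sigma}$ carries the quotient topology induced from parametrized simple analytic loops in $\mathcal{O}(S^1, \Sigma)$, it suffices to show that the image map $\zeta \mapsto \zeta(S^1)$ is continuous into $\cloops{\Sigma}$. This map is constant on reparametrization classes, so it then descends to the quotient. By the universal property of the inductive limit topology, it is enough to check continuity on each $\mathcal{O}(U_n, \Sigma)$, restricted to those maps whose restriction to $S^1$ is a simple loop.

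Uniform convergence on compact subsets of $U_n$ implies uniform convergence on $S^1 \subset U_n$. Fix a conformal metric $g$. If $\sup_{z \in S^1} d_g(\zeta_1(z), \zeta_2(z)) < r$, then each point of $\zeta_1(S^1)$ lies within distance $r$ of $\zeta_2(S^1)$, and vice versa. Hence $d_g(\zeta_1(S^1), \zeta_2(S^1)) \leq r$. This Lipschitz-type bound gives continuity. It is independent of the choice of $g$, by the bi-Lipschitz equivalence of conformal metrics noted in the footnote.

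\emph{Density.} Let $\loopzero \in \cloops{\Sigma}$ be a simple continuous loop, and let $\varepsilon > 0$. We want an analytic simple loop within Hausdorff distance $\varepsilon$.

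First, $\loopzero$ has an annular neighborhood. By the Schoenflies theorem applied in a chart, a Jordan curve in a surface is tame, so there is a closed annulus $A \subset \Sigma$ in the $\varepsilon$-neighborhood of $\loopzero$ that contains $\loopzero$ as its core. One can also argue directly: by the Jordan–Schoenflies theorem, a collar of $\loopzero$ is homeomorphic to $S^1 \times [-1, 1]$ with $\loopzero = S^1 \times \{0\}$. Taking that collar thin enough, it lies within $B_g(\loopzero, \varepsilon/2)$.

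Next, choose a real-analytic loop in the interior of $A$. Its interior $A^\circ$ is a doubly connected open Riemann surface. By uniformization it is conformally equivalent to a round annulus $\{r < |z| < R\}$, and any circle $|z| = \rho$ maps to a simple analytic loop $\gonlyan_\rho \subset A^\circ$. Since $\loopzero$ separates the two boundary components of the collar, every point of $\loopzero$ lies within $\varepsilon/2$ of $\gonlyan_\rho$, provided the collar is chosen so that every point of $A$ lies within $\varepsilon/2$ of $\loopzero$ and each transversal fiber $\{\theta\} \times [-1,1]$ has diameter less than $\varepsilon/2$. Conversely, $\gonlyan_\rho \subset A \subset B_g(\loopzero, \varepsilon/2)$. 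Thus $d_g(\gonlyan_\rho, \loopzero) < \varepsilon$.

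\emph{Main obstacle.} The delicate step is the first one: producing a collar whose transversal fibers are small. That requires the Schoenflies extension to be uniformly continuous on a compact neighborhood, which holds by compactness. Alternatively, one can avoid collars entirely. Approximate $\loopzero$ by a polygonal simple loop in finitely many charts, as in standard proofs of Hausdorff approximation of Jordan curves. Then smooth the corners analytically, for instance by taking the core circle of a thin annular neighborhood of the polygonal loop, whose uniformization is explicit enough to control.
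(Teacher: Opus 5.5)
Your argument is correct, and it departs from the paper's mainly in the density half. For continuity, both arguments end with the same estimate: uniform closeness on $S^1$ gives Hausdorff closeness of the images. The paper reaches a single step $\mathcal{O}(U_m,\Sigma)$ by applying regularity of the inductive limit (Kriegl--Michor) to convergent sequences. You use the universal property of the final topology instead, which avoids that theorem and gives continuity directly rather than via sequences.

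You should, however, apply that universal property to $\zeta\mapsto\zeta(S^1)$ on all of $\mathcal{O}(S^1,\Sigma)$, with values in nonempty compact sets under the Hausdorff distance; your estimate never uses simplicity. Only afterwards restrict to simple loops and pass to the quotient. Checking continuity only on the simple maps in each $\mathcal{O}(U_n,\Sigma)$, as you phrase it, controls the final topology of the restricted maps, and that topology need not coincide with the subspace topology.

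For density, the paper chains two approximation results: continuous simple loops by smooth ones, then smooth embeddings by analytic ones. This is short by citation and uses dimension-independent differential topology in the second step. You instead produce the analytic approximant in one stroke, as the image of a round circle under the uniformization of a thin open collar. This keeps the approximant isotopic to $\loopzero$ inside an arbitrarily thin annulus, but it requires a bicollar for a merely topological Jordan curve.

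That step needs more than Schoenflies in one chart, since $\loopzero$ need not lie in a chart. Schoenflies makes planar arcs tame, so $\loopzero$ is locally flat. Local flatness together with two-sidedness (orientability of $\Sigma$) then yields the bicollar by a collaring theorem such as M.~Brown's.

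In your Hausdorff estimate, the operative fact is that $\gonlyan_\rho$ separates the two boundary circles of $A$, being the image of an essential circle under a homeomorphism of the open annulus. Hence every transversal fiber through a point of $\loopzero$ meets it. That $\loopzero$ separates them is true but is not what the estimate uses.

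If $\Sigma$ may have boundary and $\loopzero$ meets $\partial\Sigma$, you must first push $\loopzero$ into the interior; the paper is silent on this case too.
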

\begin{proof}
The inductive limit topology on $\anloops{\Sigma}$ is defined by quotienting by reparametrization.
	Thus, without loss of generality, we may work with para\-metrized loops.
	The inductive limit $\mathcal{O}(S^1, \Sigma)$ is regular~\cite[Theorem~8.4]{Kriegl-Michor:Convenient_setting_of_global_analysis}, whence for a convergent sequence $(\vartheta_n)_{n \in \mathbb{N}} \in \mathcal{O}(S^1, \Sigma)$, we can find $m \in \mathbb{N}$ such that $\vartheta_n$ extends to a holomorphic map on $U_m$ for all $n \geq m$.
	In particular, they converge in the Hausdorff distance. 
	To see that the inclusion is dense, note that any continuous simple loop (or simple closed curve) may be approximated by smooth curves~\cite[Section~1.2.2]{Farb-Margalit:Primer_on_mapping_class_groups}, and 
	the set of analytic embeddings of $S^1$ in $\Sigma$ is dense in the set of smooth embeddings by~\cite[Theorems~1.4~\&~5.1]{Hirsch:Differential_topology}.
\end{proof}
\noindent
The following result is a slight generalization of~\cite[Proposition~3.6]{Benoist:Classifying_conformally_invariant_loop_measures}\footnote{
	The remark on continuity by S.~Benoist concerns continuity on $\cloops{\Sigma}$:
	If $\gonlyan$ extends continuously to $\cloops{\Sigma}$, it follows that $f$ is equivalent to the trivial restriction function by setting $g = \gonlyan$ Definition~\ref{def:restriction_function}.
}.

\begin{proposition}
	Given a restriction function $f$, there exists a family of continuous functions $\gonlyan(\blank, \Sigma) \colon \anloops{\Sigma} \to \R$, indexed by Riemann surfaces $\Sigma$, such that
	\label{prop:an_coboundary}
	\begin{equation}
		f(\loopzero, \Sigma_1, \Sigma_2) = \gonlyan(\loopzero, \Sigma_2) - \gonlyan(\loopzero, \Sigma_1), \qquad \loopzero \in \anloops{\Sigma_1}.
		\label{eq:an_coboundary}
	\end{equation}
\end{proposition}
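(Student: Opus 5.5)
The idea is to define $\gonlyan(\loopzero, \Sigma)$ by comparing $\Sigma$ with a tiny collar around $\loopzero$, exploiting that analytic loops have annular neighbourhoods. Since $\loopzero \in \anloops{\Sigma}$ is analytic, any real-analytic parametrization of $\loopzero$ extends to a biholomorphism from some round annulus $U_n = \setsuchthat{z \in \C}{1 - 1/n < |z| < 1 + 1/n}$ onto a collar $A \subset \Sigma$ with $\loopzero \subset A$. The naive choice is
\begin{equation*}
	\gonlyan(\loopzero, \Sigma) = f(\loopzero, A, \Sigma).
\end{equation*}
Then for $\loopzero \subset \Sigma_1 \subset \Sigma_2$, choosing the collar $A$ inside $\Sigma_1$, the cocycle property~\eqref{eq:restriction_function_cocycle} gives $f(\loopzero, A, \Sigma_2) = f(\loopzero, A, \Sigma_1) + f(\loopzero, \Sigma_1, \Sigma_2)$. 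This is exactly~\eqref{eq:an_coboundary}, provided $\gonlyan$ does not depend on the choice of $A$.

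The main obstacle is this independence. If $A' \subset A$ are two collars, the cocycle property reduces the question to $f(\loopzero, A', A)$, which need not vanish. So I would fix the collar canonically up to conformal equivalence. Work in a fixed uniformizing chart, say $\loopzero = S^1$ inside a round annulus $A_r = \setsuchthat{z}{e^{-r} < |z| < e^{r}}$. By conformal invariance~\eqref{eq:restriction_function_invariance}, the quantity $c(r,s) = f(S^1, A_r, A_s)$ for $r < s$ is a universal function; by the cocycle property it is additive, $c(r,s) = a(s) - a(r)$.

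To obtain a well-defined $\gonlyan$, I would normalize by a limit. Set
\begin{equation*}
	\gonlyan(\loopzero, \Sigma) = \lim_{r \to 0} \big( f(\loopzero, A_r^{\loopzero}, \Sigma) + a(r) \big),
\end{equation*}
where $A_r^{\loopzero}$ is the collar obtained from a chosen analytic chart. The expression inside the limit is in fact independent of $r$ for $r$ small, by the additivity above. Changing the chart by a biholomorphism $\psi$ of a neighbourhood of $S^1$ preserving $S^1$ changes the collars $A_r$ into non-round collars. I then need $f(S^1, \psi(A_r), A_s) - f(S^1, A_r, A_s) \to 0$ as $r \to 0$. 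I expect this to follow by squeezing: $\psi(A_r)$ lies between $A_{r(1-\epsilon)}$ and $A_{r(1+\epsilon)}$ for $\psi$ close to a rotation at small scales, and continuity of $a$ controls the gap. This squeezing needs monotonicity or continuity control of $f$ in the domain. Only continuity in the loop variable is given, so this is the delicate point. If necessary, I would instead fix the chart by a canonical choice, namely the conformal welding normalization of Proposition~\ref{prop:conformal_welding} applied locally, so that no limit is needed. I would then use Schwarz reflection to make the choice canonical.

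Continuity of $\gonlyan$ in the inductive-limit topology follows because nearby analytic loops admit charts on a common $U_m$ (regularity, as in the proof of Lemma~\ref{lemma:an_c_inclusion}). Hence one fixed collar contains all nearby loops, and continuity of $f(\blank, A, \Sigma)$ on $\cloops{A}$ together with Lemma~\ref{lemma:an_c_inclusion} transfers.
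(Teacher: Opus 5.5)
\textbf{Verdict: your construction matches the paper's, but the chart-independence step you flag is a genuine gap, and neither of your proposed fixes closes it.}

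Your construction coincides with the paper's. The paper sets $\gonlyan(\loopzero,\Sigma)=f(\loopzero,\vartheta(A),\Sigma)-f(S^1,A,\hat{\C})$ for any annulus $A\supset S^1$ on which $\vartheta$ is univalent. This is your formula with $a(r)=-f(S^1,A_r,\hat{\C})$. The expression is constant in $A$ by the cocycle property plus conformal invariance under $\vartheta|_B$, so neither the limit nor round annuli are needed.

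The gap is independence of the chart. Replacing $\vartheta$ by $\vartheta\circ\psi$ with $\psi\in\Diffpan$ changes the value by
\[
\delta(\psi)=f(S^1,\psi(A),\hat{\C})-f(S^1,A,\hat{\C}).
\]
The same computation, now with the biholomorphism $\psi|_B\colon B\to\psi(B)$ for $A\subset B$, shows that $\delta(\psi)$ does not depend on $A$. Your quantity $f(S^1,\psi(A_r),A_s)-f(S^1,A_r,A_s)$ is exactly $\delta(\psi)$, independent of $r$ and $s$. So there is no small-scale regime in which it improves; it must vanish outright.

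The sandwiching fails for two reasons:
\begin{itemize}
\item Near $S^1$ the map $\psi$ stretches radially by $|\psi'|$, which varies along $S^1$ at every scale. So $\psi(A_r)$ is only trapped between round annuli of widths comparable to $r\min|\psi'|$ and $r\max|\psi'|$, not $r(1\pm\epsilon)$.
\item The axioms give no monotonicity of $f$ in the domain variable anyway.
\end{itemize}

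A canonical chart cannot exist either:
\begin{itemize}
\item The germ of any analytic loop is biholomorphic to that of $S^1\subset\C$, whose automorphisms include all of $\Diffpan$. Each of these commutes with the Schwarz reflection, so reflection does not break the symmetry.
\item A chart chosen from global data of the ambient surface (conformal welding needs a sphere) in general differs for $\Sigma_1\subset\Sigma_2$. That spoils \eqref{eq:an_coboundary} by exactly $\delta$ of the transition map.
\end{itemize}

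The missing idea is algebraic. Evaluate on a small $A$ and insert $\psi_2(A)$; this gives $\delta(\psi_1\circ\psi_2)=\delta(\psi_1)+\delta(\psi_2)$. Hence $\delta\colon\Diffpan\to\R$ is a group homomorphism, and it vanishes because $\Diffpan$ is perfect. This is the same fact used in Propositions~\ref{prop:local_equiv} and~\ref{prop:reparam_indep}. Orientation-reversing reparametrizations reduce to this case by composing with $\inversion$. Since $\inversion$ is a global M\"obius map, conformal invariance on $\hat{\C}$ gives $\delta(\inversion)=0$ directly.

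Be aware that the paper's own one-line check of this step, taken literally with $B=\phi(A)$, compares an expression with itself. So this homomorphism argument is what actually makes $\gonlyan$ well defined on unparametrized loops. Your continuity paragraph is at the same level of detail as the paper's, which chooses the collar locally uniformly in the loop.
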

\begin{proof}
	Let $\vartheta \colon S^1 \to \loopzero$ be a real-analytic parametrization of $\loopzero \in \anloops{\Sigma}$.
	There exists an annular neighborhood $S^1 \subset A \subset \hat{\C}$ such that $\vartheta$ is univalent on $A$.
	The function
	\begin{equation}
		\gonlyan(\loopzero, \Sigma) = f(\loopzero, \vartheta(A), \Sigma) - f(S^1, A, \hat{\C}) 
		\label{eq:def_g}
	\end{equation}
	is independent of $A$: for another annular neighborhood $B$ such that $S^1 \subset A \subset B \subset \hat{\C}$,
	using the cocycle property~\ref{def:restriction_function:c} and conformal invariance~\ref{def:restriction_function:b} of restriction functions (Definition~\ref{def:restriction_function}), we have
	\begin{equation}
	\begin{aligned}
		&\Big(f(\loopzero, \vartheta(A), \Sigma) - f(S^1, A, \hat{\C})\Big)
		- \Big(f(\loopzero, \vartheta(B), \Sigma) - f(S^1, B, \hat{\C})\Big)
		\\=\; & f(\loopzero, \vartheta(A), \vartheta(B)) - f(S^1, A, B) = 0.
	\end{aligned}
	\end{equation}
	Next, we show that $\gonlyan(\loopzero, \Sigma)$ is independent of the parametrization $\vartheta$. To this end, 
	we reparametrize $\vartheta$ by $\phi \in \Diffpan$.
	Take an annular neighborhood $S^1 \subset A \subset \hat{\C}$ such that $\phi$ and $\vartheta \circ \phi$ are univalent on $A$, and let $B = \phi(A)$.
	Then, we have
	\begin{equation}
	\begin{aligned}
		&\Big(f(\loopzero, \vartheta(\phi(A)), \Sigma) - f(S^1, \phi(A), \hat{\C})\Big)
		- \Big(f(\loopzero, \vartheta(B), \Sigma) - f(S^1, B, \hat{\C})\Big)
		\\=\; & f(\loopzero, \vartheta(\phi(A)), \vartheta(B)) - f(S^1, \phi(A), B) = 0.
	\end{aligned}
	\end{equation}
	Using the cocycle property~\ref{def:restriction_function:c} in Definition~\ref{def:restriction_function}, we see that the asserted Equation~\eqref{eq:an_coboundary} holds.
	Since the inclusion $\anloops{\Sigma} \subset \cloops{\Sigma}$ is continuous (Lemma~\ref{lemma:an_c_inclusion}), and in the topology of $\anloops{\Sigma}$ we can pick the annulus $A$ locally uniformly in $\loopzero$, it follows that $\gonlyan(\blank, \Sigma)$ is a continuous function on $\anloops{\Sigma}$.
\end{proof}

\begin{remark}
	One may think of Equation~\eqref{eq:def_g} as a renormalization of the measure of Brownian loops in $\Sigma$ hitting $\loopzero$, decomposed as loops exiting $\vartheta(A)$ plus those loops staying inside $\vartheta(A)$.
	By conformal invariance of BLM, the latter may be expressed as the difference between all Brownian loops in $\hat{\C}$ hitting $S^1$ and those which also exit $A$, that is, $f(S^1, A, \hat{\C})$.
	The renormalization amounts to subtracting the (infinite) measure of Brownian loops in $\hat{\C}$ hitting $S^1$, which then results in Equation~\eqref{eq:def_g}.
	Note that this renormalization only works for analytic loops $\loopzero \in \anloops{\Sigma}$ since it relies on the existence of the annular neighborhood $A$.
	A proper normalization would also have to deal with the increasingly long loops in $\Sigma$.
	\label{remark:BLM_normalization}
\end{remark}

Let us now turn to smoothness criteria for the restriction functions.
As usual for Fr\"olicher smoothness (Section~\ref{section:setup}), we formulate it in terms of \gls{fsmooth} curves in Segal moduli spaces, as follows. 

\begin{definition}	
	A restriction function $f$ is \emph{\gls{fsmooth} on analytic configurations} if the~map $t \mapsto f(\loopzero(t), \Sigma_1(t), \Sigma_2(t))$ is a smooth function for any configuration of the following form: 
	\begin{enumerate}[leftmargin=*]
		\item
			Let $S_1(t) \in \curves{\moduli{\genus_1}{\boundaries_1}}$ and $S_2(t) \in \curves{\moduli{\genus_2}{\boundaries_2}}$ be \gls{fsmooth} curves in Segal moduli spaces.
			Consider the surface $\Sigma_1 = S_1(t) \sew{1}{1} S_2(t)$.
		\item
			For times $t$, the curves $S_1(t)$ and $S_2(t)$ have representatives of the form~\cite[Equation~(4.25)]{Maibach-Peltola:Complex_deformations_of_the_circle}.
			Let $\zeta_1(t, z)$ denote the boundary parametrization of $\partial_1 S_1(t)$ in such a representative.
			Then, the image $\loopzero(t) = \zeta_1(t, S^1)$ an element of $\anloops{\Sigma_1}$.
		\item
			Let $\underline{Q} = (Q_1(1), \dots, Q_n(t))$ be 
			$n \geq 0$ additional \gls{fsmooth} curves in Segal moduli spaces, and let $\Sigma_2(t) = \Sigma_1(t) \sewall \underline{Q}(t)$ denote the surface obtained from sewing and self-sewing operations applied to $\Sigma_1, Q_1, \dots, Q_n$ in any way so that the result is connected.
	\end{enumerate}
	\label{def:restriction_function_frsmooth}
\end{definition}

In particular, we expect the \gls{fsmooth}ness in Definition~\ref{def:restriction_function_frsmooth} to hold for the BLM restriction function $\Lambda^*(\loopzero, D)$ because of its relation to zeta-regularized determinants of the Laplacian (for smooth $\loopzero$ and $D$ with smooth boundary)~\cite[Section~2]{Dubedat:SLE_and_free_field}, which, in turn, we expect to smoothly depend on the domain.
Any further regularization extending $\Lambda^*$ to higher genus should then also respect the smoothness.

\begin{remark}
	The natural setting for Definition~\ref{def:restriction_function_frsmooth} would involve ``comparison moduli spaces''~\cite{Schippers-Staubach:Comparison_moduli_spaces_of_Riemann_surfaces} of configurations $\loopzero \subset \Sigma_1 \subset \Sigma_2$, where one considers the triple up to biholomorphisms on the outermost surface $\Sigma_2$ preserving the additional marked structures $\loopzero$ and $\Sigma_1$ set-wise.
	Then, the subspace of the comparison moduli space where $\loopzero$ and $\Sigma_1$ embed analytically may be equipped with a Fr\"olicher structure similar to that of $\moduligb$, by using the action of complex deformations (see~\cite[Section~4.4]{Maibach-Peltola:Complex_deformations_of_the_circle}).
\end{remark}

\subsection{Real one-dimensional modular functors from restriction functions}

We will now show how restriction functions yield real one-dimensional modular functors (Definition~\ref{def:modular_functor}).
In particular, the next result defines the notion of \emph{central charge} of a restriction function as the central charge of the associated real one-dimensional modular functor (Definition~\ref{def:central_charge}).
The main result of this section (Theorem~\ref{thm:restriction_function}) is proven in the end, using the classification of real one-dimensional modular functors (Theorem~\ref{thm:universality}).

\begin{proposition}
	Let $f$ be a restriction function which is {\gls{fsmooth} on analytic configurations}.
	Then, there exists a {local} \gls{fsmooth} real one-dimensional modular functor $E$ with reparametrization invariant trivialization $Z$ such that
	\begin{equation}
		\Omega_{j, k}^Z(\Sigma_1, \Sigma_2) = \gonlyan(\loopzero, \Sigma_1 \sew{j}{k} \Sigma_2), \qquad
		\Omega_{j, k}^Z(\Sigma) = \gonlyan(\loopzero, \Sigma).
		\label{eq:MKS_cocycles}
	\end{equation}
	where $\gonlyan$ is given by Proposition~\ref{prop:an_coboundary} and $\loopzero$ is the seam  $\Sigma_1 \sew{j}{k} \Sigma_2$ or $\sewself{j}{k} \Sigma$ respectively.
	\label{prop:restriction_function_to_modular_functor}
\end{proposition}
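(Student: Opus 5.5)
The plan is to build $\RMFE$ directly from the trivial bundles and define the sewing isomorphisms by prescribing the cocycles. Concretely, I would set $\RMFE(\moduligb) = \Rp \times \moduligb$ with trivialization $Z(\Sigma) = (1, \Sigma)$, and define
\begin{equation*}
	(\lambda_1, \Sigma_1) \sewx{\RMFE}{j}{k} (\lambda_2, \Sigma_2) = \big(\lambda_1 \lambda_2 \, e^{\gonlyan(\loopzero, \Sigma_1 \sew{j}{k} \Sigma_2)}, \Sigma_1 \sew{j}{k} \Sigma_2\big), \qquad \sewxself{\RMFE}{j}{k} (\lambda, \Sigma) = \big(\lambda \, e^{\gonlyan(\loopzero, \sewself{j}{k} \Sigma)}, \sewself{j}{k} \Sigma\big),
\end{equation*}
with $\loopzero$ the seam, and the label swaps acting trivially on the $\Rp$ factor. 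Then~\eqref{eq:MKS_cocycles} holds by construction, and it remains to check well-definedness, the axioms of Definition~\ref{def:modular_functor}, Fr\"olicher smoothness, and locality.

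Well-definedness: $\gonlyan(\loopzero, \Sigma)$ depends only on the unparametrized analytic loop in the closed-up surface up to biholomorphism. Isomorphisms of representatives in $\moduligb$ induce biholomorphisms of sewn surfaces carrying seam to seam, so conformal invariance~\ref{def:restriction_function:b}, inherited by $\gonlyan$ through~\eqref{eq:def_g}, gives independence of representatives. The symmetry axioms ($\alpha \sewx{\RMFE}{j}{k} \beta = \beta \sewx{\RMFE}{k}{j} \alpha$, $\sewxself{\RMFE}{j}{k} = \sewxself{\RMFE}{k}{j}$, and relabelling) are immediate, since the seam as an unparametrized loop does not depend on the order. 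Locality follows in the same way: reparametrizing a boundary component other than $j$ by $\phi \in \Diffpan$ changes neither the sewn surface's seam nor, up to biholomorphism, the ambient surface, so Item~\ref{prop:local_equiv_1} of Proposition~\ref{prop:local_equiv} holds and $Z$ is reparametrization invariant.

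The main obstacle is associativity, i.e.\ the cocycle identities \eqref{eq:basic_cocycle_idenity}--\eqref{eq:cycle_selfsewing}. Two sewings produce the final surface $\Sigma$ with two disjoint analytic seams $\loopone, \looptwo$. For instance,~\eqref{eq:basic_cocycle_idenity} becomes
\begin{equation*}
	\gonlyan(\loopone, \Sigma \setminus \looptwo) + \gonlyan(\looptwo, \Sigma) = \gonlyan(\loopone, \Sigma) + \gonlyan(\looptwo, \Sigma \setminus \loopone),
\end{equation*}
where $\Sigma \setminus \looptwo$ is to be read as the component containing $\loopone$, or equivalently the intermediate sewn surface. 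The intermediate surface has boundary, whereas $\gonlyan$ was defined on surfaces $\Sigma_1 \subset \Sigma_2$. I would handle this by interpreting the intermediate surface as its interior, an open Riemann surface, and by noting that restriction functions only depend on open subsurfaces, so that the last clause of~\ref{def:restriction_function:d} applies. By~\eqref{eq:an_coboundary}, the identity is equivalent to $f(\loopone, \Sigma\setminus\looptwo, \Sigma) = f(\looptwo, \Sigma \setminus \loopone, \Sigma)$, which is exactly the symmetry~\ref{def:restriction_function:d}. The self-sewing identities reduce in the same way, with non-separating seams. The commutation of two self-sewings~\eqref{eq:cycle_selfsewing} is likewise symmetry applied to two disjoint non-separating loops. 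One point needs care: Proposition~\ref{prop:an_coboundary} must be valid for open surfaces $\Sigma_1$ that are interiors of compact bordered surfaces, so that $\gonlyan(\loopzero, \Sigma_1)$ makes sense there. I expect this to be fine, since the definition~\eqref{eq:def_g} only uses an annular neighbourhood of $\loopzero$.

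Fr\"olicher smoothness: a \gls{fsmooth} curve in $\moduli{\genus_1}{\boundaries_1} \times \moduli{\genus_2}{\boundaries_2}$ produces exactly a configuration of the form in Definition~\ref{def:restriction_function_frsmooth} (with $n = 0$ extra pieces). Writing $\gonlyan(\loopzero(t), \Sigma(t)) = f(\loopzero(t), \vartheta_t(A), \Sigma(t)) - f(S^1, A, \hat\C)$ for a fixed small annulus $A$, where $\vartheta_t(A)$ is itself realized by sewing annuli deformed by the boundary parametrizations, the assumed smoothness makes the cocycles smooth in $t$. Hence the sewing isomorphisms are \gls{fsmooth} with respect to $Z$. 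The self-sewing case is handled analogously by cutting.
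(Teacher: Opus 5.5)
Your proposal is correct and follows the paper's proof. It uses trivial bundles with $Z(\Sigma) = (1,\Sigma)$ and sewing isomorphisms given by the factors $e^{\gonlyan}$. The cocycle identities are reduced via~\eqref{eq:an_coboundary} to the symmetry~\ref{def:restriction_function:d} applied to the two seams, and smoothness comes from Fr\"olicher smoothness on analytic configurations. Your checks of well-definedness, locality and the bordered-versus-open issue make explicit what the paper leaves implicit.

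One small slip: the configuration you actually need for smoothness is $\loopzero(t) \subset \vartheta_t(A) \subset \Sigma(t)$. That uses $n = 2$ extra pieces (the two surfaces with collars removed), not $n = 0$, since for $n = 0$ the value of $f$ is trivially zero.
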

\begin{proof}
	Let $\RMFE(\moduligb) = \Rp \times \moduligb$ be the trivial principal $\Rp$-bundle, and $Z(\Sigma) = (1, \Sigma)$.
	Then, the sewing isomorphisms defined by $Z(\Sigma_1) \sew{j}{k} Z(\Sigma_2) = e^{\gonlyan(\loopzero, \Sigma_1 \sew{j}{k} \Sigma_2)} Z(\Sigma_1 \sew{j}{k} \Sigma_2)$ yield the sewing cocycles in~\eqref{eq:MKS_cocycles}, 
	and similarly for the self-sewing cocycles.
	However, we need to check that the sewing isomorphisms satisfy the associativity identities in Definition~\ref{def:modular_functor}.
	Equivalently, we may check that the cocycle identities~(\ref{eq:basic_cocycle_idenity},~\ref{eq:basic_self_cocycle_idenity},~\ref{eq:cycle_sewing},~\ref{eq:cycle_selfsewing}) in Section~\ref{section:cocycles} hold.
	These, in turn, follow from the symmetry property~\ref{def:restriction_function:d} in Definition~\ref{def:restriction_function}, where the loops are the seams of respective sewing operations.
	For instance, the cocycle identity~\eqref{eq:basic_cocycle_idenity} is obtained as follows:
	\begin{equation}
	\begin{aligned}
		&\gonlyan\Big(\loopone, \Sigma_1 \sew{j}{k} \Sigma_2\Big)
		- \gonlyan\Big(\loopone, (\Sigma_1 \sew{j}{k} \Sigma_2) \sew{l}{m} \Sigma_3\Big)
		\\ = \; &
		f\Big(\loopone, (\Sigma_1 \sew{j}{k} \Sigma_2), (\Sigma_1 \sew{j}{k} \Sigma_2) \sew{l}{m} \Sigma_3\Big) 
		 && \textnormal{[by~\eqref{eq:an_coboundary}]}
		\\ = \; &
		f\Big(\looptwo, (\Sigma_2 \sew{l}{m} \Sigma_3), \Sigma_1 \sew{j}{k} (\Sigma_2 \sew{l}{m} \Sigma_3)\Big) 
		&& \textnormal{[by~\eqref{eq:restriction_function_symmetry}]}
		\\ = \; &
		\gonlyan\Big(\looptwo, \Sigma_2 \sew{l}{m} \Sigma_3\Big)
		- \gonlyan\Big(\looptwo, \Sigma_1 \sew{j}{k} (\Sigma_2 \sew{l}{m} \Sigma_3)\Big), 
		&& \textnormal{[by~\eqref{eq:an_coboundary}]}
	\end{aligned}
	\end{equation}
	where $\loopone$ is the seam between $\Sigma_1$ and $\Sigma_2$, and $\looptwo$ is the seam between $\Sigma_2$ and $\Sigma_3$.
	Lastly, the assumption of \gls{fsmooth}ness on analytic configurations (Definition~\ref{def:restriction_function_frsmooth}) implies that the cocycles, and thus $\RMFE$ and $Z$, are \gls{fsmooth}.
\end{proof}

\begin{proof}[Proof of Theorem~\ref{thm:restriction_function}]
	By Proposition~\ref{prop:restriction_function_to_modular_functor}, there are local \gls{fsmooth} real one-dimensional modular functors $\RMFE$ and $\RMFD$ with reparametrization invariant trivializations $Z$ and $W$ associated to $f_1$ and $f_2$, respectively.
	Since the central charge agrees, Theorem~\ref{thm:universality} yields an isomorphism $\Psi$ from~$\RMFE$ to~$\RMFD$.
	Set $\Psi(Z(\Sigma)) = e^{B(\Sigma)} W(\Sigma)$ for functions $B \in \functions{\moduligb}$, which by Proposition~\ref{prop:reparam_indep} are independent of the boundary parametrizations.

	The restriction functions on analytic loops, determined by functions $\gonlyan_1$ and $\gonlyan_2$ from Equation~\eqref{eq:an_coboundary} in Proposition~\ref{prop:an_coboundary}, may be reconstructed from the cocycles by choosing an analytic parametrization $\vartheta$ of $\loopzero \in \anloops{\Sigma}$, and by equipping $\Sigma$ with arbitrary analytic boundary parametrizations.
	Then, in terms of the unraveling (cutting) operation defined in~\cite[Section~4.2]{Maibach-Peltola:Complex_deformations_of_the_circle}, the identity
	\begin{equation}
		\gonlyan_1(\loopzero, \Sigma)
		= \begin{cases}
			\Omega_{1, 1}^Z(\unravelplus{\vartheta} \Sigma, \unravelminus{\vartheta} \Sigma), & \textnormal{$\loopzero$ a separating loop,} \\
			\Omega_{1, 2}^Z(\unravel{\vartheta} \Sigma), & \textnormal{$\loopzero$ a non-separating loop,}
		\end{cases}
		\label{eq:cocycle_gonlyan}
	\end{equation}
	and the analogous formula for $\gonlyan_2$, are independent of the choice of $\vartheta$ and of the boundary parametrizations, by reparametrization invariance of $Z$ and $W$.

	The cohomology relations of the cocycles are
	\begin{equation}
	\begin{aligned}
		\Omega^W_{j,k}(\Sigma_1, \Sigma_2) 
		&= \Omega^Z_{j,k}(\Sigma_1, \Sigma_2) + B(\Sigma_1) + B(\Sigma_2) - B(\Sigma_1 \sew{j}{k} \Sigma_2), \\
		\Omega^W_{j,k}(\Sigma) 
		&= \Omega^Z_{j,k}(\Sigma) + B(\Sigma) - B(\sewself{j}{k} \Sigma).
	\end{aligned}
	\end{equation}
	These define a function $g$ relating $f_1$ and $f_2$ as in Equation~\eqref{eq:restrictoin_function_equiv} on analytic loops:
	\begin{equation}
	\begin{aligned}
		g(\loopzero, \Sigma) 
		&= \gonlyan_1(\loopzero, \Sigma) - \gonlyan_2(\loopzero, \Sigma) \\
		&= \begin{cases}
			B(\unravelplus{\vartheta} \Sigma) + B(\unravelminus{\vartheta} \Sigma) - B(\Sigma) , & \textnormal{$\loopzero$ separating,} \\
			B(\unravel{\vartheta} \Sigma) - B(\Sigma) , & \textnormal{$\loopzero$ non-separating.}
		\end{cases}
	\end{aligned}
	\label{eq:f1_f2_equiv}
	\end{equation}
	This shows that $f_1$ and $f_2$ are equivalent on analytic loops.
	Moreover, since $B$ does not depend on boundary parametrizations, the function $g$ is well-defined on $\cloops{\Sigma}$.
	By continuity of $f_1$ and $f_2$, as $\anloops{\Sigma}$ is dense in $\cloops{\Sigma}$ (Lemma~\ref{lemma:an_c_inclusion}), the relation~\eqref{eq:restrictoin_function_equiv} extends to the whole of~$\cloops{\Sigma}$, proving the equivalence of $f_1$ and $f_2$.
This concludes the proof.
\end{proof}

\bibliographystyle{annotate}
\newcommand{\etalchar}[1]{$^{#1}$}

\end{document}